\definecolor{superlightred}{HTML}{F5F5F5}
\titleformat{\section}[block]{\large\scshape\centering}{§\,\thesection}{1em}{}[\HRule{3pt}] 
\titleformat{\subsection}[block]{\normalsize\bfseries}{\thesubsection}{1em}{}
\titleformat{\subsubsection}[block]{\normalsize\bfseries}{\thesubsubsection}{1em}{}
\newcommand{\beq}{\begin{equation}}
	\newcommand{\eeq}{\end{equation}}
\newcommand{\beqa}{\begin{eqnarray}}
	\newcommand{\eeqa}{\end{eqnarray}}
\newcommand*\circled[1]{\tikz[baseline=(char.base)]{
		\node[shape=circle,draw,inner sep=2pt] (char) {#1};}}
\newcommand{\HRule}[1]{\rule{\linewidth}{#1}} 	
\newcommand{\comp}{\circ}
\newcommand{\tens}{\otimes}
\newcommand{\ddsum}{\bigoplus}
\newcommand{\cals}[1]{\mathcal{#1}}
\newcommand{\fraks}[1]{\mathfrak{#1}}
\newcommand{\normord}[1]{:\mathrel{#1}:}
\newcommand{\bb}[1]{\mathbb{#1}}
\newcommand{\yMTo}[1]{\mathbin{\,\tikz[baseline] \draw[-stealth,line width=.4pt] (#1,0.4ex) -- (-0.8ex,0.4ex);}}
\newcommand{\Plim}{%
	\mathchoice
	{\lim_{\yMTo{2.7ex}}}
	{\lim_{\yMTo{2.5ex}}}
	{\lim_{\yMTo{2.0ex}}}
	{\lim_{\yMTo{2.0ex}}}
}
\newcommand{\inverselimm}[1]{\mathop{\Plim}_{#1}}
\numberwithin{equation}{section}
\newtheorem{thm}{Theorem}[section]
\newtheorem{prop}[thm]{Proposition}
\newtheorem{lem}[thm]{Lemma}
\newtheorem{cor}[thm]{Corollary}
\newtheorem{dfn}[thm]{Definition}
\newtheorem{dfn-prp}[thm]{Definition-Proposition}
\newtheorem{nota}[thm]{Notation}
\newtheorem{exa}[thm]{Example}
\newmdtheoremenv[backgroundcolor=superlightred]{assump}[thm]{Assumption}
\newcommand{\supermac}{\overline{SP}}
\newcommand{\dualmap}{
\widetilde{\Psi}^{(q,\xi)}_{\lambda}
}
\theoremstyle{definition}
\newtheorem{rem}[thm]{Remark}
\xpatchcmd{\proof}{\itshape}{\normalfont\proofnamefont}{}{}
\newcommand{\proofnamefont}{}
\renewcommand{\proofnamefont}{\bfseries}
\begin{document}
	
	\baselineskip = 18pt 
	
	\begin{titlepage}
		
		\bigskip
		\hfill\vbox{\baselineskip12pt
			\hbox{}
		}
		\bigskip
		\begin{center}
			\Large{ \scshape
				\HRule{3pt}
				Quantum Corner VOA and the Super Macdonald Polynomials 
				\HRule{3pt}
			}
		\end{center}
		\bigskip
		\bigskip

		\begin{center}
			\large 
			Panupong Cheewaphutthisakun$^{a}$\footnote{panupong.cheewaphutthisakun@gmail.com},
			Jun'ichi Shiraishi$^{b}$\footnote{shiraish@ms.u-tokyo.ac.jp},
			Keng Wiboonton$^{a}$\footnote{keng.w@chula.ac.th}
			\\
			\bigskip
			\bigskip
			$^a${\small {\it Department of Mathematics and Computer Science, Faculty of Science, Chulalongkorn University, 
					\\
					Bangkok, 10330, Thailand}}\\
			$^b${\small {\it Graduate School of Mathematical Sciences, University of Tokyo, Komaba, Tokyo 153-8914, Japan}} \\
		\end{center}
		\bigskip
		\bigskip
		
		\begin{flushright}
		\textit{Dedicated to the memory of Prof. Masatoshi Noumi }
		\end{flushright}
		{\vskip 6cm}
		{\small
			\begin{quote}
				\noindent {\textbf{\textit{Abstract}}.}
				In this paper, we establish a relation between the quantum corner VOA  
				$q\widetilde{Y}_{L,0,N}[\Psi]$, which can be regarded as a generalization of quantum $W_N$ algebra, 
				and Sergeev-Veselov super Macdonald polynomials. We demonstrate precisely that, under a specific map, 
				the correlation functions of the currents of $q\widetilde{Y}_{L,0,N}[\Psi]$, coincide with
				the Sergeev-Veselov super Macdonald polynomials. 
		\end{quote}}

	\end{titlepage}
	
	
\section{Introduction}
\label{intro}

Representation theory of infinite-dimensional algebras is fundamental in contemporary physics research. A significant example is conformal field theory, where the representation theory of its symmetry algebra is essential for understanding the theory. The symmetry algebra of conformal field theory, containing fields of higher spin $2, \dots, N$, is known as the $W_N$ algebra \cite{BPZ}  \cite{BS1993} \cite{Zamo}. Consequently, the $W_N$ algebra and its representations are of significant importance in contemporary physics research.

Motivated by the quantum deformation of the universal enveloping algebra $U_q(\fraks{g})$, independently introduced by Drinfeld \cite{DR} and Jimbo \cite{Jim-original}, it is natural to ask whether the $W_N$ algebra admits a quantum deformation. The quantum deformation of the $W_N$ algebra (referred to as the 
quantum $W_N$ algebra) was constructed in \cite{AKOS-950} \cite{SKAO95}, by requiring that its singular vectors
be described by Macdonald polynomials. This requirement was motivated by the earlier result that the singular vectors of the $W_N$ algebra are described by Jack polynomials \cite{awata-excited} \cite{mimachi-yamada-1995-singular}, and the Jack polynomials possess a well-known quantum deformation, the Macdonald polynomials \cite{macbook-1998} (see also \cite{noumi-mac}). Therefore, it was reasonable to expect 
that the singular vectors of the quantum $W_N$ algebra are described by the Macdonald polynomials.

In terms of generators and relations, the quantum $W_N$ algebra is generated by the currents $\{\widetilde{T}_i(z)\,|\, i =1,\dots,N\}$ subject to the relations, called quadratic relations, which take the following form schematically:
\begin{align}
	&f_{r,m}\left(
	\frac{w}{z}
	\right)
	\widetilde{T}_r(z)\widetilde{T}_m(w) 
	- f_{m,r}\left(\frac{z}{w}\right)
	\widetilde{T}_m(w)\widetilde{T}_r(z)
	\notag \\
	&= 
	\sum_{k = 1}^{r}
	\biggl\{
	(\cdots)
	\delta\left(q_3^k\frac{w}{z}\right)
	\widetilde{T}_{r-k}(q_3^{-k}z)
	\widetilde{T}_{m+k}(q_3^kw)
	- 
	(\cdots)
	\delta\left(q_3^{r-m-k}\frac{w}{z}\right)
	\widetilde{T}_{r-k}(z)
	\widetilde{T}_{m+k}(w)
	\biggr\}.
	\label{eqn11-1039-1apr}
\end{align}
Here we adopt the convention that $\widetilde{T}_i(z) = 0$ for $i > N$. 
The explicit expression of $f_{r,m}(z)$ and $\widetilde{T}_r(z)$ can be obtained by substituting $\vec{c} = (3^N)$ into equations \eqref{eqn337-2055-30mar} and \eqref{eqn314-1625} in this paper. 
In fact, the equation \eqref{eqn314-1625} is a direct consequence of Miura transformation (\textbf{Definition \ref{def38-2133-12jan}}), which describes how to express the currents $\widetilde{T}_r(z)$ in terms of the vertex operators $\widetilde{\Lambda}^{(3^N)}_i(z) \, (i = 1,\dots,N)$. Note that the 
vertex operators $\widetilde{\Lambda}^{(3^N)}_i(z) \, (i = 1,\dots,N)$ are expressed in terms of free bosons, and this expression can be deduced from the $N$-fold tensor product of horizontal Fock representations of the quantum toroidal algebra of type $\fraks{gl}_1$ (a.k.a. Ding-Iohara-Miki algebra) \cite{Awata-note}  \cite{BS}  \cite{DI}   \cite{FHH} \cite{Sch} (see also \cite{	AFS, AKMM17 , AKMM17-2 , AKMM18 , BFM17, BFM17-2 ,BJ, CK,CK2,Ma,Zen}). This corresponds to the fact that 
quantum toroidal algebra of type $\fraks{gl}_1$ can be regarded as the quantum $W_{1 + \infty}$ algebra \cite{Miki}. Recall that the horizontal Fock representation of the quantum toroidal $\fraks{gl}_1$ algebra depends on the choice of the \say{color} parameters $q_1,q_2,q_3$. By using this fact, Bershtein, Feigin, and Merzon studied the tensor product representations of the horizontal Fock modules with mixed choice of the color parameters \cite{Misha}, thereby introducing extensions of the quantum $W_N$ algebra. This technique is deeply investigated in the context of (quantum) corner VOA via the Miura transformation \cite{HMNW} \cite{PR18}. As for the details, see explanations in the forthcoming paragraphs. 


Similar to the $W_N$ algebra, the quantum $W_N$ algebra is crucial in physics, particularly in the context of the 
five-dimensional version \cite{AY} \cite{AY2} \cite{Taki} of 
AGT correspondence \cite{AGT} \cite{Wyl}. This correspondence establishes a relation between the chiral conformal blocks of quantum $W_N$ algebra and the $K$-theoretic Nekrasov partition function \cite{nekrasov} of the $SU(N)$ gauge theory on $\bb{R}^4 \times S^1$. 

The relation between the quantum $W_N$ algebra and Macdonald polynomials is further illustrated by the computation of the correlation function of the currents $\widetilde{T}_{1}(z)$ of the quantum $W_N$ algebra \cite{FHSSY}. Specifically, it has been shown that the correlation function of these currents, under a certain map, equals $\psi_{T}(q,t)$, a coefficient appearing in the combinatorial expansion formula of the Macdonald polynomials. Despite minor differences, the result of \cite{FHSSY} is essentially equivalent to the case $(N,0)$ of \textbf{Lemma \ref{lemm44-1106-29jan}} in this paper.

Recently, a four-parameter family of vertex operator algebras, denoted $\widetilde{Y}_{L,M,N}[\Psi] \,\, (L,M,N \in \bb{Z}^{\geq 0}, \Psi \in \bb{C})$ and called corner vertex operator algebras (corner VOAs), was constructed by Gaiotto and Rap\v{c}\'{a}k in \cite{GR17} by considering the algebra arising from the brane configuration of orthogonal D3 branes. Following this work, Procházka and Rapčák discovered various equivalent constructions of corner VOAs \cite{PR17} \cite{PR18}. However, the construction most relevant to our work is the Miura transformation \cite{PR18}. 

The corner VOA $\widetilde{Y}_{L,M,N}[\Psi]$ generalizes the $W_N$ algebra, as $\widetilde{Y}_{0,0,N} = W_N$. 
Consequently, it is natural to apply the quantum deformation program to the corner VOA $\widetilde{Y}_{L,M,N}[\Psi]$. This was achieved by Harada, Matsuo, Noshita, and Watanabe,
in \cite{HMNW}, where the Miura transformation of the quantum corner VOA $q\widetilde{Y}_{L,M,N}[\Psi]$ was presented. Furthermore, the authors of \cite{HMNW} explained how to obtain the free boson representation of the vertex operators $\widetilde{\Lambda}_i^{(3^N1^L2^M)}(z)$ of $q\widetilde{Y}_{L,M,N}[\Psi]$
from the tensor product of horizontal Fock representations of the quantum toroidal $\fraks{gl}_1$ algebra.
Similar to the quantum $W_N$ algebra, the quantum corner VOA $q\widetilde{Y}_{L,M,N}[\Psi]$ is determined by quadratic relations (see \textbf{Proposition \ref{prop39-1328-8ap}}). The explicit form of these relations was initially conjectured in \cite{HMNW} and subsequently rigorously proved in \cite{pc2406}.

In this paper, motivated by the relation between the quantum $W_N$ algebra and the Macdonald polynomials, we propose a relation between quantum corner VOA $q\widetilde{Y}_{L,0,N}[\Psi]$ and super Macdonald polynomials introduced by Sergeev and Veselov in \cite{sv2007-supermac}. The precise formulation of this result is given in \textbf{Theorem \ref{thm42-main-1022}}, which constitutes the main theorem of this paper.

\subsubsection*{Organization of material}

This paper is structured as follows: Section \ref{sec2} provides a comprehensive review of fundamental concepts in the theory of symmetric polynomials that are essential for our subsequent analysis. We begin in subsection \ref{subsec21-1940-4a} by revisiting the definitions of partitions, reverse semi-standard Young tableaus, and reverse semi-standard Young bitableaus. Following this, subsection \ref{subsec22-1941-4apr} reviews the definition and construction of rings of symmetric polynomials and symmetric functions. Finally, subsections \ref{subsec23-1942-4apr} and \ref{subsec24-1943-4apr} are dedicated to reviewing the definitions and key properties of Macdonald polynomials and super Macdonald polynomials, which are central to this paper. 

In section \ref{sec3-1945-4apr}, we delve into the algebraic structures underlying our work and define the quantum corner VOA. Subsection \ref{subsec31-1953-4apr} reviews the definition and fundamental properties of the quantum toroidal algebra of type $\fraks{gl}_1$ and its horizontal Fock representation, which are necessary for constructing the vertex operators used in the Miura transformation. Subsequently, subsection \ref{subsec32-1955-4apr} presents the definition of the quantum corner VOA via the Miura transformation.

The main result of this paper is presented as \textbf{Theorem \ref{thm42-main-1022}} in section \ref{sec4-1950-4apr}. We further demonstrate in \textbf{Lemma \ref{lemm43-1156-22jan}} that the correlation function of currents can be expressed as a summation over the set of reverse semi-standard Young bitableaus (reverse SSYBTs). The detailed proof of this lemma is provided in appendix \ref{appA-1155-22jan}. By utilizing \textbf{Lemma \ref{lemm43-1156-22jan}} in conjunction with the combinatorial formula for super Macdonald polynomials explained in section \ref{sec2}, we establish that proving \textbf{Theorem \ref{thm42-main-1022}} is reduced to proving \textbf{Lemma \ref{lemm44-1106-29jan}}. A comprehensive proof of \textbf{Lemma \ref{lemm44-1106-29jan}} necessitates considering three distinct cases: (1) $(N,0)$, (2) $(N,M)$, (3) $(0,M)$, where $N, M \in \bb{Z}^{>0}$. These cases are treated individually in sections \ref{sec5-1041-7apr}, \ref{sec6-2226-13mar}, and \ref{sec7-1042-7apr}, respectively.

Section \ref{sec5-1041-7apr} is dedicated to the proof of \textbf{Lemma \ref{lemm44-1106-29jan}} for the $(N,0)$ case, which proceeds by induction on $N$. The inductive step relies on proving \textbf{Lemma \ref{lemm54-1412-2feb}}, whose detailed proof, while straightforward, is lengthy and thus relegated to appendix \ref{secappB-1410-2feb}. We note that a statement essentially equivalent to the case $(N,0)$ of \textbf{Lemma \ref{lemm44-1106-29jan}} has been previously proved in  \cite{FHSSY} using a different approach. However, given the inductive nature of our complete proof for the $(N,M)$ case, presenting our alternative proof for the simpler $(N,0)$ case in section \ref{sec5-1041-7apr} helps to illustrate the key ideas of the inductive approach.

In section \ref{sec6-2226-13mar}, we provide the proof of \textbf{Lemma \ref{lemm44-1106-29jan}} for the $(N,M)$ case, employing induction on $M \in \bb{Z}^{> 0}$. It is important to note that the basis step, $M = 1$, relies on the result established for the $(N,0)$ case. This is the reason why we need to prove the case $(N,0)$ before proving the case $(N,M)$. The inductive step in this section requires proving \textbf{Lemma \ref{lem68-1212-13mar}}, the detailed proof of which is provided in appendix \ref{appC-1214-13mar}.

Finally, section \ref{sec7-1042-7apr} presents the proof of \textbf{Lemma \ref{lemm44-1106-29jan}} for the $(0,M)$ case, where $M \in \bb{Z}^{> 0}$. For this case, induction is not required. By directly applying the result obtained for the $(N,M)$ case, the statement is readily proved.

\subsubsection*{Conventions}
Throughout this paper, we adopt the following conventions regarding summations and products:

\begin{enumerate}[(1)]
\item For $m < n$, the sum from $n$ to $m$ is defined as zero:
\begin{align*}
\displaystyle \sum_{i = n}^{m}(\text{any expression}) = 0. 
\end{align*}
\item For $m < n$, the product from $n$ to $m$ is defined as one:
\begin{align*}
\displaystyle \prod_{i = n}^{m}(\text{any expression}) = 1. 
\end{align*}
\end{enumerate}

Throughout this paper, we will encounter summations and products of the form 
\begin{align*}
\underbrace{	\sum_{a \in A}			}_{
\substack{
(1) \,\, \dots\dots
\\
\vdots
\\
(n) \,\, \dots\dots
}
}
\hspace{1cm}
\text{
and 
}
\hspace{1cm}
\underbrace{	\prod_{a \in A}			}_{
	\substack{
	(1) \,\, \dots\dots
	\\
	\vdots
	\\
	(n) \,\, \dots\dots
	}
}
\end{align*}
This notation means that the summation or product is taken over the elements $a$ of a set $A$ that satisfy all of the conditions written in $(1),(2),\dots,(n)$. 

Also, in this paper, the evaluation operation, denoted by $\, \bigg| \,$, is consistently placed to the left of the expression it acts upon. For example, for $a_1,a_2,a_3 \in \bb{C}$, we have:
\begin{align*}
\bigg|_{
	\substack{
		x_1 = a_1, \\
		x_2 = a_2,\\
		x_3 = a_3\\
	}
}
\left(
x_1^2 + x_2^3 + x_2x_3
\right)
= 
a_1^2 + a_2^3 + a_2a_3 \in \bb{C}. 
\end{align*}

\subsubsection*{Acknowledgement}
This research project is supported by the Second Century Fund (C2F), Chulalongkorn University. Our work is supported in part by Grants-in-Aid for Scientific Research (Kakenhi); 24K06753 (J.S.), 21K03180 (J.S.).

\section{Super Macdonald Polynomials}
\label{sec2}

In this section, we will review the definitions of Macdonald polynomials and super Macdonald polynomials, as well as the propositions that will be used in later sections. The main reference for Macdonald polynomials is the book \cite{macbook-1998}, while the main reference for the super Macdonald polynomials is the paper \cite{sv2007-supermac}. We will begin by reviewing the definitions of partitions, (reverse) semi-standard Young tableaus, and (reverse) semi-standard Young bitableaus

\subsection{Partitions, (reverse) semi-standard Young tableaus, (reverse) semi-standard Young bitableaus}
\label{subsec21-1940-4a}

\begin{dfn}
A sequence $\lambda = (\lambda_1,\lambda_2,\dots)$ of nonnegative integers is called a partition if the following conditions are satisfied:
\begin{enumerate}[(1)]
\item For all (except finitely many) positive integers $k$, we have $\lambda_k = 0$. 
\item $\lambda_1 \geq \lambda_2 \geq \cdots$.  
\end{enumerate}
The set of all partitions is denoted by $\operatorname{Par}$.

If $\lambda = (\lambda_1,\lambda_2,\dots)$ is a partition such that $|\lambda| := \sum_{k = 1}^{\infty}\lambda_k = n$, then $\lambda$ is said to be a partition of $n$. The set of all partitions of $n$ is denoted by $\operatorname{Par}(n)$. 
\end{dfn}

It is clear that $\operatorname{Par} = \bigsqcup_{n = 0}^{\infty}\operatorname{Par}(n)$. 

\begin{dfn}
Let $\lambda, \mu \in \operatorname{Par}$. We say that $\mu \subseteq \lambda$ if for any $i \in \bb{Z}^{\geq 1}$, $\mu_i \leq \lambda_i$. 
\end{dfn}

\begin{dfn}
Let $\lambda, \mu \in \operatorname{Par}$. We say that $\mu \leq_{d.o.} \lambda$ if 
\begin{enumerate}[(1)]
\item $|\lambda| = |\mu|$.
\item For any $i \in \bb{Z}^{\geq 1}$, $\mu_1 + \cdots + \mu_i \leq \lambda_1 + \cdots + \lambda_i$. 
\end{enumerate}
The relation $\leq_{d.o.}$ is a partial order relation on the set $\operatorname{Par}$, and it is called dominance ordering. 
\end{dfn}

\begin{dfn}
Let $\lambda  \in \operatorname{Par}$. The number of nonzero elements in the sequence $\lambda$ is called the length of $\lambda$, denoted by $\ell(\lambda)$. 
\end{dfn}

It is clear that for any $\lambda \in \operatorname{Par}$, we can always write $\lambda$ uniquely as 
\begin{align}
\lambda 
&= (\underbrace{			\lambda_1,\dots,\lambda_1		}_{m_1}, 
\underbrace{		\lambda_{m_1 + 1},\dots, \lambda_{m_1 + 1}			}_{m_2},\dots, 
\underbrace{	\lambda_{\sum_{i = 1}^{n-1}m_i+ 1}, \dots, \lambda_{\sum_{i = 1}^{n-1}m_i+ 1}		}_{m_n},
0,0,\dots 
)
\label{eqn21-1749-1jan}
\\
&:= (\lambda_1^{m_1}\lambda_{m_1 + 1}^{m_2}\cdots
\lambda_{\sum_{i = 1}^{n-1}m_i+ 1}^{m_n}
)
\notag 
\end{align}
where $\lambda_1, \lambda_{m_1 + 1}, \dots, \lambda_{\sum_{i = 1}^{n-1}m_i+ 1} \in \bb{Z}^{\geq 1}$ and $m_1,m_2,\dots, m_n \in \bb{Z}^{\geq 1}$ satisfying $m_1 + \cdots + m_n = \ell(\lambda)$. In the situation as in the equation \eqref{eqn21-1749-1jan}, we say that the partition $\lambda$ consists of $n$ \textbf{\textit{row intervals}}. Also, for each $r \in \{1,\dots,n\}$, we will call the row $(m_1 + \cdots + m_r)^{\text{th}}$
\textbf{\textit{the last row of the row interval $r^{\text{th}}$}}.


One can present a partition by using the Young diagram. For $\lambda \in \operatorname{Par}$, the Young diagram with shape $\lambda$ is a collection of boxes, provided that 
the number of boxes in the row $i$ of the Young diagram is equal to $\lambda_i$. For example, 
the Young diagram with shape $\lambda = (4,4,2,1)$ is 
\begin{align}
\ydiagram{4,4,2,1}
\label{eqn22-2056-1jan}
\end{align}
It is clear from this Young diagram that $\lambda = (4,4,2,1)$ has $3$ row intervals. 

\begin{dfn}
Let $\lambda, \mu \in \operatorname{Par}$ and $\mu \subseteq \lambda$. The Young diagram with skew shape 
$\lambda \backslash \mu$ is an object obtained by removing the boxes appearing in the Young diagram with shape $\mu$
from the Young diagram with shape $\lambda$. 
\end{dfn}

\begin{rem}
In symmetric polynomials literatures, the skew shape of a Young diagram is usually denoted by $\lambda / \mu$. 
However, in this paper, we will denote it by $\lambda \backslash \mu$ to avoid confusion with the $\lambda/\mu$ that appears in the combinatorial formula of the Macdonald polynomials 
(see equation \eqref{eqn27-2052} and \textbf{Proposition \ref{thm225-2054}} below). 
\end{rem}

\begin{exa}
Let $\lambda = (4,4,2,1)$ and $\mu = (3,1)$. Then, 
\begin{align}
\ydiagram{3+1,1+3,2,1}
\end{align}
is a Young diagram with skew shape $(4,4,2,1) \backslash (3,1)$. 
\end{exa}

\begin{dfn-prp}[Conjugation of $\lambda$]
Let $\lambda = (\lambda_1,\lambda_2,\dots) \in \operatorname{Par}$. Define $\lambda^\prime = (\lambda^\prime_1,\lambda^\prime_2,\dots)$ where for each $k \in \bb{Z}^{\geq 1}$, $\lambda^\prime_k$ is the number of boxes in the $k^{\text{th}}$ column of the Young diagram with shape $\lambda$. 
Then, $\lambda^\prime \in \operatorname{Par}$, for any $\lambda \in \operatorname{Par}$.
The partition $\lambda^\prime$ is called the conjugation of the partition $\lambda$. 
\end{dfn-prp}

For example, for $\lambda = (4,4,2,1)$ whose diagram is as shown in equation \eqref{eqn22-2056-1jan}, we have $\lambda^\prime = (4,3,2,2)$.

\begin{dfn}[Arm length and leg length]
Let $\lambda = (\lambda_1,\lambda_2,\dots) \in \operatorname{Par}$, and let $s$ be the box that appears in the $i^{\text{th}}$ row and $j^{\text{th}}$ column in the Young diagram of $\lambda$. Then, the arm length and leg length of the box $s$, denoted by $a_\lambda(s)$ and $\ell_\lambda(s)$ respectively, are defined by 
\begin{align*}
a_\lambda(s) = \lambda_i - j, 
\hspace{1.6cm}
\ell_\lambda(s) = \lambda^\prime_j - i. 
\end{align*}
\end{dfn}

\begin{nota}
To indicate that box $s$ is located in $i^{\text{th}}$ row  and $j^{\text{th}}$ column, we can use the shorthand notation $\operatorname{row}(s) = i$ and $\operatorname{col}(s) = j$. 
\end{nota}

\begin{dfn}
Let $\lambda \in \operatorname{Par}$ and $N \in \bb{Z}^{\geq 1}$. 
A semi-standard Young tableau (SSYT) of type $N$ with shape $\lambda$ is a Young diagram of shape $\lambda$ where each box is filled with a positive integer from the set $\{1,\dots,N\}$, satisfying the following conditions:
\begin{enumerate}[(1)]
\item 
The numbers in each row are weakly increasing from left to right.
\item 
The numbers in each column are strictly increasing from top to bottom. 
\end{enumerate}
The set of all SSYTs of type $N$ with shape $\lambda$ is denoted by $\operatorname{SSYT}(N;\lambda)$. 
\end{dfn}

\begin{dfn}
Let $\lambda, \mu \in \operatorname{Par}$, $\mu \subseteq \lambda$, and $N \in \bb{Z}^{\geq 1}$. 
A semi-standard Young tableau (SSYT) of type $N$ with skew shape $\lambda \backslash \mu$ is a Young diagram of skew shape $\lambda \backslash \mu$ where each box is filled with a positive integer from the set $\{1,\dots,N\}$, satisfying the following conditions:
\begin{enumerate}[(1)]
	\item 
	The numbers in each row are weakly increasing from left to right.
	\item 
The numbers in each column are strictly increasing from top to bottom. 
\end{enumerate}
The set of all SSYTs of type $N$ with skew shape $\lambda \backslash \mu$ is denoted by $\operatorname{SSYT}(N;\lambda \backslash \mu)$. 
\end{dfn}

\begin{dfn}
\label{dfn210-1443}
Let $\lambda, \mu \in \operatorname{Par}$, $\mu \subseteq \lambda$, and $N \in \bb{Z}^{\geq 1}$. 
A reverse semi-standard Young tableau (reverse SSYT) of type $N$ with skew shape $\lambda\backslash\mu$ is a Young diagram with skew shape $\lambda\backslash\mu$ where each box is filled with a positive integer from the set $\{1,\dots,N\}$, satisfying the following conditions:
\begin{enumerate}[(1)] 
	\item 
	The numbers in each row are weakly decreasing from left to right.
	\item 
	The numbers in each column are strictly decreasing from top to bottom. 
\end{enumerate}
The set of all reverse SSYTs of type $N$ with skew shape $\lambda \backslash \mu$ is denoted by $\operatorname{RSSYT}(N;\lambda\backslash \mu)$. 
\end{dfn}

\begin{exa}
	\begin{align}
		\begin{ytableau}
			4 & 3 & 2 & 2 \\
			3 & 2
		\end{ytableau}
	\end{align}
	is an element of $\operatorname{RSSYT}(4; (4,2) )$. 
\end{exa}

\begin{exa}
	\label{example212-2115}
	\begin{align}
		\begin{ytableau}
			\none  & \none & \none & 2 \\
			\none & 3 & 3 & 1\\ 
			2 & 1 \\
			1
		\end{ytableau}
	\end{align}
	is an element of $\operatorname{RSSYT}(3; 	(4,4,2,1) \backslash (3,1)		 )$. 
\end{exa}

\begin{dfn}
\label{dfn211-1443}
Let $\lambda, \mu \in \operatorname{Par}$ such that $\mu \subseteq \lambda$, and let $N, M \in \bb{Z}^{\geq 1}$. 
A reverse semi-standard Young bitableau (reverse SSYBT) of type $(N,M)$ with  skew shape $\lambda\backslash\mu$
is a filling of the Young diagram of skew shape $\lambda\backslash\mu$ with positive integers satisfying the following conditions:
\begin{enumerate}[(1)]
\item Each box in the Young diagram is assigned a positive integer from the set $\{1,\dots,N,N+1,\dots,N+M		\}$. The elements of the subset $\{1,\dots,N	\}$ are referred to as ordinary numbers, while the elements of the subset $\{N+1,\dots,N+M	\}$ are referred to as super numbers. 
\item 
The numbers in each row are weakly decreasing from left to right.
\item 
The numbers in each column are weakly decreasing from top to bottom. 
\item 
The super numbers within each row are strictly decreasing from left to right
\item 
The ordinary numbers within each column are strictly decreasing from top to bottom. 
\end{enumerate}
The set of all reverse SSYBT of type $(N,M)$ with skew shape $\lambda\backslash\mu$ is denoted by $\operatorname{RSSYBT}(N,M;\lambda\backslash\mu)$. 
\end{dfn}

For clarity and ease of calculation, super numbers in the reverse SSYBT will be represented in blue throughout this paper. It is important to note that by substituting $\mu = (0,0,\dots)$ (the empty partition) into \textbf{Definitions \ref{dfn210-1443}} and \textbf{\ref{dfn211-1443}}, we directly obtain the definitions of reverse SSYT and reverse SSYBT with shape $\lambda$.

\begin{exa}
\begin{align}
	\begin{ytableau}
		\textcolor{blue}{4}  & \textcolor{blue}{3} & 2 & 2 \\
		\textcolor{blue}{4} & 2
	\end{ytableau}
\end{align}
is an element of $\operatorname{RSSYBT}(2,2; (4,2) )$. 
\end{exa}

\begin{exa}
\begin{align}
	\begin{ytableau}
		\textcolor{blue}{4}  & \textcolor{blue}{4} & 2 & 2 \\
		\textcolor{blue}{3} & 2
	\end{ytableau}
\end{align}
is \textbf{not} an element of $\operatorname{RSSYBT}(2,2; (4,2) )$. This is because the super numbers in the first row is not strictly decreasing. 
\end{exa}

\begin{exa}
\label{example217-2115}
\begin{align}
	\begin{ytableau}
		\none  & \textcolor{blue}{4} & 2 & 2 \\
		\textcolor{blue}{3} & 2
	\end{ytableau}
\end{align}
is an element of $\operatorname{RSSYBT}(2,2; (4,2) \backslash (1))$. 
\end{exa}

\begin{prop}
Let $T \in \operatorname{RSSYBT}(N,M;\lambda)$, and let $T_1$ be the subdiagram of $T$ consisting of the boxes with super numbers. Then, the shape of $T_1$ is a Young diagram of a partition. 
\end{prop}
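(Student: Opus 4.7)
The plan is to show two things about the subdiagram $T_1$ of super numbers: first, in each row the super numbers occupy a contiguous block of boxes starting from the leftmost box; second, the lengths of these blocks form a weakly decreasing sequence. Together, these properties say exactly that $T_1$ is the Young diagram of some partition.

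For the first claim, I would use the following observation: since every super number is strictly greater than every ordinary number (as the super numbers lie in $\{N+1,\dots,N+M\}$ and the ordinary numbers lie in $\{1,\dots,N\}$), condition (2) of \textbf{Definition \ref{dfn211-1443}} forces every super number in row $i$ to appear strictly to the left of every ordinary number in the same row. Hence, if I let $\mu_i$ denote the total number of super-valued boxes in row $i$ of $T$, those boxes are exactly the ones in positions $(i,1),(i,2),\dots,(i,\mu_i)$. An analogous remark using condition (3) shows that super numbers in column $j$ sit strictly above ordinary numbers in the same column, which is what makes the forthcoming contradiction argument work.

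For the second claim, I would argue by contradiction: suppose $\mu_{i+1} > \mu_i$ for some $i$. Then the box at position $(i+1,\mu_i+1)$ carries a super number (as $\mu_i+1 \leq \mu_{i+1}$), while the box at position $(i,\mu_i+1)$ either does not exist in $T$—impossible, since the box directly below it does exist and $T$ has a Young-diagram shape $\lambda$—or carries an ordinary number (as $\mu_i+1 > \mu_i$). Condition (3) of \textbf{Definition \ref{dfn211-1443}} demands that the entry at $(i,\mu_i+1)$ be $\geq$ the entry at $(i+1,\mu_i+1)$, contradicting the fact that the former is $\leq N$ while the latter is $> N$. Hence $\mu_1 \geq \mu_2 \geq \cdots$, and since $T_1$ has at most $\ell(\lambda)$ nonzero rows, the sequence $\mu=(\mu_1,\mu_2,\dots)$ is a partition with $\mu \subseteq \lambda$, whose Young diagram coincides with the shape of $T_1$.

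There is no real obstacle here; the proof is essentially a two-line observation that (a) super numbers dominate ordinary ones numerically, so they are pushed to the top-left by the weak monotonicity conditions, and (b) the column inequality between consecutive rows rules out a shorter super-row sitting above a longer one.
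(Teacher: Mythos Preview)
Your argument is correct. The paper states this proposition without proof, presumably because it is an immediate consequence of \textbf{Definition \ref{dfn211-1443}}; your two-step argument (super entries are pushed left in each row by condition (2), and then $\mu_i\geq\mu_{i+1}$ follows from condition (3) together with $\lambda_i\geq\lambda_{i+1}$) is exactly the intended justification.
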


\begin{dfn}\mbox{}
\label{def220-1540-19mar}
\begin{enumerate}[(1)]
\item Let $T \in \operatorname{RSSYBT}(N,M;\lambda)$, and let $\beta \in \{1,\dots,N+M\}$. Define 
\begin{align}
\operatorname{Row}(T|\beta) := 
\left\{
\alpha \in \{1,\dots,\ell(\lambda)\}
\;\middle\vert\;
\begin{array}{@{}l@{}}
\alpha \text{ is a row of $T$ which contains a box with number $\beta$}
\end{array}
\right\}
\end{align}
\item Let $\lambda, \mu \in \operatorname{Par}$ where $\mu \subseteq \lambda$. Define 
\begin{align}
\operatorname{Row}( \mu \rightarrow \lambda ) := 
\left\{
\alpha \in \{1,\dots,\ell(\lambda)\}
\;\middle\vert\;
\begin{array}{@{}l@{}}
\lambda_\alpha \neq \mu_\alpha
\end{array}
\right\}
\end{align}
\end{enumerate}
\end{dfn}

\subsection{The algebra of symmetric polynomials}
\label{subsec22-1941-4apr}

\begin{dfn}
Let $S_N$ be the symmetric group\footnote{
Recall that the elements of $S_N$ is the bijections $\sigma : \{1,\dots,N\} \rightarrow \{1,\dots,N\}$. 
}. Define the algebra $\Lambda_N$ over the field $\bb{C}$ to be 
\begin{align}
\Lambda_N := 
\left\{
f(x_1,\dots,x_N) \in \bb{C}[x_1,\dots,x_N]
\;\middle\vert\;
\begin{array}{@{}l@{}}
^\forall \sigma \in S_N \,\, f(x_{\sigma(1)},\dots,x_{\sigma(N)}) = f(x_1,\dots,x_N)
\end{array}
\right\}.
\end{align}
The elements of $\Lambda_N$ are called symmetric polynomials of $N$ variables over $\bb{C}$. 
\end{dfn}

The algebra $\Lambda_N$ admits a grading by degree of the polynomials. Specifically, we can write 
$\Lambda_N = \ddsum_{r = 0}^{\infty} \Lambda_N^r$ where 
\begin{align}
\Lambda_N^r := 
\left\{
f(x_1,\dots,x_N) \in \Lambda_N
\;\middle\vert\;
\begin{array}{@{}l@{}}
\deg f(x_1,\dots,x_N) = r 
\end{array}
\right\}. 
\end{align}
For each $N \in \bb{Z}^{\geq 1}$, we define a homomorphism $\pi^{(r)}_N : \Lambda_N^r \rightarrow \Lambda_{N-1}^r$ by the evaluation $f(x_1,\dots,x_N)  \mapsto f(x_1,\dots,x_{N-1},0)$. 
We define $\Lambda^r$ to be the inverse limit of the following inverse system:
\begin{align}
\cdots \xrightarrow{\pi^{(r)}_{N+1}} \Lambda_N^r \xrightarrow{\pi^{(r)}_N} \Lambda_{N-1}^r \xrightarrow{\pi^{(r)}_{N-1}} \cdots.
\end{align}
That is, $\displaystyle \Lambda^r := \inverselimm{N}\Lambda_N^r$. Then, the ring of symmetric functions, denoted by $\Lambda$, is defined as 
\begin{align}
\Lambda = \ddsum_{r = 0}^{\infty}\Lambda^r. 
\end{align}

%
%

\begin{exa}[Monomial symmetric polynomials]
For each partition $\lambda \in \operatorname{Par}$ with length $\ell(\lambda) \leq N$, define the monomial symmetric polynomials of $N$ variables as 
\begin{align}
m_\lambda(x_1,\dots,x_N) := \sum_{\alpha \in \operatorname{Per}_N(\lambda)}x^\alpha, 
\end{align}
where
\begin{align}
\operatorname{Per}_N(\lambda)
:=
\left\{
(\alpha_1,\dots,\alpha_N)
\;\middle\vert\;
\begin{array}{@{}l@{}}
^\exists \sigma \in S_N \text{ such that }
(\alpha_1,\dots,\alpha_N)
= 
(\lambda_{\sigma(1)},\dots,\lambda_{\sigma(N)})
\end{array}
\right\},
\end{align}
and $x^\alpha := x_1^{\alpha_1}\cdots x_N^{\alpha_N}$. It is evident that $m_\lambda(x_1,\dots,x_N) \in \Lambda_N$. The corresponding symmetric functions, known as monomial symmetric functions, are defined as follows: for each partition $\lambda \in \operatorname{Par}$, 
\begin{align}
	m_\lambda(x) := \sum_{\alpha \in \operatorname{Per}(\lambda)}x^\alpha, 
\end{align}
where 
\begin{align}
	\operatorname{Per}(\lambda)
	:=
	\left\{
	(\alpha_i)_{i = 1}^{\infty}
	\;\middle\vert\;
	\begin{array}{@{}l@{}}
		^\exists \text{bijection }
		\sigma : \bb{Z}^{\geq 1} \rightarrow \bb{Z}^{\geq 1} \text{ such that }
		^\forall i \in \bb{Z}^{\geq 1} \, \alpha_i = \lambda_{\sigma(i)}
	\end{array}
	\right\},
\end{align}
and $x^\alpha := \prod_{i = 1}^{\infty}x_i^{\alpha_i}$. Clearly, $m_\lambda(x) \in \Lambda$. 
\end{exa}

\begin{exa}[Power sum symmetric polynomials]
\label{exa223-1225-26mar}
For each integer $r \in \bb{Z}^{\geq 1}$, define 
\begin{align}
p_r(x_1,\dots,x_N) := \sum_{i = 1}^{N}x_i^r. 
\label{eqn215-1543-24mar}
\end{align}
By convention, we define $p_0(x_1,\dots,x_N) := 1$. For each partition $\lambda = (\lambda_1,\lambda_2,\dots)$, define 
\begin{align}
p_\lambda(x_1,\dots,x_N) := \prod_{i = 1}^{\ell(\lambda)}p_{\lambda_i}(x_1,\dots,x_N). 
\label{eqn216-1543-24mar}
\end{align}
It is clear that $p_\lambda(x_1,\dots,x_N) \in \Lambda_N$, and these are called the power sum symmetric polynomials of $N$ variables. 
The symmetric functions analogues of equations \eqref{eqn215-1543-24mar} and \eqref{eqn216-1543-24mar} are defined as 
\begin{align}
p_r(x) := \sum_{i = 1}^{\infty}x_i^r, 
\end{align}
and 
\begin{align}
p_\lambda(x) := \prod_{i = 1}^{\ell(\lambda)}p_{\lambda_i}(x),
\end{align}
respectively. Clearly, $p_\lambda(x) \in \Lambda$, and these are called the power sum symmetric functions. 
\end{exa}

\begin{prop}
\label{prp224-1352-25mar}
The set of power sum symmetric functions, 
$\{	p_\lambda(x) ~|~ \lambda \in \operatorname{Par}		\}$, forms a $\bb{C}$-basis of the algebra of symmetric functions 
$\Lambda$. 
\end{prop}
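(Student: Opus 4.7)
The plan is to exploit the grading $\Lambda = \bigoplus_{r \geq 0} \Lambda^r$ together with the fact (standard from the inverse-limit construction) that the monomial symmetric functions $\{m_\lambda(x)\}_{|\lambda|=r}$ already form a $\mathbb{C}$-basis of $\Lambda^r$. Since $p_\lambda(x)$ is homogeneous of degree $|\lambda|$, it suffices to show, for each fixed $r \in \mathbb{Z}^{\geq 0}$, that $\{p_\lambda(x) : |\lambda|=r\}$ is a basis of $\Lambda^r$. Because both sets are indexed by $\operatorname{Par}(r)$ and $\dim_{\mathbb{C}} \Lambda^r = |\operatorname{Par}(r)|$, I only need to verify one of spanning or linear independence; spanning is the more natural direction.

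First I would establish that $\Lambda$ is generated as a $\mathbb{C}$-algebra by the power sums $p_1(x), p_2(x), \ldots$. The cleanest route is via Newton's identities,
\begin{align*}
n\, e_n(x) = \sum_{k=1}^{n} (-1)^{k-1} e_{n-k}(x)\, p_k(x),
\end{align*}
where $e_n(x)$ denotes the $n$-th elementary symmetric function. Solving this recursion (which is possible over $\mathbb{Q}$ because of the factor $n$, hence certainly over $\mathbb{C}$) expresses every $e_n(x)$ as a polynomial in $p_1(x),\ldots,p_n(x)$. Combined with the fundamental theorem of symmetric functions, which gives $\Lambda = \mathbb{C}[e_1(x), e_2(x), \ldots]$, this yields $\Lambda = \mathbb{C}[p_1(x), p_2(x), \ldots]$ as an algebra.

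Next, because $\Lambda = \mathbb{C}[p_1(x), p_2(x), \ldots]$, every element of $\Lambda$ is a $\mathbb{C}$-linear combination of products $p_{\lambda_1}(x) p_{\lambda_2}(x) \cdots p_{\lambda_k}(x)$ with $\lambda_1 \geq \lambda_2 \geq \cdots \geq \lambda_k \geq 1$ (by commutativity we may reorder factors into weakly decreasing order). These are precisely the $p_\lambda(x)$ for $\lambda \in \operatorname{Par}$, so $\{p_\lambda(x)\}_{\lambda \in \operatorname{Par}}$ spans $\Lambda$. Restricting to degree $r$ yields a spanning set of size $|\operatorname{Par}(r)|$ for the $|\operatorname{Par}(r)|$-dimensional space $\Lambda^r$, forcing it to be a basis.

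The only mildly delicate point is verifying that the fundamental theorem of symmetric functions transfers from the polynomial rings $\Lambda_N$ to the inverse limit $\Lambda$; this I would handle by observing that for each $r$ and each $N \geq r$ the projection $\pi_N^{(r)}$ induces an isomorphism $\Lambda^r \to \Lambda_N^r$, so identities among the generators $p_k(x)$ and $e_n(x)$ (valid in $\Lambda_N$ for all sufficiently large $N$) lift unambiguously to $\Lambda$. Apart from this bookkeeping the argument is entirely formal, and I expect no real obstacle.
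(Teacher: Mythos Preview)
Your argument is correct and is essentially the standard proof one finds in Macdonald's book. The paper itself does not supply a proof at all; it simply writes ``See \cite{macbook-1998}'' and defers to the literature. Your write-up therefore provides the details the paper omits, and the route you take (Newton's identities to express each $e_n$ in terms of $p_1,\ldots,p_n$, algebraic independence of the $e_n$ via the fundamental theorem, then a degree-by-degree dimension count against the monomial basis) is exactly the argument Macdonald gives.
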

\begin{proof}
See \cite{macbook-1998}. 
\end{proof}


\subsection{Macdonald  polynomials}
\label{subsec23-1942-4apr}
In this subsection, we will review the fundamental definitions and propositions concerning Macdonald  polynomials. It is important to note that, throughout this paper, Macdonald  polynomials will always be considered over the field of complex numbers $\bb{C}$. 

\begin{dfn}[Macdonald-Ruijsenaars operator]
Let $q,t \in \bb{C}\backslash\{0\}$, and let $\Lambda_{N}$
be the algebra of symmetric polynomials of $N$ variables over the field $\bb{C}$. 
We define the Macdonald-Ruijsenaars operator $D^{(N)} : \Lambda_{N} \rightarrow \Lambda_{N}$ as the linear 
operator given by 
\begin{align}
D^{(N)}
:= \sum_{i = 1}^{N}
\bigg[
\bigg(
\prod_{
\substack{
1 \leq j \leq N
\\
j \neq i 
}
}
\frac{tx_i - x_j}{x_i - x_j}
\bigg)
T_{q,x_i}
\bigg], 
\end{align}
where $(T_{q,x_i}f)(x_1,\dots,x_N) = f(x_1,\dots,qx_i,\dots,x_N)$. 
\end{dfn}

\begin{dfn-prp}[Macdonald polynomials]
Let $q,t \in \bb{C}\backslash\{0\}$ satisfy the condition that
\begin{align}
\text{
if $a,b \in \bb{Z}$ and $q^at^b = 1$, then $a = b = 0$.}
\label{eqn214-1341-13jan}
\end{align}
Then, for each partition $\lambda \in \operatorname{Par}$ with length $\ell(\lambda) \leq N$,
there exists a unique 
$P_\lambda(x_1,\dots,x_N;q,t) \in \Lambda_{N}$ satisfying the following conditions:
\begin{enumerate}[(1)]
\item There exists coefficients 
$
\left\{
u_{\lambda\mu} \in \bb{C}
\;\middle\vert\;
\begin{array}{@{}l@{}}
	\mu <_{d.o.} \lambda
\end{array}
\right\}
$
such that 
\begin{align}
	P_\lambda(x_1,\dots,x_N;q,t) = m_\lambda(x_1,\dots,x_N) + \sum_{\mu <_{d.o.} \lambda}u_{\lambda\mu}m_{\mu}(x_1,\dots,x_N). 
\end{align}
\item $D^{(N)}P_\lambda(x_1,\dots,x_N;q,t) = (	\sum_{i = 1}^{N}t^{N-i}q^{\lambda_i}		)P_\lambda(x_1,\dots,x_N;q,t)$
\end{enumerate}
The symmetric polynomials $P_\lambda(x_1,\dots,x_N;q,t) \,\, (\lambda \in \operatorname{Par})$ are called the Macdonald polynomials. 
\end{dfn-prp}

\begin{dfn-prp}[Macdonald functions]
Let $q,t \in \bb{C}\backslash\{0\}$ satisfy the condition stated in equation \eqref{eqn214-1341-13jan}.
Then, the Macdonald polynomials satisfy the following relation:
\begin{align}
	\pi_N\left(	P_\lambda(x_1,\dots,x_N;q,t)				\right) = P_\lambda(x_1,\dots,x_{N-1};q,t), 
\end{align}
where $\pi_N := \ddsum_{r = 0}^{\infty}\pi_N^{(r)} : \ddsum_{r = 0}^{\infty} \Lambda^r_{N} \rightarrow \ddsum_{r = 0}^{\infty} \Lambda^r_{N-1}$. As a consequence, there uniquely exists a symmetric function $P_\lambda(x;q,t) \in \Lambda$ such that 
\begin{align*}
	\bigg|_{
		\substack{
			x_{N+1} = x_{N+2} = \cdots = 0  
		}
	}
	P_\lambda(x;q,t) = P_\lambda(x_1,\dots,x_N;q,t). 
\end{align*}
The family of symmetric functions $\left(		P_\lambda(x;q,t)	\right)_{\lambda \in \operatorname{Par}}$ are called the Macdonald functions. 
\end{dfn-prp}

\begin{rem}
From now on, whenever we refer to the Macdonald polynomials $P_\lambda(x_1,\dots,x_N;q,t)$ and Macdonald functions $P_\lambda(x;q,t)$, we will always assume that $q,t$ satisfy the condition stated in equation \eqref{eqn214-1341-13jan}.
\end{rem}

In the following, we will discuss the combinatorial formula for Macdonald polynomials, which provides a method for computing Macdonald polynomials. First, we introduce the definition of $\psi_{T}(q,t)$.

\begin{dfn}
\label{dfn223-2118}
\begin{enumerate}[(1)]
\item 
Let $\lambda, \mu \in \operatorname{Par}$ and $\mu \subseteq \lambda$. Define 
\begin{align}
	\psi_{\lambda/\mu}(q,t) := \prod_{1 \leq i \leq j \leq \ell(\mu)}
	\frac{
		f_{q,t}(q^{\mu_i - \mu_j}t^{j - i})
		f_{q,t}(q^{\lambda_i - \lambda_{j+1}}t^{j - i})
	}{
		f_{q,t}(q^{\lambda_i - \mu_j}t^{j - i})
		f_{q,t}(q^{\mu_i - \lambda_{j+1}}t^{j - i})
	}, 
\label{eqn27-2052}
\end{align}
where $\displaystyle f_{q,t}(u) := \frac{(tu;q)_\infty}{(qu;q)_\infty}$. Here $(x;q)_\infty := \prod_{i = 0}^{\infty}(1 - xq^i)$. 
\vspace{0.2cm}
\item 
Let $\lambda, \mu \in \operatorname{Par}$ and $\mu \subseteq \lambda$.
For each $T \in \operatorname{RSSYT}(N;\lambda\backslash\mu)$, we define
\begin{align}
	\psi_{T}(q,t) := \prod_{k = 1}^{N}\psi_{T^{(k)}/T^{(k+1)}}(q,t), 
\label{eqn28-1539-3jan}
\end{align}
where for each $i \in \{1,\dots,N\}$, $T^{(i)}$ is the Young diagram formed by the union of the Young diagram $\mu$ and the boxes containing number greater than or equal to $i$ in T. By convention, for any $i \geq N+1$, we have 
$T^{(i)} := \mu$. 
\end{enumerate}
\end{dfn}

The next example illustrates the definition of $T^{(i)}$.

\begin{exa}
Let $T$ be the Young diagram in the example \ref{example212-2115}. According to the definition \ref{dfn223-2118}, we get that 

\small 
\begin{align}
T^{(4)} &= 
\ydiagram{3,1} \,\,, 
\hspace{3cm}
T^{(3)} = 
\ydiagram{3,3} \,\,,
\\
T^{(2)} &= 
\ydiagram{4,3,1} \,\,, 
\hspace{2.2cm}
T^{(1)} = 
\ydiagram{4,4,2,1}. 
\end{align}
\normalsize
\end{exa}

\begin{prop}
\label{prop226-1127-19mar}
Let $\lambda, \mu \in \operatorname{Par}$ and $\mu \subseteq \lambda$. Define
\begin{align}
\psi^\prime_{\lambda/\mu}(q,t) := 
\underbrace{		\prod_{	1 \leq  i < j  \leq \ell(\lambda)			}					}_{
\substack{
(1) \,\, \lambda_i = \mu_i 
\\
(2) \,\, \lambda_j = \mu_j + 1
}
}
\frac{
(1 - q^{\mu_i - \mu_j}t^{j - i - 1})(1 - q^{\lambda_i - \lambda_j}t^{j - i +1})
}{
(1 - q^{\mu_i - \mu_j}t^{j - i })(1 - q^{\lambda_i - \lambda_j}t^{j - i})
}
\end{align}
Then, we have $\psi^\prime_{\lambda/\mu}(q,t) = \psi_{\lambda^\prime/\mu^\prime}(t,q)$. 
\end{prop}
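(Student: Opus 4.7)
The plan is to reduce $\psi_{\lambda'/\mu'}(t,q)$ to a finite product by exploiting the telescoping structure of the infinite products $f_{q,t}$, and then to identify that finite product with $\psi'_{\lambda/\mu}(q,t)$ term by term via partition conjugation.

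First, I would substitute $(\lambda,\mu,q,t)\mapsto(\lambda',\mu',t,q)$ directly into \eqref{eqn27-2052}; the index range becomes $1\le i\le j\le \ell(\mu')=\mu_1$, and the generic $(i,j)$-factor takes the shape
\begin{align*}
\frac{f_{t,q}(t^{\mu'_i-\mu'_j}q^{j-i})\,f_{t,q}(t^{\lambda'_i-\lambda'_{j+1}}q^{j-i})}{f_{t,q}(t^{\lambda'_i-\mu'_j}q^{j-i})\,f_{t,q}(t^{\mu'_i-\lambda'_{j+1}}q^{j-i})}.
\end{align*}
The key point is that in the numerator/denominator pair $(\mu'_i-\mu'_j,\lambda'_i-\mu'_j)$ and the pair $(\lambda'_i-\lambda'_{j+1},\mu'_i-\lambda'_{j+1})$, the two $t$-exponents differ by exactly $\lambda'_i-\mu'_i\ge 0$, so both ratios collapse through the elementary identity
\begin{align*}
\frac{f_{t,q}(t^{a}q^{b})}{f_{t,q}(t^{c}q^{b})}\;=\;\prod_{k=a}^{c-1}\frac{1-t^{k}q^{b+1}}{1-t^{k+1}q^{b}}\qquad(c\ge a),
\end{align*}
into finite products of factors $(1-t^{k}q^{j-i+1})^{\pm 1}(1-t^{k+1}q^{j-i})^{\mp 1}$.

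Second, I would reindex the resulting double product from indices of $\lambda',\mu'$ to indices of $\lambda,\mu$ using the conjugation duality $\lambda'_j\ge i\iff\lambda_i\ge j$. Careful tracking should show that only pairs of row-indices $(a,b)$ with $a<b$, $\lambda_a=\mu_a$, and $\lambda_b=\mu_b+1$ leave a non-trivial contribution after all the cross-cancellations, and that each such pair contributes precisely one copy of the corresponding factor appearing in $\psi'_{\lambda/\mu}(q,t)$.

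The main obstacle will be this second step: the bookkeeping required to confirm that every surviving telescoped factor is accounted for exactly once, and that the conditions ``$\lambda_a=\mu_a$, $\lambda_b=\mu_b+1$'' arise organically from the cancellation pattern. A conceptually cleaner alternative, which I would fall back on if the direct calculation becomes unwieldy, is to first establish the intermediate formula
\begin{align*}
\psi_{\lambda/\mu}(q,t)\;=\;\prod_{s}\frac{b_\mu(s)}{b_\lambda(s)},\qquad b_\nu(s)=\frac{1-q^{a_\nu(s)}t^{\ell_\nu(s)+1}}{1-q^{a_\nu(s)+1}t^{\ell_\nu(s)}},
\end{align*}
where the product runs over boxes $s\in\mu$ whose row, but not column, meets $\lambda\setminus\mu$, together with the analogous formula for $\psi'_{\lambda/\mu}(q,t)$ indexed by boxes whose column, but not row, meets $\lambda\setminus\mu$. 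Conjugation swaps arms with legs and swaps the two box-sets, and the simultaneous $(q,t)\leftrightarrow(t,q)$ exchange compensates, yielding the identity in one line.
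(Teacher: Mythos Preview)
The paper does not supply its own proof of this proposition; it simply refers the reader to Macdonald's book. Your proposal is correct, and your fallback argument---expressing both $\psi_{\lambda/\mu}$ and $\psi'_{\lambda/\mu}$ as products of $b_\nu(s)/b_\nu(s)$-ratios over complementary box-sets, then observing that conjugation swaps arms with legs and swaps the two box-sets---is precisely the standard argument one finds in Macdonald (Chapter~VI, \S6--7). In that sense you have independently reconstructed the proof the citation points to.

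Your primary telescoping approach is also valid, but note that the ``second step'' you flag as the obstacle is essentially equivalent to deriving the box-product formula for $\psi$ from the $f_{q,t}$-definition: the surviving factors after all cancellations are exactly the $b_\mu(s)/b_\lambda(s)$ over the relevant boxes. So your two strategies are really the same argument entered at different points, and the fallback is the cleaner entry. One minor caution: the explicit product formula the paper states for $\psi'_{\lambda/\mu}$ presupposes that $\lambda\setminus\mu$ is a vertical strip (so that $\lambda_j-\mu_j\in\{0,1\}$ for every row~$j$); this is implicit in Macdonald's treatment and is the only context in which the paper invokes the proposition, but it is worth making explicit in your write-up.
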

\begin{proof}
See \cite{macbook-1998}. 
\end{proof}

\begin{lem}
\label{lem225-1201-11mar}
Let $\lambda, \mu \in \operatorname{Par}$ and $\mu \subseteq \lambda$. Then, for each $1 \leq i \leq j \leq \ell(\mu)$, we get that 
\begin{align}
\frac{
	f_{q,t}(q^{\mu_i - \mu_j}t^{j - i})
	f_{q,t}(q^{\lambda_i - \lambda_{j+1}}t^{j - i})
}{
	f_{q,t}(q^{\lambda_i - \mu_j}t^{j - i})
	f_{q,t}(q^{\mu_i - \lambda_{j+1}}t^{j - i})
}
= 
\frac{
	f_{q^{-1},t^{-1}}((q^{-1})^{\mu_i - \mu_j}(t^{-1})^{j - i})
	f_{q^{-1},t^{-1}}((q^{-1})^{\lambda_i - \lambda_{j+1}}(t^{-1})^{j - i})
}{
	f_{q^{-1},t^{-1}}((q^{-1})^{\lambda_i - \mu_j}(t^{-1})^{j - i})
	f_{q^{-1},t^{-1}}((q^{-1})^{\mu_i - \lambda_{j+1}}(t^{-1})^{j - i})
}
\label{eqn220-1153-11mar}
\end{align}
\end{lem}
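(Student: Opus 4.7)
The plan is to reduce the infinite product expressions on both sides to \emph{finite} products, at which point the equality becomes a simple matter of applying the elementary identity $1 - X^{-1} = -X^{-1}(1-X)$ factor by factor and checking that the monomial prefactors cancel.

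First I would substitute the definition $f_{q,t}(u) = (tu;q)_\infty / (qu;q)_\infty$ into the left-hand side. Writing $a_1 = \mu_i - \mu_j$, $a_2 = \lambda_i - \lambda_{j+1}$, $a_3 = \lambda_i - \mu_j$, $a_4 = \mu_i - \lambda_{j+1}$, the key arithmetic observation is that
\begin{align*}
a_3 - a_1 \;=\; a_2 - a_4 \;=\; \lambda_i - \mu_i \;=:\; b \;\geq\; 0,
\end{align*}
where the nonnegativity uses $\mu \subseteq \lambda$. This means that after expanding the four $f_{q,t}$'s into ratios of eight infinite $q$-Pochhammer symbols, they pair up nicely: $(q^{a_1}t^{j-i+1};q)_\infty$ against $(q^{a_3}t^{j-i+1};q)_\infty$, and similarly for the other three pairs, with each pair differing by a shift of $b$ in the $q$-exponent. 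Using the telescoping identity $(x;q)_\infty/(q^b x;q)_\infty = \prod_{l=0}^{b-1}(1-q^l x)$, the LHS collapses to the finite product
\begin{align*}
L \;:=\; \prod_{l=0}^{b-1}\frac{(1-q^{a_1+l}t^{j-i+1})(1-q^{a_4+1+l}t^{j-i})}{(1-q^{a_1+1+l}t^{j-i})(1-q^{a_4+l}t^{j-i+1})}.
\end{align*}

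Next I would perform the same reduction on the right-hand side, replacing $(q,t)$ by $(q^{-1},t^{-1})$ throughout; since the integers $a_1,a_2,a_3,a_4$ and $b$ are the same, the analogous finite product $L'$ is obtained by sending each factor $(1-q^A t^B)$ in $L$ to $(1-q^{-A}t^{-B})$. Applying $1 - q^{-A}t^{-B} = -q^{-A}t^{-B}(1 - q^A t^B)$ to each of the four factors in the general term of $L'$, the $(1-q^A t^B)$ parts reassemble exactly into $L$, and it remains to check that the accumulated monomial prefactor is $1$. Each $l$-term contributes $(-1)^2/(-1)^2 = 1$ in sign, the $q$-exponent in the numerator is $-a_1-a_4-1-2l$ matching the denominator, and the $t$-exponent $-2(j-i)-1$ likewise matches top and bottom. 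Thus $L' = L$, which is the claim.

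The only obstacle is bookkeeping: keeping track of which of the eight infinite Pochhammer symbols telescopes with which, and which one sits in the numerator versus denominator after each $f_{q,t}$ is expanded. Once the identity $a_3 - a_1 = a_2 - a_4 = \lambda_i - \mu_i$ is identified and the pairing is fixed, the rest of the argument is a direct and symmetric check, so no essential difficulty remains.
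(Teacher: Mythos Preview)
Your proposal is correct and follows essentially the same approach as the paper: both arguments pair the four $f$-factors so that each pair differs by a shift of $b = \lambda_i - \mu_i$ in the $q$-exponent, then check that the monomial prefactors arising from $(q,t)\mapsto(q^{-1},t^{-1})$ cancel. The paper simply packages this more compactly by recording the two ratio identities
\[
\frac{f_{q^{-1},t^{-1}}(\cdots)}{f_{q^{-1},t^{-1}}(\cdots)} = \Bigl(\tfrac{q}{t}\Bigr)^{\pm(\lambda_i-\mu_i)}\frac{f_{q,t}(\cdots)}{f_{q,t}(\cdots)}
\]
rather than fully expanding to the finite product $L$, but the content is the same.
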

\begin{proof}
By using \textbf{Definition \ref{dfn223-2118}}, we can show that 
\begin{align}
\frac{
f_{q^{-1},t^{-1}}((q^{-1})^{\mu_i - \mu_j}(t^{-1})^{j - i})
}{
f_{q^{-1},t^{-1}}((q^{-1})^{\lambda_i - \mu_j}(t^{-1})^{j - i})
}
&= 
\left(\frac{q}{t}\right)^{\lambda_i - \mu_i}
\frac{
f_{q,t}(q^{\mu_i - \mu_j}t^{j - i})
}{
f_{q,t}(q^{\lambda_i - \mu_j}t^{j - i})
}, 
\\
\frac{
	f_{q^{-1},t^{-1}}((q^{-1})^{\lambda_i - \lambda_{j+1}}(t^{-1})^{j - i})
}{
	f_{q^{-1},t^{-1}}((q^{-1})^{\mu_i - \lambda_{j+1}}(t^{-1})^{j - i})
}
&=
\left(\frac{t}{q}\right)^{\lambda_i - \mu_i}
\frac{
	f_{q,t}(q^{\lambda_i - \lambda_{j+1}}t^{j - i})
}{
	f_{q,t}(q^{\mu_i - \lambda_{j+1}}t^{j - i})
}. 
\end{align}
From this, we immediately obtain equation \eqref{eqn220-1153-11mar}. 
\end{proof}

\begin{cor}
\label{cor226-1214-11mar}
For each $T \in \operatorname{RSSYT}(N;\lambda)$, we have $\psi_{T}(q,t) = \psi_{T}(q^{-1},t^{-1})$. 
\end{cor}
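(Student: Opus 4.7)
The plan is to reduce the corollary directly to Lemma \ref{lem225-1201-11mar} via the defining product formula for $\psi_T(q,t)$. Recall that by Definition \ref{dfn223-2118}(2),
\begin{align*}
\psi_T(q,t) = \prod_{k=1}^{N}\psi_{T^{(k)}/T^{(k+1)}}(q,t),
\end{align*}
where each pair $(T^{(k)}, T^{(k+1)})$ satisfies $T^{(k+1)} \subseteq T^{(k)}$ (both partitions, with $T^{(N+1)} = \varnothing$ by convention since $\mu = \varnothing$ in the non-skew case). Hence it suffices to show that for every pair of partitions $\mu \subseteq \lambda$,
\begin{align*}
\psi_{\lambda/\mu}(q,t) = \psi_{\lambda/\mu}(q^{-1},t^{-1}),
\end{align*}
and then multiply these equalities over $k = 1, \dots, N$.

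To establish the factor-level identity, I would apply Lemma \ref{lem225-1201-11mar} directly. The lemma asserts that for each index pair $1 \leq i \leq j \leq \ell(\mu)$, the individual factor appearing inside the product \eqref{eqn27-2052} defining $\psi_{\lambda/\mu}(q,t)$ is invariant under $(q,t) \mapsto (q^{-1}, t^{-1})$. Since the product defining $\psi_{\lambda/\mu}(q,t)$ ranges over exactly these index pairs, taking the product of the factor-wise equalities yields $\psi_{\lambda/\mu}(q,t) = \psi_{\lambda/\mu}(q^{-1},t^{-1})$.

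Substituting $\lambda = T^{(k)}$ and $\mu = T^{(k+1)}$ and multiplying for $k = 1, \dots, N$ gives
\begin{align*}
\psi_T(q,t) = \prod_{k=1}^{N}\psi_{T^{(k)}/T^{(k+1)}}(q,t) = \prod_{k=1}^{N}\psi_{T^{(k)}/T^{(k+1)}}(q^{-1},t^{-1}) = \psi_T(q^{-1},t^{-1}),
\end{align*}
which is the desired identity. There is no genuine obstacle here: the corollary is essentially a packaging of Lemma \ref{lem225-1201-11mar} together with the multiplicative definition of $\psi_T(q,t)$. The only minor subtlety is confirming that the convention $T^{(N+1)} = \varnothing$ (and more generally the possibility of an empty skew shape at some step) poses no issue, since in that case the product defining $\psi_{\lambda/\mu}$ is empty and equals $1$ on both sides.
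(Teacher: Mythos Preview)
Your proposal is correct and follows essentially the same approach as the paper: both apply Lemma~\ref{lem225-1201-11mar} to conclude $\psi_{T^{(k)}/T^{(k+1)}}(q,t)=\psi_{T^{(k)}/T^{(k+1)}}(q^{-1},t^{-1})$ for each $k$, and then take the product over $k$ using the definition of $\psi_T$. The paper's version simply drops the trivial $k=N$ factor (where $T^{(N+1)}=\varnothing$ makes the product empty), which you handle explicitly.
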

\begin{proof}
From \textbf{Lemma \ref{lem225-1201-11mar}}, we can see that for each $j \in \{1,\dots,N - 1\}$, 
\begin{align}
\psi_{T^{(j)}/T^{(j+1)}}(q,t)
= 
\psi_{T^{(j)}/T^{(j+1)}}(q^{-1},t^{-1}). 
\end{align}
Therefore, $\psi_{T}(q,t) = \prod_{j = 1}^{N - 1}\psi_{T^{(j)}/T^{(j+1)}}(q,t) = \prod_{j = 1}^{N - 1}\psi_{T^{(j)}/T^{(j+1)}}(q^{-1},t^{-1}) = \psi_{T}(q^{-1},t^{-1})$. 
\end{proof}

\begin{dfn}
\label{dfn227-2239-13mar}
Let $N, \widetilde{N} \in \bb{Z}^{\geq 1}$, $T \in \operatorname{RSSYT}(N;\lambda\backslash\mu)$, and $\widetilde{T} \in \operatorname{RSSYT}(\widetilde{N};\lambda\backslash\mu)$. We say that $T$ and $\widetilde{	T		}$ has the same structure if 
\begin{align}
	\left\{
	T^{(i)}	
	\;\middle\vert\;
	\begin{array}{@{}l@{}}
		i \in \bb{Z}^{\geq 1}
	\end{array}
	\right\}
	= 
	\left\{
	\widetilde{T}^{(i)}
	\;\middle\vert\;
	\begin{array}{@{}l@{}}
		i \in \bb{Z}^{\geq 1}
	\end{array}
	\right\}. 
\end{align}
\end{dfn}

\begin{exa}
Let $\lambda = (5,3,1)$ and $\mu = (1)$. One can show that 
\begin{align*}
T = 
\begin{ytableau}
	\none  & 4 & 4 & 4 & 2 \\
	2 & 2 & 1 \\
	1
\end{ytableau}
\in 
\operatorname{RSSYT}(4,\lambda\backslash\mu)
\end{align*}
and 
\begin{align*}
\widetilde{T} = 
\begin{ytableau}
	\none  & 7 & 7 & 7 & 4 \\
	4 & 4 & 1 \\
	1
\end{ytableau}
\in 
\operatorname{RSSYT}(8,\lambda\backslash\mu)
\end{align*}
has the same structure. 
\end{exa}

\begin{prop}
\label{prp226-1326-15feb}
Let $N, \widetilde{N} \in \bb{Z}^{\geq 1}$, $T \in \operatorname{RSSYT}(N;\lambda\backslash\mu)$, and $\widetilde{T} \in \operatorname{RSSYT}(\widetilde{N};\lambda\backslash\mu)$. If $T$ and $\widetilde{	T		}$ has the same structure, then $\psi_{T}(q,t) = \psi_{\widetilde{	T		}}(q,t)$. 
\end{prop}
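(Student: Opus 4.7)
The plan is to reduce $\psi_T(q,t)$ to a product indexed only by the distinct diagrams appearing in the set $\{T^{(i)}\}_{i\geq 1}$, which is precisely the datum preserved by the notion of same structure.

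First I would verify the elementary fact that $\psi_{\alpha/\alpha}(q,t) = 1$ for every $\alpha \in \operatorname{Par}$. This is immediate from \textbf{Definition \ref{dfn223-2118}}(1): substituting $\lambda = \mu = \alpha$ into equation \eqref{eqn27-2052} makes the numerator and the denominator of each factor identical, so the entire product collapses to $1$.

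Next I would exploit the nesting $T^{(1)} \supseteq T^{(2)} \supseteq \cdots$ (with $T^{(1)} = \lambda$ and $T^{(i)} = \mu$ for $i > N$). Any totally ordered collection of partitions is determined as an ordered chain by its underlying set, so the hypothesis that $T$ and $\widetilde{T}$ have the same structure implies that the strictly decreasing chains of distinct diagrams extracted from $T$ and $\widetilde{T}$ coincide; call this common chain $\lambda = \nu_1 \supsetneq \nu_2 \supsetneq \cdots \supsetneq \nu_r = \mu$. The sequence $T^{(1)},T^{(2)},\ldots$ is then obtained by repeating each $\nu_j$ a certain positive number of times before dropping to $\nu_{j+1}$, and similarly for $\widetilde{T}^{(1)},\widetilde{T}^{(2)},\ldots$, possibly with different multiplicities.

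Finally, combining the two steps, every factor $\psi_{T^{(k)}/T^{(k+1)}}(q,t)$ for which $T^{(k)} = T^{(k+1)}$ contributes $1$, so
\[
\psi_T(q,t) \;=\; \prod_{k=1}^{N} \psi_{T^{(k)}/T^{(k+1)}}(q,t) \;=\; \prod_{j=1}^{r-1} \psi_{\nu_j/\nu_{j+1}}(q,t).
\]
The identical computation applied to $\widetilde{T}$ (with upper limit $\widetilde{N}$ in place of $N$) produces the same right-hand side, giving $\psi_T(q,t) = \psi_{\widetilde{T}}(q,t)$. The argument is essentially bookkeeping once the vanishing identity $\psi_{\alpha/\alpha}(q,t) = 1$ is in hand, and I do not anticipate any real obstacle; the only mild subtlety is recognizing that a nested family of partitions is automatically totally ordered by inclusion, so ``same set'' already implies ``same ordered chain''.
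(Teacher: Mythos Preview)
Your proof is correct and is essentially the same approach as the paper's, which just says ``This is clear from equation \eqref{eqn28-1539-3jan}.'' You have simply made explicit the two observations the paper leaves to the reader: that $\psi_{\alpha/\alpha}(q,t)=1$, and that a nested sequence of partitions is determined as an ordered chain by its underlying set, so only the ``jumps'' $\nu_j\supsetneq\nu_{j+1}$ contribute to the product.
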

\begin{proof}
This is clear from equation \eqref{eqn28-1539-3jan}. 
\end{proof}

\begin{prop}[Combinatorial formula for Macdonald polynomials]
\mbox{}
\label{thm225-2054}
\begin{align}
P_\lambda(x_1,\dots,x_N;q,t)
= 
\sum_{T \in \operatorname{RSSYT}(N;\lambda)}
\left(
\psi_T(q,t) \times \prod_{s \in \lambda} x_{T(s)}
\right).
\label{eqn232-2040-25mar}
\end{align}
Here $T(s)$ denotes the number on the box $s$. 
\end{prop}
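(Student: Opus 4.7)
The plan is to prove the combinatorial expansion by induction on $N$, using the branching (Pieri-type) rule for Macdonald polynomials as the essential input. The branching rule, proved in \cite{macbook-1998} (Chapter VI, Section 7), states that for any $\lambda$ with $\ell(\lambda) \leq N$,
\begin{align}
P_\lambda(x_1, x_2, \ldots, x_N; q, t) = \sum_{\mu} \psi_{\lambda/\mu}(q, t) \, x_1^{|\lambda| - |\mu|} \, P_\mu(x_2, \ldots, x_N; q, t),
\end{align}
where the sum is over partitions $\mu$ such that $\lambda/\mu$ is a horizontal strip, with $\psi_{\lambda/\mu}$ exactly as in \textbf{Definition \ref{dfn223-2118}}. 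This is itself a consequence of the product formula for skew Macdonald polynomials together with the one-variable evaluation $P_{\lambda/\mu}(x_1;q,t) = \psi_{\lambda/\mu}(q,t)\,x_1^{|\lambda/\mu|}$ (which is valid exactly when $\lambda/\mu$ is a horizontal strip, and vanishes otherwise).

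For the base case $N=1$, the hypothesis $\ell(\lambda)\leq 1$ forces $\lambda=(n)$ for some $n\geq 0$, and $P_{(n)}(x_1;q,t)=x_1^n$. The set $\operatorname{RSSYT}(1;(n))$ contains only the single tableau whose boxes are all filled with $1$; for this tableau $T^{(1)}=(n)$ and $T^{(2)}=\emptyset$, so $\psi_T(q,t)=\psi_{(n)/\emptyset}(q,t)=1$ (the product in \eqref{eqn27-2052} is empty when $\ell(\mu)=0$), and the monomial is $x_1^n$, matching the left-hand side.

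For the inductive step, assume the formula at level $N-1$. Substituting the inductive hypothesis for each $P_\mu(x_2,\ldots,x_N;q,t)$ into the branching rule and iterating yields
\begin{align}
P_\lambda(x_1, \ldots, x_N; q, t) = \sum_{\lambda=\mu^{(0)}\supseteq \mu^{(1)}\supseteq \cdots \supseteq \mu^{(N)}=\emptyset} \prod_{i=1}^{N} \psi_{\mu^{(i-1)}/\mu^{(i)}}(q, t)\, x_i^{|\mu^{(i-1)}/\mu^{(i)}|},
\end{align}
where the sum is over descending chains in which every step $\mu^{(i-1)}/\mu^{(i)}$ is a horizontal strip. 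I would then identify such chains bijectively with elements of $\operatorname{RSSYT}(N;\lambda)$: given a chain, define a filling $T$ of $\lambda$ by $T(s):=i$ if $s\in \mu^{(i-1)}/\mu^{(i)}$. The horizontal-strip condition at each step forces columns of $T$ to be strictly decreasing top-to-bottom, and the nesting $\mu^{(i)}\subseteq \mu^{(i-1)}$ forces rows to be weakly decreasing left-to-right, so $T\in \operatorname{RSSYT}(N;\lambda)$. Conversely, from $T$ one recovers the chain by $\mu^{(i)}:=T^{(i+1)}$, which is a partition shape precisely because of the RSSYT conditions. Under this bijection $\mu^{(i-1)}/\mu^{(i)}=T^{(i)}/T^{(i+1)}$, so the product of $\psi$'s coincides with $\psi_T(q,t)$ via \eqref{eqn28-1539-3jan}, and $\prod_i x_i^{|\mu^{(i-1)}/\mu^{(i)}|}$ coincides with $\prod_{s\in\lambda}x_{T(s)}$. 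This completes the identification.

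The substantive content is really the branching rule together with the one-variable evaluation of $P_{\lambda/\mu}$, both of which I would import from \cite{macbook-1998} rather than reprove. Given those, the induction reduces to a purely combinatorial bijection and a bookkeeping argument, so the main point of care is verifying that the correspondence between horizontal-strip chains and reverse SSYTs preserves both the coefficient $\psi_T(q,t)$ and the monomial weight on the nose; this step is the one I would write out most explicitly, since it is where the conventions on $T^{(k)}$ (sub-diagram of entries $\geq k$) and on the $\psi_{\lambda/\mu}$ factors interlock.
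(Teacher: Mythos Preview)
Your proposal is correct, and it is precisely the standard derivation: the paper itself does not give an independent argument but simply cites equation $(7.13')$ in Chapter~VI of \cite{macbook-1998}, and what you have written is exactly the proof sketched there (iterate the one-variable branching rule and identify chains of horizontal strips with reverse SSYTs). There is nothing to add beyond noting that the paper defers entirely to Macdonald's book, so your write-up is more detailed than the paper's own treatment rather than different from it.
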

\begin{proof}
See equation $(7.13^\prime)$ in chapter VI of \cite{macbook-1998}
\end{proof}

Motivated from \textbf{Proposition \ref{thm225-2054}}, the skew Macdonald polynomials $P_{\lambda \backslash \mu}(x_1,\dots,x_N;q,t)$ is defined by the formula similar to those in equation \eqref{eqn232-2040-25mar}. 

\begin{dfn}[Skew Macdonald polynomials]\mbox{}
Let $\lambda, \mu \in \operatorname{Par}$ and $\mu \subseteq \lambda$. The skew Macdonald polynomials $P_{\lambda \backslash \mu}(x_1,\dots,x_N;q,t)$ are defined by 
\begin{align}
	P_{\lambda \backslash \mu}(x_1,\dots,x_N;q,t)
	:= 
	\sum_{T \in \operatorname{RSSYT}(N;\lambda \backslash \mu)}
	\left(
	\psi_T(q,t) \times \prod_{s \in \lambda \backslash \mu} x_{T(s)}
	\right),
\label{eqn233-2040-26mar}	
\end{align}
where $T(s)$ denotes the number on the box $s$. 
\end{dfn}

\begin{rem}
The skew Macdonald polynomials $P_{\lambda \backslash \mu}(x_1,\dots,x_N;q,t)$ can also be defined by a different, but equivalent, method. We refer the reader to the book \cite{macbook-1998} for a thorough description of this alternative method. 
\end{rem}

\subsection{Super Macdonald polynomials}
\label{subsec24-1943-4apr}

In this subsection, we recall the construction given in \cite{sv2007-supermac}. 
For convenience, the notation $P_{N,M} := \bb{C}[x_1,\dots,x_N,y_1,\dots,y_M]$ is introduced. 

\begin{dfn}
Define the algebra $\Lambda_{N,M,q,t}$ over the field $\bb{C}$ to be 
\small 
\begin{align}
&\Lambda_{N,M,q,t}
\\
&:= 
\left\{
f(x_1,\dots,x_N,y_1,\dots,y_M) \in P_{N,M}
\;\middle|\;
\begin{aligned}
&\circled{1} \,\,
^\forall \sigma \in S_N \,\, ^\forall \tau \in S_M,
\\
&\,\, f(x_{\sigma(1)},\dots,x_{\sigma(N)},y_{\tau(1)},\dots,y_{\tau(M)}) = f(x_1,\dots,x_N,y_1,\dots,y_M). 
\\
&\circled{2} \,\,
^\forall i \in \{1,\dots,N\} \,\, ^\forall j \in \{1,\dots,M\}, 
\\
& \,\, 
\bigg|_{
	\substack{
	x_i = y_j 
	}
}
T_{q,x_i}(f) 
= 
\bigg|_{
	\substack{
		x_i = y_j 
	}
}
T_{t,y_j}(f).
\end{aligned}
\right\}
\notag 
\end{align}
\end{dfn}

\begin{exa}
For each $r \in \bb{Z}^{\geq 1}$, define 
\begin{align}
p^{(N,M)}_r(x,y,q,t) := \sum_{i = 1}^{N}x^r_i + \frac{1 - q^r}{1 - t^r}\sum_{j = 1}^{M}y^r_j. 
\end{align}
It is easy to show that $p^{(N,M)}_r(x,y,q,t) \in \Lambda_{N,M,q,t}$. 
\end{exa}

\begin{thm}
\label{thm237-1353-25mar}
The $\bb{C}$-algebra $\Lambda_{N,M,q,t}$ is generated by the subset $\{		p^{(N,M)}_r(x,y,q,t) ~|~ r \in \bb{Z}^{\geq 1}	\}$. 
\end{thm}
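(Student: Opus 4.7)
The approach is to identify $\Lambda_{N,M,q,t}$ as an algebra of joint eigenfunctions of a Macdonald-type difference operator and use the resulting basis of super Macdonald polynomials to reduce the problem to a Newton-identity-style expansion.

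First, I would introduce the $\bb{C}$-algebra homomorphism $\Phi : \bb{C}[P_1, P_2, \ldots] \to \Lambda_{N,M,q,t}$ defined by $P_r \mapsto p^{(N,M)}_r(x,y,q,t)$. The theorem amounts to showing that $\Phi$ is surjective, and since every element of $\Lambda_{N,M,q,t}$ decomposes into homogeneous components and $\Phi$ preserves degree (with $\deg P_r = r$), it suffices to prove surjectivity in each graded piece.

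Second, I would invoke the Sergeev-Veselov structural result from \cite{sv2007-supermac} stating that $\Lambda_{N,M,q,t}$ admits a $\bb{C}$-basis $\{SP_\lambda(x,y;q,t)\}$ indexed by partitions $\lambda$ in an appropriate $(N,M)$-hook set, realised as joint eigenfunctions of a deformed Macdonald-Ruijsenaars operator on $\Lambda_{N,M,q,t}$. Granted such a basis, the theorem reduces to expressing each $SP_\lambda$ as a polynomial in the $p^{(N,M)}_r$. I would attempt this by adapting the classical Macdonald power-sum expansion
\[ P_\lambda(x;q,t) = \sum_{\mu \vdash |\lambda|} \frac{\chi^\lambda_\mu(q,t)}{z_\mu(q,t)}\, p_\mu(x) \]
to the super setting, observing that the normalization $\tfrac{1-q^r}{1-t^r}$ in front of $\sum_{j} y_j^r$ in $p^{(N,M)}_r$ is precisely the one dictated by condition $\circled{2}$: it is the unique scaling making $\sum_i x_i^r + c_r \sum_j y_j^r$ quasi-invariant at $x_i = y_j$ under $T_{q,x_i}$ versus $T_{t,y_j}$, which in turn is why the $p^{(N,M)}_r$ are the \emph{correct} analogues of power sums in this enlarged algebra.

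The principal obstacle will be confirming that the image of $\Phi$ in fact hits the full graded component $(\Lambda_{N,M,q,t})^n$, i.e.\ that the coupling condition $\circled{2}$ does not shrink the subalgebra generated by $\{p^{(N,M)}_r\}$ below the whole $\Lambda_{N,M,q,t}$. A robust way to handle this is a dimension-count argument in each degree: compare $\dim (\Lambda_{N,M,q,t})^n$, computed via the super Macdonald basis, with the span of products $p^{(N,M)}_{\mu_1}\cdots p^{(N,M)}_{\mu_\ell}$ of total weight $n$. To propagate this dimension equality to generic $(q,t)$, I would degenerate to $q=t=1$, where the statement reduces to the classical result of Sergeev on super-symmetric polynomials, and then appeal to semi-continuity of rank in the flat family of graded algebras $\Lambda_{N,M,q,t}$ parameterised by $(q,t)$ satisfying \eqref{eqn214-1341-13jan}. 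The careful bookkeeping of this dimension match is where most of the work lies; everything else is essentially formal.
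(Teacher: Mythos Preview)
The paper does not give its own proof; it simply cites theorem~5.8 of \cite{sv2007-supermac}. Your proposed argument, however, has two genuine gaps. First, you invoke the super Macdonald basis of $\Lambda_{N,M,q,t}$ to deduce generation by the deformed power sums, but in the logical development of both this paper and \cite{sv2007-supermac} the order is reversed: one first proves generation by the $p^{(N,M)}_r$ (the present theorem), then defines the surjection $\varphi_{q,t}$ (Definition~\ref{dfn238-1401-25mar}), and only afterwards introduces the super Macdonald polynomials as the images $\varphi_{q,t^{-1}}(P_\lambda)$ and shows they span. Even if you take the $SP_\lambda$ to be defined independently as joint eigenfunctions of the deformed Macdonald--Ruijsenaars operator, establishing that these eigenfunctions \emph{span} $\Lambda_{N,M,q,t}$ is precisely the substantive content of the theorem---so nothing has been reduced.

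Second, your degeneration to $q=t=1$ does not behave as claimed. At that point the coefficient $\tfrac{1-q^r}{1-t^r}$ is indeterminate, and more importantly condition~$\circled{2}$ becomes vacuous (both $T_{q,x_i}$ and $T_{t,y_j}$ reduce to the identity), so $\Lambda_{N,M,1,1}$ is the full doubly symmetric ring $\Lambda_N \otimes_{\bb{C}} \Lambda_M$, not Sergeev's ring of supersymmetric polynomials; the regularised power sums $\sum_i x_i^r + \sum_j y_j^r$ manifestly fail to generate all of $\Lambda_N \otimes_{\bb{C}} \Lambda_M$. Hence the semi-continuity step has no valid base case. The relevant classical limit is rather $q=t^\alpha$, $t\to 1$, but even there the graded dimensions of $\Lambda_{N,M,q,t}$ are not a priori constant in $(q,t)$, so the flatness you appeal to would itself require proof---which again is essentially what the theorem asserts.
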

\begin{proof}
See theorem 5.8 in \cite{sv2007-supermac}. 
\end{proof}

\begin{dfn}
\label{dfn238-1401-25mar}
Define $\varphi_{q,t} : \Lambda \rightarrow \Lambda_{N,M,q,t}$ to be a surjective algebra homomorphism determined by 
\begin{align}
\varphi_{q,t}\left(	p_r(z)	\right) = p^{(N,M)}_r(x,y,q,t). 
\end{align}
\end{dfn}

Note that the surjectivity of the algebra homomorphism $\varphi_{q,t} : \Lambda \rightarrow \Lambda_{N,M,q,t}$ can be directly seen from \textbf{Theorem \ref{thm237-1353-25mar}}.

\begin{thm}
\label{thm239-1508-25mar}
Define $H_{N,M} := \{\lambda \in \operatorname{Par}~|~ \lambda_{N+1} \leq M\}$. 
If $\lambda \in \operatorname{Par}\backslash H_{N,M}$, then
\begin{align*}
\varphi_{q,t}(P_\lambda(z;q,t^{-1})) = 0. 
\end{align*}
\end{thm}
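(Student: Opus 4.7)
The strategy is to express $\varphi_{q,t}(P_\lambda(z;q,t^{-1}))$ as a weighted sum indexed by $\operatorname{RSSYBT}(N,M;\lambda)$, and then observe that this index set is empty precisely when $\lambda \notin H_{N,M}$.

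First, I would use that $\varphi_{q,t}$ is the plethystic substitution $p_r \mapsto p_r(x) + \tfrac{1-q^r}{1-t^r} p_r(y)$, i.e., evaluation on the alphabet $X + \tfrac{1-q}{1-t} Y$. Combined with the skew decomposition of Macdonald functions along an alphabet sum, this yields
$$\varphi_{q,t}\bigl(P_\lambda(z;q,t^{-1})\bigr) = \sum_{\mu \subseteq \lambda} P_{\lambda \backslash \mu}(x_1,\ldots,x_N;q,t^{-1}) \cdot R_\mu(y_1,\ldots,y_M;q,t),$$
where $R_\mu$ denotes the image of $P_\mu$ under the plethysm on $\tfrac{1-q}{1-t} Y$. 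A standard Macdonald duality identifies $R_\mu$, up to an explicit factor, with a Macdonald polynomial of the conjugate partition $\mu'$ in the $y$-variables with parameters $(q,t)$ suitably interchanged.

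Next, I would expand both factors using the combinatorial formula of \textbf{Proposition \ref{thm225-2054}}: the $x$-factor becomes a sum over reverse SSYTs of type $N$ with skew shape $\lambda\backslash\mu$, and the $y$-factor, after conjugation, becomes a sum over reverse SSYTs of type $M$ with shape $\mu'$. Gluing the two tableaux by shifting the $y$-labels by $N$ and laying the transposed super-tableau on top of the sub-diagram $\mu \subseteq \lambda$ produces exactly an element of $\operatorname{RSSYBT}(N,M;\lambda)$ in the sense of \textbf{Definition \ref{dfn211-1443}}: the super numbers occupy $\mu$ and are strictly decreasing in rows while weakly decreasing in columns, the ordinary numbers occupy $\lambda\backslash\mu$ and satisfy the reverse SSYT conditions there, and the global weak monotonicity in rows and columns is automatic because all super values exceed all ordinary ones.

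Finally, the vanishing reduces to a purely combinatorial observation. Any $\mu$ contributing above must satisfy $\mu_1 \leq M$ (so that $\mu'$ admits a reverse SSYT of type $M$) and $\lambda'_j - \mu'_j \leq N$ for every $j$ (so that $\lambda\backslash\mu$ admits a reverse SSYT of type $N$). Taking $j = M+1$, the first condition forces $\mu'_{M+1}=0$, so the second becomes $\lambda'_{M+1} \leq N$, which is equivalent to $\lambda_{N+1} \leq M$. If $\lambda \notin H_{N,M}$ this fails, so no valid $\mu$ exists, $\operatorname{RSSYBT}(N,M;\lambda) = \emptyset$, and the whole sum collapses to zero. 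The main obstacle will be the first step: rigorously pinning down the plethystic/skew decomposition and the duality identity that rewrites the specialization of $P_\mu$ on $\tfrac{1-q}{1-t}Y$ as an honest Macdonald polynomial in the $y$-variables. This identity is precisely what underlies the construction of super Macdonald polynomials in \cite{sv2007-supermac}, and the cleanest route is to invoke it directly.
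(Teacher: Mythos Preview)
Your proposal is correct and follows essentially the same route as the paper: the paper records precisely the two identities you flag as the main obstacle (Lemmas~\ref{lem245-1514-21apr} and~\ref{lem246-1514-21apr}, giving the duality $\sigma_{q,t}(P_\mu(z;q,t^{-1}))=\tfrac{H(\mu,q,t)}{H(\mu',t,q)}P_{\mu'}(z;t,q^{-1})$ and the skew decomposition of $\varphi_{q,t}(P_\lambda)$) and then defers to \cite{sv2007-supermac} for the same combinatorial vanishing argument you outline. Your detour through explicitly gluing tableaux into an RSSYBT is not strictly required---one can argue directly that $P_{\lambda\backslash\mu}(x_1,\dots,x_N)$ and $P_{\mu'}(y_1,\dots,y_M)$ vanish separately when their shapes exceed the available variables---but it is harmless and in fact anticipates Proposition~\ref{prp228-1020-29jan}.
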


The proof of \textbf{Theorem \ref{thm239-1508-25mar}} was given in theorem 5.6 in \cite{sv2007-supermac}. 
We will not rewrite it here. However, we would like to record two lemmas that are crucial to the proofs of \textbf{Theorem \ref{thm239-1508-25mar}} and \textbf{Proposition \ref{prp228-1020-29jan}}.

\begin{lem}
\label{lem245-1514-21apr}
Define an algebra homomorphism $\sigma : \Lambda \rightarrow \Lambda$ by 
\begin{align}
\sigma_{q,t}(p_r) = \frac{1-q^r}{1-t^r}p_r. 
\label{eqn236-1618-25mar}
\end{align}
Then, we obtain that 
\begin{align}
\sigma_{q,t}\left(	P_\lambda(z;q,t^{-1})				\right) = \frac{H(\lambda,q,t)}{H(\lambda^\prime,t,q)}
P_{\lambda^\prime}(z;t,q^{-1}).
\label{eqn237-1527-25mar}
\end{align}
Here $H(\lambda,q,t) := q^{n(\lambda)}t^{n(\lambda^\prime)}
\prod_{s \in \lambda}\left(
q^{a(s) + 1} - t^{\ell(s)}
\right)$, where $n(\lambda) := \sum_{i = 1}^{\infty}(i-1)\lambda_i$. 
\end{lem}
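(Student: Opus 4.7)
The plan is to relate the map $\sigma_{q,t}$ to Macdonald's involution $\omega_{q,t}$ (see Macdonald, Ch.~VI, \S5), defined by $\omega_{q,t}(p_r)=(-1)^{r-1}\frac{1-q^r}{1-t^r}p_r$, and then invoke the classical identity $\omega_{q,s}P_\lambda(x;q,s)=b_{\lambda^\prime}(s,q)\,P_{\lambda^\prime}(x;s,q)$, where $b_\mu(q,t):=\prod_{s\in\mu}\frac{1-q^{a(s)}t^{\ell(s)+1}}{1-q^{a(s)+1}t^{\ell(s)}}$. A direct computation on power sums yields $\omega_{q,t^{-1}}(p_r)=(-t)^r\sigma_{q,t}(p_r)$ (using $\frac{1-t^r}{1-t^{-r}}=-t^r$), so by multiplicativity, for every homogeneous $f\in\Lambda^n$ one has $\sigma_{q,t}(f)=(-t)^{-n}\,\omega_{q,t^{-1}}(f)$.

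Applying this with $f=P_\lambda(z;q,t^{-1})$, which is of degree $|\lambda|$, and using Macdonald's identity at the parameters $(q,t^{-1})$, gives
\begin{align*}
\sigma_{q,t}\!\left(P_\lambda(z;q,t^{-1})\right)=(-t)^{-|\lambda|}\,b_{\lambda^\prime}(t^{-1},q)\,P_{\lambda^\prime}(z;t^{-1},q).
\end{align*}
I would then invoke the symmetry $P_\mu(z;a,b)=P_\mu(z;a^{-1},b^{-1})$, which follows from the eigenvalue-plus-triangularity characterization of $P_\mu$: the Macdonald inner product rescales by the scalar $(b/a)^n$ on $\Lambda^n$ under $(a,b)\mapsto(a^{-1},b^{-1})$, so orthogonality and the leading-term condition are preserved and Gram--Schmidt produces the same polynomial. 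This converts $P_{\lambda^\prime}(z;t^{-1},q)$ to $P_{\lambda^\prime}(z;t,q^{-1})$, matching the right-hand side of the lemma.

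It then remains to verify the scalar identity $(-t)^{-|\lambda|}b_{\lambda^\prime}(t^{-1},q)=H(\lambda,q,t)/H(\lambda^\prime,t,q)$. Using the transposition relations $a_{\lambda^\prime}(s^t)=\ell_\lambda(s)$ and $\ell_{\lambda^\prime}(s^t)=a_\lambda(s)$, the prefactors $q^{n(\lambda)}t^{n(\lambda^\prime)}$ in $H(\lambda,q,t)$ and $t^{n(\lambda^\prime)}q^{n(\lambda)}$ in $H(\lambda^\prime,t,q)$ cancel, reducing the identity to a product over $s\in\lambda$. Each per-box factor of $b_{\lambda^\prime}(t^{-1},q)$, after clearing $t^{-\ell(s)-1}$ from the denominator, becomes $\frac{-t\,(q^{a(s)+1}-t^{\ell(s)})}{t^{\ell(s)+1}-q^{a(s)}}$, and the cumulative factor of $(-t)^{|\lambda|}$ from all $|\lambda|$ boxes cancels the prefactor $(-t)^{-|\lambda|}$, leaving exactly $\prod_{s\in\lambda}\frac{q^{a(s)+1}-t^{\ell(s)}}{t^{\ell(s)+1}-q^{a(s)}}$ as required.

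The main obstacle is keeping the various parameter substitutions $(q,t)\leftrightarrow(q^{-1},t^{-1})$ and $t\leftrightarrow t^{-1}$ straight, especially in the conversion between $\sigma_{q,t}$ and $\omega_{q,t^{-1}}$ and in applying Macdonald's identity at shifted parameters. The combinatorial verification of the scalar, while tedious, is a straightforward box-by-box computation once the framework is in place.
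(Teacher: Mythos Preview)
Your proof is correct and follows essentially the same route as the paper: both arguments reduce $\sigma_{q,t}$ to $(-t)^{-\deg}\omega_{q,t^{-1}}$ on homogeneous elements, invoke Macdonald's identity $\omega_{q,t}P_\lambda=Q_{\lambda'}(\,\cdot\,;t,q)$ at the shifted parameters $(q,t^{-1})$, and then verify the scalar $(-t)^{-|\lambda|}b_{\lambda'}(t^{-1},q)=H(\lambda,q,t)/H(\lambda',t,q)$ by a box-by-box computation. The only cosmetic difference is that you make the symmetry $P_\mu(z;a,b)=P_\mu(z;a^{-1},b^{-1})$ explicit, whereas the paper passes through $Q_{\lambda'}$ and uses this symmetry implicitly in the chain $(-q)^{|\lambda|}Q_{\lambda'}(z;t,q^{-1})=(-t)^{-|\lambda|}Q_{\lambda'}(z;t^{-1},q)$.
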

\begin{proof}
First, we collect some facts from \cite{macbook-1998} which will be used to prove this lemma. Define algebra homomorphism $\omega_{q,t} : \Lambda \rightarrow \Lambda$ by 
\begin{align}
	\omega_{q,t}(p_r) = (-1)^{r-1}\frac{1 - q^r}{1 - t^r}p_r. 
\label{eqn238-1618-25mar}
\end{align}
Define 
\begin{align}
Q_{\lambda}(z;q,t) := b_\lambda(q,t)P_{\lambda}(z;q,t)
\label{eqn241-2036-30mar}
\end{align}
where 
\begin{align}
	b_\lambda(q,t) = \prod_{s \in \lambda}\frac{1 - q^{a_\lambda(s)}t^{\ell_\lambda(s) + 1}}{1 - q^{a_\lambda(s) + 1}t^{\ell_\lambda(s)}}. 
\end{align}
From \cite{macbook-1998} (page 327, equation (5.1)), it is known that $\omega_{q,t}P_{\lambda}(z;q,t) = Q_{\lambda^\prime}(z;t,q)$. From the definition of $H(\lambda,q,t)$ and $b_\lambda(q,t)$, it can be shown that
\begin{align}
	\frac{H(\lambda,q,t)}{H(\lambda^\prime,t,q)} = (-q)^{|\lambda|}b_{\lambda^\prime}(t,q^{-1}) = (-t)^{-|\lambda|}b_{\lambda^\prime}(t^{-1},q).
\end{align}
As a consequence, we obtain that 
\begin{align}
	\frac{H(\lambda,q,t)}{H(\lambda^\prime,t,q)}
	P_{\lambda^\prime}(z;t,q^{-1}) 
	=
	(-q)^{|\lambda|}Q_{\lambda^\prime}(z;t,q^{-1})
	= 
	(-t)^{-|\lambda|}Q_{\lambda^\prime}(z;t^{-1},q). 
\end{align}

Now, we are ready to derive the equation \eqref{eqn237-1527-25mar}. From equations \eqref{eqn236-1618-25mar} and \eqref{eqn238-1618-25mar}, it is clear that 
\begin{align}
	\omega_{q,t^{-1}}(p_\lambda) 
	&= (-1)^{|\lambda|-\ell(\lambda)}
	p_\lambda
	\prod_{i = 1}^{\ell(\lambda)}\frac{1 - q^{\lambda_i}}{1 - t^{-\lambda_i}}
	= (-t)^{|\lambda|}
	p_\lambda
	\prod_{i = 1}^{\ell(\lambda)}\frac{1 - q^{\lambda_i}}{1 - t^{\lambda_i}}
	=
	(-t)^{|\lambda|}
	\sigma_{q,t}(p_\lambda).
\end{align}
According to the theory of Macdonald polynomials, we know that 
\begin{align}
	P_{\lambda}(z;q,t) =
	\sum_{
		\substack{
			\mu \in \operatorname{Par}
			\\
			|\mu|  = |\lambda|
		}
	}
	u_{\mu}(q,t)p_\mu(z),
\end{align}
where $u_\mu(q,t) \in \bb{C}$. Therefore, 
\begin{align}
	\omega_{q,t^{-1}}
	\left(
	P_{\lambda}(z;q,t^{-1})
	\right)
	&= 
	\sum_{
		\substack{
			\mu
			\\
			|\mu|  = |\lambda|
		}
	}
	u_{\mu}(q,t^{-1})\omega_{q,t^{-1}}\left(		p_\mu(z)	\right)
	=
	(-t)^{|\lambda|}
	\sum_{
		\substack{
			\mu
			\\
			|\mu|  = |\lambda|
		}
	}
	u_{\mu}(q,t^{-1})\sigma_{q,t}\left(		p_\mu(z)	\right)
	\\
	&=
	(-t)^{|\lambda|}
	\sigma_{q,t}
	\left(
	P_{\lambda}(z;q,t^{-1})
	\right).
\end{align}
In other word, 
\begin{align}
	\sigma_{q,t}
	\left(
	P_{\lambda}(z;q,t^{-1})
	\right)
	&= 
	(-t)^{-|\lambda|}
	\omega_{q,t^{-1}}
	\left(
	P_{\lambda}(z;q,t^{-1})
	\right)
	= (-t)^{-|\lambda|}Q_{\lambda^\prime}(z;t^{-1},q)
	= 
	\frac{H(\lambda,q,t)}{H(\lambda^\prime,t,q)}
	P_{\lambda^\prime}(z;t,q^{-1}). 
\end{align}
\end{proof}

\begin{lem}\mbox{}
\label{lem246-1514-21apr}
\begin{align}
	\varphi_{q,t}\left(	P_\lambda(z;q,t^{-1})		\right)
	&= 
	\sum_{
		\substack{
			\mu \in \operatorname{Par}
			\\
			\mu \subseteq \lambda
		}
	}
	P_{\lambda \backslash \mu}(x_1,\dots,x_N,q,t^{-1})
	\frac{H(\mu,q,t)}{H(\mu^\prime,t,q)}
	P_{\mu^\prime}(y_1,\dots,y_M;t,q^{-1}).
\label{eqn238-1527-25mar}
\end{align}
\end{lem}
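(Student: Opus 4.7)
The plan is to recognize that the map $\varphi_{q,t}$ factors through the coproduct on $\Lambda$, with the identity (i.e.\ the standard restriction) on the first tensor factor and the algebra automorphism $\sigma_{q,t}$ (followed by a restriction) on the second tensor factor. Combined with the well-known two-alphabet formula for Macdonald polynomials and the transformation rule for $\sigma_{q,t}$ proved in \textbf{Lemma \ref{lem245-1514-21apr}}, this will immediately deliver the stated identity.

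More precisely, let $\Delta : \Lambda \to \Lambda \otimes \Lambda$ denote the standard coproduct, determined by $\Delta(p_r) = p_r \otimes 1 + 1 \otimes p_r$. Let $\rho_N : \Lambda \to \bb{C}[x_1,\ldots,x_N]$ be the restriction $z_{N+1} = z_{N+2} = \cdots = 0$, and similarly let $\rho_M : \Lambda \to \bb{C}[y_1,\ldots,y_M]$. The first step is to verify that
\begin{align*}
\varphi_{q,t} = (\rho_N \otimes (\rho_M \circ \sigma_{q,t})) \circ \Delta.
\end{align*}
Both sides are algebra homomorphisms $\Lambda \to \Lambda_{N,M,q,t}$, so it suffices to check agreement on the power sums, which generate $\Lambda$; this is an immediate computation from $\Delta(p_r) = p_r \otimes 1 + 1 \otimes p_r$ and $\sigma_{q,t}(p_r) = \tfrac{1-q^r}{1-t^r}p_r$, reproducing exactly $p^{(N,M)}_r(x,y,q,t)$.

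The second step is to apply this factorization to $P_\lambda(z; q, t^{-1})$. The classical two-alphabet expansion of Macdonald functions (see \cite{macbook-1998}, Chapter VI, and compare the combinatorial formula \eqref{eqn233-2040-26mar}) reads
\begin{align*}
\Delta\bigl(P_\lambda(z; q, t^{-1})\bigr)
= \sum_{\mu \subseteq \lambda} P_{\lambda \backslash \mu}(z; q, t^{-1}) \otimes P_\mu(z; q, t^{-1}),
\end{align*}
in the sense that when we split $z$ into two alphabets we obtain this expansion. Applying $\rho_N$ to the first factor gives the skew Macdonald polynomial $P_{\lambda\backslash\mu}(x_1,\ldots,x_N; q, t^{-1})$. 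For the second factor we invoke \textbf{Lemma \ref{lem245-1514-21apr}}, which gives
\begin{align*}
\sigma_{q,t}\bigl(P_\mu(z; q, t^{-1})\bigr) = \frac{H(\mu,q,t)}{H(\mu^\prime,t,q)} \, P_{\mu^\prime}(z; t, q^{-1}),
\end{align*}
and then $\rho_M$ restricts to the $y$-variables, yielding $\tfrac{H(\mu,q,t)}{H(\mu^\prime,t,q)} P_{\mu^\prime}(y_1,\ldots,y_M; t, q^{-1})$. Assembling the pieces delivers \eqref{eqn238-1527-25mar}.

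The only conceptual subtlety lies in the first step: justifying that $\varphi_{q,t}$ really is the composite $(\rho_N \otimes \rho_M \circ \sigma_{q,t}) \circ \Delta$. This is where one must use that $\varphi_{q,t}$ is a \emph{homomorphism} determined by its values on $p_r$ (\textbf{Definition \ref{dfn238-1401-25mar}}), together with the fact that $\Lambda_{N,M,q,t}$ is commutative so the map from the tensor product factors through the multiplication map. Once this structural identity is in place, the rest is a two-line chain of equalities; in particular no case analysis on the shape $\lambda$ is required. The main obstacle is therefore not computational but book-keeping: setting up the coproduct picture cleanly so that the tensor factorization of $\varphi_{q,t}$ is unambiguous.
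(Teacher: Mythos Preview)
Your proof is correct and follows essentially the same approach as the paper's own argument. The only difference is cosmetic: you phrase the splitting of variables via the Hopf-algebraic coproduct $\Delta:\Lambda\to\Lambda\otimes\Lambda$ and restriction maps $\rho_N,\rho_M$, whereas the paper works directly with the equivalent ``two-alphabet'' formalism $p_r(x,y)=\sum_i x_i^r+\sum_j y_j^r$ and the map $\operatorname{id}_x\otimes\sigma_{q,t}$; in both cases one checks the factorization on power sums, invokes the two-alphabet expansion $P_\lambda(x,y;q,t^{-1})=\sum_{\mu\subseteq\lambda}P_{\lambda\backslash\mu}(x;q,t^{-1})P_\mu(y;q,t^{-1})$ from \cite{macbook-1998}, and then applies \textbf{Lemma~\ref{lem245-1514-21apr}} to the second factor.
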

\begin{proof}
First, note that the power sum symmetric functions, described in \textbf{Example \ref{exa223-1225-26mar}}, was written by using only one class of variables. However, there is nothing prevents us to express the power sum symmetric functions by using several classes of variables. For example,
\begin{align}
p_r(x,y) := \sum_{i = 1}^{\infty}x_i^r + \sum_{j = 1}^{\infty}y_j^r. 
\end{align}
is the power sum symmetric function expressed in two classes of variables $\{x_i\}_{i \in \bb{Z}^{\geq 1}}$ and $\{y_j\}_{j \in \bb{Z}^{\geq 1}}$.

Since $P_{\lambda}(z;q,t) =
\sum_{
	\substack{
		\mu \in \operatorname{Par}
		\\
		|\mu|  = |\lambda|
	}
}
u_{\mu}(q,t)p_\mu(z)$, the Macdonald functions with two classes of variables can be defined as 
\begin{align}
	P_{\lambda}(x,y;q,t) :=
	\sum_{
		\substack{
			\mu \in \operatorname{Par}
			\\
			|\mu|  = |\lambda|
		}
	}
	u_{\mu}(q,t)p_\mu(x,y). 
\end{align}
Also, from \cite{macbook-1998} (page 345, equation ($7.9^\prime$)), it is known that 
\begin{align}
P_\lambda(x,y;q,t) := \sum_{
\substack{
\mu \in \operatorname{Par}
\\
\mu \subseteq \lambda
}
}
P_{\lambda \backslash \mu}(x;q,t)P_{\mu}(y;q,t). 
\end{align}

It is clear that 
\begin{align}
	&
	\varphi_{q,t}(p_\mu(z))
	=
	\bigg[ \,\,
	\bigg|_{
		\substack{
			x_{N+1} = x_{N+2} = \cdots = 0 
			\\
			y_{M+1} = y_{M+2} = \cdots = 0 
		}
	}
	\comp 
	\left(
	\operatorname{id}_x \tens \sigma_{q,t}
	\right)
	\bigg]
	p_{\mu}(x,y). 
\end{align}
Here $\operatorname{id}_x \tens \sigma_{q,t}$ is the map which applies $\sigma_{q,t}$ to the part of variable $y$, while apply identity map $\operatorname{id}$ to the part of variable $x$. 
Thus, we get that 
\begin{align}
\varphi_{q,t}\left(
P_\lambda(z;q,t^{-1})
\right)
&=
\bigg[ \,\,
\bigg|_{
	\substack{
		x_{N+1} = x_{N+2} = \cdots = 0 
		\\
		y_{M+1} = y_{M+2} = \cdots = 0 
	}
}
\comp 
\left(
\operatorname{id}_x \tens \sigma_{q,t}
\right)
\bigg]
P_{\lambda}(x,y;q,t^{-1})
\\
&= 
\bigg|_{
	\substack{
		x_{N+1} = x_{N+2} = \cdots = 0 
		\\
		y_{M+1} = y_{M+2} = \cdots = 0 
	}
}
\left(
\sum_{
	\substack{
		\mu \in \operatorname{Par}
		\\
		\mu \subseteq \lambda
	}
}
P_{\lambda \backslash \mu}(x;q,t^{-1})
\sigma_{q,t}(P_{\mu}(y;q,t^{-1}))
\right)
\notag 
\\
&= 
\sum_{
	\substack{
		\mu \in \operatorname{Par}
		\\
		\mu \subseteq \lambda
	}
}
P_{\lambda \backslash \mu}(x_1,\dots,x_N,q,t^{-1})
\frac{H(\mu,q,t)}{H(\mu^\prime,t,q)}
P_{\mu^\prime}(y_1,\dots,y_M;t,q^{-1}). 
\notag 
\end{align}

\end{proof}

\begin{rem}
There are a few minor errata in the formulas corresponding to equations \eqref{eqn237-1527-25mar} and \eqref{eqn238-1527-25mar} in \cite{sv2007-supermac}. For this reason, we include detailed proof for \textbf{Lemmas \ref{lem245-1514-21apr}}
and \textbf{\ref{lem246-1514-21apr}}
above. Nevertheless, these minor errata do not affect the validity of \textbf{Theorem \ref{thm239-1508-25mar}}. Thus, \textbf{Theorem \ref{thm239-1508-25mar}} remains valid. 
\end{rem}



According to \textbf{Theorem \ref{thm239-1508-25mar}}, we know that if $\lambda \in \operatorname{Par}\backslash H_{N,M}$, then $\varphi_{q,t}(P_\lambda(z;q,t^{-1})) = 0$. Thus, in the following definition, we consider only $\lambda \in H_{N,M}$.

\begin{dfn}[Super Macdonald polynomials]
For each $\lambda \in H_{N,M}$, the super Macdonald polynomials $\supermac_\lambda(x_1,\dots,x_N;y_1,\dots,y_M;q,t)$ are defined by 
\begin{align}
\supermac_\lambda(x_1,\dots,x_N;y_1,\dots,y_M;q,t) 
:=  
\varphi_{q,t^{-1}}\left(	P_\lambda(z;q,t)		\right). 
\label{eqn254-1458-26mar}
\end{align}
\end{dfn}

\begin{prop}[Combinatorial formula for super Macdonald polynomials]\mbox{}
\label{prp228-1020-29jan}
For each $\lambda \in H_{N,M}$, we have 
\begin{align}
	\supermac_\lambda(x_1,\dots,x_N;y_1,\dots,y_M;q,t) =  
	\sum_{T \in \operatorname{RSSYBT}(N,M;\lambda)}
	\left(
	\widetilde{\psi}_T(q,t^{-1}) \times \prod_{s \in \lambda} x_{T(s)}
	\right).
	\label{eqn261-1741-17aug}
\end{align}
Here $\displaystyle \widetilde{\psi}_{T}(q,t) := \psi_{T^\prime_1}(t,q^{-1})\psi_{T_0}(q,t^{-1})\frac{H(\mu,q,t)}{H(\mu^\prime,t,q)}$, 
where $T_0$ and $T_1$ are the sub-diagrams of $T$ consisting of boxes assigned with ordinary numbers, and super numbers, respectively. Also, in \eqref{eqn261-1741-17aug}, for each $j \in \{1,\dots,M\}$, we write $x_{N+j} = y_j$. 
\end{prop}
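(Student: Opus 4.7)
The plan is to unfold the definition $\supermac_\lambda(x,y;q,t) = \varphi_{q,t^{-1}}(P_\lambda(z;q,t))$ and combine Lemma~\ref{lem246-1514-21apr} with the combinatorial formula for Macdonald polynomials. Substituting $t \mapsto t^{-1}$ throughout the identity of Lemma~\ref{lem246-1514-21apr} gives
\begin{align*}
\supermac_\lambda(x,y;q,t) = \sum_{\mu \subseteq \lambda} P_{\lambda\backslash\mu}(x_1,\dots,x_N;q,t)\,\frac{H(\mu,q,t^{-1})}{H(\mu^\prime,t^{-1},q)}\,P_{\mu^\prime}(y_1,\dots,y_M;t^{-1},q^{-1}).
\end{align*}
The condition~\eqref{eqn214-1341-13jan} is stable under $(q,t)\mapsto(t^{-1},q^{-1})$, so each Macdonald polynomial appearing is well-defined.

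Next I would expand both factors by Proposition~\ref{thm225-2054} and the defining equation~\eqref{eqn233-2040-26mar} of the skew Macdonald polynomials:
\begin{align*}
P_{\lambda\backslash\mu}(x_1,\dots,x_N;q,t) &= \sum_{T_0 \in \operatorname{RSSYT}(N;\lambda\backslash\mu)}\psi_{T_0}(q,t)\prod_{s \in \lambda\backslash\mu}x_{T_0(s)},\\
P_{\mu^\prime}(y_1,\dots,y_M;t^{-1},q^{-1}) &= \sum_{U \in \operatorname{RSSYT}(M;\mu^\prime)}\psi_U(t^{-1},q^{-1})\prod_{s \in \mu^\prime}y_{U(s)},
\end{align*}
and apply Corollary~\ref{cor226-1214-11mar} to rewrite $\psi_U(t^{-1},q^{-1}) = \psi_U(t,q)$.

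The central combinatorial input is a bijection
\begin{align*}
\Phi \colon \bigl\{(\mu,T_0,U):\mu\subseteq\lambda,\;T_0\in\operatorname{RSSYT}(N;\lambda\backslash\mu),\;U\in\operatorname{RSSYT}(M;\mu^\prime)\bigr\} \;\longleftrightarrow\; \operatorname{RSSYBT}(N,M;\lambda).
\end{align*}
Given $T \in \operatorname{RSSYBT}(N,M;\lambda)$, let $\mu$ be the super subdiagram (a Young diagram by the unnamed proposition preceding Definition~\ref{def220-1540-19mar}), let $T_0$ be the restriction of $T$ to $\lambda\backslash\mu$, and let $U$ be the transpose of the super subdiagram after the relabeling $N+j \mapsto j$. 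The axioms "super numbers strictly decreasing in rows" and "super numbers weakly decreasing in columns" in Definition~\ref{dfn211-1443} transpose precisely to the defining conditions of a reverse SSYT on $\mu^\prime$; the ordinary part $T_0$ is automatically a reverse SSYT of type $N$; and along $\partial\mu$ no extra constraint arises because every super number exceeds every ordinary number. The inverse map is clear.

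Finally, I would match the summands. Under $\Phi$, the monomial satisfies $\prod_{s \in \lambda\backslash\mu} x_{T_0(s)} \cdot \prod_{s \in \mu^\prime} y_{U(s)} = \prod_{s \in \lambda} x_{T(s)}$ using the convention $x_{N+j} = y_j$, and by Proposition~\ref{prp226-1326-15feb} the relabeling $N+j \leftrightarrow j$ preserves $\psi$-values, so $\psi_U(t,q) = \psi_{T_1^\prime}(t,q)$, where $T_1$ is the super subdiagram of $T$. Combining this with $\psi_{T_1^\prime}(t,q) = \psi_{T_1^\prime}(t^{-1},q^{-1})$ from Corollary~\ref{cor226-1214-11mar}, the total coefficient of $\prod_{s \in \lambda} x_{T(s)}$ becomes
\begin{align*}
\psi_{T_1^\prime}(t^{-1},q^{-1})\,\psi_{T_0}(q,t)\,\frac{H(\mu,q,t^{-1})}{H(\mu^\prime,t^{-1},q)} \;=\; \widetilde{\psi}_T(q,t^{-1}),
\end{align*}
as required. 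The main obstacle I anticipate is the clean bookkeeping of the bijection $\Phi$ together with the identification $\psi_U = \psi_{T_1^\prime}$ under relabeling; everything else is a direct unpacking of the definitions together with Lemma~\ref{lem246-1514-21apr} and Corollary~\ref{cor226-1214-11mar}.
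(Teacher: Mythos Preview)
Your proposal is correct and follows exactly the route the paper takes: the paper's proof simply cites equations \eqref{eqn232-2040-25mar}, \eqref{eqn233-2040-26mar}, \eqref{eqn238-1527-25mar}, \eqref{eqn254-1458-26mar} and says the result follows directly, and you have spelled out precisely how these combine (the substitution $t\mapsto t^{-1}$ in Lemma~\ref{lem246-1514-21apr}, the two combinatorial expansions, and the bijection $\Phi$ identifying pairs $(T_0,U)$ with reverse SSYBTs). Your two invocations of Corollary~\ref{cor226-1214-11mar} are harmless but unnecessary, since they cancel: you can go straight from $\psi_U(t^{-1},q^{-1})$ to $\psi_{T_1'}(t^{-1},q^{-1})$ via Proposition~\ref{prp226-1326-15feb}.
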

\begin{proof}
From equations 
\eqref{eqn232-2040-25mar}
\eqref{eqn233-2040-26mar}
\eqref{eqn238-1527-25mar} \eqref{eqn254-1458-26mar}, the statement of this proposition follows directly. 
\end{proof}

Before ending this section, we note that if $\lambda \in H_{N,M}$, then $\operatorname{RSSYBT}(N,M;\lambda) \neq \emptyset$. This fact will be used in \textbf{Lemma \ref{lemm44-1106-29jan}} when referring to an element in 
$\operatorname{RSSYBT}(N,M;\lambda)$ when $\lambda \in H_{N,M}$.

\section{Quantum Corner VOA}
\label{sec3-1945-4apr}

In this section, we review the definition of the quantum corner VOA via the Miura transformation \cite{HMNW} \cite{PR18}. As the Miura transformation involves vertex operators $\Lambda^{\vec{c},\vec{u}}_i(z)$ which can be defined through the tensor product of horizontal Fock representations of the quantum toroidal $\fraks{gl}_1$ algebra, we begin by recalling the definition of the quantum toroidal $\fraks{gl}_1$ algebra and its representation. In writing this section, we follow the convention of \cite{pc2406}.

\subsection{Quantum toroidal algebra of type $\fraks{gl}_1$}
\label{subsec31-1953-4apr}

\begin{dfn}
\label{def31-1621-12jan}
Let $q_1,q_2,q_3$ be complex numbers satisfying the following conditions:
\begin{enumerate}[(1)]
\item $q_1q_2q_3 = 1$
\item If $a,b,c \in \bb{Z}$ which makes $q_1^aq_2^bq_3^c = 1$, then $a = b =c$. 
\end{enumerate}
The quantum toroidal algebra of type $\fraks{gl}_1$, denoted by $U_{q_1,q_2,q_3}(\widehat{\widehat{\fraks{gl}}}_1)$, is a unital associative algebra generated by the generators 
\begin{align}
E_k, F_k, K^{\pm}_0, H_{\pm r}, C  \hspace{0.3cm} (k \in \bb{Z}, r \in \bb{Z}^{\geq 1}), 
\end{align}
subject to the following relations:
\begin{gather}
	C \text{ is a central element}, 
	\\
	K^{\pm}(z)K^{\pm}(w) = 		K^{\pm}(w)K^{\pm}(z),
	\label{2.3main}
	\\
	K^{+}(z)K^-(w) = \frac{\cals{G}(w/C z)}{\cals{G}(C w/z)} K^-(w)K^{+}(z),
	\\
	K^+(z)E(w) = \cals{G}(w/z)E(w)K^+(z),
	\\
	K^-(C z)E(w) = \cals{G}(w/z)E(w)K^-(C z),
	\\
	K^+(C z)F(w) = \cals{G}(w/z)^{-1}F(w)K^+(C z),
	\\
	K^-(z)F(w) = \cals{G}(w/z)^{-1}F(w)K^-(z),
	\\
	\label{EEexchage}
	E(z)E(w) = \cals{G}(w/z)E(w)E(z),
\end{gather}
\begin{gather}
	F(z)F(w) = \cals{G}(w/z)^{-1}F(w)F(z),
	\\
	[E(z),F(w)] = \frac{1}{(q_1 - 1)(q_2 - 1)(q_3 - 1)}
	\bigg(
	\delta\Big(\frac{Cw}{z}\Big)K^+(z) - \delta\Big(\frac{Cz}{w}\Big)K^-(w) 
	\bigg),
	\label{2.11main}
\end{gather}
where 
$\delta(z) := \sum_{k \in \bb{Z}}
z^k$, 
\begin{gather}
	E(z) := \sum_{k\in\mathbb{Z}} E_k z^{-k}, \quad
	F(z) := \sum_{k\in\mathbb{Z}} F_k z^{-k}, \quad
	K^{\pm}(z) := K_0^{\pm}\exp\left(
	\pm \sum_{r=1}^{\infty} H_{\pm r} z^{\mp r}
	\right), 
	\label{2.12}
	\\
	\cals{G}(z) := 
	\frac{(1 - q_1^{-1}z)(1 - q_2^{-1}z)(1 - q_3^{-1}z)}{
	(1 - q_1z)(1 - q_2z)(1 - q_3z)
	}. 
\end{gather}
\end{dfn}
\vspace{0.2cm}

It is well-known that $U_{q_1,q_2,q_3}(\widehat{\widehat{\fraks{gl}}}_1)$ has the structure of a Hopf algebra. This means it has a coproduct, a counit, and an antipode. However, for the purpose of this paper, only the coproduct will play a significant role. Therefore, we will focus only on the coproduct. The formula for the coproduct of $U_{q_1,q_2,q_3}(\widehat{\widehat{\fraks{gl}}}_1)$ is explicitly given in \textbf{Proposition \ref{prp32-1748-9jan}} below. 

\begin{prop}
\label{prp32-1748-9jan}
The map $\Delta : U_{q_1,q_2,q_3}(\widehat{\widehat{\fraks{gl}}}_1)
\rightarrow 
U_{q_1,q_2,q_3}(\widehat{\widehat{\fraks{gl}}}_1) \tens U_{q_1,q_2,q_3}(\widehat{\widehat{\fraks{gl}}}_1)$
determined by 
\begin{align}
	\Delta\big(E(z)\big) &= E(z) \tens 1 + K^-(C_1z) \tens E(C_1z),
	\label{copro1}
	\\
	\Delta\big(F(z)\big) &= F(C_2z) \tens K^+(C_2z) + 1 \tens F(z),
	\\
	\Delta\big(K^+(z)\big) &= K^+(z) \tens K^+(C_1^{-1}z),
	\\
	\Delta\big(K^-(z)\big) &= K^-(C_2^{-1}z) \tens K^-(z),
	\label{copro4}
	\\
	\Delta\big(		C		\big) &= C \tens C, 
\end{align}
where $C_1 := C \tens 1, C_2 := 1 \tens C$, is an algebra homomorphism. It is called the coproduct of $U_{q_1,q_2,q_3}(\widehat{\widehat{\fraks{gl}}}_1)$. 
\end{prop}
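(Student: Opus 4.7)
The plan is to invoke the universal property of the presentation of $U_{q_1,q_2,q_3}(\widehat{\widehat{\fraks{gl}}}_1)$ by generators and relations: it suffices to verify that the prescribed images of the generating currents $E(z)$, $F(z)$, $K^\pm(z)$ and of the central element $C$ under $\Delta$ satisfy every defining relation \eqref{2.3main}--\eqref{2.11main} inside $U_{q_1,q_2,q_3}(\widehat{\widehat{\fraks{gl}}}_1) \otimes U_{q_1,q_2,q_3}(\widehat{\widehat{\fraks{gl}}}_1)$. Throughout the verification I work at the level of currents rather than individual modes, since every defining relation is already packaged in current form.

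First I would dispatch the easy cases. Centrality of $\Delta(C) = C \otimes C$ is immediate from centrality of $C$ in each slot, which also implies $C_1$ and $C_2$ are central in the tensor square. For \eqref{2.3main}, one expands $\Delta(K^\pm(z))\Delta(K^\pm(w))$ using $\Delta(K^+(z)) = K^+(z) \otimes K^+(C_1^{-1}z)$ and $\Delta(K^-(z)) = K^-(C_2^{-1}z) \otimes K^-(z)$, and uses that $K^\pm$ currents commute among themselves (and with $C$) in each slot. The $K^+K^-$ relation is obtained similarly: after substituting the coproduct formulas and applying the original $K^+K^-$ relation in each slot, the shifts $C_1^{-1}$ and $C_2^{-1}$ combine so that the resulting scalar is $\mathcal{G}(w/(C_1C_2)z)/\mathcal{G}((C_1C_2)w/z)$, which is exactly the desired factor with $C$ replaced by $\Delta(C) = C_1 C_2$.

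Next, for the four $KE$ and $KF$ mixed relations I would expand $\Delta(K^\pm(z))\Delta(E(w))$ (and the analogues for $F$) and transport the two factors of $K^\pm$ past the two summands of $\Delta(E(w)) = E(w) \otimes 1 + K^-(C_1 w) \otimes E(C_1 w)$ slot by slot, using the original mixed relations. The first slot contributes $\mathcal{G}(w/z)$ (or its inverse) from the $KE$ (resp.\ $KF$) relation, while the second slot contributes either $1$ or a shifted $\mathcal{G}$-factor whose argument is shifted by $C_1$; the $C$-shifts prescribed in \eqref{copro1}--\eqref{copro4} are calibrated precisely so that, on each summand, these two contributions telescope into the single $\mathcal{G}(w/z)^{\pm 1}$ required by the relation. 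For the $EE$ and $FF$ exchange relations I would expand $\Delta(E(z))\Delta(E(w))$ into four terms, apply \eqref{EEexchage} together with the $K^-E$ relation inside each tensor slot, and match the result term by term against $\mathcal{G}(w/z)\Delta(E(w))\Delta(E(z))$; the $FF$ case is entirely parallel.

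The main obstacle, and the lengthiest step, is verifying the $[E,F]$ relation \eqref{2.11main}. I would expand $[\Delta(E(z)), \Delta(F(w))]$ into four contributions coming from the pairs of summands in $\Delta(E(z))$ and $\Delta(F(w))$. Two of these are genuine single-slot commutators $[E,F]$ and therefore produce $\delta$-function terms times $K^\pm$ via \eqref{2.11main}; the remaining two are cross terms in which $E$ and $F$ occupy opposite tensor slots, and these must mutually cancel after moving $K^-$ past $F$ on the left and $K^+$ past $E$ on the right using the mixed $KF$ and $KE$ relations. Finally, the standard $\delta$-function localization $g(z)\,\delta(C_1 C_2 w/z) = g(C_1 C_2 w)\,\delta(C_1 C_2 w/z)$, combined with the explicit coproduct formulas, rewrites the surviving terms as $\delta(C_1 C_2 w/z)\Delta(K^+(z)) - \delta(C_1 C_2 z/w)\Delta(K^-(w))$, which is precisely the required right-hand side with $C$ replaced by $\Delta(C) = C_1 C_2$. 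Careful bookkeeping of the $C_1$ and $C_2$ shifts throughout is what guarantees every term lands in the correct place; this is where I expect essentially all the work to lie, and it is the reason the asymmetric $C$-shifts in \eqref{copro1}--\eqref{copro4} are set up exactly as they are.
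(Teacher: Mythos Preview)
The paper does not actually supply a proof of this proposition; it is stated as a known structural fact about $U_{q_1,q_2,q_3}(\widehat{\widehat{\fraks{gl}}}_1)$ and then used without further justification. Your proposal---verifying directly that the images of the generating currents satisfy each defining relation \eqref{2.3main}--\eqref{2.11main}---is the standard and correct way to establish this, and your outline of how the $C_1$, $C_2$ shifts are calibrated so that the $\mathcal{G}$-factors telescope and the cross terms in $[\Delta(E),\Delta(F)]$ cancel is accurate.
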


Next, we will discuss a representation of $U_{q_1,q_2,q_3}(\widehat{\widehat{\fraks{gl}}}_1)$. There are various known representations of $U_{q_1,q_2,q_3}(\widehat{\widehat{\fraks{gl}}}_1)$. However, among these representations, the horizontal Fock representation plays a crucial role in the study of quantum corner VOA. Thus, in this paper, we only discuss the horizontal Fock representation of $U_{q_1,q_2,q_3}(\widehat{\widehat{\fraks{gl}}}_1)$. 

To understand the horizontal Fock representation of $U_{q_1,q_2,q_3}(\widehat{\widehat{\fraks{gl}}}_1)$, we need to first discuss Heisenberg algebra.

\begin{dfn}
For each $i \in \{1,2,3\}$, we define the Heisenberg algebra $\cals{B}^{(i)}$ to be the algebra generated by generators $\left\{a^{(i)}_{\pm n}~|~ n \in \bb{Z}^{\geq 1}\right\}$, subject to the relation
\begin{align}
[a_n^{(i)},a_m^{(i)}] &= \frac{n}{\kappa_n}(q_i^{n/2} - q_i^{-n/2})
\delta_{n+m,0}. 
\end{align}
Here $\kappa_n := (q^n_1 - 1)(q^n_2 - 1)(q^n_3 - 1)$. 
\end{dfn}

Next, we define $\cals{H}^{(i)}$ to be the vector space
\begin{align}
	\cals{H}^{(i)} := \operatorname{span}
	\left\{
	a^{(i)}_{-\lambda_1}\cdots a^{(i)}_{-\lambda_m}|0\rangle
	\;\middle\vert\;
	\begin{array}{@{}l@{}}
		(1) \,\,  m \in \bb{Z}^{\geq 0}\\
		(2) \,\, \lambda_1 \geq \cdots \geq \lambda_m \geq 1
	\end{array}
	\right\}.
\end{align}
where $|0\rangle$ is the vacuum state, which is annihilated by the positive mode operators, i.e. $a_n|0\rangle = 0$. It is straightforward to see that $\cals{H}^{(i)}$ has the structure of a left $\cals{B}^{(i)}$-module.

We have now covered all the necessary ingredients to define the horizontal Fock representation of $U_{q_1,q_2,q_3}(\widehat{\widehat{\fraks{gl}}}_1)$. 
We will now discuss the horizontal Fock representation.

\begin{dfn-prp}
For each $i \in \{1,2,3\}$ and $u \in \bb{C}\backslash\{0\}$, the horizontal Fock representation of $U_{q_1,q_2,q_3}(\widehat{\widehat{\fraks{gl}}}_1)$ is an algebra homomorphism $\rho_{H,u}^{(i)} : U_{q_1,q_2,q_3}(\widehat{\widehat{\fraks{gl}}}_1)
\rightarrow \operatorname{End}(\cals{H}^{(i)})$ determined by the following equations:
\begin{align}
	\rho_{H,u}^{(i)}\left(
	K^+(z)
	\right)
	&= 
	\exp\left(
	\sum_{n = 1}^{\infty}\frac{\kappa_n}{n}q^{n/4}_ia^{(i)}_{n}z^{-n}
	\right),
	\\
	\rho_{H,u}^{(i)}
	\left(
	K^-(z)
	\right)
	&= 
	\exp\left(
	- \sum_{n = 1}^{\infty}\frac{\kappa_n}{n}q^{-n/4}_{i}
	a^{(i)}_{-n}z^n
	\right),
	\\
	\rho_{H,u}^{(i)}
	\left(
	E(z)
	\right)
	&= u\widetilde{d}_1
	\exp\left(
	\sum_{n = 1}^{\infty}
	\frac{\kappa_n}{n}
	\frac{q^{-n/4}_i}{q^{n/2}_{i} - q^{-n/2}_{i}}
	a^{(i)}_{-n}z^n
	\right)
	\exp\left(
	-\sum_{n = 1}^{\infty}
	\frac{\kappa_n}{n}
	\frac{q^{n/4}_i}{q^{n}_{i}  - 1}
	a^{(i)}_{n}z^{-n}
	\right),
	\\
	\rho_{H,u}^{(i)}
	\left(
	F(z)
	\right)
	&= 
	u^{-1}\widetilde{d}_2
	\exp\left(
	-\sum_{n = 1}^{\infty}
	\frac{\kappa_n}{n}
	\frac{q^{n/4}_i}{q^{n/2}_{i} - q^{-n/2}_{i}}
	a^{(i)}_{-n}z^{n}
	\right)
	\exp\left(
	\sum_{n = 1}^{\infty}
	\frac{\kappa_n}{n}
	\frac{q^{n/4}_i}{q^{n/2}_{i} - q^{-n/2}_{i}}
	a^{(i)}_{n}z^{-n}
	\right),
	\\
	\rho_{H,u}^{(i)}
	\left(
	C
	\right)
	&= q^{1/2}_{i},
\end{align}
provided that 
\begin{align}
	\widetilde{d}_1 &= 
	\frac{
	1 - q_i
	}{
	(1 - q_1)
	(1 - q_2)
	(1 - q_3)
	}, 
	\label{eqn315-2135}
	\\
	\widetilde{d}_2 &=
	\frac{
	(1 - q^{-1}_i)
	}{
		(q_1 - 1)(q_2 - 1)(q_3 - 1)
	}.
	\label{eqn316-2136}
\end{align}
Note that the $i^\prime, i^{\prime\prime}$, which appears in the equation \eqref{eqn315-2135}, are the elements in the set $\{1,2,3\}$ which are different from $i$. 
\end{dfn-prp}

Since $U_{q_1,q_2,q_3}(\widehat{\widehat{\fraks{gl}}}_1)$ has a Hopf algebra structure, it follows that we can construct representations of $U_{q_1,q_2,q_3}(\widehat{\widehat{\fraks{gl}}}_1)$ by using tensor products. More precisely, for each $\vec{c} = (c_1,\dots,c_n) \in \{1,2,3\}^n$ and $\vec{u} = (u_1,\dots,u_n) \in \bb{C}^n$, we define 
\begin{align}
	\rho^{\vec{c}}_{H,\vec{u}}
	: 
	U_{q_1,q_2,q_3}(\widehat{\widehat{\fraks{gl}}}_1)
	\rightarrow \operatorname{End}(\cals{H}^{(c_1)} \tens \cdots \tens \cals{H}^{(c_n)}), 
\end{align}
by 
\begin{align}
	\rho^{\vec{c}}_{H,\vec{u}}
	:= 
	\left(
	\rho_{H,u_1}^{(c_1)} \tens \rho_{H,u_2}^{(c_2)} \tens \cdots \tens \rho_{H,u_n}^{(c_n)}	
	\right) \comp 
	\Delta^{(n-1)}, 
\end{align}
where $\Delta^{(1)} := \Delta$, 
and for $n \in \bb{Z}^{\geq 2}$, 
\begin{align}
\Delta^{(n)} := (\Delta \tens \underbrace{		1 \tens \cdots \tens 1		}_{n-1} ) \comp \Delta^{(n-1)}. 
\end{align}
It is clear that $\rho^{\vec{c}}_{H,\vec{u}}$ is an algebra homomorphism. 

\subsection{Quantum corner VOA}
\label{subsec32-1955-4apr}

The goal of this subsection is to define the quantum corner VOA via the Miura transformation (see \textbf{Definition \ref{def38-2133-12jan}}). To do so, we will begin this subsection by introducing the so-called vertex operator, which plays a crucial role in the Miura transformation.

Define 
\begin{align}
	\alpha(z) &:= 
	\exp\left(
	-\sum_{r = 1}^{\infty}\frac{1}{C^r - C^{-r}}\widetilde{b}_{-r}z^r
	\right),
	\\
	\beta(z) &:=
	\exp\left(
	\sum_{r = 1}^{\infty}\frac{1}{C^r - C^{-r}}\widetilde{b}_rz^{-r}
	\right),
\end{align}
Here, $\widetilde{b}_{\pm r}$ are defined by the equation \eqref{eqn331-1625-12jan} below: 
\begin{align}
	K^+(z) &= 
	\exp\left(
	\sum_{r = 1}^{\infty}\widetilde{b}_rC^{r}z^{-r}
	\right),
	\hspace{1cm}
	K^-(z) =
	\exp\left(
	-\sum_{r = 1}^{\infty}\widetilde{b}_{-r}z^r
	\right). 
\label{eqn331-1625-12jan}
\end{align}

\begin{dfn}
For each $\vec{c} = (c_1,\dots,c_n) \in \{1,2,3\}^n$ and $\vec{u} = (u_1,\dots,u_n) \in (\bb{C} \backslash \{0\})^n$, we define the vertex operator $\widetilde{\Lambda}^{\vec{c},\vec{u}}_j(z) \,\, (j = 1,\dots,n)$ to be 
\begin{align}
\widetilde{\Lambda}^{\vec{c},\vec{u}}_j(z)
:= 
\rho^{\vec{c}}_{H,\vec{u}}\left(\alpha(z)\right)
\Lambda^{\vec{c},\vec{u}}_j(z)
\rho^{\vec{c}}_{H,\vec{u}}\left(\beta(z)\right)
\end{align}
Here 
\begin{align}
\Lambda^{\vec{c},\vec{u}}_j(z) := 
u_j
\varphi^{(c_1)}(q_{c_1}^{1/2}z;p) \tens \cdots \tens \varphi^{(c_{j-1})}(q_{c_1}^{1/2} \cdots q_{c_{j-1}}^{1/2}z;p) 	
\tens \eta^{(c_j)}(q_{c_1}^{1/2} \cdots q_{c_{j-1}}^{1/2}z;p) \tens 
\underbrace{	1 \tens \cdots \tens 1				}_{n - j}, 
\end{align}
where 
\begin{align}
	\varphi^{(i)}(z;p) &= 
	\exp\left(
	- \sum_{n = 1}^{\infty}
	\frac{\kappa_n}{n}q^{-n/4}_{i}
	a^{(i)}_{-n}z^n
	\right),
	\\
	\eta^{(i)}(z;p) &= 
	\exp\left(
	\sum_{n = 1}^{\infty}
	\frac{\kappa_n}{n}
	\frac{q^{-n/4}_{i}}{q^{n/2}_{i} - q^{-n/2}_{i}}
	a^{(i)}_{-n}z^n
	\right)
	\exp\left(
	-\sum_{n = 1}^{\infty}
	\frac{\kappa_n}{n}
	\frac{q^{n/4}_{i}}{q^{n}_{i}  - 1}
	a^{(i)}_{n}z^{-n}
	\right). 
\end{align}
\end{dfn}

\begin{dfn}
For each $\vec{c} = (c_1,\dots,c_n) \in \{1,2,3\}^n$ satisfying the condition $q_{\vec{c}} := \prod_{k = 1}^{n}q_{c_k} \neq 1$, we define a factor $f^{\vec{c}}_{r,m}(z)$ as follows: For $r \leq m$, 
\begin{align}
	f^{\vec{c}}_{r,m}(z)
	&:=
	\exp\bigg[
	\sum_{k = 1}^{\infty}\frac{z^k}{k}
	(q_3^{\frac{r}{2}k} - q_3^{-\frac{r}{2}k})(q_{\vec{c}}^{\frac{k}{2}}q^{-\frac{m}{2}k}_{3}
	- q_{\vec{c}}^{-\frac{k}{2}}q^{\frac{m}{2}k}_{3})
	\frac{
		(q^{\frac{k}{2}}_1 - q^{-\frac{k}{2}}_1)(q^{\frac{k}{2}}_2 - q^{-\frac{k}{2}}_2)
	}{
		(q^{\frac{k}{2}}_{\vec{c}} - q^{-\frac{k}{2}}_{\vec{c}})(q^{\frac{k}{2}}_3 - q^{-\frac{k}{2}}_3)
	}
	\bigg]
	\label{eqn337-2055-30mar}
\end{align}
and for $r \geq m$, 
\begin{align}
	f^{\vec{c}}_{r,m}(z)
	&:= 
	f^{\vec{c}}_{m,r}(z).
\end{align}
\end{dfn}

\begin{prop}
\label{prp37-1018-23jan}
	The following relations hold:
	\begin{align}
	\label{eqn339-2150-12jan}
		\widetilde{\Lambda}^{\vec{c},\vec{u}}_i(z)			\widetilde{\Lambda}^{\vec{c},\vec{u}}_j(w)		
		= 
		\begin{cases}
			\displaystyle
			f^{\vec{c}}_{11}\left(\frac{w}{z}	  	\right)^{-1}
			\Delta\left(	q_3^{\frac{1}{2}}\frac{w}{z}			\right)
			\normord{		\widetilde{\Lambda}^{\vec{c},\vec{u}}_i(z)	 	\widetilde{\Lambda}^{\vec{c},\vec{u}}_j(w)						}
			\text{ for } i < j 
			\\
			\displaystyle
			f^{\vec{c}}_{11}\left(\frac{w}{z}  \right)^{-1}
			\gamma_{c_i}\left(\frac{w}{z} \right)
			\normord{		\widetilde{\Lambda}^{\vec{c},\vec{u}}_i(z)		\widetilde{\Lambda}^{\vec{c},\vec{u}}_i(w)						}
			\text{ for } i = j
			\\
			\displaystyle
			f^{\vec{c}}_{11}\left(\frac{w}{z} \right)^{-1}
			\Delta\left(	q_3^{-\frac{1}{2}}\frac{w}{z} 			\right)
			\normord{		\widetilde{\Lambda}^{\vec{c},\vec{u}}_i(z)		\widetilde{\Lambda}^{\vec{c},\vec{u}}_j(w)				}
			\text{ for } i > j 
		\end{cases}
	\end{align}
	where
	\begin{align}
		\Delta(z) 
		&:= 
		\frac{
		(1 - q_1q_3^{\frac{1}{2}}z)
		(1 - q^{-1}_1q_3^{-\frac{1}{2}}z)
		}{
		(1 - q_3^{\frac{1}{2}}z)
		(1 - q_3^{-\frac{1}{2}}z)
		},
		\\
		\gamma_{c_i}(z) 
		\label{eqn340-1147-15apr}
		&:= 
		\frac{
		(1 - q_{c_i}z)
		(1 - q^{-1}_{c_i}z)
		}{
		(1 - q_{3}z)
		(1 - q^{-1}_{3}z)
		}. 
	\end{align} 
\end{prop}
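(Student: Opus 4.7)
The plan is to verify the three OPE formulas by direct computation using the free-boson normal-ordering rule: for operators $A$ and $B$ such that $[A,B]$ is a scalar, one has $e^A e^B = e^{[A,B]}\!\normord{e^Ae^B}$. Every factor appearing in $\widetilde{\Lambda}^{\vec{c},\vec{u}}_i(z)$ is a product of exponentials of linear combinations of the Heisenberg generators $a^{(c_k)}_{\pm n}$ (after applying $\rho^{\vec{c}}_{H,\vec{u}}$ to $\alpha(z)$ and $\beta(z)$, which via \eqref{eqn331-1625-12jan} and the coproduct formulas \eqref{copro1}–\eqref{copro4} become explicit sums over tensor factors). So once the pieces are laid out, the entire calculation reduces to pairing each exponential of positive modes on the left of a match against each exponential of negative modes on the right, accumulating scalar contractions, and then resumming.

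First I would split the product $\widetilde{\Lambda}^{\vec{c},\vec{u}}_i(z)\,\widetilde{\Lambda}^{\vec{c},\vec{u}}_j(w)$ into three blocks of contractions: (a) the outer dressings $\rho^{\vec{c}}_{H,\vec{u}}(\alpha(z))\cdots\rho^{\vec{c}}_{H,\vec{u}}(\beta(z))$ against the analogous dressings at $w$, (b) the cross terms between an outer dressing at $z$ and the inner factor $\Lambda^{\vec{c},\vec{u}}_j(w)$ (and vice versa), and (c) the inner-inner contraction $\Lambda^{\vec{c},\vec{u}}_i(z)\Lambda^{\vec{c},\vec{u}}_j(w)$. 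Blocks (a) and (b) are symmetric in $(i,j)$ and depend only on $\vec{c}$; a routine but careful resummation of the geometric series coming from $[a^{(c_k)}_n,a^{(c_k)}_{-n}]=\frac{n}{\kappa_n}(q_{c_k}^{n/2}-q_{c_k}^{-n/2})$ — using the exponents in $\alpha,\beta$ from \eqref{eqn331-1625-12jan} together with those in $\varphi^{(c_k)}$ and $\eta^{(c_k)}$ — produces exactly the universal prefactor $f^{\vec{c}}_{11}(w/z)^{-1}$ given by \eqref{eqn337-2055-30mar} with $r=m=1$.

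The case-dependent part is block (c). When $i<j$, the only nontrivial contraction in $\Lambda_i(z)\Lambda_j(w)$ is between the $\eta^{(c_i)}$ sitting in the $i$-th tensor slot of $\Lambda_i(z)$ and the $\varphi^{(c_i)}$ sitting in the $i$-th slot of $\Lambda_j(w)$; all other tensor slots are either both-creation (giving $1$) or identity. This single contraction resums to a rational function. Tracking the spectral shifts $q_{c_1}^{1/2}\cdots q_{c_{i-1}}^{1/2}z$ at position $i$ on both sides, the net argument is $q_3^{1/2}w/z$ (because $\Lambda_j(w)$ carries one extra factor $q_{c_i}^{1/2}$ relative to $\Lambda_i(z)$ at that slot, and the $\eta$–$\varphi$ pairing contributes the remaining shift), and the resummed factor is precisely $\Delta(q_3^{1/2}w/z)$. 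The case $i>j$ is symmetric with the roles of $\eta$ and $\varphi$ reversed at slot $j$, yielding the argument $q_3^{-1/2}w/z$. When $i=j$, both factors carry $\eta^{(c_i)}$ at slot $i$, which has both positive- and negative-mode parts, producing the $\eta$–$\eta$ contraction that, after resummation of the exponent $\sum_{n\ge 1}\frac{1}{n}\!\bigl[(q_{c_i}^{n/2}-q_{c_i}^{-n/2})(q_{c_i}^{-n/2}-\ldots)\bigr](w/z)^n$, collapses to the ratio $\gamma_{c_i}(w/z)$ of \eqref{eqn340-1147-15apr}.

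The main obstacle I expect is the bookkeeping: verifying that the exponents from the $\alpha,\beta$ dressings (built via the coproduct of $K^\pm$) precisely cancel the $\eta$–$\eta$ and $\varphi$–$\varphi$ contributions on common tensor slots so that only the universal $f^{\vec{c}}_{11}(w/z)^{-1}$ survives in blocks (a)+(b) together with the unpaired portion of (c); and tracking the spectral shifts $q_{c_1}^{1/2}\cdots q_{c_{j-1}}^{1/2}$ so that the final argument of $\Delta$ is $q_3^{\pm 1/2}w/z$ rather than some $\vec{c}$-dependent expression. All of this is, however, a finite check at each monomial order in $w/z$, so once the exponents are written out and the series $\sum_{n\ge 1}\frac{z^n}{n}(q^{n/2}-q^{-n/2})(\cdots)$ are identified with logarithms of the corresponding rational factors, the proposition follows.
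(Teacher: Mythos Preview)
The paper does not actually supply a proof of this proposition: it is stated as a fact (the relations are standard in the literature on the quantum toroidal $\mathfrak{gl}_1$ and its free-field realisations, and follow the conventions of \cite{pc2406} and \cite{HMNW}), and the text moves directly on to the Miura transformation. So there is no ``paper's own proof'' to compare your attempt against.

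That said, your outline is exactly the standard route one takes to establish such OPEs, and the skeleton is correct. The decomposition into (a) dressing--dressing, (b) dressing--inner, (c) inner--inner contractions is the right bookkeeping device, and your identification of the sole nontrivial contraction in block (c) for $i<j$ (namely the $\eta^{(c_i)}$--$\varphi^{(c_i)}$ pairing at tensor slot $i$, since $\varphi$ carries only creation modes and all other slots see an identity on at least one side) is accurate. The honest caveats you raise --- that the dressings $\alpha,\beta$ must be unpacked through the coproduct of $K^\pm$, and that the spectral shifts $q_{c_1}^{1/2}\cdots q_{c_{j-1}}^{1/2}$ must collapse to a $\vec{c}$-independent $q_3^{\pm 1/2}$ in the final argument of $\Delta$ --- are the genuine places where the computation has content. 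One small correction: your claim that blocks (a)+(b) alone yield $f^{\vec{c}}_{11}(w/z)^{-1}$ is slightly off; the universal prefactor emerges only after the $\varphi$--$\varphi$ pieces from the common slots $1,\ldots,\min(i,j)-1$ in block (c) are absorbed as well (those are already normal-ordered so contribute $1$, but the dressing contractions across \emph{all} slots are what sum to $f^{\vec{c}}_{11}$). This does not change the conclusion.
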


\begin{dfn}[\cite{Misha} \cite{HMNW} \cite{PR18}]
\label{def38-2133-12jan}
Let $L, M, N \in \bb{Z}^{\geq 0}$, and let 
\begin{align}
	\vec{c} = (c_1,\dots,c_n) = (
	\underbrace{	3,\dots,3		}_{N},
	\underbrace{		1,\dots,1			}_{L},\underbrace{	2,\dots,2			}_{M}
	)
	= 
	(3^N1^L2^M). 
\end{align}
Fix $\vec{u} = (u_1,\dots,u_n) \in (\bb{C} \backslash \{0\})^n$. 
We define the quantum corner VOA $q\widetilde{Y}_{L,M,N}[\Psi]$ as the algebra generated by currents $\left\{\widetilde{T}^{\vec{c},\vec{u}}_m(z)\right\}_{m \in  \bb{Z}^{\geq 0}}$ , where these $\widetilde{T}^{\vec{c},\vec{u}}_m(z)$ can be expressed in terms of the previously defined vertex operators by the Miura transformation, which is defined by the equation \eqref{431-1247} below.
\begin{align}
	\normord{
	\widetilde{R}^{(c_1)}_{1}(z)\widetilde{R}^{(c_2)}_{2}(z)\cdots \widetilde{R}^{(c_n)}_{n}(z)
	}
	\,\, 
	= \sum_{m = 0}^{\infty}(-1)^m\widetilde{T}^{\vec{c},\vec{u}}_m(z)q_3^{-mD_z},
	\label{431-1247}
\end{align}
where 
\begin{align}
	\widetilde{R}^{(c)}_{i}(z) 
	&:=
	\sum_{k = 0}^{\infty}
	(-1)^kq_3^{\frac{1}{2}k^2}
	q_c^{-\frac{k}{2}}
	\bigg[
	\prod_{\ell = 0}^{k-1}
	\frac{1 - q_3^{\ell - k +1}q_c}{1 - q_3^{\ell + 1}}
	\bigg]
	\normord{
		\widetilde{\Lambda}_i^{\vec{c},\vec{u}}(z)
		\widetilde{\Lambda}_i^{\vec{c},\vec{u}}(q_3^{-1}z)
		\cdots
		\widetilde{\Lambda}_i^{\vec{c},\vec{u}}(q_3^{-(k-1)}z)
	}
	q_3^{-kD_z}, 
	\label{232-eqn}
\end{align}
and $D_z$ is the scaling operator, i.e. $q_3^{D_z}g(z) = g(q_3z)$, for any function $g(z)$. 
\end{dfn}

From the Miura transformation, equation \eqref{431-1247}, it can be shown that for any $m \in \bb{Z}^{\geq 0}$, 
\begin{align}
	\widetilde{T}^{\vec{c},\vec{u}}_m(z)
	&=
	\underbrace{			\sum_{k_1, \dots,k_n \in \bb{Z}^{\geq 0}}			}_{k_1 + \cdots + k_n = m}
	\left[
	\prod_{i = 1}^{n}
	\prod_{j_i = 1}^{k_i}
	\left(
	-(q^{1/2}_{c_i}q^{1/2}_{3})
	\frac{(1 - q_3^{j_i-1}q_{c_i}^{-1})}{(1 - q_3^{j_i})}
	\right)
	\right]
	\normord{
		\prod_{i = 1}^{n}\prod_{j_i = 1}^{k_i}
		\widetilde{		\Lambda		}^{\vec{c},\vec{u}}_i(q_3^{- \sum_{\ell = 1}^{i-1}k_\ell -j_i + 1}z)
	}. 
	\label{eqn314-1625}
\end{align}

Using equations \eqref{eqn339-2150-12jan} and \eqref{eqn314-1625}, we can calculate the products of the currents $\left\{\widetilde{T}^{\vec{c},\vec{u}}_m(z)\right\}_{m \in  \bb{Z}^{\geq 0}}$. It can be shown that these products satisfy the relations given in equation \eqref{eqn346-2152-12jan} below. We refer to these relations as the quadratic relations.

\begin{prop}[\cite{pc2406}]
\label{prop39-1328-8ap}
For each $\vec{c} = (c_1,\dots,c_n) = 
(3^N1^L2^M)$ satisfying the condition $q_{\vec{c}} := \prod_{k = 1}^{n}q_{c_k} \neq 1$, for each $\vec{u} = (u_1,\dots,u_n) \in (\bb{C} \backslash \{0\})^n$, 
and for each $r, m \in \bb{Z}^{\geq 1}$ such that $r \leq m$, we have 
\begin{align}
	&f^{\vec{c}}_{r,m}\left(
	q^{\frac{r - m}{2}}_{3}\frac{w}{z} 
	\right)
	\widetilde{	T	}^{\vec{c},\vec{u}}_r(z )\widetilde{	T		}^{\vec{c},\vec{u}}_m(w ) 
	- f^{\vec{c}}_{m,r}\left(
	q^{\frac{m - r}{2}}_{3}\frac{z}{w} 
	\right)
	\widetilde{T}^{\vec{c},\vec{u}}_m(w )\widetilde{T}^{\vec{c},\vec{u}}_r(z )
	\notag \\
	&= 
	\frac{
	(1 - q_1)(1 - q_2)
	}{
	(1 - q_3^{-1})
	}
	\sum_{k = 1}^{r}
	\left(
	\prod_{\ell = 1}^{k-1}
	\frac{(1 -  q_1q_3^{-\ell})( 1 - q_2q_3^{-\ell})}{( 1 - q_3^{-\ell - 1})( 1 - q_3^{-\ell})}
	\right)
	\biggl\{
	\delta\left(q_3^k\frac{w}{z}\right)f^{\vec{c}}_{r-k,m+k}(q_3^{\frac{r-m}{2}} )\widetilde{T}^{\vec{c},\vec{u}}_{r-k}(q_3^{-k}z)\widetilde{T}^{\vec{c},\vec{u}}_{m+k}(q_3^kw)
	\notag \\ 
	&\hspace{0.3cm}- 
	\delta\left(q_3^{r-m-k}\frac{w}{z}\right)f^{\vec{c}}_{r-k,m+k}(q_3^{\frac{m-r}{2}})\widetilde{T}^{\vec{c},\vec{u}}_{r-k}(z )\widetilde{T}^{\vec{c},\vec{u}}_{m+k}(w)
	\biggr\}. 
	\label{eqn346-2152-12jan}
\end{align}
\end{prop}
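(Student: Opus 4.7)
The plan is to derive (3.46) directly from the Miura formula (3.14) and the vertex operator OPE relations (3.39) established in Proposition 3.7. At a high level, both sides of (3.46) will be expanded into sums of normal-ordered products of vertex operators; the regular parts will cancel, and the pole parts will assemble into the delta-function contributions on the right-hand side.

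First I would substitute (3.14) into both $\widetilde{T}^{\vec{c},\vec{u}}_r(z)\widetilde{T}^{\vec{c},\vec{u}}_m(w)$ and $\widetilde{T}^{\vec{c},\vec{u}}_m(w)\widetilde{T}^{\vec{c},\vec{u}}_r(z)$, expressing each as a double sum over compositions $(k_1,\dots,k_n)$ of $r$ and $(\ell_1,\dots,\ell_n)$ of $m$, with summand a product of $r+m$ vertex operators at shifted arguments. Applying (3.39) pair-by-pair reorders these into a single normal-ordered form and produces a scalar prefactor built out of $f^{\vec{c}}_{11}$, $\Delta$, and $\gamma_{c_i}$ factors. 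A direct check using (3.37) shows that, after stripping the pole-carrying numerators of the $\Delta$ and $\gamma_{c_i}$ factors, the accumulated product of $f^{\vec{c}}_{11}$ factors telescopes into $f^{\vec{c}}_{r,m}\!\left(q_3^{(r-m)/2}w/z\right)^{-1}$, so that multiplying by the prefactor on the left-hand side of (3.46) clears the $f^{\vec{c}}_{r,m}$ and leaves a sum with a common normal-ordered core.

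Second, the difference of the two products (each normalized by its respective $f$-prefactor) reduces to a sum of rational scalar expressions times this common normal-ordered core, where only the pole contributions from $\Delta$ survive. These poles sit at $z/w = q_3^{\pm k}$ for $k = 1,\dots,r$, and using the distributional identity
\begin{align*}
\frac{1}{1 - q_3^k w/z} - \frac{1}{1 - q_3^{-k} z/w} = \delta\!\left(q_3^k \tfrac{w}{z}\right)
\end{align*}
(suitably normalized by the $\Delta$-numerator) one extracts the delta functions $\delta(q_3^k w/z)$ and $\delta(q_3^{r-m-k} w/z)$ that appear on the right-hand side of (3.46). At each such pole, $k$ adjacent vertex operators in the normal-ordered product fuse, and re-applying (3.14) in reverse repackages the surviving product into $\widetilde{T}^{\vec{c},\vec{u}}_{r-k}(q_3^{-k}z)\widetilde{T}^{\vec{c},\vec{u}}_{m+k}(q_3^k w)$ (respectively $\widetilde{T}^{\vec{c},\vec{u}}_{r-k}(z)\widetilde{T}^{\vec{c},\vec{u}}_{m+k}(w)$). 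The residual rational coefficients combine into the telescoping factor $\prod_{\ell=1}^{k-1}(1-q_1q_3^{-\ell})(1-q_2q_3^{-\ell})/(1-q_3^{-\ell-1})(1-q_3^{-\ell})$ and the residual $f^{\vec{c}}_{r-k,m+k}(q_3^{\pm (r-m)/2})$ shown in (3.46).

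The main obstacle is the combinatorial bookkeeping in the pole extraction step. Each ordered pair of vertex operator labels $(i,j) \in \{1,\dots,n\}^2$ contributes a $\Delta$-type pole only when the corresponding shifted arguments align, and one must show that summing the residues of \emph{all} such pole contributions across \emph{all} compositions $(\vec{k},\vec{\ell})$ collapses into the compact delta-function expression on the right of (3.46) with precisely the stated structure constants. This is a nontrivial $q$-combinatorial identity involving nested $q$-binomial sums inherited from (3.14), and the mixed color vector $\vec{c} = (3^N 1^L 2^M)$ requires handling the three possible cases $c_i = c_j$, $c_i \neq c_j$ of the OPE (3.39) uniformly; it is for this reason that the full rigorous proof, carried out in \cite{pc2406}, is quite lengthy even though every individual step is a routine normal-ordering manipulation.
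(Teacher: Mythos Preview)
Your outline is consistent with the paper's own treatment: the paper does not prove Proposition~3.9 but merely states it with a citation to \cite{pc2406}, noting just before the statement that the relations follow from the OPE (3.39) and the Miura expansion (3.45). Your sketch of substituting (3.14) into both orderings, normal-ordering via (3.39), and extracting the delta-function poles is exactly the mechanism the paper alludes to, and you correctly defer the combinatorial bookkeeping to \cite{pc2406}.
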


\begin{rem}
From equation \eqref{eqn346-2152-12jan}, it is evident that the quadratic relations are independent of the choice of $\vec{u} \in (\bb{C} \backslash \{0\})^n$. Since the quantum corner VOA $q\widetilde{Y}_{L,M,N}[\Psi]$ is determined by these quadratic relations, we can conclude that the resulting $q\widetilde{Y}_{L,M,N}[\Psi]$ remains the same regardless of the choice of $\vec{u}$. This justifies the omission of $\vec{u}$ from the notation $q\widetilde{Y}_{L,M,N}[\Psi]$. 
\end{rem}

\begin{rem}
In \cite{pc2406}, the quantum corner VOA $q\widetilde{Y}_{L,M,N}[\Psi]$ was defined using $\vec{c} = (1^L2^M3^N)$. However, in this paper, we choose to employ $\vec{c} = (3^N1^L2^M)$ instead. From the derivation presented in \cite{pc2406}, it is evident that the quadratic relations \eqref{eqn346-2152-12jan} of the quantum corner VOA $q\widetilde{Y}_{L,M,N}[\Psi]$ are independent of the ordering of the numbers in $\vec{c}$. Consequently, the quadratic relations \eqref{eqn346-2152-12jan} remain valid even with the choice of $\vec{c} = (3^N1^L2^M)$. 
\end{rem}

\section{Quantum Corner VOA/Super Macdonald correspondence}
\label{sec4-1950-4apr}

In this section, we state the main theorem of this paper (\textbf{Theorem \ref{thm42-main-1022}}). By using \textbf{Lemma \ref{lemm43-1156-22jan}}, we can show that instead of directly proving \textbf{Theorem \ref{thm42-main-1022}}, we can prove \textbf{Lemma \ref{lemm44-1106-29jan}}, which is considerably less complicated. 

\begin{dfn}
\label{dfn41-2057-30mar}
	For each $\lambda \in \operatorname{Par}(k)$, we define 
	$\dualmap$
	to be the map which sends $f(z_1,\dots,z_k) \in \bb{C}(z_1,\dots,z_k)$ to the element 
	\begin{align}
		&f(y,qy,\dots,q^{\lambda_1 - 1}y
		\notag	\\
		&\hspace{0.4cm} \xi y,q\xi y,\dots,q^{\lambda_2 - 1}\xi y
		\notag \\
		&\hspace{2.8cm}\vdots
		\notag \\
		&\hspace{0.4cm} \xi^{\ell(\lambda) - 1} y,q\xi^{\ell(\lambda) - 1} y,\dots,q^{\lambda_{\ell(\lambda)} - 1}\xi^{\ell(\lambda) - 1} y). 
	\end{align}
	of $\bb{C}(\xi, y)$. 
\end{dfn}

It should be noted that the map $\dualmap$ is not defined for all $f(z_1,\dots,z_k) \in \bb{C}(z_1,\dots,z_k)$. However, the $f(z_1,\dots,z_k) \in \bb{C}(z_1,\dots,z_k)$ considered in this paper yield well-defined images under the map $\dualmap$.

\begin{thm}[Quantum Corner VOA/Super Macdonald correspondence]
\label{thm42-main-1022}
Suppose that $\vec{c} = (3^N1^M)$, $\vec{u} = (u_1,\dots,u_{N+M})$, 
and $\lambda \in \operatorname{Par}(k)$ such that $\lambda \in H_{N,M}$. Then, 
\begin{align}
	&
	\lim_{\xi \rightarrow t^{-1}}\,\,
	(
	\dualmap
	\comp 
	\bigg|_{
		\substack{
			q_1 = q, \\
			q_2 = q^{-1}t,\\
			q_3 = t^{-1} \\
		}
	}
	)
	\left(
	\cals{N}_{\lambda}(z_1,\dots,z_k )
	\times
	\prod_{1 \leq i < j \leq k}f^{\vec{c}}_{11}\left(\frac{z_j}{z_i} \right)
	\times
	\langle 0 |\widetilde{T}^{\vec{c},\vec{u}}_{1}(z_1 )\cdots \widetilde{T}^{\vec{c},\vec{u}}_{1}(z_k )|0\rangle
	\right)
	\label{eqn-42-1021-29jan}
	\\
	&= 
	\supermac_\lambda(u_1,\dots,u_N;
	q^{-\frac{1}{2}}t^{-\frac{1}{2}}u_{N+1},\dots,q^{-\frac{1}{2}}t^{-\frac{1}{2}}u_{N+M};q,t)
	\notag 
\end{align}
where 
\begin{align}
	\cals{N}_{\lambda}(z_1,\dots,z_k )
	&:= 
	\prod_{1 \leq c < d \leq \ell(\lambda)}
	\prod_{
		\substack{
			i \in I^{(c)}
			\\
			j \in I^{(d)}
		}
	}
	\Delta\left(	q_3^{-\frac{1}{2}}\frac{z_j}{z_i}				\right)^{-1}, 
	\label{eqn32-1044}
\end{align}
provided that $\Delta(z)$ is as defined in equation \eqref{eqn340-1147-15apr} and 
\begin{align*}
	I^{(1)} &= \{1,\dots,\lambda_1\},
	\\
	I^{(2)} &= \{\lambda_1 + 1,\dots,\lambda_1 + \lambda_2\},
	\\
	&\vdots
	\\
	I^{(\ell(\lambda))} &= \{\sum_{j = 1}^{\ell(\lambda) - 1}\lambda_j+ 1, \cdots, \sum_{j = 1}^{\ell(\lambda)}\lambda_j\}. 
\end{align*}
\end{thm}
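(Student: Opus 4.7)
The plan is to compute the left-hand side of \eqref{eqn-42-1021-29jan} by expanding the correlator of currents into a sum of correlators of vertex operators, showing that under the specialization $\dualmap$ followed by $\xi \to t^{-1}$ each surviving summand is indexed by a reverse SSYBT, and finally identifying the resulting sum with the combinatorial formula \eqref{eqn261-1741-17aug} for the super Macdonald polynomial.

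First I would apply the Miura transformation \eqref{eqn314-1625} at $m = 1$, which expresses $\widetilde{T}^{\vec{c},\vec{u}}_1(z)$ as an explicit linear combination of the vertex operators $\widetilde{\Lambda}^{\vec{c},\vec{u}}_j(z)$ for $j = 1,\ldots,N+M$. The $k$-point correlator then expands as a sum over tuples $(i_1,\ldots,i_k) \in \{1,\ldots,N+M\}^k$ of vacuum expectation values $\langle 0 | \widetilde{\Lambda}^{\vec{c},\vec{u}}_{i_1}(z_1) \cdots \widetilde{\Lambda}^{\vec{c},\vec{u}}_{i_k}(z_k) | 0 \rangle$, each of which factorizes by the OPE relations of Proposition \ref{prp37-1018-23jan} into a product of scalar factors $f^{\vec{c}}_{11}(z_b/z_a)^{-1}$, $\Delta(q_3^{\pm 1/2} z_b/z_a)$, and $\gamma_{c_{i_a}}(z_b/z_a)$, determined by whether $i_a < i_b$, $i_a = i_b$, or $i_a > i_b$. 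The prefactor $\prod_{i<j} f^{\vec{c}}_{11}(z_j/z_i)$ in \eqref{eqn-42-1021-29jan} cancels exactly the $f^{\vec{c}}_{11}$ factors produced by the OPEs, and the monomial $u_{i_1} \cdots u_{i_k}$ assembled from the vertex operators supplies the $\prod_{s \in \lambda} u_{T(s)}$ factor expected from \eqref{eqn261-1741-17aug}.

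Next I would apply $\dualmap$ with $q_1 = q,\, q_2 = q^{-1}t,\, q_3 = t^{-1}$, which identifies the variables $z_i$ with the $\lambda$-patterned values of Definition \ref{dfn41-2057-30mar} and groups them into row intervals $I^{(1)},\ldots,I^{(\ell(\lambda))}$. The factors $\Delta(q_3^{-1/2}z_b/z_a)$ arising from pairs with $a \in I^{(c)},\, b \in I^{(d)},\, c < d$ are exactly those cancelled by $\cals{N}_\lambda$ as defined in \eqref{eqn32-1044}, so the $\xi \to t^{-1}$ limit remains finite. A detailed bookkeeping of which tuples $(i_1,\ldots,i_k)$ survive in this limit — identifying each tuple with a filling of the boxes of $\lambda$ through the row decomposition — shows that the surviving tuples are exactly those satisfying the admissibility conditions of Definition \ref{dfn211-1443} (weak decrease along rows and columns, with strict decrease of super numbers along rows and of ordinary numbers along columns). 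This bookkeeping is the content of Lemma \ref{lemm43-1156-22jan}, which recasts the left-hand side of \eqref{eqn-42-1021-29jan} as a sum over $T \in \operatorname{RSSYBT}(N,M;\lambda)$ with a rational coefficient times $\prod_{s \in \lambda} u_{T(s)}$ (modulo the shift $u_{N+j} \mapsto q^{-1/2}t^{-1/2}u_{N+j}$ visible on the right of \eqref{eqn-42-1021-29jan}).

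Comparing this expression with the combinatorial formula in Proposition \ref{prp228-1020-29jan}, the theorem reduces to identifying the coefficient attached to each $T$ with $\widetilde{\psi}_T(q,t^{-1})$, which is precisely the assertion of Lemma \ref{lemm44-1106-29jan}. I expect the hard part to be this lemma rather than any of the steps above: its proof does not admit a uniform treatment and must be carried out separately in the three cases $(N,0)$, $(N,M)$, and $(0,M)$, with the general $(N,M)$ case requiring induction on $M$ whose base case $M = 1$ depends on the $(N,0)$ case — hence the logical order of Sections \ref{sec5-1041-7apr}--\ref{sec7-1042-7apr}. Once Lemma \ref{lemm44-1106-29jan} is established, the theorem follows immediately by combining it with Lemma \ref{lemm43-1156-22jan} and Proposition \ref{prp228-1020-29jan}.
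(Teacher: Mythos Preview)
Your proposal is correct and follows essentially the same approach as the paper: expand the correlator via the Miura transformation and the OPEs of Proposition~\ref{prp37-1018-23jan}, invoke Lemma~\ref{lemm43-1156-22jan} to reduce the left-hand side to a sum over $\operatorname{RSSYBT}(N,M;\lambda)$, compare with the combinatorial formula of Proposition~\ref{prp228-1020-29jan}, and thereby reduce the theorem to Lemma~\ref{lemm44-1106-29jan}, whose proof splits into the three cases $(N,0)$, $(N,M)$, $(0,M)$ handled in Sections~\ref{sec5-1041-7apr}--\ref{sec7-1042-7apr}. You have also correctly identified the logical dependency (the base case $M=1$ of the $(N,M)$ induction rests on the $(N,0)$ result) and the location of the genuine work.
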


The next lemma will tell us that we can write the (LHS) of equation \eqref{eqn-42-1021-29jan} as a summation of elements in the set $\operatorname{RSSYBT}(N,M;\lambda)$.

\begin{lem}
\label{lemm43-1156-22jan}
\begin{align}
	&
	\lim_{\xi \rightarrow t^{-1}}\,\,
	(
	\dualmap
	\comp 
	\bigg|_{
		\substack{
			q_1 = q, \\
			q_2 = q^{-1}t,\\
			q_3 = t^{-1} \\
		}
	}
	)
	\left(
	\cals{N}_{\lambda}(z_1,\dots,z_k )
	\times
	\prod_{1 \leq i < j \leq k}f^{\vec{c}}_{11}\left(\frac{z_j}{z_i} \right)
	\times
	\langle 0 |\widetilde{T}^{\vec{c},\vec{u}}_{1}(z_1 )\cdots \widetilde{T}^{\vec{c},\vec{u}}_{1}(z_k )|0\rangle
	\right)
	\label{eqn13-1426}
	\\
	&= 
	\underbrace{				
		\sum_{i_1 = 1}^{N+M}
		\cdots
		\sum_{i_k = 1}^{N+M}
	}_{
		(i_1,\dots,i_k) \in 
		\operatorname{RSSYBT}(N,M;\lambda)
	}
	\lim_{\xi \rightarrow t^{-1}}\,\,
	(
	\dualmap
	\comp 
	\bigg|_{
		\substack{
			q_1 = q, \\
			q_2 = q^{-1}t,\\
			q_3 = t^{-1} \\
		}
	}
	)
	\bigg[
	y_{i_1}\cdots y_{i_k}
	u_{i_1}\cdots u_{i_k}
	\notag \\
	&\hspace{7.5cm}\times
	\cals{N}_\lambda(z_1,\dots,z_k)
	\times
	\prod_{1 \leq a < b \leq k}
	\cals{D}^{(i_a,i_b)}\left(
	\frac{z_b}{z_a}
	; q, t
	\right)
	\bigg]
	\notag 
\end{align}
where $\displaystyle y_j :=
\frac{
q_{c_j}^{\frac{1}{2}} - q_{c_j}^{-\frac{1}{2}}
}{
q_{3}^{\frac{1}{2}} - q_{3}^{-\frac{1}{2}}
}
\,\, (j = 1,\dots,N+M)$, 
\begin{align}
	\cals{D}^{(i,j)}\left(
	\frac{z_b}{z_a}
	; q, t
	\right)
	:= 
	\begin{cases}
		\displaystyle 
		\frac{
			\left(1 - q_1^{-1}\frac{z_b}{z_a}\right)
			\left(1 - q_2^{-1}\frac{z_b}{z_a}\right)
		}{
			\left(1 - q_3\frac{z_b}{z_a}\right)
			\left(1 - \frac{z_b}{z_a}\right)
		}
		\hspace{0.3cm}
		&\text{ if } i < j
		\\
		\displaystyle 
		\frac{
			\left(1 - q_1^{-1}\frac{z_b}{z_a}\right)
			\left(1 - q_1\frac{z_b}{z_a}\right)
		}{
			\left(1 - q_3^{-1}\frac{z_b}{z_a}\right)
			\left(1 - q_3\frac{z_b}{z_a}\right)
		}
		\hspace{0.3cm}
		&\text{ if } i = j = \text{ super-number }
		\\
		1
		\hspace{0.3cm}
		&\text{ if } i = j = \text{ ordinary-number }
		\\
		\displaystyle
		\frac{
			\left(1 - q_1\frac{z_b}{z_a}\right)
			\left(1 - q_2\frac{z_b}{z_a}\right)
		}{
			\left(1 - q_3^{-1}\frac{z_b}{z_a}\right)
			\left(1 - \frac{z_b}{z_a}\right)
		}
		\hspace{0.3cm}
		&\text{ if } i > j 
	\end{cases}
	\label{eqn126-1300-4dec}
\end{align}
and 
\small 
\begin{align}
	\operatorname{RSSYBT}(N,M;\lambda)
	:= 
	\left\{
	(i_1,\dots,i_k) \in \{1,\dots,N+M\}^k
	\;\middle\vert\;
	\begin{array}{@{}l@{}}
		\begin{ytableau}
			i_1 & i_2  & \none[\dots] & i_{\lambda_1 - 1}
			& i_{\lambda_1}
			\\
			i_{\lambda_1 + 1} & i_{\lambda_1 + 2} &  \none[\dots]
			& i_{\lambda_1 + \lambda_2} \\
			\none[\vdots] & \none[\vdots]
			& \none[\vdots]
		\end{ytableau}
		\\
		\text{is a RSSYBT with shape $\lambda$}
	\end{array}
	\right\}. 
\label{eqn46-1534-28jan}
\end{align}
\normalsize
\end{lem}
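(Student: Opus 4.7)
\textbf{Proof proposal for Lemma \ref{lemm43-1156-22jan}.} The plan is to substitute the Miura expansion for each current $\widetilde{T}^{\vec{c},\vec{u}}_1(z_a)$, reduce the resulting vacuum matrix element by the exchange relations of \textbf{Proposition \ref{prp37-1018-23jan}}, and then isolate the RSSYBT summands through a term-by-term analysis of the specialization.

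Specialising equation \eqref{eqn314-1625} to $m=1$ forces exactly one of $k_1,\dots,k_n$ to equal $1$, and the scalar prefactor simplifies (via $1-q_{c_p}^{-1}=q_{c_p}^{-1/2}(q_{c_p}^{1/2}-q_{c_p}^{-1/2})$ and $1-q_3=-q_3^{1/2}(q_3^{1/2}-q_3^{-1/2})$) to $y_p$, giving $\widetilde{T}^{\vec{c},\vec{u}}_1(z)=\sum_{p=1}^{N+M}y_p\,\widetilde{\Lambda}^{\vec{c},\vec{u}}_p(z)$. Since $\widetilde{\Lambda}^{\vec{c},\vec{u}}_p(z)$ carries the factor $u_p$, expanding the product of $k$ currents in $\langle 0|\prod_a\widetilde T^{\vec c,\vec u}_1(z_a)|0\rangle$ produces a sum over $(i_1,\dots,i_k)\in\{1,\dots,N+M\}^k$ with weight $\prod_a y_{i_a}u_{i_a}$ times the vacuum matrix element of the $u$-stripped vertex-operator product.

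Applying \textbf{Proposition \ref{prp37-1018-23jan}} pairwise normal-orders this product and contributes, for every pair $a<b$, a factor $f^{\vec{c}}_{11}(z_b/z_a)^{-1}$ together with $\Delta(q_3^{1/2}z_b/z_a)$, $\gamma_{c_{i_a}}(z_b/z_a)$, or $\Delta(q_3^{-1/2}z_b/z_a)$ according as $i_a<i_b$, $i_a=i_b$, or $i_a>i_b$. The vacuum expectation of the resulting normally ordered product equals $1$, so the factor $\prod_{a<b}f^{\vec{c}}_{11}(z_b/z_a)$ on the left-hand side of \eqref{eqn13-1426} cancels the ones produced by the exchange relations. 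Using $q_1q_2q_3=1$ (\textbf{Definition \ref{def31-1621-12jan}}), a direct algebraic match of $\Delta(q_3^{\pm 1/2}z)$, $\gamma_1(z)$, and $\gamma_3(z)=1$ with the four branches of $\cals D^{(i,j)}(z;q,t)$ in \eqref{eqn126-1300-4dec} converts the left-hand side of \eqref{eqn13-1426} into
\begin{equation*}
\lim_{\xi\to t^{-1}}\dualmapwithcompo\Biggl[\sum_{(i_1,\dots,i_k)\in\{1,\dots,N+M\}^k}\Bigl(\prod_a y_{i_a}u_{i_a}\Bigr)\cals N_\lambda\prod_{1\le a<b\le k}\cals D^{(i_a,i_b)}\!\Bigl(\frac{z_b}{z_a};q,t\Bigr)\Biggr].
\end{equation*}

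The chief obstacle is to prove that every summand indexed by $(i_1,\dots,i_k)\notin \operatorname{RSSYBT}(N,M;\lambda)$ contributes $0$ after the specialization and limit, so that the finite sum may be pulled outside. Under \textbf{Definition \ref{dfn41-2057-30mar}} one has $z_b/z_a=q$ for horizontally adjacent entries in the same row and $z_b/z_a\to t^{-1}=q_3$ for vertically adjacent entries in the same column (as $\xi\to t^{-1}$). Inspection of \eqref{eqn126-1300-4dec} shows that a row ascent $i_a<i_{a+1}$, or a repetition of a super-number in a row, kills $\cals D^{(i_a,i_{a+1})}(q;q,t)$ via a $(1-q_1^{-1}q)$-type zero; while for a vertically adjacent pair with $i_a<i_b$ or $i_a=i_b$ ordinary the $\cals D$-factor remains finite, the $\Delta(q_3^{-1/2}z_b/z_a)^{-1}$ factor in $\cals N_\lambda$ contributes a simple zero at $z_b/z_a=q_3$ that kills the term, whereas for $i_a>i_b$ or for $i_a=i_b$ super a simple pole of $\cals D^{(i_a,i_b)}$ cancels that zero and leaves a finite nonzero contribution. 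These are precisely the RSSYBT conditions; book-keeping the simultaneous orders of vanishing at the degenerate specialized point constitutes the technical content of the argument, which is deferred to Appendix \ref{appA-1155-22jan}.
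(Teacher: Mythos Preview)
Your expansion of $\widetilde T_1^{\vec c,\vec u}$ into vertex operators, the normal-ordering via Proposition~\ref{prp37-1018-23jan}, and the identification of the resulting pair factors with $\cals D^{(i_a,i_b)}$ are all correct and match the paper exactly; so does your treatment of the row-violation case, which is precisely Lemma~\ref{lema2-1452-28jan}.

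The column-violation sketch, however, is genuinely too local to be a proof. You argue that for a single vertically adjacent pair $(a,b)$ the $\cals N_\lambda$-factor $\Delta(q_3^{-1/2}z_b/z_a)^{-1}$ carries a simple zero at $\xi=t^{-1}$ which ``kills the term'' unless the matching $\cals D$-factor supplies a pole. But $\cals N_\lambda$ contains \emph{another} factor for the diagonally adjacent pair $(a-1,b)$, namely $\Delta(q_3^{-1/2}q\xi)^{-1}$, whose denominator $(1-q^{-1}t\cdot q\xi)=(1-t\xi)$ gives a compensating simple pole; and $\cals D^{(i_{a-1},i_b)}(q\xi)$ likewise contributes a $(1-t\xi)$-factor whose sign depends on the ordering of $i_{a-1}$ and $i_b$. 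So the net order of vanishing of the full summand is not determined by the single vertical pair you examine: the three boxes forming the triangle
\[
\begin{ytableau} i_{a-1} & i_a \\ \none & i_b \end{ytableau}
\]
must be analysed together. The paper carries this out explicitly: Proposition~\ref{prpa3-1848-23jan} shows the triangle contributes no net $(1-t\xi)$ when the pair is non-breaking, while the six-case classification (Definition~\ref{dfna7-2052-26jan}, Proposition~\ref{prpa9-1257-27jan}) computes the surplus for breaking triangles. These feed into the breaking-band analysis (Proposition~\ref{prpa11-1205-28jan} and Theorem~\ref{a12-28jan-1342}), whose output is the global statement that $\prod\cals D$ carries at most $(1-t\xi)^{-(\ell(\lambda)-2)}$ whenever a column condition fails, against the fixed $(1-t\xi)^{\ell(\lambda)-1}$ from $\cals N_\lambda$. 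This global $(1-t\xi)$-count, not a pairwise cancellation, is the actual mechanism; your ``book-keeping'' clause is right to flag that the work lies here, but the sketch preceding it misidentifies the mechanism and would not survive an attempt to make it rigorous without rediscovering the triangle/band machinery.
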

\begin{proof}
The proof of this lemma is relegated to Appendix \ref{appA-1155-22jan}. 
\end{proof}

\begin{rem}
Note that since there is a natural bijection between the set $\operatorname{RSSYBT}(N,M;\lambda)$ defined in equation \eqref{eqn46-1534-28jan} and those defined in \textbf{Definition \ref{dfn211-1443}}, we decide to write this set using the same notation. 
It will be clear from the context whether the set $\operatorname{RSSYBT}(N,M;\lambda)$ that we are considering is the one in equation \eqref{eqn46-1534-28jan}  or the one in \textbf{Definition \ref{dfn211-1443}}. 
\end{rem}

From \textbf{Lemma \ref{lemm43-1156-22jan}} and the combinatorial formula for super Macdonald polynomials, \textbf{Proposition \ref{prp228-1020-29jan}}, we deduce that both the (LHS) and (RHS) of equation \eqref{eqn-42-1021-29jan} can be expressed as summations of elements of $\text{RSSYBT}(N,M;\lambda)$. Therefore, to prove \textbf{Theorem \ref{thm42-main-1022}}, it suffices to prove the following lemma:

\begin{lem}
\label{lemm44-1106-29jan}
Let $(N,M) \in (	\bb{Z}^{\geq 0}	\times \bb{Z}^{\geq 0})\backslash \{(0,0)\}$ and 
$\lambda \in \operatorname{Par}(k)$ where $\lambda \in H_{N,M}$. Then, for each $(i_1,\dots,i_k) \in \operatorname{RSSYBT}(N,M;\lambda)$ whose super number part has shape $\mu \in \operatorname{Par}$, we have 
\begin{align}
	&\lim_{\xi \rightarrow t^{-1}}\,\,
	(
	\dualmap
	\comp 
	\bigg|_{
		\substack{
			q_1 = q, \\
			q_2 = q^{-1}t,\\
			q_3 = t^{-1} \\
		}
	}
	)
	\bigg[
	\bigg(
	\frac{q_1^{\frac{1}{2}} - q_1^{-\frac{1}{2}}}{
		q_3^{\frac{1}{2}} - q_3^{-\frac{1}{2}}
	}
	\bigg)^{|\mu|}
	\cals{N}_\lambda(z_1,\dots,z_k)
	\prod_{1 \leq a < b \leq k}
	\cals{D}^{(i_a,i_b)}\left(
	\frac{z_b}{z_a}
	; q, t
	\right)
	\bigg]
	\label{eqn47-1211-1feb}
	\\
	&=
	\widetilde{\psi}_{T(i_1,\dots,i_k ; \lambda)}(q,t^{-1})
	\left(q^{-\frac{1}{2}}t^{-\frac{1}{2}}\right)^{|\mu|}.
	\notag 
\end{align}
Here 
\begin{align}
T(i_1,\dots,i_k ; \lambda) = 
\begin{ytableau}
	i_1 & i_2  & \none[\dots] & i_{\lambda_1 - 1}
	& i_{\lambda_1}
	\\
	i_{\lambda_1 + 1} & i_{\lambda_1 + 2} &  \none[\dots]
	& i_{\lambda_1 + \lambda_2} \\
	\none[\vdots] & \none[\vdots]
	& \none[\vdots]
\end{ytableau}
\end{align}
is the tableau obtained by arranging $(i_1,\dots,i_k) \in \{1,\dots,N+M\}^k$ into the form of a Young diagram $\lambda$. 
\end{lem}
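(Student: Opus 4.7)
The plan is to follow the case breakdown hinted in the introduction and prove Lemma \ref{lemm44-1106-29jan} by splitting into the three cases $(N,0)$, $(N,M)$ with $N,M \geq 1$, and $(0,M)$. The organizing idea in each case is to peel off one ``layer'' of the tableau $T(i_1,\ldots,i_k;\lambda)$ at a time (first the boxes with the largest ordinary value, then the boxes with the largest super value) and match the ratio of specialized rational factors in \eqref{eqn47-1211-1feb} against the corresponding combinatorial ratio of $\widetilde{\psi}_T$.

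For the $(N,0)$ case one has $\mu = \emptyset$, so $\widetilde{\psi}_T(q,t^{-1}) = \psi_T(q,t^{-1})$ and the prefactor $(q^{-1/2}t^{-1/2})^{|\mu|}$ equals $1$. I would induct on $N$: the base $N=1$ forces $\lambda = (k)$ with all $i_j = 1$, and both sides evaluate directly to $1$. For the inductive step, let $\nu \subseteq \lambda$ be the subshape obtained by removing all boxes labeled $N$, so $\lambda \setminus \nu$ is a horizontal strip; factor the product $\mathcal{N}_\lambda \prod_{a<b} \mathcal{D}^{(i_a,i_b)}$ into a $\nu$-piece (handled by the inductive hypothesis, giving $\psi_{T^{(\leq N-1)}}(q,t^{-1})$), a $(\lambda \setminus \nu)$-piece, and a cross term, and show that the latter two collapse under the specialization to $\psi_{\lambda/\nu}(q,t^{-1})$. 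This one-layer identity is the content of Lemma \ref{lemm54-1412-2feb}. For the $(N,M)$ case with $N,M \geq 1$, I would induct on $M$; at $M=1$ the unique super value $N+1$ fills a vertical strip (rows strict, columns weak for super numbers), and the induction reduces to the $(N,0)$ case together with a direct calculation of the contribution of that strip. At each step $M-1 \to M$, peel off the boxes labeled $N+M$; the combinatorial target is the ratio of $\widetilde{\psi}$-factors for $\mu$ versus its subshape, which involves $\psi_{T'_1}(t,q^{-1})$ (matched via the transpose identity of Proposition \ref{prop226-1127-19mar}) together with the quotient $H(\mu,q,t)/H(\mu',t,q)$. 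This one-layer identity is Lemma \ref{lem68-1212-13mar}. Finally, the $(0,M)$ case is obtained as the $N=0$ specialization of the $(N,M)$ result.

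The main obstacle is the explicit one-layer collapse encoded in Lemmas \ref{lemm54-1412-2feb} and \ref{lem68-1212-13mar}. Under the specialization $z_j \mapsto q^{\operatorname{col}(j)-1}\xi^{\operatorname{row}(j)-1}y$ followed by $\xi \to t^{-1}$, the factor $\mathcal{N}_\lambda$ produces poles of the form $(1-q^{a}t^{b})^{-1}$ tied to the row-interval structure of $\lambda$, while the $\mathcal{D}^{(i_a,i_b)}$ factors produce numerator and denominator zeros at overlapping loci. The genuine work is to track which of these cancellations actually occur, to verify that the $\xi \to t^{-1}$ limit stays finite (no uncancelled pole survives), and to check that the surviving terms assemble into the compact products $\psi_{\lambda/\nu}(q,t^{-1})$, $\psi_{T'_1}(t,q^{-1})$, and $H(\mu,q,t)/H(\mu',t,q)$ with the correct powers of $q^{-1/2}t^{-1/2}$. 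This bookkeeping is lengthy but essentially mechanical once the right factorization is chosen, which is why the detailed proofs of Lemmas \ref{lemm54-1412-2feb} and \ref{lem68-1212-13mar} are relegated to Appendices \ref{secappB-1410-2feb} and \ref{appC-1214-13mar}.
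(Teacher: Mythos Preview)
Your overall architecture---splitting into $(N,0)$, $(N,M)$, $(0,M)$ and arguing by induction---matches the paper, and your identification of the base cases is correct. However, there is a genuine gap in the inductive step: you are peeling from the wrong end of the alphabet.

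In the $(N,0)$ case you propose removing the boxes labeled $N$. But in a \emph{reverse} SSYT the columns strictly decrease, so the maximal label $N$ can occur only in row~$1$, as a left-justified segment. Deleting that segment does not leave a partition $\nu \subseteq \lambda$: in general the first row of the remainder is shorter than the second. Consequently the inductive hypothesis---which is formulated for straight shapes and for the specialization map $\widetilde{\Psi}^{(q,\xi)}_{\lambda}$ attached to such a shape---has nothing to act on, and your claimed horizontal strip $\lambda \setminus \nu$ does not exist. The paper instead removes the boxes labeled~$1$ (the \emph{smallest} value): these sit at the right ends of the last rows of certain row intervals (Proposition~\ref{prp51-1812-1feb}), and deleting them gives the genuine sub-partition $\widetilde{\lambda} = T^{(2)}$, to which the $(N-1,0)$ hypothesis applies directly. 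The one-layer identity you cite as Lemma~\ref{lemm54-1412-2feb} is precisely the statement that the leftover factors collapse to $\psi_{\lambda^{(1)}/\lambda^{(2)}}(q,t)$.

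The same reversal breaks your $(N,M)$ step. The boxes labeled $N+M$ form the top of column~$1$ of $\mu$, and their removal again destroys the partition shape. More importantly, the paper's mechanism here is different in kind: it does not delete any boxes from $\lambda$. It takes the \emph{smallest} super label $N+1$, converts each such box into a new ordinary label $N+1,\dots,N+c$ (shifting the remaining super labels up), and obtains a tableau in $\operatorname{RSSYBT}(N+c,\,M-1;\,\lambda)$ of the \emph{same} shape $\lambda$. The induction on $M$ then applies to this converted tableau, and Lemma~\ref{lem68-1212-13mar} measures the discrepancy between the $\cals{D}$-products of the two tableaux. Your description of the combinatorial target (via $\psi_{T_1'}(t,q^{-1})$ and $H(\mu,q,t)/H(\mu',t,q)$) is right, but the route to it is this conversion trick, not the removal of $N+M$.
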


The proof of this lemma will proceed by considering three separate cases:
\begin{enumerate}[(1)]
\item $(N,0)$ where $N \in \bb{Z}^{>0}$,
\item $(N,M)$ where $N, M \in \bb{Z}^{>0}$,
\item $(0,M)$ where $M \in \bb{Z}^{>0}$.
\end{enumerate}
A detailed proof of each case will be provided in the subsequent sections. 

\section{Proof of lemma \ref{lemm44-1106-29jan} for the case $(N,0)$ where $N \in \bb{Z}^{>0}$}
\label{sec5-1041-7apr}

In the case $(N,0)$ where $N \in \bb{Z}^{>0}$, equation \eqref{eqn47-1211-1feb} reduces to 
\begin{align}
	&\lim_{\xi \rightarrow t^{-1}}\,\,
	(
	\dualmap
	\comp 
	\bigg|_{
		\substack{
			q_1 = q, \\
			q_2 = q^{-1}t,\\
			q_3 = t^{-1} \\
		}
	}
	)
	\bigg[
	\cals{N}_\lambda(z_1,\dots,z_k)
	\prod_{1 \leq a < b \leq k}
	\cals{D}^{(i_a,i_b)}\left(
	\frac{z_b}{z_a}
	; q, t
	\right)
	\bigg]
	= \psi_{T(i_1,\dots,i_k ; \lambda)}(q,t). 
	\label{eqn126-1958}
\end{align}
This simplification arises because the case $(N,0)$ involves no super number. In other word, in this case, $\mu = \emptyset$. We will prove equation \eqref{eqn126-1958} using mathematical induction on $N$. 

\subsection{Basis step $N = 1$}
Consider $(i_1,\dots,i_k) \in \operatorname{RSSYBT}(1,0;\lambda)$. From \textbf{Definition \ref{dfn210-1443}}, 
it follows that $\lambda$ must have only one row. Consequently, $T(i_1,\dots,i_k ; \lambda)$ takes the form 
\begin{align}
T(i_1,\dots,i_k ; \lambda)
= 
T(1,\dots,1 ; (k))
= 
\underbrace{		
\begin{ytableau}
	1 & 1  & \none[\dots] & 1
	& 1
\end{ytableau}
}_{
k \text{ boxes }
}
\end{align}
According to equation \eqref{eqn32-1044}, we have $\cals{N}_{(k)}(z_1,\dots,z_k) = 1$. Furthermore, from \eqref{eqn126-1300-4dec}, we can show that for $T(1,\dots,1 ; (k))$, we have $\prod_{1 \leq a < b \leq k}
\cals{D}^{(i_a,i_b)}\left(
\frac{z_b}{z_a}
; q, t
\right) = 1$. 

On the other side, we know from \textbf{Definition \ref{dfn223-2118}} that 
\begin{align*}
\psi_{T(1,\dots,1 ; (k))}(q,t) = \psi_{(\lambda) / \emptyset} = 1. 
\end{align*}
Thus, we have shown that (LHS) of \eqref{eqn126-1958} = (RHS) of \eqref{eqn126-1958} = 1. Therefore, 
the basis step $N = 1$ of the induction is proved.

\subsection{Inductive step}

We assume that equation \eqref{eqn126-1958} holds for $1, \dots, N - 1$ where $N \geq 2$. Let $(i_1,\dots,i_k) \in \operatorname{RSSYBT}(N,0;\lambda)$. Without loss of generality, we can assume that
\begin{align}
\{1,\dots,N\} = \{i_1,\dots,i_k\}. 
\label{eqn53-1641-1feb}
\end{align}
meaning that all elements in the set $\{1,\dots,N\}$ appear in $(i_1,\dots,i_k)$. This assumption is justified because if only numbers in a proper subset of $\{1,\dots,N\}$ appear in $(i_1,…,i_k)$, we can directly apply the induction hypothesis in conjunction with \textbf{Proposition \ref{prp226-1326-15feb}}
to conclude that equation \eqref{eqn126-1958} holds.

Under the assumption stated in equation \eqref{eqn53-1641-1feb}, we know that the tableau $T(i_1,\dots,i_k ; \lambda)$ must contain a box with the number $1$. 

\begin{prop}
\label{prp51-1812-1feb}
The positions of the boxes containing the number $1$ in the tableau $T(i_1,\dots,i_k ; \lambda)$ must satisfy the following conditions:
\begin{enumerate}[(1)]
\item \label{con1-1135-3feb} The boxes containing the number $1$ must be located in the last row of a row interval. 
\item \label{con2-1135-3feb} There must be no boxes below the boxes containing the number $1$. 
\end{enumerate}
\end{prop}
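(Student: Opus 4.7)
The plan is to deduce both claims directly from the strict column condition for ordinary numbers in \textbf{Definition \ref{dfn211-1443}}, condition~(5), combined with the fact that we are in the $M=0$ regime, so every entry of $T(i_1,\dots,i_k;\lambda)$ belongs to $\{1,\dots,N\}$ and is therefore an ordinary number with value at least $1$. No induction and no heavy machinery will be needed; the argument is pure definition-chasing from the reverse SSYT axioms and the notion of row interval introduced after equation \eqref{eqn21-1749-1jan}.

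First, I would handle condition~\eqref{con2-1135-3feb} by contradiction. Suppose that a box $s$ with $T(s)=1$ has a box $s'$ lying directly below $s$ in the Young diagram of $\lambda$. Both entries are ordinary numbers, so the strict column-decrease rule forces $T(s') < T(s) = 1$, contradicting the fact that all entries are positive integers. Hence no box can lie below a box carrying the number~$1$.

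I would then obtain condition~\eqref{con1-1135-3feb} as a formal consequence of \eqref{con2-1135-3feb} and the decomposition of $\lambda$ into row intervals. Let $(i,j)$ be the position of a box with entry $1$. Condition~\eqref{con2-1135-3feb} says that either $i = \ell(\lambda)$, in which case row $i$ is trivially the last row of its row interval, or else $\lambda_{i+1} < j$. In the latter case, combining $\lambda_{i+1} < j \leq \lambda_i$ yields $\lambda_{i+1} < \lambda_i$, which by the decomposition \eqref{eqn21-1749-1jan} is precisely the condition for row $i$ to be the last row of its row interval. The only conceivable obstacle is unwinding the combinatorial notation \emph{last row of a row interval}; once that is done, both claims are immediate.
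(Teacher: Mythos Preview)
Your argument is correct and is essentially a fleshed-out version of the paper's one-line proof, which simply says the claims follow directly from \textbf{Definition \ref{dfn210-1443}}; you have merely unpacked the strict column-decrease condition and the definition of row interval explicitly. The only cosmetic difference is that you cite \textbf{Definition \ref{dfn211-1443}}\,(5) for reverse SSYBTs rather than \textbf{Definition \ref{dfn210-1443}} for reverse SSYTs, but since $M=0$ these coincide and the reasoning is identical.
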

\begin{proof}
Follow directly from \textbf{Definition \ref{dfn210-1443}}. 
\end{proof}

\begin{rem}
It is important to note that the position of the box containing the number $1$ must satisfy conditions \eqref{con1-1135-3feb} and \eqref{con2-1135-3feb} stated in \textbf{Proposition \ref{prp51-1812-1feb}}. However,
the converse is not necessarily true; that is, a box satisfying conditions 
\eqref{con1-1135-3feb} and \eqref{con2-1135-3feb} is not necessarily the box containing the number $1$. 
\end{rem}

The Young tableau in figure \ref{fig1-1543-15apr} illustrates \textbf{Proposition \ref{prp51-1812-1feb}}.
The boxes containing the number $1$ can only appear in the positions highlighted in yellow. 

\begin{figure}[!h]
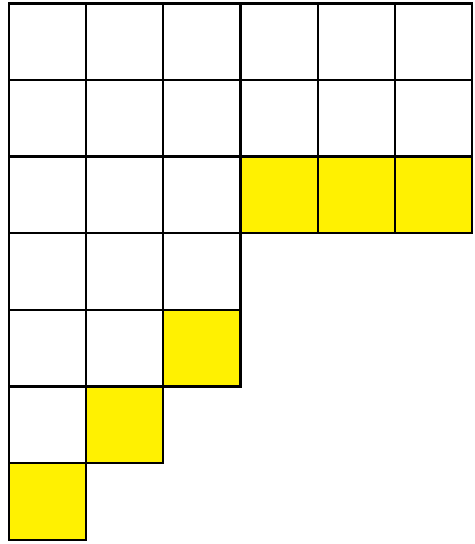

	\centering
	\small 
	\begin{tabular}{r@{}l}
		\ydiagram[*(yellow)]
		{6+0,6+0,3+3,3+0,2+1,1+1,0+1}
		*[*(white)]{6,6,6,3,3,2,1}
	\end{tabular}
	\caption{The positions highlighted in yellow represent the allowed positions for the boxes containing the number $1$. }
	\label{fig1-1543-15apr}
\end{figure}
\normalsize

\begin{cor}
\label{coro52-1feb-2026}
Let $\widetilde{	T		}(i_1,\dots,i_k ; \widetilde{\lambda})$ be the tableau with shape $\widetilde{\lambda}$ obtained by removing all boxes containing the number $1$
from $T(i_1,\dots,i_k ; \lambda)$. Then, $\widetilde{	T		}(i_1,\dots,i_k ; \widetilde{\lambda})$ is also a RSSYT. 
\end{cor}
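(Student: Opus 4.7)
The plan is to show that deleting the $1$-boxes preserves both the Young-diagram shape and the two monotonicity conditions defining a reverse SSYT. The two conclusions of Proposition \ref{prp51-1812-1feb} give us exactly the combinatorial data we need.

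First, I would locate the boxes containing $1$. Since $T(i_1,\dots,i_k;\lambda)$ is a reverse SSYT, rows are weakly decreasing left-to-right and columns are strictly decreasing top-to-bottom; because $1$ is the minimum entry, the $1$-boxes in any given row $r$ occupy a right-justified block of cells, say columns $a_r,a_r+1,\dots,\lambda_r$. Let $r_1<r_2<\cdots<r_s$ be the rows that contain a $1$. By conclusion (1) of Proposition \ref{prp51-1812-1feb}, each $r_i$ is the last row of some row interval, and by conclusion (2) there are no boxes of $T$ strictly below any $1$-box, so $\lambda_{r'}<a_{r_i}$ for every $r'>r_i$.

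Next I would verify that the resulting shape $\widetilde\lambda$ defined by $\widetilde\lambda_{r_i}=a_{r_i}-1$ and $\widetilde\lambda_{r}=\lambda_r$ for all other $r$ is indeed a partition. The check splits into three cases: (a) for $r<r_1$, $\widetilde\lambda_r=\lambda_r\ge\lambda_{r_1}\ge\widetilde\lambda_{r_1}$; (b) for $r_i<r<r_{i+1}$, conclusion (2) applied at $r_i$ gives $\widetilde\lambda_r=\lambda_r<a_{r_i}$, so $\widetilde\lambda_r\le\widetilde\lambda_{r_i}$, while $\widetilde\lambda_r=\lambda_r\ge\lambda_{r_{i+1}}\ge\widetilde\lambda_{r_{i+1}}$; (c) between two consecutive $1$-rows $r_i,r_{i+1}$ we similarly get $\widetilde\lambda_{r_{i+1}}\le\lambda_{r_{i+1}}<a_{r_i}=\widetilde\lambda_{r_i}+1$. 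All of these inequalities use only Proposition \ref{prp51-1812-1feb}.

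Finally I would check the two monotonicity conditions from Definition \ref{dfn210-1443} for the surviving entries. The row condition is automatic: in each row we have removed a (possibly empty) suffix, and a suffix-truncation of a weakly-decreasing sequence remains weakly decreasing. For the column condition, conclusion (2) of Proposition \ref{prp51-1812-1feb} tells us that every $1$-box is the bottom box of its column in $T$, so deleting it removes the final entry of a strictly decreasing column; what remains is still strictly decreasing. Combining these with the previous paragraph yields that $\widetilde T(i_1,\dots,i_k;\widetilde\lambda)$ is a reverse SSYT of shape $\widetilde\lambda$, finishing the proof.

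No step here looks like a serious obstacle; the only place one has to be a touch careful is verifying the weak decrease of the row lengths in $\widetilde\lambda$ across a $1$-row, which is precisely where conclusion (2) of Proposition \ref{prp51-1812-1feb} is indispensable. Everything else is a direct, case-free consequence of the definitions.
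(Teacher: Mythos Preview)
Your proof is correct and follows exactly the route the paper intends: the paper's proof of this corollary is the single sentence ``This corollary is a direct consequence of \textbf{Proposition \ref{prp51-1812-1feb}},'' and your argument simply unpacks that sentence by verifying in detail that the shape $\widetilde\lambda$ is a partition and that the row and column conditions survive the deletion.
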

\begin{proof}
This corollary is a direct consequence of \textbf{Proposition \ref{prp51-1812-1feb}}. 
\end{proof}

It is important to note that \textbf{Corollary \ref{coro52-1feb-2026}} tells us particularly that we can talk about $\psi_{\widetilde{	T		}(i_1,\dots,i_k ; \widetilde{\lambda})}(q,t)$. 

\begin{lem}
\label{lemm53-1415-2feb}
Under the induction hypothesis, for $(i_1,\dots,i_k) \in \operatorname{RSSYBT}(N,0;\lambda)$ satisfying the assumption in equation \eqref{eqn53-1641-1feb}, we have 
\begin{align}
&\lim_{\xi \rightarrow t^{-1}}\,\,
(
\dualmap
\comp 
\bigg|_{
	\substack{
		q_1 = q, \\
		q_2 = q^{-1}t,\\
		q_3 = t^{-1} \\
	}
}
)
\bigg[
\cals{N}_\lambda(z_1,\dots,z_k)
\prod_{1 \leq a < b \leq k}
\cals{D}^{(i_a,i_b)}\left(
\frac{z_b}{z_a}
; q, t
\right)
\bigg]
= \psi_{T(i_1,\dots,i_k ; \lambda)}(q,t). 
\label{eqn54-1054-2feb}
\end{align}
if and only if 

\tiny  
\begin{align}
	&\psi_{\lambda^{(1)}/\lambda^{(2)}}(q,t)
	\label{eqn55-1407-2feb}
	\\
	&= 
	\left(
	\lim_{\xi \rightarrow t^{-1}}\,\,
	\dualmap
	\right)
	\Bigg[
	\notag \\ 
	&\underbrace{		\prod_{1 \leq \alpha < \beta \leq \ell(\lambda)}				}_{
		\substack{
		\alpha \in \operatorname{Row}(T(i_1,\dots,i_k ; \lambda)|1)
			\\
			\beta \in \operatorname{Row}(T(i_1,\dots,i_k ; \lambda)|1)
		}
	}
	\prod_{
		\substack{
		a \in I^{(\alpha)}
			\\
		b \in I^{(\beta)}
			\\
		a \text{ is a box with number $1$}
			\\
	b \text{ is a box with number $1$}
		}
	}
	\frac{
		\left(1 - \frac{z_b}{z_a}\right)
		\left(1 - t\frac{z_b}{z_a}\right)
	}{
		\left(1 - q\frac{z_b}{z_a}\right)
		\left(1 - q^{-1}t\frac{z_b}{z_a}\right)
	}
	\times 
	\underbrace{		\prod_{1 \leq \alpha < \beta \leq \ell(\lambda)}				}_{
		\substack{
	\alpha \in \operatorname{Row}(T(i_1,\dots,i_k ; \lambda)|1)
			\\
	\beta \in \operatorname{Row}(T(i_1,\dots,i_k ; \lambda)|1)
		}
	}
	\prod_{
		\substack{
			a \in I^{(\alpha)}
			\\
			b \in I^{(\beta)}
			\\
			a \text{ is a box with number $1$}
			\\
			b \text{ is \textbf{not} a box with number $1$}
		}
	}
	\frac{
		\left(1 - \frac{z_b}{z_a}\right)
		\left(1 - t\frac{z_b}{z_a}\right)
	}{
		\left(1 - q\frac{z_b}{z_a}\right)
		\left(1 - q^{-1}t\frac{z_b}{z_a}\right)
	}
	\notag \\
	&
	\times
	\underbrace{			\prod_{1 \leq \alpha < \beta \leq \ell(\lambda)}					}_{
		\substack{
			\alpha \in \operatorname{Row}(T(i_1,\dots,i_k ; \lambda)|1)
			\\
			\beta \notin  \operatorname{Row}(T(i_1,\dots,i_k ; \lambda)|1)
		}
	}
	\prod_{
		\substack{
			a \in I^{(\alpha)}
			\\
			b \in I^{(\beta)}
			\\
			a \text{ is a box with number $1$}
			\\
			b \text{ is \textbf{not} a box with number $1$}
		}
	}
	\frac{
		\left(1 - \frac{z_b}{z_a}\right)
		\left(1 - t\frac{z_b}{z_a}\right)
	}{
		\left(1 - q\frac{z_b}{z_a}\right)
		\left(1 - q^{-1}t\frac{z_b}{z_a}\right)
	}
	\times
	\underbrace{				\prod_{1 \leq a < b \leq k}							}_{
		\substack{
			a \text{ is \textbf{not} a box with number $1$}
			\\
			b \text{ is a box with number $1$}
			\\
			a \text{ and } b \text{ are in the same row}
		}
	}
	\frac{
		\left(1 - q\frac{z_b}{z_a}\right)
		\left(1 - q^{-1}t\frac{z_b}{z_a}\right)
	}{
		\left(1 - t\frac{z_b}{z_a}\right)
		\left(1 - \frac{z_b}{z_a}\right)
	}
	\notag \\
	&\times\underbrace{			\prod_{1 \leq a < b \leq k}					}_{
		\substack{
			a \text{ is a box with number $1$}
			\\
			b \text{ is \textbf{not} a box with number $1$}
		}
	}
	\frac{
		\left(1 - q^{-1}\frac{z_b}{z_a}\right)
		\left(1 - qt^{-1}\frac{z_b}{z_a}\right)
	}{
		\left(1 - t^{-1}\frac{z_b}{z_a}\right)
		\left(1 - \frac{z_b}{z_a}\right)
	}
	\Bigg]
	\notag 
\end{align}
\normalsize
where $\operatorname{Row}(T(i_1,\dots,i_k ; \lambda)|1)$ denotes the set of rows in the tableau $T(i_1,\dots,i_k ; \lambda)$ containing the boxes with number $1$ (as defined in \textbf{Definition \ref{def220-1540-19mar}}), and the sets $I^{(\alpha)} \, (\alpha = 1 \dots, \ell(\lambda))$ are as defined in \textbf{Theorem \ref{thm42-main-1022}}. 
\end{lem}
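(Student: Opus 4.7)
The strategy is a multiplicative factorization of both sides of \eqref{eqn54-1054-2feb} into a \emph{number-$1$ part} and a \emph{reduced part}, the latter being handled by the induction hypothesis applied to $\widetilde{T} = \widetilde{T}(i_1,\ldots,i_k;\widetilde{\lambda})$ of Corollary \ref{coro52-1feb-2026}. On the combinatorial side, the chain $T^{(1)} \supseteq T^{(2)} \supseteq \cdots$ of Definition \ref{dfn223-2118} satisfies $T^{(1)} = \lambda$, $T^{(2)} = \widetilde{\lambda}$, and $T^{(j)} = \widetilde{T}^{(j-1)}$ for $j \geq 2$, so that
\[
\psi_{T(i_1,\ldots,i_k;\lambda)}(q,t) = \psi_{\lambda^{(1)}/\lambda^{(2)}}(q,t) \cdot \psi_{\widetilde{T}}(q,t).
\]
After relabeling the entries $j \mapsto j-1$, $\widetilde{T}$ belongs to $\operatorname{RSSYBT}(N-1,0;\widetilde{\lambda})$, so the induction hypothesis provides the $(N-1)$ instance of \eqref{eqn54-1054-2feb} for $\widetilde{T}$.

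Next, I would partition the ordered pairs $(a,b)$ with $1 \leq a < b \leq k$ according to which of the entries $i_a, i_b$ equals $1$, and split the factor $\mathcal{N}_\lambda$ in \eqref{eqn32-1044} according to whether the row intervals involved lie in $\operatorname{Row}(T|1)$ or outside it. The sub-product of $\prod_{1 \leq a < b \leq k}\mathcal{D}^{(i_a,i_b)}$ over pairs with $i_a, i_b \geq 2$, together with the portion of $\mathcal{N}_\lambda$ supported on row intervals outside $\operatorname{Row}(T|1)$, reproduces (after the shift $j \mapsto j-1$) the integrand for $\widetilde{T}$. Applying the specialization $(q_1,q_2,q_3) = (q,q^{-1}t,t^{-1})$, the map $\dualmap$, and the limit $\xi \to t^{-1}$, the induction hypothesis contributes $\psi_{\widetilde{T}}(q,t)$ from this part. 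Comparing with the combinatorial factorization of $\psi_T(q,t)$ above, \eqref{eqn54-1054-2feb} is therefore equivalent to the statement that the complementary factor produces $\psi_{\lambda^{(1)}/\lambda^{(2)}}(q,t)$, which is precisely the content of \eqref{eqn55-1407-2feb}.

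What remains is to identify this complementary factor as the bracketed expression on the right-hand side of \eqref{eqn55-1407-2feb}, and this is where the main obstacle lies. It requires a term-by-term match of four classes of factors, corresponding respectively to: pairs of entry-$1$ boxes in two distinct rows of $\operatorname{Row}(T|1)$; mixed pairs in such rows (one box has entry $1$, the other does not); cross pairs between rows of $\operatorname{Row}(T|1)$ and rows outside it; and intra-row pairs where one box has entry $1$ and the other does not. Each class is obtained by combining the relevant case of $\mathcal{D}^{(i_a,i_b)}$ from \eqref{eqn126-1300-4dec} with the $\Delta^{-1}$ factors of $\mathcal{N}_\lambda$ not absorbed into $\mathcal{N}_{\widetilde{\lambda}}$, all evaluated under the substitution $(q_1,q_2,q_3) = (q,q^{-1}t,t^{-1})$. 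A subtlety flagged in the remark following Proposition \ref{prp51-1812-1feb} is that $\operatorname{Row}(T|1)$ may omit last rows of row intervals where no entry $1$ actually appears, so the split of $\mathcal{N}_\lambda$ must be performed strictly according to the row-interval data of $\lambda$ versus $\widetilde{\lambda}$; this is what forces the detailed four-factor bookkeeping on the right of \eqref{eqn55-1407-2feb}. The verification is mechanical but lengthy, and is naturally deferred to appendix \ref{secappB-1410-2feb}.
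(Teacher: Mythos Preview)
Your overall strategy---factorize $\psi_T = \psi_{\lambda^{(1)}/\lambda^{(2)}} \cdot \psi_{\widetilde{T}}$, peel off the reduced part via the induction hypothesis, and identify the complementary factor with the right-hand side of \eqref{eqn55-1407-2feb}---is exactly the paper's approach. However, two points in your execution are not right.

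First, the split of $\mathcal{N}_\lambda$ is mis-stated. You write that the reduced integrand consists of the $\mathcal{D}$-product over pairs with $i_a, i_b \geq 2$ together with ``the portion of $\mathcal{N}_\lambda$ supported on row intervals outside $\operatorname{Row}(T|1)$.'' But $\mathcal{N}_{\widetilde{\lambda}}$ involves \emph{all} row pairs $\alpha < \beta$, restricted to boxes not containing $1$; rows inside $\operatorname{Row}(T|1)$ still contribute, just with their entry-$1$ boxes removed. The paper accordingly splits $\mathcal{N}_\lambda$ into nine pieces---by whether each of $\alpha, \beta$ lies in $\operatorname{Row}(T|1)$ \emph{and} whether each of $a, b$ carries a $1$---rather than the coarser row-only split you describe. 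With your split, the complementary factor would come out wrong.

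Second, you do not mention the cancellation that produces the fifth factor on the right of \eqref{eqn55-1407-2feb} (the ``same row'' product). After extracting $\psi_{\widetilde{T}}$, the leftover naturally contains the full $\mathcal{D}$-product over pairs with $a$ not number~$1$ and $b$ number~$1$, together with two $\mathcal{N}$-pieces of the same type (one with $\alpha,\beta \in \operatorname{Row}(T|1)$ and one with $\alpha \notin$, $\beta \in$). The paper observes that these combine to cancel across different rows, leaving only the same-row contribution; this is the identity displayed immediately after \eqref{eqn57-1406-2feb}. Without it you have more factors than five, and your ``four classes'' do not match the bracketed expression in \eqref{eqn55-1407-2feb}. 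Finally, note that Appendix~\ref{secappB-1410-2feb} proves Lemma~\ref{lemm54-1412-2feb} (the \emph{evaluation} of \eqref{eqn55-1407-2feb}); the reduction asserted in Lemma~\ref{lemm53-1415-2feb} is carried out in the body of the text, so the deferral you propose is misplaced.
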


\begin{proof}
First note that 

\tiny
\begin{align}
	&\left(
	\bigg|_{
		\substack{
			q_1 = q 
			\\
			q_2 = q^{-1}t
			\\
			q_3 = t^{-1}
		}
	}
	\right)
	\bigg[
	\cals{N}_{\lambda}(z_1,\dots,z_k )
	\times
	\prod_{1 \leq a < b \leq k}
	\cals{D}^{(i_a,i_b)}\left(
	\frac{z_b}{z_a}
	; q, t
	\right)
	\bigg]
	\\
	&= 
	\Biggl[
	\underbrace{		\prod_{1 \leq \alpha < \beta \leq \ell(\lambda)}				}_{
		\substack{
			\alpha \in \operatorname{Row}(T(i_1,\dots,i_k ; \lambda)|1)
			\\
			\beta \in \operatorname{Row}(T(i_1,\dots,i_k ; \lambda)|1)
		}
	}
	\prod_{
		\substack{
			a \in I^{(\alpha)}
			\\
			b \in I^{(\beta)}
			\\
			a \text{ is a box with number $1$}
			\\
			b \text{ is a box with number $1$}
		}
	}
	\times 
	\underbrace{		\prod_{1 \leq \alpha < \beta \leq \ell(\lambda)}				}_{
		\substack{
			\alpha \in \operatorname{Row}(T(i_1,\dots,i_k ; \lambda)|1)
			\\
			\beta \in \operatorname{Row}(T(i_1,\dots,i_k ; \lambda)|1)
		}
	}
	\prod_{
		\substack{
			a \in I^{(\alpha)}
			\\
			b \in I^{(\beta)}
			\\
			a \text{ is a box with number $1$}
			\\
			b \text{ is \textbf{not} a box with number $1$}
		}
	}
	\notag 
	\\
	&\times 
	\underbrace{		\prod_{1 \leq \alpha < \beta \leq \ell(\lambda)}				}_{
		\substack{
			\alpha \in \operatorname{Row}(T(i_1,\dots,i_k ; \lambda)|1)
			\\
			\beta \in \operatorname{Row}(T(i_1,\dots,i_k ; \lambda)|1)
		}
	}
	\prod_{
		\substack{
			a \in I^{(\alpha)}
			\\
			b \in I^{(\beta)}
			\\
			a \text{ is \textbf{not} a box with number $1$}
			\\
			b \text{ is a box with number $1$}
		}
	}
	\times 
	\underbrace{		\prod_{1 \leq \alpha < \beta \leq \ell(\lambda)}				}_{
		\substack{
			\alpha \in \operatorname{Row}(T(i_1,\dots,i_k ; \lambda)|1)
			\\
			\beta \in \operatorname{Row}(T(i_1,\dots,i_k ; \lambda)|1)
		}
	}
	\prod_{
		\substack{
			a \in I^{(\alpha)}
			\\
			b \in I^{(\beta)}
			\\
			a \text{ is \textbf{not} a box with number $1$}
			\\
			b \text{ is \textbf{not} a box with number $1$}
		}
	}
	\notag 
	\\
	&\times
	\underbrace{			\prod_{1 \leq \alpha < \beta \leq \ell(\lambda)}					}_{
		\substack{
			\alpha \in \operatorname{Row}(T(i_1,\dots,i_k ; \lambda)|1)
			\\
			\beta \notin \operatorname{Row}(T(i_1,\dots,i_k ; \lambda)|1)
		}
	}
	\prod_{
		\substack{
			a \in I^{(\alpha)}
			\\
			b \in I^{(\beta)}
			\\
			a \text{ is a box with number $1$}
			\\
			b \text{ is \textbf{not} a box with number $1$}
		}
	}
	\times
	\underbrace{			\prod_{1 \leq \alpha < \beta \leq \ell(\lambda)}					}_{
		\substack{
			\alpha \in \operatorname{Row}(T(i_1,\dots,i_k ; \lambda)|1)
			\\
			\beta \notin \operatorname{Row}(T(i_1,\dots,i_k ; \lambda)|1)
		}
	}
	\prod_{
		\substack{
			a \in I^{(\alpha)}
			\\
			b \in I^{(\beta)}
			\\
			a \text{ is \textbf{not} a box with number $1$}
			\\
			b \text{ is \textbf{not} a box with number $1$}
		}
	}
	\notag 
	\\
	&\times 
	\underbrace{		\prod_{1 \leq \alpha < \beta \leq \ell(\lambda)}				}_{
		\substack{
			\alpha \notin \operatorname{Row}(T(i_1,\dots,i_k ; \lambda)|1)
			\\
			\beta \in \operatorname{Row}(T(i_1,\dots,i_k ; \lambda)|1)
		}
	}
	\prod_{
		\substack{
			a \in I^{(\alpha)}
			\\
			b \in I^{(\beta)}
			\\
			a \text{ is \textbf{not} a box with number $1$}
			\\
			b \text{ is a box with number $1$}
		}
	}
	\times 
	\underbrace{		\prod_{1 \leq \alpha < \beta \leq \ell(\lambda)}				}_{
		\substack{
			\alpha \notin \operatorname{Row}(T(i_1,\dots,i_k ; \lambda)|1)
			\\
			\beta \in \operatorname{Row}(T(i_1,\dots,i_k ; \lambda)|1)
		}
	}
	\prod_{
		\substack{
			a \in I^{(\alpha)}
			\\
			b \in I^{(\beta)}
			\\
			a \text{ is \textbf{not} a box with number $1$}
			\\
			b \text{ is \textbf{not} a box with number $1$}
		}
	}
	\notag 
	\\
	&\times
	\underbrace{			\prod_{1 \leq \alpha < \beta \leq \ell(\lambda)}					}_{
		\substack{
			\alpha \notin \operatorname{Row}(T(i_1,\dots,i_k ; \lambda)|1)
			\\
			\beta \notin \operatorname{Row}(T(i_1,\dots,i_k ; \lambda)|1)
		}
	}
	\prod_{
		\substack{
			a \in I^{(\alpha)}
			\\
			b \in I^{(\beta)}
		}
	}
	\Biggr]
	\frac{
		\left(1 - \frac{z_b}{z_a}\right)
		\left(1 - t\frac{z_b}{z_a}\right)
	}{
		\left(1 - q\frac{z_b}{z_a}\right)
		\left(1 - q^{-1}t\frac{z_b}{z_a}\right)
	}
	\times\underbrace{			\prod_{1 \leq a < b \leq k}					}_{
		\substack{
			a \text{ is a box with number $1$}
			\\
			b \text{ is \textbf{not} a box with number $1$}
		}
	}
	\frac{
		\left(1 - q^{-1}\frac{z_b}{z_a}\right)
		\left(1 - qt^{-1}\frac{z_b}{z_a}\right)
	}{
		\left(1 - t^{-1}\frac{z_b}{z_a}\right)
		\left(1 - \frac{z_b}{z_a}\right)
	}
	\notag 
	\\
	&\times
	\underbrace{				\prod_{1 \leq a < b \leq k}							}_{
		\substack{
			a \text{ is \textbf{not} a box with number $1$}
			\\
			b \text{ is a box with number $1$}
		}
	}
	\frac{
		\left(1 - q\frac{z_b}{z_a}\right)
		\left(1 - q^{-1}t\frac{z_b}{z_a}\right)
	}{
		\left(1 - t\frac{z_b}{z_a}\right)
		\left(1 - \frac{z_b}{z_a}\right)
	}
	\times
	\underbrace{		\prod_{1 \leq a < b \leq k}					}_{
		\substack{
			a \text{ is \textbf{not} a box with number $1$}
			\\
			b \text{ is \textbf{not} a box with number $1$}
		}
	}
	\cals{D}^{(i_a,i_b)}\left(
	\frac{z_b}{z_a}
	; q, t
	\right). 
	\notag 
\end{align}
\normalsize
Therefore, by using the induction hypothesis, we obtain that 

\tiny
\begin{align}
&\lim_{\xi \rightarrow t^{-1}}\,\,
(
\dualmap
\comp 
\bigg|_{
	\substack{
		q_1 = q, \\
		q_2 = q^{-1}t,\\
		q_3 = t^{-1} \\
	}
}
)
\bigg[
\cals{N}_\lambda(z_1,\dots,z_k)
\prod_{1 \leq a < b \leq k}
\cals{D}^{(i_a,i_b)}\left(
\frac{z_b}{z_a}
; q, t
\right)
\bigg]
\\
&= 
\left(
\lim_{\xi \rightarrow t^{-1}}\,\,
\dualmap
\right)
\Bigg[
\Biggl\{
\underbrace{		\prod_{1 \leq \alpha < \beta \leq \ell(\lambda)}				}_{
	\substack{
		\alpha \in \operatorname{Row}(T(i_1,\dots,i_k ; \lambda)|1)
		\\
		\beta \in \operatorname{Row}(T(i_1,\dots,i_k ; \lambda)|1)
	}
}
\prod_{
	\substack{
		a \in I^{(\alpha)}
		\\
		b \in I^{(\beta)}
		\\
		a \text{ is a box with number $1$}
		\\
		b \text{ is a box with number $1$}
	}
}
\times 
\underbrace{		\prod_{1 \leq \alpha < \beta \leq \ell(\lambda)}				}_{
	\substack{
		\alpha \in \operatorname{Row}(T(i_1,\dots,i_k ; \lambda)|1)
		\\
		\beta \in \operatorname{Row}(T(i_1,\dots,i_k ; \lambda)|1)
	}
}
\prod_{
	\substack{
		a \in I^{(\alpha)}
		\\
		b \in I^{(\beta)}
		\\
		a \text{ is a box with number $1$}
		\\
		b \text{ is \textbf{not} a box with number $1$}
	}
}
\notag 
\\
&\times 
\underbrace{		\prod_{1 \leq \alpha < \beta \leq \ell(\lambda)}				}_{
	\substack{
		\alpha \in \operatorname{Row}(T(i_1,\dots,i_k ; \lambda)|1)
		\\
		\beta \in \operatorname{Row}(T(i_1,\dots,i_k ; \lambda)|1)
	}
}
\prod_{
	\substack{
		a \in I^{(\alpha)}
		\\
		b \in I^{(\beta)}
		\\
		a \text{ is \textbf{not} a box with number $1$}
		\\
		b \text{ is a box with number $1$}
	}
}
\times
\underbrace{			\prod_{1 \leq \alpha < \beta \leq \ell(\lambda)}					}_{
	\substack{
		\alpha \in \operatorname{Row}(T(i_1,\dots,i_k ; \lambda)|1)
		\\
		\beta \notin \operatorname{Row}(T(i_1,\dots,i_k ; \lambda)|1)
	}
}
\prod_{
	\substack{
		a \in I^{(\alpha)}
		\\
		b \in I^{(\beta)}
		\\
		a \text{ is a box with number $1$}
		\\
		b \text{ is \textbf{not} a box with number $1$}
	}
}
\notag \\
&\times 
\underbrace{		\prod_{1 \leq \alpha < \beta \leq \ell(\lambda)}				}_{
	\substack{
		\alpha \notin \operatorname{Row}(T(i_1,\dots,i_k ; \lambda)|1)
		\\
		\beta \in \operatorname{Row}(T(i_1,\dots,i_k ; \lambda)|1)
	}
}
\prod_{
	\substack{
		a \in I^{(\alpha)}
		\\
		b \in I^{(\beta)}
		\\
		a \text{ is \textbf{not} a box with number $1$}
		\\
		b \text{ is a box with number $1$}
	}
}
\Biggr\}
\frac{
	\left(1 - \frac{z_b}{z_a}\right)
	\left(1 - t\frac{z_b}{z_a}\right)
}{
	\left(1 - q\frac{z_b}{z_a}\right)
	\left(1 - q^{-1}t\frac{z_b}{z_a}\right)
}
\notag \\
&\times\underbrace{			\prod_{1 \leq a < b \leq k}					}_{
	\substack{
		a \text{ is a box with number $1$}
		\\
		b \text{ is \textbf{not} a box with number $1$}
	}
}
\frac{
	\left(1 - q^{-1}\frac{z_b}{z_a}\right)
	\left(1 - qt^{-1}\frac{z_b}{z_a}\right)
}{
	\left(1 - t^{-1}\frac{z_b}{z_a}\right)
	\left(1 - \frac{z_b}{z_a}\right)
}
\times
\underbrace{				\prod_{1 \leq a < b \leq k}							}_{
	\substack{
		a \text{ is \textbf{not} a box with number $1$}
		\\
		b \text{ is a box with number $1$}
	}
}
\frac{
	\left(1 - q\frac{z_b}{z_a}\right)
	\left(1 - q^{-1}t\frac{z_b}{z_a}\right)
}{
	\left(1 - t\frac{z_b}{z_a}\right)
	\left(1 - \frac{z_b}{z_a}\right)
}
\Bigg]
\times 
\psi_{\widetilde{	T		}(i_1,\dots,i_k ; \widetilde{\lambda})}(q,t). 
\notag
\end{align}
\normalsize
From this, we see that \eqref{eqn54-1054-2feb} holds 
if and only if 

\tiny 
\begin{align}
&\frac{\psi_{T(i_1,\dots,i_k ; \lambda)}(q,t)}{
\psi_{\widetilde{	T		}(i_1,\dots,i_k ; \widetilde{\lambda})}(q,t)
}
\label{eqn57-1406-2feb}
\\
&= 
\left(
\lim_{\xi \rightarrow t^{-1}}\,\,
\dualmap
\right)
\Bigg[
\Biggl\{
\underbrace{		\prod_{1 \leq \alpha < \beta \leq \ell(\lambda)}				}_{
	\substack{
		\alpha \in \operatorname{Row}(T(i_1,\dots,i_k ; \lambda)|1)
		\\
		\beta \in \operatorname{Row}(T(i_1,\dots,i_k ; \lambda)|1)
	}
}
\prod_{
	\substack{
		a \in I^{(\alpha)}
		\\
		b \in I^{(\beta)}
		\\
		a \text{ is a box with number $1$}
		\\
		b \text{ is a box with number $1$}
	}
}
\times 
\underbrace{		\prod_{1 \leq \alpha < \beta \leq \ell(\lambda)}				}_{
	\substack{
		\alpha \in \operatorname{Row}(T(i_1,\dots,i_k ; \lambda)|1)
		\\
		\beta \in \operatorname{Row}(T(i_1,\dots,i_k ; \lambda)|1)
	}
}
\prod_{
	\substack{
		a \in I^{(\alpha)}
		\\
		b \in I^{(\beta)}
		\\
		a \text{ is a box with number $1$}
		\\
		b \text{ is \textbf{not} a box with number $1$}
	}
}
\notag 
\\
&
\times
\underbrace{			\prod_{1 \leq \alpha < \beta \leq \ell(\lambda)}					}_{
	\substack{
		\alpha \in \operatorname{Row}(T(i_1,\dots,i_k ; \lambda)|1)
		\\
		\beta \notin \operatorname{Row}(T(i_1,\dots,i_k ; \lambda)|1)
	}
}
\prod_{
	\substack{
		a \in I^{(\alpha)}
		\\
		b \in I^{(\beta)}
		\\
		a \text{ is a box with number $1$}
		\\
		b \text{ is \textbf{not} a box with number $1$}
	}
}
\Biggr\}
\frac{
	\left(1 - \frac{z_b}{z_a}\right)
	\left(1 - t\frac{z_b}{z_a}\right)
}{
	\left(1 - q\frac{z_b}{z_a}\right)
	\left(1 - q^{-1}t\frac{z_b}{z_a}\right)
}
\notag \\
&\times\underbrace{			\prod_{1 \leq a < b \leq k}					}_{
	\substack{
		a \text{ is a box with number $1$}
		\\
		b \text{ is \textbf{not} a box with number $1$}
	}
}
\frac{
	\left(1 - q^{-1}\frac{z_b}{z_a}\right)
	\left(1 - qt^{-1}\frac{z_b}{z_a}\right)
}{
	\left(1 - t^{-1}\frac{z_b}{z_a}\right)
	\left(1 - \frac{z_b}{z_a}\right)
}
\times
\underbrace{				\prod_{1 \leq a < b \leq k}							}_{
	\substack{
		a \text{ is \textbf{not} a box with number $1$}
		\\
		b \text{ is a box with number $1$}
		\\
		a \text{ and } b \text{ are in the same rows }
	}
}
\frac{
	\left(1 - q\frac{z_b}{z_a}\right)
	\left(1 - q^{-1}t\frac{z_b}{z_a}\right)
}{
	\left(1 - t\frac{z_b}{z_a}\right)
	\left(1 - \frac{z_b}{z_a}\right)
}
\Bigg] 
\notag
\end{align}
\normalsize
Here we have used the fact that 

\tiny
\begin{align}
&\Biggl\{
\underbrace{		\prod_{1 \leq \alpha < \beta \leq \ell(\lambda)}				}_{
	\substack{
		\alpha \in \operatorname{Row}(T(i_1,\dots,i_k ; \lambda)|1)
		\\
		\beta \in \operatorname{Row}(T(i_1,\dots,i_k ; \lambda)|1)
	}
}
\prod_{
	\substack{
		a \in I^{(\alpha)}
		\\
		b \in I^{(\beta)}
		\\
		a \text{ is \textbf{not} a box with number $1$}
		\\
		b \text{ is a box with number $1$}
	}
}
\times
\underbrace{		\prod_{1 \leq \alpha < \beta \leq \ell(\lambda)}				}_{
	\substack{
		\alpha \notin \operatorname{Row}(T(i_1,\dots,i_k ; \lambda)|1)
		\\
		\beta \in \operatorname{Row}(T(i_1,\dots,i_k ; \lambda)|1)
	}
}
\prod_{
	\substack{
		a \in I^{(\alpha)}
		\\
		b \in I^{(\beta)}
		\\
		a \text{ is \textbf{not} a box with number $1$}
		\\
		b \text{ is a box with number $1$}
	}
}
\Biggr\}
\frac{
	\left(1 - \frac{z_b}{z_a}\right)
	\left(1 - t\frac{z_b}{z_a}\right)
}{
	\left(1 - q\frac{z_b}{z_a}\right)
	\left(1 - q^{-1}t\frac{z_b}{z_a}\right)
}
\\
&\times
\underbrace{				\prod_{1 \leq a < b \leq k}							}_{
	\substack{
		a \text{ is \textbf{not} a box with number $1$}
		\\
		b \text{ is a box with number $1$}
	}
}
\frac{
	\left(1 - q\frac{z_b}{z_a}\right)
	\left(1 - q^{-1}t\frac{z_b}{z_a}\right)
}{
	\left(1 - t\frac{z_b}{z_a}\right)
	\left(1 - \frac{z_b}{z_a}\right)
}
\notag 
\\
&= 
\underbrace{				\prod_{1 \leq a < b \leq k}							}_{
	\substack{
		a \text{ is \textbf{not} a box with number $1$}
		\\
		b \text{ is a box with number $1$}
		\\
		a \text{ and } b \text{ are in the same rows }
	}
}
\frac{
	\left(1 - q\frac{z_b}{z_a}\right)
	\left(1 - q^{-1}t\frac{z_b}{z_a}\right)
}{
	\left(1 - t\frac{z_b}{z_a}\right)
	\left(1 - \frac{z_b}{z_a}\right)
}
\notag 
\end{align}
\normalsize

From \textbf{Definition \ref{dfn223-2118}}, we know that 
\begin{align}
	\psi_{T(i_1,\dots,i_k ; \lambda)}(q,t) &=
	\psi_{\lambda^{(1)}/\lambda^{(2)}}(q,t)
	\cdots
	\psi_{\lambda^{(N-2)}/\lambda^{(N-1)}}(q,t)
	\psi_{\lambda^{(N-1)}/\lambda^{(N)}}(q,t), 
	\\
	\psi_{\widetilde{	T		}(i_1,\dots,i_k ; \widetilde{\lambda})}(q,t) &= 
	\psi_{\lambda^{(2)}/\lambda^{(3)}}(q,t)
	\cdots
	\psi_{\lambda^{(N-2)}/\lambda^{(N-1)}}(q,t)
	\psi_{\lambda^{(N-1)}/\lambda^{(N)}}(q,t). 
\end{align}
Thus, we get that 
\begin{align}
&\frac{\psi_{T(i_1,\dots,i_k ; \lambda)}(q,t)}{
	\psi_{\widetilde{	T		}(i_1,\dots,i_k ; \widetilde{\lambda})}(q,t)
}
= 
\psi_{\lambda^{(1)}/\lambda^{(2)}}(q,t). 
\label{eqn511-1404-2feb}
\end{align}
Substituting \eqref{eqn511-1404-2feb} into \eqref{eqn57-1406-2feb}, we then obtain \eqref{eqn55-1407-2feb}. So we have proved the lemma. 
\end{proof}

\begin{lem}
\label{lemm54-1412-2feb}
Under the induction hypothesis, for $(i_1,\dots,i_k) \in \operatorname{RSSYBT}(N,0;\lambda)$ satisfying the assumption in equation \eqref{eqn53-1641-1feb}, we have 

\tiny  
\begin{align*}
	&\psi_{\lambda^{(1)}/\lambda^{(2)}}(q,t)
	\\
	&= 
	\left(
	\lim_{\xi \rightarrow t^{-1}}\,\,
	\dualmap
	\right)
	\Bigg[
	\notag \\ 
	&\underbrace{		\prod_{1 \leq \alpha < \beta \leq \ell(\lambda)}				}_{
		\substack{
			\alpha \in \operatorname{Row}(T(i_1,\dots,i_k ; \lambda)|1)
			\\
			\beta \in \operatorname{Row}(T(i_1,\dots,i_k ; \lambda)|1)
		}
	}
	\prod_{
		\substack{
			a \in I^{(\alpha)}
			\\
			b \in I^{(\beta)}
			\\
			a \text{ is a box with number $1$}
			\\
			b \text{ is a box with number $1$}
		}
	}
	\frac{
		\left(1 - \frac{z_b}{z_a}\right)
		\left(1 - t\frac{z_b}{z_a}\right)
	}{
		\left(1 - q\frac{z_b}{z_a}\right)
		\left(1 - q^{-1}t\frac{z_b}{z_a}\right)
	}
	\times 
	\underbrace{		\prod_{1 \leq \alpha < \beta \leq \ell(\lambda)}				}_{
		\substack{
			\alpha \in \operatorname{Row}(T(i_1,\dots,i_k ; \lambda)|1)
			\\
			\beta \in \operatorname{Row}(T(i_1,\dots,i_k ; \lambda)|1)
		}
	}
	\prod_{
		\substack{
			a \in I^{(\alpha)}
			\\
			b \in I^{(\beta)}
			\\
			a \text{ is a box with number $1$}
			\\
			b \text{ is \textbf{not} a box with number $1$}
		}
	}
	\frac{
		\left(1 - \frac{z_b}{z_a}\right)
		\left(1 - t\frac{z_b}{z_a}\right)
	}{
		\left(1 - q\frac{z_b}{z_a}\right)
		\left(1 - q^{-1}t\frac{z_b}{z_a}\right)
	}
	\notag \\
	&
	\times
	\underbrace{			\prod_{1 \leq \alpha < \beta \leq \ell(\lambda)}					}_{
		\substack{
			\alpha \in \operatorname{Row}(T(i_1,\dots,i_k ; \lambda)|1)
			\\
			\beta \notin  \operatorname{Row}(T(i_1,\dots,i_k ; \lambda)|1)
		}
	}
	\prod_{
		\substack{
			a \in I^{(\alpha)}
			\\
			b \in I^{(\beta)}
			\\
			a \text{ is a box with number $1$}
			\\
			b \text{ is \textbf{not} a box with number $1$}
		}
	}
	\frac{
		\left(1 - \frac{z_b}{z_a}\right)
		\left(1 - t\frac{z_b}{z_a}\right)
	}{
		\left(1 - q\frac{z_b}{z_a}\right)
		\left(1 - q^{-1}t\frac{z_b}{z_a}\right)
	}
	\times
	\underbrace{				\prod_{1 \leq a < b \leq k}							}_{
		\substack{
			a \text{ is \textbf{not} a box with number $1$}
			\\
			b \text{ is a box with number $1$}
			\\
			a \text{ and } b \text{ are in the same row}
		}
	}
	\frac{
		\left(1 - q\frac{z_b}{z_a}\right)
		\left(1 - q^{-1}t\frac{z_b}{z_a}\right)
	}{
		\left(1 - t\frac{z_b}{z_a}\right)
		\left(1 - \frac{z_b}{z_a}\right)
	}
	\notag \\
	&\times\underbrace{			\prod_{1 \leq a < b \leq k}					}_{
		\substack{
			a \text{ is a box with number $1$}
			\\
			b \text{ is \textbf{not} a box with number $1$}
		}
	}
	\frac{
		\left(1 - q^{-1}\frac{z_b}{z_a}\right)
		\left(1 - qt^{-1}\frac{z_b}{z_a}\right)
	}{
		\left(1 - t^{-1}\frac{z_b}{z_a}\right)
		\left(1 - \frac{z_b}{z_a}\right)
	}
	\Bigg]
\end{align*}
\normalsize
\end{lem}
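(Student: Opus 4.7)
The strategy is to evaluate both sides of the claimed identity explicitly after substitution, and then to compare them factor by factor. Under the map $\widetilde{\Psi}^{(q,\xi)}_{\lambda}$, each variable $z_a$ corresponding to the box in row $c_a$ and column $j_a$ of the Young diagram $\lambda$ is replaced by $q^{j_a - 1}\xi^{c_a - 1}y$, so after taking $\xi \to t^{-1}$ every ratio $z_b/z_a$ becomes the monomial $q^{j_b - j_a}t^{-(c_b - c_a)}$. Consequently, every factor of the form $(1 - q^{\alpha}t^{\beta}\, z_b/z_a)$ appearing on the right-hand side of \textbf{Lemma \ref{lemm54-1412-2feb}} collapses to $(1 - q^{a}t^{b})$ for some integers $a,b$. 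My first step would be to carry out this substitution systematically and thereby convert the right-hand side into an explicit finite product of such binomials.

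To organize the resulting product, I would invoke \textbf{Proposition \ref{prp51-1812-1feb}}: the boxes carrying the number $1$ occupy only the last row of row intervals with no box below them, so the skew shape $\lambda^{(1)}\backslash\lambda^{(2)}$ is a disjoint union of horizontal strips sitting at the bottom of selected row intervals. For each of the five product ranges appearing on the right-hand side, I would enumerate the allowed pairs $(a,b)$ according to the relative positions of $a$ and $b$ in the Young diagram and record the exponents of $q$ and $t$ produced by the substitution. Thanks to the horizontal-strip structure of $\lambda^{(1)}\backslash\lambda^{(2)}$, many contributions telescope column-wise within each row interval, so the whole expression should collapse into a finite product indexed only by pairs of row intervals, with exponents depending only on $\lambda_i$ and $\mu_i := \lambda^{(2)}_i$.

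The main obstacle is matching this finite product with the definition of $\psi_{\lambda^{(1)}/\lambda^{(2)}}(q,t)$, which is given in \textbf{Definition \ref{dfn223-2118}} through infinite $f_{q,t}$-products. To bridge this gap, I would either rewrite each ratio $f_{q,t}(q^{a}t^{b})/f_{q,t}(q^{a'}t^{b'})$ at integer arguments as a finite product by telescoping the infinite $q$-Pochhammer symbols, or equivalently invoke \textbf{Proposition \ref{prop226-1127-19mar}} to pass from $\psi$ to $\psi'$, which is already a finite product. A careful exponent count, performed row-interval by row-interval and taking into account which rows lie in $\operatorname{Row}(T(i_1,\dots,i_k;\lambda)\vert 1)$, then completes the verification. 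The principal difficulty is the sheer combinatorial bookkeeping required to align the two sides, which is why the detailed verification is naturally relegated to \textbf{Appendix \ref{secappB-1410-2feb}}.
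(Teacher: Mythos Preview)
Your plan is correct and matches the paper's approach in Appendix~\ref{secappB-1410-2feb}: both sides are reduced to explicit finite products of binomials $(1-q^at^b)$ indexed by row intervals, with $\psi_{\lambda^{(1)}/\lambda^{(2)}}$ computed by telescoping the $f_{q,t}$ ratios (Lemma~\ref{lemb1-1353-3feb}), and then the two sides are matched. The paper organizes the bookkeeping by grouping the five factors into auxiliary quantities $\widetilde{\circled{1}},\widetilde{\circled{2}}$ (Lemmas~\ref{lemb2-1432-5mar}--\ref{lemb5-1600-5mar}) rather than your suggested row-interval-by-row-interval comparison, but this is a difference of execution, not of strategy.
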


\begin{proof}
The proof of this lemma is relegated to Appendix \ref{secappB-1410-2feb}. 
\end{proof}

By applying \textbf{Lemmas \ref{lemm53-1415-2feb}} and \textbf{\ref{lemm54-1412-2feb}}, the inductive step is proved. Therefore, the induction is complete.

\section{Proof of lemma \ref{lemm44-1106-29jan} for the case $(N,M)$ where $N, M \in \bb{Z}^{>0}$}
\label{sec6-2226-13mar}

In this section, we will prove the \textbf{Lemma \ref{lemm44-1106-29jan}} for the case $(N,M)$ where $N, M \in \bb{Z}^{>0}$. The proof will proceed by induction on $M$. 

\subsection{Basis step $M = 1$}

Consider $(i_1,\dots,i_k) \in \operatorname{RSSYBT}(N,1;\lambda)$. If $N+1 \notin \{	i_1,\dots,i_k		\}$, then 
the result from the case $(N,0)$ can be applied to conclude that \textbf{Lemma \ref{lemm44-1106-29jan}} holds. Therefore, we assume that  $N+1 \in \{	i_1,\dots,i_k		\}$ and there are exactly $c$ boxes in $T(i_1,\dots,i_k ; \lambda)$ labeled by $N+1$. 

Since $(i_1,\dots,i_k) \in \operatorname{RSSYBT}(N,1;\lambda)$, $N+1$ is the only super number. Consequently, 
the boxes labeled by $N+1$ must be in the first column, as illustrated in the figure \ref{fig1-1303-10mar} below. 

\begin{figure}[!h]
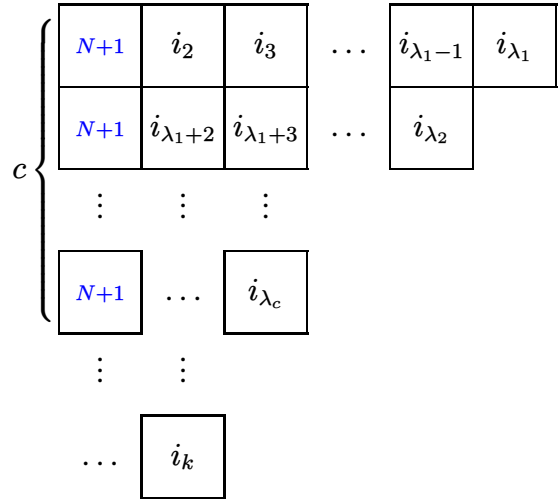

	\centering
	\begin{tabular}{r@{}l}
		\raisebox{5.9ex}{$c\left\{\vphantom{\begin{array}{c}~\\[17ex] ~
			\end{array}}\right.$} &
		\begin{ytableau}
		\scriptstyle	\textcolor{blue}{N+1} & i_2 & i_3 & \none[\dots]
			& i_{\lambda_1 - 1} & i_{\lambda_1} \\
		\scriptstyle	\textcolor{blue}{N+1} & i_{\lambda_1 + 2} & i_{\lambda_1 + 3} & \none[\dots]
			& i_{\lambda_2} \\
			\none[\vdots] & \none[\vdots]
			& \none[\vdots] \\
			\scriptstyle	\textcolor{blue}{N+1} & \none[\dots] & i_{\lambda_c} \\
				\none[\vdots] & \none[\vdots] \\
				\none[\dots] & i_{k} 
		\end{ytableau}
	\end{tabular}
	\caption{$T(i_1,\dots,i_k ; \lambda) \in \operatorname{RSSYBT}(N,1;\lambda)$ with $c$ boxes labeled by $N+1$.}
\label{fig1-1303-10mar}
\end{figure}
That is, 
\begin{align}
	(i_1,\dots,i_k)
	= (\textcolor{blue}{N+1},i_2,\dots,i_{\lambda_1},
	\textcolor{blue}{N+1},i_{\lambda_1 + 2}, \cdots, i_{\lambda_1 + \lambda_2}, \cdots,
	\textcolor{blue}{N+1},i_{\lambda_1 + \cdots + \lambda_{c-1} +2}, \cdots, i_k). 
	\label{eqn21-1433-6dec}
\end{align}

Next, we construct $(i^\prime_1,\dots,i^\prime_k) \in \operatorname{RSSYBT}(N+c,0;\lambda)$ by replacing all $\textcolor{blue}{N+1}$ with $N+1,N+2,\dots,N+c$, as illustrated in the figure \ref{fig2-1303-10mar} below. 

\begin{figure}[!h]
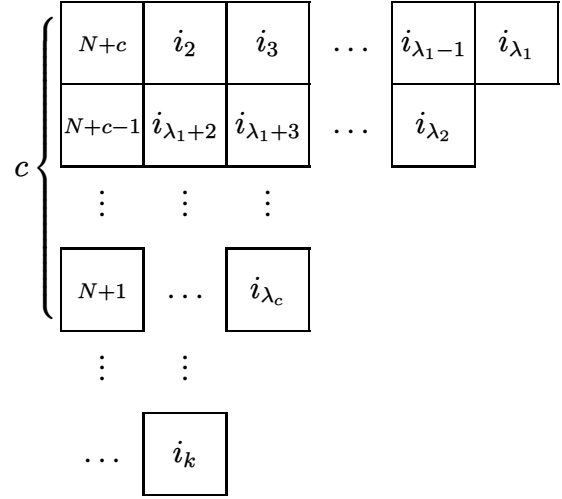

	\centering
	\begin{tabular}{r@{}l}
		\raisebox{5.9ex}{$c\left\{\vphantom{\begin{array}{c}~\\[17ex] ~
			\end{array}}\right.$} &
		\begin{ytableau}
			\scriptstyle N+c & i_2 & i_3 & \none[\dots]
			& i_{\lambda_1 - 1} & i_{\lambda_1} \\
			\scriptstyle	\scriptstyle N+c-1 & i_{\lambda_1 + 2} & i_{\lambda_1 + 3} & \none[\dots]
			& i_{\lambda_2} \\
			\none[\vdots] & \none[\vdots]
			& \none[\vdots] \\
			\scriptstyle	N+1 & \none[\dots] & i_{\lambda_c} \\
			\none[\vdots] & \none[\vdots] \\
			\none[\dots] & i_{k} 
		\end{ytableau}
	\end{tabular}
	\caption{$T(i^\prime_1,\dots,i^\prime_k ; \lambda)$ obtained by replacing all $\textcolor{blue}{N+1}$ in $T(i_1,\dots,i_k ; \lambda)$
	with $N+1,N+2,\dots,N+c$.}
	\label{fig2-1303-10mar}
\end{figure}
So, we get that 
\begin{align}
	(i^\prime_1,\dots,i^\prime_k)
	= (N+c,i_2,\dots,i_{\lambda_1},
N+c-1,i_{\lambda_1 + 2}, \cdots, i_{\lambda_1 + \lambda_2}, \cdots,
N+1,i_{\lambda_1 + \cdots + \lambda_{c-1} +2}, \cdots, i_k). 
\label{eqn62-1517-10mar}
\end{align}

\begin{lem}\mbox{}
\label{lem61-1347-10mar}
\begin{align}
	&\bigg|_{
		\substack{
			q_1 = q, \\
			q_2 = q^{-1}t,\\
			q_3 = t^{-1} \\
		}
	}
	\prod_{1 \leq a < b \leq k}					
	\cals{D}^{(i_a,i_b)}\left(
	\frac{z_b}{z_a}
	; q, t
	\right)
	\\
	&= 
	\underbrace{		\prod_{1 \leq a < b \leq k}					}_{
		\substack{
		a, b \in \{
		1,\lambda_1 + 1, \dots, \sum_{j = 1}^{c-1}\lambda_j + 1
		\}
		}
	}
	\frac{
		\left(1 - q^{-1}\frac{z_{b}}{z_a}\right)
		\left(1 - \frac{z_{b}}{z_a}\right)
	}{
		\left(1 - t^{-1}\frac{z_{b}}{z_a}\right)
		\left(1 - q^{-1}t\frac{z_{b}}{z_a}\right)
	}
	\times
	\bigg|_{
		\substack{
			q_1 = q, \\
			q_2 = q^{-1}t,\\
			q_3 = t^{-1} \\
		}
	}
	\prod_{1 \leq a < b \leq k}					
	\cals{D}^{(i^\prime_a,i^\prime_b)}\left(
	\frac{z_b}{z_a}
	; q, t
	\right)
\end{align}
\end{lem}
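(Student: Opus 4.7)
The approach is a direct case-by-case comparison of the factors $\cals{D}^{(i_a, i_b)}$ and $\cals{D}^{(i'_a, i'_b)}$ for each pair $a < b$, exploiting the fact that $i \mapsto i'$ modifies the tableau only at a very restricted set of positions. First, I would establish the structural observation that, because $M = 1$ and $N+1$ is the unique super number, all occurrences of $N+1$ in $T(i_1, \dots, i_k; \lambda)$ must lie in the first column at positions $1, \lambda_1 + 1, \lambda_1+\lambda_2+1, \dots, \sum_{j=1}^{c-1}\lambda_j + 1$. This follows from \textbf{Definition \ref{dfn211-1443}}: super numbers strictly decrease along rows (so $N+1$ appears at most once per row), and column entries weakly decrease from the top (so $N+1$'s form a contiguous block at the top of column~one). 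Consequently, the replacement $i \mapsto i'$ affects exactly those $c$ positions, and does so by substituting the distinct ordinary numbers $N+c, N+c-1, \dots, N+1$ in order from top to bottom.

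Next, I would partition the pairs $(a,b)$ with $a<b$ into four types according to whether each of $a,b$ is among the first-column super positions. In the three cases where at most one of $a,b$ lies in those positions, I will verify that the relative order of $i_a$ and $i_b$, together with the super/ordinary status of any coincident values, is preserved by $i \mapsto i'$. For example, when $a$ lies outside and $b$ lies inside the super positions, one has $i_a \le N < N+1 = i_b$ and also $i'_a \le N < i'_b \in \{N+1,\dots,N+c\}$, so both fall into the $i<j$ branch of \eqref{eqn126-1300-4dec}; when neither lies in the super positions, $(i_a,i_b)=(i'_a,i'_b)$ trivially. Since the $i<j$ and $i>j$ branches of $\cals{D}$ depend only on the ordering (not on the super/ordinary status of the entries), one concludes $\cals{D}^{(i_a,i_b)} = \cals{D}^{(i'_a,i'_b)}$ in each of these three cases, so they contribute identically to both sides.

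The entire content of the lemma therefore comes from the fourth type of pair, where both $a$ and $b$ lie in the first-column super positions. Here $(i_a, i_b) = (N+1, N+1)$ falls under the ``$i=j$ super'' branch of \eqref{eqn126-1300-4dec}, while $(i'_a, i'_b)$ satisfies $i'_a > i'_b$ by construction of $i'$ and falls under the ``$i>j$'' branch. Computing the ratio gives
\begin{align*}
\frac{\cals{D}^{(N+1,N+1)}_{\text{super}}\!\left(\tfrac{z_b}{z_a};q,t\right)}{\cals{D}^{(i'_a,i'_b)}\!\left(\tfrac{z_b}{z_a};q,t\right)}
= \frac{\bigl(1-q_1^{-1}\tfrac{z_b}{z_a}\bigr)\bigl(1-\tfrac{z_b}{z_a}\bigr)}{\bigl(1-q_3\tfrac{z_b}{z_a}\bigr)\bigl(1-q_2\tfrac{z_b}{z_a}\bigr)},
\end{align*}
after which the specialization $q_1 = q$, $q_2 = q^{-1}t$, $q_3 = t^{-1}$ produces precisely the factor
$\frac{(1-q^{-1}z_b/z_a)(1-z_b/z_a)}{(1-t^{-1}z_b/z_a)(1-q^{-1}t\, z_b/z_a)}$
appearing on the right-hand side of the lemma. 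Multiplying over all such pairs $(a,b)$ then yields the claimed identity.

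There is no serious analytical obstacle: the whole argument is a bookkeeping exercise once the column structure of the super block is identified. The only point requiring care is to confirm that in the mixed-type cases the branch of $\cals{D}$ selected is genuinely unchanged under $i \mapsto i'$, which reduces to the inequalities $i'_a \in \{N+1,\dots,N+c\}$ being strictly larger than any ordinary entry $\le N$ and the fact that the super case of $\cals{D}$ only intervenes when $i_a = i_b$ are both super.
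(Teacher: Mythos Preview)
Your proof is correct and is precisely the ``straightforward calculation'' the paper alludes to: a case-by-case comparison of the four branches of $\cals{D}$ according to whether each of $a,b$ lies in the first-column super positions, with the only nontrivial case being both $a,b$ super, where the ratio of the ``$i=j$ super'' and ``$i>j$'' branches produces the displayed prefactor after specialization. The paper does not spell out these cases, so your write-up is a faithful and complete expansion of what is intended.
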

\begin{proof}
Straightforward calculation. 
\end{proof}

\begin{cor}\mbox{}
\label{cor62-1538-10mar}
\begin{align}
	&\lim_{\xi \rightarrow t^{-1}}\,\,
	(
	\dualmap
	\comp 
	\bigg|_{
		\substack{
			q_1 = q, \\
			q_2 = q^{-1}t,\\
			q_3 = t^{-1} \\
		}
	}
	)
	\bigg[
	\bigg(
	\frac{q_1^{\frac{1}{2}} - q_1^{-\frac{1}{2}}}{
		q_3^{\frac{1}{2}} - q_3^{-\frac{1}{2}}
	}
	\bigg)^{|\mu|}
	\cals{N}_\lambda(z_1,\dots,z_k)
	\prod_{1 \leq a < b \leq k}
	\cals{D}^{(i_a,i_b)}\left(
	\frac{z_b}{z_a}
	; q, t
	\right)
	\bigg]
	\label{eqn215-1512-6dec}
	\\
	&=
	\bigg(
	\frac{q^{\frac{1}{2}} - q^{-\frac{1}{2}}}{
		t^{-\frac{1}{2}} - t^{\frac{1}{2}}
	}
	\bigg)^{c}
	\times
	\prod_{a = 1}^{c-1}
	\Bigg[
	\left(
	\frac{(1 - q^{-1}t^{-a})(1 - t^{-a})}{(1 - t^{-1}t^{-a})(1 - q^{-1}tt^{-a})}
	\right)^{c-a}
	\Bigg]
	\notag \\	
	&\hspace{0.3cm}\times 
	\psi_{T(i^\prime_1,\dots,i^\prime_k ; \lambda)^{(1)}/T(i^\prime_1,\dots,i^\prime_k ; \lambda)^{(2)}}(q,t)
	\cdots
	\psi_{T(i^\prime_1,\dots,i^\prime_k ; \lambda)^{(N-1)}/T(i^\prime_1,\dots,i^\prime_k ; \lambda)^{(N)}}(q,t)
	\notag	\\
	&\hspace{0.3cm}\times 
	\psi_{T(i^\prime_1,\dots,i^\prime_k ; \lambda)^{(N)}/T(i^\prime_1,\dots,i^\prime_k ; \lambda)^{(N + 1)}}(q,t)
	\cdots
	\psi_{T(i^\prime_1,\dots,i^\prime_k ; \lambda)^{(N+c-1)}/T(i^\prime_1,\dots,i^\prime_k ; \lambda)^{(N+c)}}(q,t)
	\notag 
\end{align}
\end{cor}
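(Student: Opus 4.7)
The plan is to reduce the $M=1$ case to the already-proved $(N+c,0)$ case of Section \ref{sec5-1041-7apr} via Lemma \ref{lem61-1347-10mar}. First I would observe that since the $c$ super-number boxes of $T(i_1,\dots,i_k;\lambda)$ all lie in the first column (Figure \ref{fig1-1303-10mar}), the super-number subdiagram has shape $\mu=(1^c)$, so $|\mu|=c$. Substituting $q_1=q$ and $q_3=t^{-1}$ converts the prefactor $\bigl(\tfrac{q_1^{1/2}-q_1^{-1/2}}{q_3^{1/2}-q_3^{-1/2}}\bigr)^{|\mu|}$ into the scalar $\bigl(\tfrac{q^{1/2}-q^{-1/2}}{t^{-1/2}-t^{1/2}}\bigr)^{c}$, which commutes past $\dualmap$ and the limit and already supplies the first factor on the right-hand side.

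Next, Lemma \ref{lem61-1347-10mar} factors the specialized $\cals{D}$-product into the analogous product for the ordinary tableau $T(i'_1,\dots,i'_k;\lambda)\in\operatorname{RSSYBT}(N+c,0;\lambda)$ (Figure \ref{fig2-1303-10mar}) times an extra factor
\begin{align*}
E(z)\,:=\,\underbrace{\prod_{1\le a<b\le k}}_{a,b\in S}\frac{(1-q^{-1}z_b/z_a)(1-z_b/z_a)}{(1-t^{-1}z_b/z_a)(1-q^{-1}t\,z_b/z_a)},
\end{align*}
indexed by $S=\{1,\lambda_1+1,\dots,\sum_{j=1}^{c-1}\lambda_j+1\}$, the set of super-number positions. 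Multiplying by $\cals{N}_\lambda(z_1,\dots,z_k)$ and pushing the $\cals{D}^{(i'_a,i'_b)}$-piece through $\dualmap$ followed by $\xi\to t^{-1}$ produces exactly the left-hand side of \eqref{eqn126-1958} evaluated at $(i'_1,\dots,i'_k)$, which by the already-established $(N+c,0)$ case of Lemma \ref{lemm44-1106-29jan} equals $\psi_{T(i'_1,\dots,i'_k;\lambda)}(q,t)$.

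It remains to compute $\lim_{\xi\to t^{-1}}\dualmap(E(z))$. Since $\dualmap$ assigns the leftmost box of row $i$ (namely position $\sum_{j=1}^{i-1}\lambda_j+1$) the value $\xi^{i-1}y$, any pair $a<b$ in $S$ sitting in rows $i<i'$ satisfies $z_b/z_a\mapsto\xi^{i'-i}$, which becomes $t^{-(i'-i)}$ in the limit. Collecting pairs by the row-difference $\ell=i'-i$ and noting that there are exactly $c-\ell$ such pairs yields
\begin{align*}
\lim_{\xi\to t^{-1}}\dualmap\bigl(E(z)\bigr)=\prod_{\ell=1}^{c-1}\left(\frac{(1-q^{-1}t^{-\ell})(1-t^{-\ell})}{(1-t^{-1}t^{-\ell})(1-q^{-1}t\,t^{-\ell})}\right)^{c-\ell},
\end{align*}
matching the second constant factor in \eqref{eqn215-1512-6dec}. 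Finally, I would expand $\psi_{T(i'_1,\dots,i'_k;\lambda)}(q,t)=\prod_{j=1}^{N+c}\psi_{T^{(j)}/T^{(j+1)}}(q,t)$ via Definition \ref{dfn223-2118} and note that $T^{(N+c)}$ is the single top-left box, so $\psi_{T^{(N+c)}/T^{(N+c+1)}}=\psi_{(1)/\emptyset}=1$ by the empty-product convention; this truncates the $\psi$-product at $j=N+c-1$, delivering the right-hand side of \eqref{eqn215-1512-6dec}. There is no substantial obstacle in this argument; the main care lies in tracking the specialization of $E(z)$ and counting the contributing pairs correctly.
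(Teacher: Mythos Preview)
Your proposal is correct and follows essentially the same route as the paper's own proof: factor via Lemma~\ref{lem61-1347-10mar}, evaluate the extra $E(z)$ piece directly under $\dualmap$ and the limit $\xi\to t^{-1}$, and invoke the $(N+c,0)$ case of Lemma~\ref{lemm44-1106-29jan} on the remaining $\cals{N}_\lambda\cdot\prod\cals{D}^{(i'_a,i'_b)}$ piece. Your added remark that $\psi_{T^{(N+c)}/T^{(N+c+1)}}=\psi_{(1)/\emptyset}=1$ (justifying the truncation of the $\psi$-product at $j=N+c-1$) is a nice point that the paper leaves implicit.
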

\begin{proof}
By using \textbf{Lemma \ref{lem61-1347-10mar}}, we obtain that 
\begin{align}
	&\lim_{\xi \rightarrow t^{-1}}\,\,
	(
	\dualmap
	\comp 
	\bigg|_{
		\substack{
			q_1 = q, \\
			q_2 = q^{-1}t,\\
			q_3 = t^{-1} \\
		}
	}
	)
	\bigg[
	\bigg(
	\frac{q_1^{\frac{1}{2}} - q_1^{-\frac{1}{2}}}{
		q_3^{\frac{1}{2}} - q_3^{-\frac{1}{2}}
	}
	\bigg)^{|\mu|}
	\cals{N}_\lambda(z_1,\dots,z_k)
	\prod_{1 \leq a < b \leq k}
	\cals{D}^{(i_a,i_b)}\left(
	\frac{z_b}{z_a}
	; q, t
	\right)
	\bigg]
	\label{eqn65-1450-10mar}
	\\
	&=
	\bigg(
	\frac{q^{\frac{1}{2}} - q^{-\frac{1}{2}}}{
		t^{-\frac{1}{2}} - t^{\frac{1}{2}}
	}
	\bigg)^{c}
	\lim_{\xi \rightarrow t^{-1}}\,\,
	(
	\dualmap
	\comp 
	\bigg|_{
		\substack{
			q_1 = q, \\
			q_2 = q^{-1}t,\\
			q_3 = t^{-1} \\
		}
	}
	)
	\bigg[
	\underbrace{		\prod_{1 \leq a < b \leq k}					}_{
		\substack{
		a, b \in \{
		1,\lambda_1 + 1, \dots, \sum_{j = 1}^{c-1}\lambda_j + 1
		\}
		}
	}
	\frac{
		\left(1 - q^{-1}\frac{z_{b}}{z_a}\right)
		\left(1 - \frac{z_{b}}{z_a}\right)
	}{
		\left(1 - t^{-1}\frac{z_{b}}{z_a}\right)
		\left(1 - q^{-1}t\frac{z_{b}}{z_a}\right)
	}
	\bigg]
	\notag	\\
	&\hspace{0.3cm}\times 
	\lim_{\xi \rightarrow t^{-1}}\,\,
	(
	\dualmap
	\comp 
	\bigg|_{
		\substack{
			q_1 = q, \\
			q_2 = q^{-1}t,\\
			q_3 = t^{-1} \\
		}
	}
	)
	\bigg[
	\cals{N}_\lambda(z_1,\dots,z_k)
	\prod_{1 \leq a < b \leq k}					
	\cals{D}^{(i^\prime_a,i^\prime_b)}\left(
	\frac{z_b}{z_a}
	; q, t
	\right)
	\bigg]. 
	\notag 
\end{align}
It is straightforward to show that 
\begin{align}
&\lim_{\xi \rightarrow t^{-1}}\,\,
(
\dualmap
\comp 
\bigg|_{
	\substack{
		q_1 = q, \\
		q_2 = q^{-1}t,\\
		q_3 = t^{-1} \\
	}
}
)
\bigg[
\underbrace{		\prod_{1 \leq a < b \leq k}					}_{
	\substack{
		a, b \in \{
		1,\lambda_1 + 1, \dots, \sum_{j = 1}^{c-1}\lambda_j + 1
		\}
	}
}
\frac{
	\left(1 - q^{-1}\frac{z_{b}}{z_a}\right)
	\left(1 - \frac{z_{b}}{z_a}\right)
}{
	\left(1 - t^{-1}\frac{z_{b}}{z_a}\right)
	\left(1 - q^{-1}t\frac{z_{b}}{z_a}\right)
}
\bigg]
\label{eqn66-1450-10mar}
\\
&= 
\lim_{\xi \rightarrow t^{-1}}\,\,
\Bigg[
\left(
\frac{(1 - q^{-1}\xi)(1 - \xi)}{(1 - t^{-1}\xi)(1 - q^{-1}t\xi)}
\right)^{c-1}
\cdots 
\left(
\frac{(1 - q^{-1}\xi^{c-1})(1 - \xi^{c-1})}{(1 - t^{-1}\xi^{c-1})(1 - q^{-1}t\xi^{c-1})}
\right)^{1}
\Bigg]
\notag \\
&= 
\prod_{a = 1}^{c-1}
\Bigg[
\left(
\frac{(1 - q^{-1}t^{-a})(1 - t^{-a})}{(1 - t^{-1}t^{-a})(1 - q^{-1}tt^{-a})}
\right)^{c-a}
\Bigg].
\notag 
\end{align}
Furthermore, from the results of the preceding section, we know that 
\begin{align}
&\lim_{\xi \rightarrow t^{-1}}\,\,
(
\dualmap
\comp 
\bigg|_{
	\substack{
		q_1 = q, \\
		q_2 = q^{-1}t,\\
		q_3 = t^{-1} \\
	}
}
)
\bigg[
\cals{N}_\lambda(z_1,\dots,z_k)
\prod_{1 \leq a < b \leq k}					
\cals{D}^{(i^\prime_a,i^\prime_b)}\left(
\frac{z_b}{z_a}
; q, t
\right)
\bigg]
\label{eqn67-1450-10mar}
\\
&= 
\widetilde{\psi}_{T(i^\prime_1,\dots,i^\prime_k ; \lambda)}(q,t^{-1})
= 
\psi_{T(i^\prime_1,\dots,i^\prime_k ; \lambda)}(q,t)
\notag 
\\
&= 
\psi_{T(i^\prime_1,\dots,i^\prime_k ; \lambda)^{(1)}/T(i^\prime_1,\dots,i^\prime_k ; \lambda)^{(2)}}(q,t)
\cdots
\psi_{T(i^\prime_1,\dots,i^\prime_k ; \lambda)^{(N+c-1)}/T(i^\prime_1,\dots,i^\prime_k ; \lambda)^{(N+c)}}(q,t). 
\notag 
\end{align}
Substituting equations \eqref{eqn66-1450-10mar} and \eqref{eqn67-1450-10mar} into \eqref{eqn65-1450-10mar}
directly yields equation \eqref{eqn215-1512-6dec}. 
\end{proof}

\begin{lem}\mbox{}
\begin{align}
&\bigg(
\frac{q^{\frac{1}{2}} - q^{-\frac{1}{2}}}{
	t^{-\frac{1}{2}} - t^{\frac{1}{2}}
}
\bigg)^{c}
\times
\prod_{a = 1}^{c-1}
\Bigg[
\left(
\frac{(1 - q^{-1}t^{-a})(1 - t^{-a})}{(1 - t^{-1}t^{-a})(1 - q^{-1}tt^{-a})}
\right)^{c-a}
\Bigg]
\label{eqn68-1542-10mar}
\\	
&\hspace{0.3cm}\times 
\psi_{T(i^\prime_1,\dots,i^\prime_k ; \lambda)^{(N + 1)}/T(i^\prime_1,\dots,i^\prime_k ; \lambda)^{(N + 2)}}(q,t)
\cdots
\psi_{T(i^\prime_1,\dots,i^\prime_k ; \lambda)^{(N+c-1)}/T(i^\prime_1,\dots,i^\prime_k ; \lambda)^{(N+c)}}(q,t)
\notag 
\\
&=
\psi_{T^\prime_1(i_1,\dots,i_k ; \lambda)}(t,q)
\frac{
	H(\mu,q,t^{-1})
}{
	H(\mu^\prime,t^{-1},q)
}
\left(q^{-\frac{1}{2}}t^{-\frac{1}{2}}\right)^{c}
\notag 
\end{align}
where $T_1(i_1,\dots,i_k ; \lambda)$ denotes the super number part of the reverse SSYBT $T(i_1,\dots,i_k ; \lambda)$, and $\mu = (\underbrace{		1,\dots,1			}_{c})$ is the shape of $T_1(i_1,\dots,i_k ; \lambda)$. 
\end{lem}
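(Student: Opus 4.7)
The plan is to reduce both sides of the asserted identity to an explicit rational function in $q$ and $t$ and match them term by term. The computation splits into three pieces: (i) showing the $\psi$-product on the left telescopes to one, (ii) evaluating the $\psi_{T^\prime_1}$ and $H/H$ factors on the right in closed form, and (iii) telescoping the remaining scalar prefactor on the left.

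First I would dispose of the $\psi$-product on the LHS. Because all super entries in $T(i_1,\dots,i_k;\lambda)$ equal the single super number $N+1$ and fill positions $(1,1),\dots,(c,1)$, the tableau $T(i^\prime_1,\dots,i^\prime_k;\lambda)$ carries the value $N+c-j+1$ in box $(j,1)$. Hence for each $j=0,1,\dots,c-2$ we have $T(i^\prime_1,\dots,i^\prime_k;\lambda)^{(N+j+1)}=(1^{c-j})$ and $T(i^\prime_1,\dots,i^\prime_k;\lambda)^{(N+j+2)}=(1^{c-j-1})$. Plugging these column shapes into \textbf{Definition \ref{dfn223-2118}} makes every difference $\mu_i-\mu_{j_0}$, $\lambda_i-\lambda_{j_0+1}$, $\lambda_i-\mu_{j_0}$, $\mu_i-\lambda_{j_0+1}$ vanish on the range $1\le i\le j_0\le c-j-1$, so each factor equals $1$ and the whole $\psi$-product on the LHS is $1$. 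A parallel observation gives $\psi_{T^\prime_1(i_1,\dots,i_k;\lambda)}(t,q)=\psi_{(c)/\emptyset}(t,q)=1$, since $\ell(\emptyset)=0$ makes the defining product empty.

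Next I would compute the $H/H$ ratio directly from \textbf{Lemma \ref{lem245-1514-21apr}}. For $\mu=(1^c)$ the statistics are $n(\mu)=c(c-1)/2$, $n(\mu^\prime)=0$, every arm length is $0$, and the legs run over $\{0,1,\dots,c-1\}$; for $\mu^\prime=(c)$ the roles of $(q,a)$ and $(t,\ell)$ swap, so the $q^{c(c-1)/2}$ prefactors cancel and one obtains
\begin{align*}
\frac{H((1^c),q,t^{-1})}{H((c),t^{-1},q)}
=\frac{\prod_{k=0}^{c-1}(q-t^{-k})}{\prod_{k=1}^{c}(t^{-k}-1)}
=(-1)^c q^{c}\,\frac{\prod_{k=0}^{c-1}(1-q^{-1}t^{-k})}{\prod_{k=1}^{c}(1-t^{-k})}.
\end{align*}
Multiplying by $(q^{-1/2}t^{-1/2})^c$ brings the RHS to the closed form $(-1)^c q^{c/2}t^{-c/2}\prod_{k=0}^{c-1}(1-q^{-1}t^{-k})/\prod_{k=1}^{c}(1-t^{-k})$.

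Finally I would massage the LHS into the same form. Writing $q^{1/2}-q^{-1/2}=q^{1/2}(1-q^{-1})$ and $t^{-1/2}-t^{1/2}=-t^{1/2}(1-t^{-1})$ converts the first factor into $(-1)^c q^{c/2}t^{-c/2}(1-q^{-1})^c/(1-t^{-1})^c$. For the $a$-product, I would regroup the exponents by base factor: the base $(1-q^{-1}t^{-a})$ receives exponent $c-a$ from the numerator (indexed by $a=1,\dots,c-1$) and exponent $-(c-a-1)$ from the denominator factor $(1-q^{-1}t^{1-a})$ (after the shift $a\mapsto a+1$, covering $a=0,\dots,c-2$), giving net exponent $-(c-1)$ at $a=0$ and $+1$ at each of $a=1,\dots,c-1$; symmetrically the base $(1-t^{-a})$ collects exponent $c-1$ at $a=1$ and $-1$ at each of $a=2,\dots,c$. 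Combining with the scalar prefactor cancels the factors $(1-q^{-1})^{c-1}$ and $(1-t^{-1})^{c-1}$ exactly, yielding
\begin{align*}
\text{LHS}=(-1)^c\,q^{c/2}t^{-c/2}\,\frac{\prod_{a=0}^{c-1}(1-q^{-1}t^{-a})}{\prod_{a=1}^{c}(1-t^{-a})},
\end{align*}
which matches the RHS verbatim. The only non-routine step is the careful bookkeeping in the telescoping of the $a$-product, where one must track which base factors arise from shifted indices in the denominator versus the original numerator; everything else is mechanical application of the Macdonald hook formula and the defining product for $\psi_{\lambda/\mu}$.
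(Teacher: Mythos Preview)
Your proof is correct and follows essentially the same approach as the paper: both sides are reduced to the same explicit rational function by (i) showing the $\psi$-product on the left and $\psi_{T_1'}(t,q)$ on the right are each equal to $1$, (ii) evaluating $H(\mu,q,t^{-1})/H(\mu',t^{-1},q)$ from the hook formula, and (iii) telescoping the scalar prefactor. The paper writes the common value as $(q^{-1/2}t^{-1/2})^c\prod_{\beta=0}^{c-1}\frac{q-t^{-\beta}}{t^{-1-\beta}-1}$, while you normalize by pulling out signs and powers to express everything via factors $(1-q^{-1}t^{-a})$ and $(1-t^{-a})$; these are the same up to trivial rewriting.
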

\begin{proof}
From \textbf{Definition \ref{dfn223-2118}}, we know that for each $j \in \{1,\dots,c - 1\}$, we have 
\begin{align}
	\psi_{T(i^\prime_1,\dots,i^\prime_k ; \lambda)^{(N+j)}/ T(i^\prime_1,\dots,i^\prime_k ; \lambda)^{(N+j +1)}}(q,t) 
	= \prod_{1 \leq \alpha \leq  \beta \leq c - j}
	\frac{
		f_{q,t}(t^{j-i})
		f_{q,t}(t^{j-i})
	}{
		f_{q,t}(t^{j-i})
		f_{q,t}(t^{j-i})
	}
	= 1.
\end{align}
Therefore, 
\begin{align}
\psi_{T(i^\prime_1,\dots,i^\prime_k ; \lambda)^{(N + 1)}/T(i^\prime_1,\dots,i^\prime_k ; \lambda)^{(N + 2)}}(q,t)
\cdots
\psi_{T(i^\prime_1,\dots,i^\prime_k ; \lambda)^{(N+c-1)}/T(i^\prime_1,\dots,i^\prime_k ; \lambda)^{(N+c)}}(q,t)
= 1. 
\end{align}
Also, it is easy to show that 
\begin{align}
&\bigg(
\frac{q^{\frac{1}{2}} - q^{-\frac{1}{2}}}{
	t^{-\frac{1}{2}} - t^{\frac{1}{2}}
}
\bigg)^{c}
\times
\prod_{a = 1}^{c-1}
\Bigg[
\left(
\frac{(1 - q^{-1}t^{-a})(1 - t^{-a})}{(1 - t^{-1}t^{-a})(1 - q^{-1}tt^{-a})}
\right)^{c-a}
\Bigg]
=
\left(		q^{-1/2}t^{-1/2}			\right)^c
\prod_{\beta = 0}^{c - 1}
\Bigg[
\frac{
	q - t^{-\beta}
}{
	t^{-1-\beta} - 1
}
\Bigg]. 
\end{align}
Therefore, (LHS) of equation \eqref{eqn68-1542-10mar} is equal to 
\begin{align}
\left(		q^{-1/2}t^{-1/2}			\right)^c
\prod_{\beta = 0}^{c - 1}
\Bigg[
\frac{
	q - t^{-\beta}
}{
	t^{-1-\beta} - 1
}
\Bigg].
\end{align}

Next, let's consider (RHS) of equation \eqref{eqn68-1542-10mar}. Since $T^\prime_1(i_1,\dots,i_k ; \lambda) = (c)$, we get that
\begin{align}
\psi_{T^\prime_1(i_1,\dots,i_k ; \lambda)}(t,q) = 1. 
\end{align}
Since $\mu = (\underbrace{		1,\dots,1			}_{c})$, we get that 
\begin{align}
	H(\mu,q,t) &= q^{\frac{c(c-1)}{2}} (q - t^{c-1}) \cdots (q - t)(q - 1),
	\\
	H(\mu^\prime,q,t) &= t^{\frac{c(c-1)}{2}}(q^c - 1)\cdots (q^2 - 1)(q-1). 
\end{align}
Thus, (RHS) of equation \eqref{eqn68-1542-10mar} is equal to 
\begin{align}
	\frac{
		H(\mu,q,t^{-1})
	}{
		H(\mu^\prime,t^{-1},q)
	}
	\left(q^{-\frac{1}{2}}t^{-\frac{1}{2}}\right)^{c}
	&=
	\frac{
		(q - t^{-(c-1)}) \cdots (q - t^{-1})(q - 1)
	}{
		(t^{-c} - 1)\cdots (t^{-2} - 1)(t^{-1}-1)
	}
	\left(q^{-\frac{1}{2}}t^{-\frac{1}{2}}\right)^{c}
	\\
	&= 
	\left(		q^{-1/2}t^{-1/2}			\right)^c
	\prod_{\beta = 0}^{c - 1}
	\Bigg[
	\frac{
		q - t^{-\beta}
	}{
		t^{-1-\beta} - 1
	}
	\Bigg].
\end{align}
So, we have shown that (LHS) and (RHS) of equation \eqref{eqn68-1542-10mar} are equal. 
\end{proof}

From \textbf{Definition \ref{dfn223-2118}}, we know that 
\begin{align}
	&\psi_{T_0(i_1,\dots,i_k ; \lambda)}(q,t)
	\\
	&= 
	\psi_{T(i_1,\dots,i_k;\lambda)^{(1)} / T(i_1,\dots,i_k ; \lambda)^{(2)}}(q,t)
	\cdots
	\psi_{T(i_1,\dots,i_k;\lambda)^{(N - 1)} / T(i_1,\dots,i_k ; \lambda)^{(N)}}(q,t)
	\psi_{T(i_1,\dots,i_k ; \lambda)^{(N)} / \mu }(q,t).
	\notag 
\end{align}
However, from equations \eqref{eqn21-1433-6dec} and \eqref{eqn62-1517-10mar}, it is clear that for each $j \in \{1,2,\dots,N\}$, 
\begin{align}
	T(i_1,\dots,i_k ; \lambda)^{(j)}
	= 
	T(i^\prime_1,\dots,i^\prime_k ; \lambda)^{(j)}, 
\end{align}
and $\mu = T(i^\prime_1,\dots,i^\prime_k ; \lambda)^{(N+1)}$. 
Thus, we get that 
\begin{align}
	&\psi_{T_0(i_1,\dots,i_k ; \lambda)}(q,t)
	\label{eqn611-1542-10mar}
	\\
	&= 
	\psi_{T(i^\prime_1,\dots,i^\prime_k ; \lambda)^{(1)}/T(i^\prime_1,\dots,i^\prime_k ; \lambda)^{(2)}}(q,t)
	\cdots
	\psi_{T(i^\prime_1,\dots,i^\prime_k ; \lambda)^{(N-1)}/T(i^\prime_1,\dots,i^\prime_k ; \lambda)^{(N)}}(q,t)
	\psi_{T(i^\prime_1,\dots,i^\prime_k ; \lambda)^{(N)}/T(i^\prime_1,\dots,i^\prime_k ; \lambda)^{(N + 1)}}(q,t). 
	\notag 
\end{align}

From equations \eqref{eqn215-1512-6dec}, \eqref{eqn68-1542-10mar}, and \eqref{eqn611-1542-10mar}, we deduce that 
\begin{align}
	&\lim_{\xi \rightarrow t^{-1}}\,\,
	(
	\dualmap
	\comp 
	\bigg|_{
		\substack{
			q_1 = q, \\
			q_2 = q^{-1}t,\\
			q_3 = t^{-1} \\
		}
	}
	)
	\bigg[
	\bigg(
	\frac{q_1^{\frac{1}{2}} - q_1^{-\frac{1}{2}}}{
		q_3^{\frac{1}{2}} - q_3^{-\frac{1}{2}}
	}
	\bigg)^{|\mu|}
	\cals{N}_\lambda(z_1,\dots,z_k)
	\prod_{1 \leq a < b \leq k}
	\cals{D}^{(i_a,i_b)}\left(
	\frac{z_b}{z_a}
	; q, t
	\right)
	\bigg]
	\\
	&= 
	\psi_{T^\prime_1(i_1,\dots,i_k ; \lambda)}(t,q)
	\psi_{T_0(i_1,\dots,i_k ; \lambda)}(q,t)
	\frac{
		H(\mu,q,t^{-1})
	}{
		H(\mu^\prime,t^{-1},q)
	}
	\left(q^{-\frac{1}{2}}t^{-\frac{1}{2}}\right)^{c}
	\notag 
	\\
	&= 
	\widetilde{\psi}_{T(i_1,\dots,i_k ; \lambda)}(q,t^{-1})
	\left(q^{-\frac{1}{2}}t^{-\frac{1}{2}}\right)^{c}.
	\label{eqn622-1215-11mar}
\end{align}
It is important to note that to obtain equation \eqref{eqn622-1215-11mar}, we used \textbf{Corollary \ref{cor226-1214-11mar}} to conclude that 
\begin{align}
\psi_{T^\prime_1(i_1,\dots,i_k ; \lambda)}(t,q) = \psi_{T^\prime_1(i_1,\dots,i_k ; \lambda)}(t^{-1},q^{-1}). 
\end{align} 
Thus, we have proved the basis step $M = 1$.

\subsection{Inductive step}

Now, assume that \textbf{Lemma \ref{lemm44-1106-29jan}} holds for $1, \dots, M - 1$ where $M \geq 2$. 
Consider $(i_1,\dots,i_k) \in \operatorname{RSSYBT}(N,M;\lambda)$ such that $T_1(i_1,\dots,i_k)$ has shape $\mu$. If $\{	i_1,\dots,i_k			\}$ is a proper subset of $\{1,\dots,N+M\}$, then we can apply \textbf{Proposition \ref{prp226-1326-15feb}} and the induction hypothesis to conclude that \textbf{Lemma \ref{lemm44-1106-29jan}} holds for this case. Thus, we assume that $\{	i_1,\dots,i_k			\} = \{1,\dots,N+M\}$. Under this assumption, there must be boxes labeled by the super number $N+1$ in the reverse SSYBT $T(i_1,\dots,i_k)$. 

Next, we analyze the possible positions of the boxes labeled by super number $N+1$ in $T_1(i_1,\dots,i_k)$. According to \textbf{Definition \ref{dfn211-1443}} of a reverse SSYBT, 
we know that a box containing the super number $N+1$ must be the rightmost box in any row of $T_1(i_1,\dots,i_k)$ that contains it. This is illustrated in figure \ref{fig3-1516-11mar} below. The blue boxes in figure \ref{fig3-1516-11mar} represent the possible positions of the boxes containing the super number $N+1$ in $T_1(i_1,\dots,i_k)$ with shape $(6,6,6,3,3,2,1)$. 

\begin{figure}[!h]
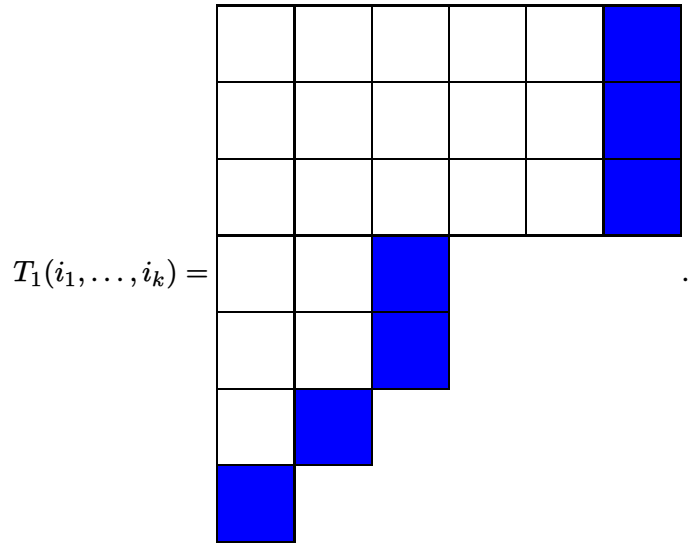

	\centering
	\small 
	\begin{align*}
		T_1(i_1,\dots,i_k) 
		= 
		\ydiagram[*(blue)]
		{5+1,5+1,5+1,2+1,2+1,1+1,0+1}
		*[*(white)]{6,6,6,3,3,2,1}.
	\end{align*}
	\normalsize
	\caption{The blue boxes represent the possible positions of the boxes with super number $N+1$.}
	\label{fig3-1516-11mar}
\end{figure}

From the preceding explanation, the following lemma is immediately obtained. 

\begin{lem}
Let 
$(i_1,\dots,i_k) \in \operatorname{RSSYBT}(N,M;\lambda)$ where $T_1(i_1,\dots,i_k) \neq \emptyset$. Then, the following statements hold:
\begin{enumerate}[(1)]
\item If a row interval of $T_1(i_1,\dots,i_k)$ contains boxes labeled by the super number $N+1$, these boxes must be located in the rightmost column of that row interval. 
\item Suppose that $\{a+1,\dots,a+n\}$ is the set of rows comprising 
a row interval of $T_1(i_1,\dots,i_k)$. If $a+j$ is the first row in this row interval containing a box with the 
super number $N+1$, then all rows $a+j+1,\dots,a+n$ must also contain boxes with super number $N+1$. 
\end{enumerate}
\end{lem}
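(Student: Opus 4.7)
\medskip

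\textbf{Proof proposal.} The plan is to derive both statements directly from the combinatorial axioms of a reverse SSYBT listed in \textbf{Definition \ref{dfn211-1443}}, using only the three facts that (a) $N+1$ is the smallest super number, (b) super numbers in each row are strictly decreasing from left to right, and (c) numbers in each column are weakly decreasing from top to bottom. The notion of a row interval of $T_1(i_1,\dots,i_k)$ will be understood exactly as in the partition $\mu$ (the shape of $T_1$): a maximal set of consecutive rows $a+1,\dots,a+n$ with common length $\mu_{a+1}=\cdots=\mu_{a+n}$.

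For statement (1), I would fix a row interval $\{a+1,\dots,a+n\}$ of $T_1$ and suppose that some row $a+j$ in this interval contains a box labeled by $N+1$. Within that row of the full tableau $T(i_1,\dots,i_k)$, any entry strictly to the right of this $N+1$ box must be $\leq N+1$ by the weak row-decrease condition. Since a super number $\beta$ strictly to the right of $N+1$ would have to satisfy $\beta < N+1$ by condition (4) of \textbf{Definition \ref{dfn211-1443}}, which is impossible, only ordinary numbers (all $\leq N$) can occupy boxes to the right of the $N+1$ box. Consequently, the $N+1$ entry is the rightmost super number in its row, so it sits in column $\mu_{a+j}$ of $T_1$. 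Because $\mu_{a+1}=\cdots=\mu_{a+n}$, this column $\mu_{a+j}$ is exactly the rightmost column of the row interval.

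For statement (2), I would use (1) to place the first $N+1$ at position $(a+j,\mu_{a+j})$ and then run a column argument downward. For any $i$ with $j < i \leq n$, the box at $(a+i,\mu_{a+i})$ belongs to $T_1$ (since $\mu_{a+i}=\mu_{a+j}$), hence it is filled with a super number, i.e. a value $\geq N+1$. On the other hand, the column weak-decrease condition (3) applied to the chain from $(a+j,\mu_{a+j})$ down to $(a+i,\mu_{a+i})$ forces the entry there to be $\leq N+1$. The only possibility is $N+1$, which proves the claim.

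I do not expect any genuine obstacle here: both statements are immediate syntactic consequences of the axioms of \textbf{Definition \ref{dfn211-1443}}, and the only conceptual point is recognizing that a row interval of $T_1$ has constant length so that ``same column of $T_1$'' and ``same column of the underlying Young diagram'' coincide for the rows of that interval. The only mild care to take is that the $N+1$ box is indeed the rightmost box of its row \emph{in $T_1$} (not necessarily in $T$), which is exactly what column $\mu_{a+j}$ means.
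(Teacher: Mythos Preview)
Your proposal is correct and follows essentially the same approach as the paper. The paper does not give a separate proof of this lemma at all: it simply states, just before the lemma, that by \textbf{Definition \ref{dfn211-1443}} a box with super number $N+1$ must be the rightmost box in any row of $T_1(i_1,\dots,i_k)$ that contains it, and then declares that ``from the preceding explanation, the following lemma is immediately obtained.'' Your argument is precisely a clean unpacking of that sentence, using the same three ingredients (smallest super number, strict row-decrease among super numbers, weak column-decrease) that the definition provides.
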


\begin{assump}
\label{assum65-1451-18mar}
Recall that $T_1(i_1,\dots,i_k)$ has shape $\mu$. As explained in equation \eqref{eqn21-1749-1jan}, we can express $\mu$ as 
\begin{align}
	\mu = (\mu_1^{m_1}\mu_{m_1 + 1}^{m_2}\mu_{m_1 + m_2 + 1}^{m_3} \cdots \mu_{m_1 + \cdots + m_{n-1} + 1}^{m_n})
\end{align}
where $m_1,\dots,m_n \in \bb{Z}^{\geq 1}$ and $\mu_1, \mu_{m_1 + 1}, \mu_{m_1 + m_2 + 1}, \dots, \mu_{m_1 + \cdots + m_{n-1} + 1}$ are distinct positive integers. We assume that the row intervals containing at least one box with the super number $N+1$ are $\gamma_1,\dots,\gamma_r$ where $\gamma_1 < \gamma_2 < \dots < \gamma_r$. Furthermore, for each $j \in \{1,\dots,r\}$, we assume there are exactly $c_j$ boxes with the super number $N+1$ within the row interval $\gamma_j$. For future convenience, we define $c := c_1 + \cdots + c_r$. 
\end{assump}

\begin{rem}
It is important to note that all propositions and lemmas that appear in this section from this point onwards, as well as in appendix \ref{appC-1214-13mar}, are subject to \textbf{Assumption \ref{assum65-1451-18mar}}. 
\end{rem}

The figure \ref{fig4-1039-12mar} below illustrates the \textbf{Assumption \ref{assum65-1451-18mar}}. 

\begin{figure}[!h]
	\centering
	\small 
	\begin{tabular}{r@{}l}
		\begin{ytableau}
			i_1 & i_2 & i_3 & \none[\dots]
			& i_{\lambda_1 - 1} & i_{\lambda_1} \\
			i_{\lambda_1 + 1} & i_{\lambda_1 + 2} & i_{\lambda_1 + 3} & \none[\dots]
			& i_{\lambda_2 - 1} & i_{\lambda_2} \\
			i_{\lambda_2 + 1} & i_{\lambda_2 + 2} & i_{\lambda_2 + 3} & \none[\dots]
			& i_{\lambda_3} \\
			i_{\lambda_3 + 1} & i_{\lambda_3 + 2} & i_{\lambda_3 + 3} & \none[\dots]
			& \textcolor{blue}{N+1}
			\\
			i_{\lambda_4 + 1} & i_{\lambda_4 + 2} & i_{\lambda_4 + 3} & \none[\dots]
			& \textcolor{blue}{N+1} \\
			i_{\lambda_5 + 1} & i_{\lambda_5 + 2} & i_{\lambda_6} \\
			i_{\lambda_6 + 1} & i_{\lambda_6 + 2} & i_{\lambda_7} \\
			i_{\lambda_7 + 1} & i_{\lambda_7 + 2} & \textcolor{blue}{N+1} \\
			\textcolor{blue}{N+1} 
		\end{ytableau}
		\raisebox{-25.2ex}{$
			\left.
			\vphantom{
				\begin{array}{c}~\\[0.6ex] ~
				\end{array}
			}
			\right\}c_3 = 1$} &
		\raisebox{-19.2ex}{$
			\left.
			\vphantom{
				\begin{array}{c}~\\[0.6ex] ~
				\end{array}
			}
			\right\}c_2 = 1$} 
		\raisebox{2.9ex}{$
			\left.
			\vphantom{
				\begin{array}{c}~\\[5ex] ~
				\end{array}
			}
			\right\}c_1 = 2$} 
	\end{tabular}
	\caption{The reverse SSYBT $T_1(i_1,\dots,i_k)$ contains boxes assigned the super number $N+1$ within the row intervals $\gamma_1 = 2, \gamma_2 =3$, and $\gamma_3 = 4$. The number of boxes in each interval is $c_1 = 2, c_2 = 1,$ and $c_3 = 1$, respectively.}
	\label{fig4-1039-12mar}
\end{figure}

We now construct $(\widetilde{i}_1,\dots,\widetilde{i}_k) \in \operatorname{RSSYBT}(N+c,M-1;\lambda)$ from the given $(i_1,\dots,i_k) \in \operatorname{RSSYBT}(N,M;\lambda)$ using the following procedure:
\begin{enumerate}[(1)]
\item Convert each super numbers $N+1$ in $(i_1,\dots,i_k)$ to the ordinary numbers $N+1, N+2, \dots, N+c$ in $(\widetilde{i}_1,\dots,\widetilde{i}_k)$, performed sequentially from bottom to top. 
\item For each $j \in \{2,\dots,M\}$, convert the super number $N+j$ in $(i_1,\dots,i_k)$ to the super number $N+c+j-1$ in $(\widetilde{i}_1,\dots,\widetilde{i}_k)$. 
\item The ordinary numbers $1,\dots,N$ from $(i_1,\dots,i_k)$ are retained without modification in $(\widetilde{i}_1,\dots,\widetilde{i}_k)$. 
\end{enumerate}

Figure \ref{fig5-1039-12mar} below illustrates this procedure. 
\begin{figure}[!h]
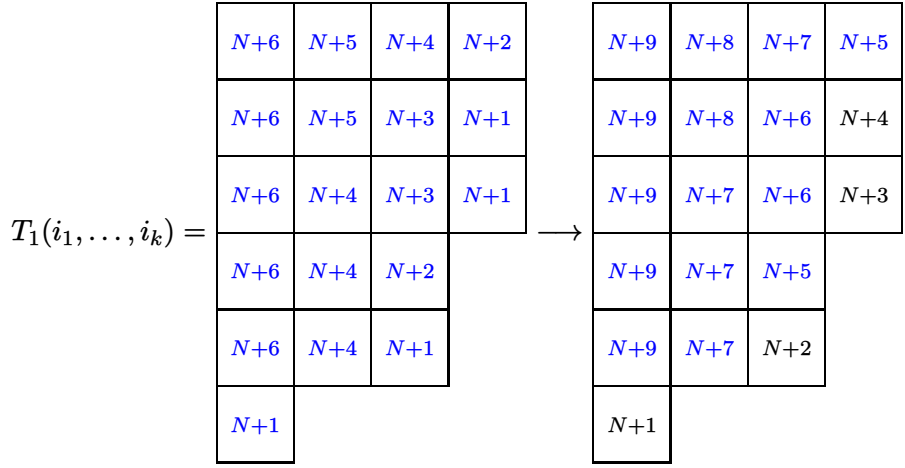

	\centering
	\small 
	\begin{tabular}{r@{}l}
		$T_1(i_1,\dots,i_k) =$
		\begin{ytableau}
			\scriptstyle \textcolor{blue}{N+6} & \scriptstyle \textcolor{blue}{N+5} & \scriptstyle \textcolor{blue}{N+4}
			& \scriptstyle \textcolor{blue}{N+2} \\
			\scriptstyle \textcolor{blue}{N+6} & \scriptstyle \textcolor{blue}{N+5} & \scriptstyle \textcolor{blue}{N+3}
			& \scriptstyle \textcolor{blue}{N+1}
			\\
			\scriptstyle \textcolor{blue}{N+6} & \scriptstyle \textcolor{blue}{N+4} & \scriptstyle \textcolor{blue}{N+3}
			& \scriptstyle \textcolor{blue}{N+1} \\
			\scriptstyle	\textcolor{blue}{N+6} & \scriptstyle\textcolor{blue}{N+4} &\scriptstyle\textcolor{blue}{N+2}\\
			\scriptstyle\textcolor{blue}{N+6} & \scriptstyle\textcolor{blue}{N+4} & \scriptstyle\textcolor{blue}{N+1} \\
			\scriptstyle\textcolor{blue}{N+1} 
		\end{ytableau}
		$\longrightarrow$
		\begin{ytableau}
			\scriptstyle \textcolor{blue}{N+9} & \scriptstyle \textcolor{blue}{N+8} & \scriptstyle \textcolor{blue}{N+7}
			& \scriptstyle \textcolor{blue}{N+5} \\
			\scriptstyle \textcolor{blue}{N+9} & \scriptstyle \textcolor{blue}{N+8} & \scriptstyle \textcolor{blue}{N+6}
			& \scriptstyle N+4
			\\
			\scriptstyle \textcolor{blue}{N+9} & \scriptstyle \textcolor{blue}{N+7} & \scriptstyle \textcolor{blue}{N+6}
			& \scriptstyle N+3 \\
			\scriptstyle	\textcolor{blue}{N+9} & \scriptstyle\textcolor{blue}{N+7} &\scriptstyle\textcolor{blue}{N+5}\\
			\scriptstyle\textcolor{blue}{N+9} & \scriptstyle\textcolor{blue}{N+7} & \scriptstyle N+2 \\
			\scriptstyle N+1
		\end{ytableau}
	\end{tabular}
	\caption{This figure illustrates the procedure of converting super number part of $(i_1,\dots,i_k)$}
	\label{fig5-1039-12mar}
\end{figure}

\begin{prop}
If $\widetilde{\mu}$ is the shape of $T_1(\widetilde{i}_1,\dots,\widetilde{i}_k)$, then 
\begin{align}
\widetilde{\mu} = \mu \backslash \{	\text{ the boxes labeled by super number $N+1$ in $(i_1,\dots,i_k)$}		\},
\end{align}
where $\mu$ is the shape of $T_1(i_1,\dots,i_k)$. 
\end{prop}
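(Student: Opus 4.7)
The strategy is to read off the super number part of $(\widetilde{i}_1,\dots,\widetilde{i}_k)$ directly from the construction, and then verify that the resulting skew shape is indeed the Young diagram of a partition. By step (1) of the construction the boxes of $(i_1,\dots,i_k)$ that carry the super number $N+1$ become ordinary numbers $N+1,\dots,N+c$ in $(\widetilde{i}_1,\dots,\widetilde{i}_k)$, and by steps (2) and (3) every other box retains its classification (ordinary vs. super). Hence the super number boxes of $(\widetilde{i}_1,\dots,\widetilde{i}_k)$ are exactly the super number boxes of $(i_1,\dots,i_k)$ whose label is strictly greater than $N+1$. As a set of boxes in $\lambda$, this is precisely $\mu \setminus \{\text{boxes labeled } N+1\}$, so once well-definedness is established the set-theoretic equality is automatic.

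The step that requires genuine argument is showing that this remaining set of boxes is the Young diagram of a partition. For this I will analyze where an $N+1$ can sit inside $T_1(i_1,\dots,i_k)$ using the reverse SSYBT axioms of \textbf{Definition \ref{dfn211-1443}}. First, since super numbers strictly decrease along each row and $N+1$ is the smallest super number, at most one $N+1$ occurs per row, and it must sit at the rightmost box of that row in $T_1$. Second, the column rule for super numbers is only weakly decreasing; consequently, if the box at $(i,\mu_i)$ holds $N+1$ and the box at $(i+1,\mu_i)$ lies in $T_1$ (which happens exactly when $\mu_{i+1}=\mu_i$, i.e. rows $i$ and $i+1$ are inside the same row interval of $\mu$), then its entry is a super number that is $\le N+1$ and hence equals $N+1$.

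Combining these two observations, within each row interval $\gamma_j$ the $N+1$-boxes occupy precisely the bottom $c_j$ rows of that interval; this is the key structural fact. Removing a single box from the right end of such a contiguous bottom block of rows of equal length preserves the weak-decrease of row lengths: within the interval, the $c_j$ affected rows are all shortened by one and therefore remain mutually equal, and they remain $\le$ the unaffected rows above (still of length $\mu_{\gamma_j}$); at the boundary with the next row interval, $\mu_{\gamma_j}>\mu_{\gamma_j+1}$ already gives $\mu_{\gamma_j}-1\ge \mu_{\gamma_j+1}$, so the drop to the next interval survives. Hence the box-set $\mu\setminus\{\text{$N+1$-boxes}\}$ is a Young diagram of a partition, which must then equal $\widetilde{\mu}$.

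The only subtle point, which I have just addressed, is the column analysis: without the observation that the weak column decrease for super numbers forces $N+1$-boxes in a row interval to form a contiguous bottom block, one could not rule out the pathological configuration in which removing an $N+1$ from row $i$ while retaining a super box in row $i+1$ at the same column breaks the partition shape. Everything else is bookkeeping of the construction, so the proof will be short once this structural lemma is stated precisely.
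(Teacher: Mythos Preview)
Your argument is correct, but you are doing more work than the statement requires, and the extra work you do was already done earlier in the paper. The proposition asserts only the \emph{set-theoretic} identification of the super-number boxes of $(\widetilde{i}_1,\dots,\widetilde{i}_k)$; this is literally read off from steps (1)--(3) of the construction, which is why the paper's proof is the single line ``Follows directly from the procedure described above.'' Your first paragraph already establishes exactly this, and matches the paper.

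The additional content of your proposal --- the structural analysis showing that $N+1$'s sit in the rightmost column of each row interval and occupy a contiguous bottom block, hence that removing them yields a partition shape --- is correct but redundant here: it is precisely the content of the unnumbered lemma stated just before \textbf{Assumption \ref{assum65-1451-18mar}} (the one beginning ``Let $(i_1,\dots,i_k) \in \operatorname{RSSYBT}(N,M;\lambda)$ where $T_1(i_1,\dots,i_k) \neq \emptyset$''). That lemma is what guarantees both that $(\widetilde{i}_1,\dots,\widetilde{i}_k)$ lands in $\operatorname{RSSYBT}(N+c,M-1;\lambda)$ and that $\widetilde{\mu}$ is a partition; the paper has already absorbed those facts by the time this proposition appears. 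So your proof is sound, just longer than needed because you re-derive a lemma the paper states separately.
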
  
\begin{proof}
Follows directly from the procedure described above. 
\end{proof}

\begin{lem}
\label{lem66-2142-12mar}
Let $\operatorname{Pos}((i_1,\dots,i_k);\alpha) =
\{	\beta \in \{1,\dots,k\} ~|~ i_\beta = \alpha		\}
$. Then, 
\small 
\begin{gather}
	\underbrace{			\prod_{1 \leq a < b \leq k}									}_{
		\substack{
		a,b \notin \operatorname{Pos}((i_1,\dots,i_k); N + 1)
		}
	}
	\cals{D}^{(\widetilde{i}_a,\widetilde{i}_b)}\left(
	\frac{z_b}{z_a}
	; q, t
	\right)
	= 
	\underbrace{			\prod_{1 \leq a < b \leq k}									}_{
		\substack{
		a,b \notin \operatorname{Pos}((i_1,\dots,i_k); N + 1)
		}
	}
	\cals{D}^{(i_a,i_b)}\left(
	\frac{z_b}{z_a}
	; q, t
	\right),
	\label{eqn219-1540-12dec}
	\\
	\underbrace{			\prod_{1 \leq a < b \leq k}									}_{
		\substack{
			a \in \operatorname{Pos}((i_1,\dots,i_k); N + 1)
			\\
			b \notin \operatorname{Pos}((i_1,\dots,i_k); N + 1)
		}
	}
	\cals{D}^{(\widetilde{i}_a,\widetilde{i}_b)}\left(
	\frac{z_b}{z_a}
	; q, t
	\right)
	= 
	\underbrace{			\prod_{1 \leq a < b \leq k}									}_{
		\substack{
			a \in \operatorname{Pos}((i_1,\dots,i_k); N + 1)
			\\
			b \notin \operatorname{Pos}((i_1,\dots,i_k); N + 1)
		}
	}
	\cals{D}^{(i_a,i_b)}\left(
	\frac{z_b}{z_a}
	; q, t
	\right),
	\\
	\underbrace{			\prod_{1 \leq a < b \leq k}									}_{
		\substack{
			a \notin \operatorname{Pos}((i_1,\dots,i_k); N + 1)
			\\
			b \in \operatorname{Pos}((i_1,\dots,i_k); N + 1)
		}
	}
	\cals{D}^{(\widetilde{i}_a,\widetilde{i}_b)}\left(
	\frac{z_b}{z_a}
	; q, t
	\right)
	= 
	\underbrace{			\prod_{1 \leq a < b \leq k}									}_{
		\substack{
			a \notin \operatorname{Pos}((i_1,\dots,i_k); N + 1)
			\\
			b \in \operatorname{Pos}((i_1,\dots,i_k); N + 1)
		}
	}
	\cals{D}^{(i_a,i_b)}\left(
	\frac{z_b}{z_a}
	; q, t
	\right),
\end{gather}
and 
\begin{align*}
&\bigg|_{
	\substack{
		q_1 = q, \\
		q_2 = q^{-1}t,\\
		q_3 = t^{-1} \\
	}
}
\underbrace{		\prod_{1 \leq a < b \leq k}			}_{
	\substack{
		a,b \in \operatorname{Pos}((i_1,\dots,i_k); N + 1)
	}
}
\cals{D}^{(\widetilde{i}_a,\widetilde{i}_b)}\left(
\frac{z_b}{z_a}
; q, t
\right)
\\
&= 
\underbrace{		\prod_{1 \leq a < b \leq k}			}_{
	\substack{
		a,b \in \operatorname{Pos}((i_1,\dots,i_k); N + 1)
	}
}
\frac{
	\left(1 - t^{-1}\frac{z_b}{z_a}\right)
	\left(1 - q^{-1}t\frac{z_b}{z_a}\right)
}{
	\left(1 - q^{-1}\frac{z_b}{z_a}\right)
	\left(1 - \frac{z_b}{z_a}\right)
}
\times 
\bigg|_{
	\substack{
		q_1 = q, \\
		q_2 = q^{-1}t,\\
		q_3 = t^{-1} \\
	}
}
\underbrace{		\prod_{1 \leq a < b \leq k}			}_{
	\substack{
		a,b \in \operatorname{Pos}((i_1,\dots,i_k); N + 1)
	}
}
\cals{D}^{(i_a,i_b)}\left(
\frac{z_b}{z_a}
; q, t
\right). 
\end{align*}
\end{lem}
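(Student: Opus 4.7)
The plan is to reduce each of the four identities to a pair-by-pair comparison between the factor $\cals{D}^{(\widetilde{i}_a,\widetilde{i}_b)}$ and $\cals{D}^{(i_a,i_b)}$. The key observation I would invoke is that the definition \eqref{eqn126-1300-4dec} depends only on (a) the strict order between $i$ and $j$ (one of $i<j$, $i=j$, $i>j$) and, in the equal case, (b) whether the common value is ordinary or super. The strategy is therefore to verify, case by case, that the conversion $(i_1,\dots,i_k) \mapsto (\widetilde{i}_1,\dots,\widetilde{i}_k)$ preserves whichever of these discriminating features is relevant for each pair.

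For the first three identities, my first step would be to check that the conversion is order-preserving on every pair $(a,b)$ having at least one coordinate outside $\operatorname{Pos}((i_1,\dots,i_k);N+1)$. This reduces to a routine case check: ordinary entries $\leq N$ are fixed by rule (3); super entries $N+j$ with $j\geq 2$ are monotonically shifted to super entries $N+c+j-1$ by rule (2); and a position carrying $i_a=N+1$ becomes ordinary with value in $\{N+1,\dots,N+c\}$, which is strictly greater than every preserved ordinary value $\leq N$ and strictly less than every shifted super value $\geq N+c+1$. In the second and third identities exactly one of $i_a, i_b$ equals $N+1$, so $i_a \neq i_b$ and the ``$i<j$'' or ``$i>j$'' branch of \eqref{eqn126-1300-4dec} produces a term-by-term match. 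In the first identity, equal pairs $i_a = i_b$ are either both ordinary $\leq N$ or both super of index $\geq 2$, so the super-or-ordinary type is also preserved.

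The remaining case is the fourth identity, where $i_a = i_b = N+1$ and so $\cals{D}^{(i_a,i_b)}$ is evaluated by the ``super $=$ super'' branch of \eqref{eqn126-1300-4dec}. The crucial geometric input is that any two positions carrying the super number $N+1$ must lie in distinct rows, since condition (4) of \textbf{Definition \ref{dfn211-1443}} forces super numbers within a single row to be strictly decreasing. Hence the bottom-up rule (1) assigns distinct ordinary values in $\{N+1,\dots,N+c\}$ to $\widetilde{i}_a$ and $\widetilde{i}_b$, with the spatially higher box receiving the larger label. Since $a<b$ in reading order corresponds to the box indexed by $a$ occupying a row strictly above that of $b$, we obtain $\widetilde{i}_a > \widetilde{i}_b$, so $\cals{D}^{(\widetilde{i}_a,\widetilde{i}_b)}$ is evaluated by the ``$i>j$'' branch.

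It then remains to evaluate the ratio of the ``$i>j$'' and ``super $=$ super'' formulas under $q_1=q$, $q_2=q^{-1}t$, $q_3=t^{-1}$. A direct division gives
\begin{align*}
\frac{\cals{D}^{(\widetilde{i}_a,\widetilde{i}_b)}(z_b/z_a;q,t)}{\cals{D}^{(i_a,i_b)}(z_b/z_a;q,t)}
=
\frac{(1-q_2\tfrac{z_b}{z_a})(1-q_3\tfrac{z_b}{z_a})}{(1-q_1^{-1}\tfrac{z_b}{z_a})(1-\tfrac{z_b}{z_a})}
=
\frac{(1-t^{-1}\tfrac{z_b}{z_a})(1-q^{-1}t\tfrac{z_b}{z_a})}{(1-q^{-1}\tfrac{z_b}{z_a})(1-\tfrac{z_b}{z_a})},
\end{align*}
which is precisely the multiplier displayed in the fourth equation. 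Taking the product over $a<b$ in $\operatorname{Pos}((i_1,\dots,i_k);N+1)$ then completes the argument. The only substantive point is the geometric observation that distinct super-$N+1$ positions occupy distinct rows; everything else is a clean case analysis followed by one explicit simplification, so I do not anticipate any serious obstacle.
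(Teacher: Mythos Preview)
Your argument is correct and is precisely the case analysis that the paper leaves implicit: the paper's proof consists of the single sentence ``The formulas are evident from the construction procedure of $(\widetilde{i}_1,\dots,\widetilde{i}_k)$,'' and what you have written is exactly the verification that makes this evident. In particular, your geometric observation that two positions carrying the super number $N+1$ must lie in distinct rows, together with the bottom-up labeling rule, is the only point with any content, and you handle it correctly.
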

\begin{proof}
The formulas are evident from the construction procedure of $(\widetilde{i}_1,\dots,\widetilde{i}_k)$. 
\end{proof}

From \textbf{Lemma \ref{lem66-2142-12mar}}, we know that 
\begin{align}
&\lim_{\xi \rightarrow t^{-1}}\,\,
(
\dualmap
\comp 
\bigg|_{
	\substack{
		q_1 = q, \\
		q_2 = q^{-1}t,\\
		q_3 = t^{-1} \\
	}
}
)
\bigg[
\bigg(
\frac{q_1^{\frac{1}{2}} - q_1^{-\frac{1}{2}}}{
	q_3^{\frac{1}{2}} - q_3^{-\frac{1}{2}}
}
\bigg)^{|\mu|}
\cals{N}_\lambda(z_1,\dots,z_k)
\prod_{1 \leq a < b \leq k}
\cals{D}^{(i_a,i_b)}\left(
\frac{z_b}{z_a}
; q, t
\right)
\bigg]
\\
&= 
\lim_{\xi \rightarrow t^{-1}}\,\,
(
\dualmap
\comp 
\bigg|_{
	\substack{
		q_1 = q, \\
		q_2 = q^{-1}t,\\
		q_3 = t^{-1} \\
	}
}
)
\bigg[
\bigg(
\frac{q_1^{\frac{1}{2}} - q_1^{-\frac{1}{2}}}{
	q_3^{\frac{1}{2}} - q_3^{-\frac{1}{2}}
}
\bigg)^{|\mu|}
\cals{N}_\lambda(z_1,\dots,z_k)
\prod_{1 \leq a < b \leq k}
\cals{D}^{(\widetilde{i}_a,\widetilde{i}_b)}\left(
\frac{z_b}{z_a}
; q, t
\right)
\bigg]
\notag 
\\
&\hspace{0.3cm}\times
\lim_{\xi \rightarrow t^{-1}}\,\,
\dualmap
\bigg[
\underbrace{		\prod_{1 \leq a < b \leq k}			}_{
	\substack{
		a,b \in \operatorname{Pos}((i_1,\dots,i_k); N + 1)
	}
}
\frac{
	\left(1 - q^{-1}\frac{z_b}{z_a}\right)
	\left(1 - \frac{z_b}{z_a}\right)
}{
	\left(1 - t^{-1}\frac{z_b}{z_a}\right)
	\left(1 - q^{-1}t\frac{z_b}{z_a}\right)
}
\bigg]. 
\notag 
\end{align}
However, we know from induction hypothesis that 
\begin{align}
&\lim_{\xi \rightarrow t^{-1}}\,\,
(
\dualmap
\comp 
\bigg|_{
	\substack{
		q_1 = q, \\
		q_2 = q^{-1}t,\\
		q_3 = t^{-1} \\
	}
}
)
\bigg[
\bigg(
\frac{q_1^{\frac{1}{2}} - q_1^{-\frac{1}{2}}}{
	q_3^{\frac{1}{2}} - q_3^{-\frac{1}{2}}
}
\bigg)^{|\mu| - c}
\cals{N}_\lambda(z_1,\dots,z_k)
\prod_{1 \leq a < b \leq k}
\cals{D}^{(\widetilde{i}_a,\widetilde{i}_b)}\left(
\frac{z_b}{z_a}
; q, t
\right)
\bigg]
\\
&=
\widetilde{\psi}_{T(\widetilde{i}_1,\dots, \widetilde{i}_k;\lambda)}(q,t^{-1})
\left(q^{-\frac{1}{2}}t^{-\frac{1}{2}}\right)^{|\mu| - c}. 
\notag 
\end{align}
Therefore, to prove the inductive step, we have to show that 
\begin{align}
&\frac{1}{
	\left(	q^{-\frac{1}{2}}t^{-\frac{1}{2}}			\right)^c
}
\left(		\frac{q^{\frac{1}{2}} - q^{- \frac{1}{2}}}{t^{-\frac{1}{2}} - t^{\frac{1}{2}}}			\right)^c
\lim_{\xi \rightarrow t^{-1}}\,\,
\dualmap
\bigg[
\underbrace{		\prod_{1 \leq a < b \leq k}			}_{
	\substack{
		a,b \in \operatorname{Pos}((i_1,\dots,i_k); N + 1)
	}
}
\frac{
	\left(1 - q^{-1}\frac{z_b}{z_a}\right)
	\left(1 - \frac{z_b}{z_a}\right)
}{
	\left(1 - t^{-1}\frac{z_b}{z_a}\right)
	\left(1 - q^{-1}t\frac{z_b}{z_a}\right)
}
\bigg]
\label{eqn630-1637-12mar}
\\
&= 
\frac{
	\widetilde{\psi}_{T(i_1,\dots,i_k;\lambda)}(q,t^{-1})
}{
	\widetilde{\psi}_{T(\widetilde{i}_1,\dots, \widetilde{i}_k;\lambda)}(q,t^{-1})
}. 
\notag 
\end{align}

\begin{lem}\mbox{}
\label{lem67-1637-12mar}
\small 
\begin{align}
&\frac{
	\widetilde{\psi}_{T(i_1,\dots,i_k;\lambda)}(q,t^{-1})
}{
	\widetilde{\psi}_{T(\widetilde{i}_1,\dots, \widetilde{i}_k;\lambda)}(q,t^{-1})
}
\label{eqn632-1210-13mar}
\\
&= 
\frac{
	\psi_{T^{\prime}(i_1,\dots,i_k;\lambda)^{(N+1)} 	/	T^{\prime}(i_1,\dots,i_k;\lambda)^{(N+2) }		 }(t,q)
}{
	\psi_{T(\widetilde{i}_1,\dots, \widetilde{i}_k;\lambda)^{(N+1)} / T(\widetilde{i}_1,\dots, \widetilde{i}_k;\lambda)^{(N+2)}}(q,t)
	\times
	\cdots 
	\times
	\psi_{T(\widetilde{i}_1,\dots, \widetilde{i}_k ; \lambda)^{(N+c-1)} / T(\widetilde{i}_1,\dots, \widetilde{i}_k;\lambda)^{(N+c)}}(q,t)
	\times
	\psi_{T(\widetilde{i}_1,\dots, \widetilde{i}_k;\lambda)^{(N+c)} / 	\widetilde{\mu}			}(q,t)
}
\notag 
\\
&\hspace{0.3cm}\times 
\prod_{s \in \mu}
\frac{	q^{a_{\mu}(s) + 1} - t^{-\ell_{\mu}(s)}			}{
	t^{-(\ell_{\mu}(s) + 1)} - q^{a_{\mu}(s)}
}
\times 
\left(		
\prod_{s \in \widetilde{\mu}		}
\frac{	q^{a_{\widetilde{			\mu				}}(s) + 1} - t^{-\ell_{\widetilde{		\mu		}}(s)}			}{
	t^{-(\ell_{\widetilde{\mu}}(s) + 1)} - q^{a_{\widetilde{\mu}}(s)}
}		
\right)^{-1}. 
\notag 
\end{align}
\normalsize
where $\mu$ and $\widetilde{\mu}$ are the shape of $T_1(i_1,\dots,i_k;\lambda)$ and $T_1(\widetilde{i}_1,\dots, \widetilde{i}_k;\lambda)$, respectively. 
\end{lem}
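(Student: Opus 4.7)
The plan is to expand both $\widetilde{\psi}$ factors on the left-hand side of \eqref{eqn632-1210-13mar} using \textbf{Proposition \ref{prp228-1020-29jan}}, which gives $\widetilde{\psi}_T(q,t^{-1}) = \psi_{T^\prime_1}(t^{-1},q^{-1})\,\psi_{T_0}(q,t)\,H(\mu,q,t^{-1})/H(\mu^\prime,t^{-1},q)$ and analogously for $T(\widetilde{i})$, and then to take the ratio factor by factor. I expect the three resulting factors to match respectively the numerator $\psi_{T^\prime(i)^{(N+1)}/T^\prime(i)^{(N+2)}}(t,q)$, the denominator $\prod_{k}\psi_{T(\widetilde{i})^{(k)}/T(\widetilde{i})^{(k+1)}}(q,t)$, and the two products over $\mu$ and $\widetilde{\mu}$ appearing on the right-hand side.

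For the hook factor, I would insert the explicit formula $H(\lambda,q,t)=q^{n(\lambda)}t^{n(\lambda^\prime)}\prod_{s\in\lambda}(q^{a(s)+1}-t^{\ell(s)})$ and use that arm and leg lengths are exchanged under conjugation of a partition. This simplifies $H(\mu,q,t^{-1})/H(\mu^\prime,t^{-1},q)$ to $\prod_{s\in\mu}(q^{a_\mu(s)+1}-t^{-\ell_\mu(s)})/(t^{-(\ell_\mu(s)+1)}-q^{a_\mu(s)})$, and similarly for $\widetilde{\mu}$, which directly reproduces the last two products in the lemma.

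For the ordinary-number factor, the crucial observation is that the construction $(i_1,\ldots,i_k)\mapsto(\widetilde{i}_1,\ldots,\widetilde{i}_k)$ only relabels the super $N+1$ boxes by fresh ordinary numbers $N+1,\ldots,N+c$ placed from bottom to top, while leaving the labels $1,\ldots,N$ unchanged. A direct check using \textbf{Definition \ref{dfn223-2118}} then shows that the first $N$ subdiagrams $T_0(i)^{(k)}$ and $T_0(\widetilde{i})^{(k)}$ agree for $k\le N$ and $T_0(\widetilde{i})^{(N+1)}=\mu$. Consequently, in the expansion of $\psi_{T_0(\widetilde{i})}(q,t)$ the first $N$ factors coincide with the corresponding factors of $\psi_{T_0(i)}(q,t)$, so the ratio $\psi_{T_0(i)}(q,t)/\psi_{T_0(\widetilde{i})}(q,t)$ collapses to the reciprocal of $\prod_{k=N+1}^{N+c-1}\psi_{T_0(\widetilde{i})^{(k)}/T_0(\widetilde{i})^{(k+1)}}(q,t)\cdot\psi_{T_0(\widetilde{i})^{(N+c)}/\widetilde{\mu}}(q,t)$, which is exactly the denominator on the right-hand side of \eqref{eqn632-1210-13mar}.

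For the super-number factor, I would first invoke \textbf{Corollary \ref{cor226-1214-11mar}} to rewrite $\psi_{T^\prime_1(i)}(t^{-1},q^{-1})/\psi_{T^\prime_1(\widetilde{i})}(t^{-1},q^{-1}) = \psi_{T^\prime_1(i)}(t,q)/\psi_{T^\prime_1(\widetilde{i})}(t,q)$. Since the construction shifts every other super number $N+j$ (with $j\ge 2$) to $N+c+j-1$, we obtain $T_1(\widetilde{i})^{(N+c+j)} = T_1(i)^{(N+j+1)}$ for each $j\ge 1$; taking conjugates, every factor in $\psi_{T^\prime_1(\widetilde{i})}(t,q)$ cancels the corresponding factor of $\psi_{T^\prime_1(i)}(t,q)$ except $\psi_{T^\prime_1(i)^{(N+1)}/T^\prime_1(i)^{(N+2)}}(t,q) = \psi_{\mu^\prime/\widetilde{\mu}^\prime}(t,q)$, which is precisely the numerator $\psi_{T^\prime(i)^{(N+1)}/T^\prime(i)^{(N+2)}}(t,q)$ on the right-hand side. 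The main obstacle is the bookkeeping required to correctly align the stratifications $\{T(i)^{(k)}\}$ and $\{T(\widetilde{i})^{(k)}\}$ under the relabelling, but once that alignment is verified the three factors multiply together to yield the claimed identity.
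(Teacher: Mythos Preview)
Your proposal is correct and follows essentially the same route as the paper: both proofs expand the two $\widetilde{\psi}$ via \textbf{Proposition \ref{prp228-1020-29jan}}, use the alignments $T(\widetilde{i})^{(\alpha)}=T(i)^{(\alpha)}$ for $\alpha\le N$, $T(\widetilde{i})^{(N+1)}=\mu$, and $T'(i)^{(N+\beta)}=T'(\widetilde{i})^{(N+c+\beta-1)}$ for $\beta\ge 2$ to collapse the $\psi_{T_0}$ and $\psi_{T'_1}$ ratios, and invoke \textbf{Corollary \ref{cor226-1214-11mar}} for the $(t,q)\leftrightarrow(t^{-1},q^{-1})$ switch. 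The only cosmetic difference is that you spell out the identity $H(\mu,q,t^{-1})/H(\mu',t^{-1},q)=\prod_{s\in\mu}(q^{a_\mu(s)+1}-t^{-\ell_\mu(s)})/(t^{-(\ell_\mu(s)+1)}-q^{a_\mu(s)})$ (via the arm/leg swap under conjugation), whereas the paper treats this conversion as understood.
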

\begin{proof}
We know from \textbf{Proposition \ref{prp228-1020-29jan}} that 
\begin{align}
	\widetilde{\psi}_{T(i_1,\dots,i_k;\lambda)}(q,t^{-1})
	&= \psi_{T^\prime_{1}(i_1,\dots,i_k ; \lambda)}(t,q)
	\psi_{T_0(i_1,\dots,i_k ; \lambda)}(q,t)
	\frac{
		H(\mu,q,t^{-1})
	}{
		H(\mu^\prime,t^{-1},q)
	}
	\\
	\widetilde{\psi}_{T(\widetilde{i}_1,\dots, \widetilde{i}_k ; \lambda)}(q,t^{-1})
	&= 
	\psi_{T^\prime_1(\widetilde{i}_1,\dots, \widetilde{i}_k ; \lambda)}(t,q)
	\psi_{T_0(\widetilde{i}_1,\dots, \widetilde{i}_k ; \lambda)}(q,t)
	\frac{
		H(\widetilde{\mu},q,t^{-1})
	}{
		H(\widetilde{\mu}^\prime,t^{-1},q)
	}
\end{align}
We also know from \textbf{Definition \ref{dfn223-2118}} that 
\begin{align}
	\psi_{T_0(\widetilde{i}_1,\dots, \widetilde{i}_k;\lambda)}(q,t)
	=&
	\psi_{T(\widetilde{i}_1,\dots, \widetilde{i}_k;\lambda)^{(1)} / T(\widetilde{i}_1,\dots, \widetilde{i}_k;\lambda)^{(2)}}(q,t)
	\times
	\cdots 
	\times
	\psi_{T(\widetilde{i}_1,\dots, \widetilde{i}_k;\lambda)^{(N+c-1)} / T(\widetilde{i}_1,\dots, \widetilde{i}_k;\lambda)^{(N+c)}}(q,t)
	\\
	&\times
	\psi_{T(\widetilde{i}_1,\dots, \widetilde{i}_k;\lambda)^{(N+c)} / 	\widetilde{\mu}			}(q,t),
	\notag 
	\\
	\psi_{T_0(i_1,\dots,i_k;\lambda)}(q,t)
	=& 
	\psi_{T(i_1,\dots,i_k;\lambda)^{(1)} / T(i_1,\dots,i_k;\lambda)^{(2)}}(q,t)
	\times
	\cdots
	\times
	\psi_{T(i_1,\dots,i_k;\lambda)^{(N-1)} / T(i_1,\dots,i_k;\lambda)^{(N)}}(q,t)
	\\
	&\times
	\psi_{T(i_1,\dots,i_k;\lambda)^{(N)} / 	\mu					}(q,t),
	\notag 
	\\
	\psi_{T^\prime_{1}(i_1,\dots,i_k;\lambda)}(t,q)
	=& 
	\psi_{T^{\prime}(i_1,\dots,i_k;\lambda)^{(N+1)} 	/	T^{\prime}(i_1,\dots,i_k;\lambda)^{(N+2)}		 }(t,q)
	\\
	&\times
	\cdots
	\times
	\psi_{T^{\prime}(i_1,\dots,i_k;\lambda)^{(N+M-1)} 	/	T^{\prime}(i_1,\dots,i_k;\lambda)^{(N+M)}		 }(t,q),
	\notag 
	\\
	\psi_{T^\prime_1(\widetilde{i}_1,\dots, \widetilde{i}_k;\lambda)}(t,q)
	=& 
	\psi_{T^{\prime}(\widetilde{i}_1,\dots, \widetilde{i}_k;\lambda)^{(N+c+1)}	/		T^{\prime}(\widetilde{i}_1,\dots, \widetilde{i}_k;\lambda)^{(N+c+2)}	}(t,q)
	\\
	&\times
	\cdots
	\times
	\psi_{T^{\prime}(\widetilde{i}_1,\dots, \widetilde{i}_k;\lambda)^{(N+c+M - 2)}	/		T^{\prime}(\widetilde{i}_1,\dots, \widetilde{i}_k;\lambda)^{(N+c+M-1)}	}(t,q).
	\notag 
\end{align}
It is clear that $T(\widetilde{i}_1,\dots, \widetilde{i}_k;\lambda)^{(N+1)} = \mu$, and 
\begin{align}
T(\widetilde{i}_1,\dots, \widetilde{i}_k;\lambda)^{(\alpha)}
&= 
T(i_1,\dots,i_k;\lambda)^{(\alpha)},
\\
T^{\prime}(i_1,\dots,i_k;\lambda)^{(N+\beta)}	&= T^{\prime}(\widetilde{i}_1,\dots, \widetilde{i}_k;\lambda)^{(N+c+\beta-1)}, 
\end{align}
for $\alpha \in \{1,\dots,N\}$ and $\beta \in \{2,\dots,M\}$. 
Therefore, we obtain that 
\small
\begin{align}
	&\psi_{T_0(\widetilde{i}_1,\dots, \widetilde{i}_k;\lambda)}(q,t)
	\label{227-eqn-1219}
	\\
	&= 
	\psi_{T_0(i_1,\dots,i_k;\lambda)}(q,t)
	\times
	\psi_{T(\widetilde{i}_1,\dots, \widetilde{i}_k;\lambda)^{(N+1)} / T(\widetilde{i}_1,\dots, \widetilde{i}_k;\lambda)^{(N+2)}}(q,t)
	\times
	\cdots 
	\times
	\notag 
	\\
	&\hspace{0.3cm}\times
	\psi_{T(\widetilde{i}_1,\dots, \widetilde{i}_k;\lambda)^{(N+c-1)} / T(\widetilde{i}_1,\dots, \widetilde{i}_k;\lambda)^{(N+c)}}(q,t)
	\times
	\psi_{T(\widetilde{i}_1,\dots, \widetilde{i}_k;\lambda)^{(N+c)} / 	\widetilde{\mu}			}(q,t). 
	\notag 
\end{align}
\normalsize
and 
\begin{align}
	\psi_{T^\prime_{1}(i_1,\dots,i_k;\lambda)}(t,q)
	= 
	\psi_{T^\prime_1(\widetilde{i}_1,\dots, \widetilde{i}_k;\lambda)}(t,q)
	\times
	\psi_{T^{\prime}(i_1,\dots,i_k;\lambda)^{(N+1)} 	/	T^{\prime}(i_1,\dots,i_k;\lambda)^{(N+2)}		 }(t,q).
	\label{228-eqn-1219}
\end{align}
From equations \eqref{227-eqn-1219} \eqref{228-eqn-1219}, we immediately obtain equation \eqref{eqn632-1210-13mar}. 
\end{proof}

We can deduce from \textbf{Lemma \ref{lem67-1637-12mar}} and equation \eqref{eqn630-1637-12mar}, that in order to prove the inductive step, we have to show \textbf{Lemma \ref{lem68-1212-13mar}} below. 

\begin{lem}\mbox{}
\label{lem68-1212-13mar}
\small 
\begin{align}
	&\frac{1}{
		\left(	q^{-\frac{1}{2}}t^{-\frac{1}{2}}			\right)^c
	}
	\left(		\frac{q^{\frac{1}{2}} - q^{- \frac{1}{2}}}{t^{-\frac{1}{2}} - t^{\frac{1}{2}}}			\right)^c
	\lim_{\xi \rightarrow t^{-1}}\,\,
	\dualmap
	\bigg[
	\underbrace{		\prod_{1 \leq a < b \leq k}			}_{
		\substack{
			a,b \in \operatorname{Pos}((i_1,\dots,i_k); N + 1)
		}
	}
	\frac{
		\left(1 - q^{-1}\frac{z_b}{z_a}\right)
		\left(1 - \frac{z_b}{z_a}\right)
	}{
		\left(1 - t^{-1}\frac{z_b}{z_a}\right)
		\left(1 - q^{-1}t\frac{z_b}{z_a}\right)
	}
	\bigg]
	\\
	&= 
	\frac{
		\psi_{T^{\prime}(i_1,\dots,i_k;\lambda)^{(N+1)} 	/	T^{\prime}(i_1,\dots,i_k;\lambda)^{(N+2) }		 }(t,q)
	}{
		\psi_{T(\widetilde{i}_1,\dots, \widetilde{i}_k;\lambda)^{(N+1)} / T(\widetilde{i}_1,\dots, \widetilde{i}_k;\lambda)^{(N+2)}}(q,t)
		\times
		\cdots 
		\times
		\psi_{T(\widetilde{i}_1,\dots, \widetilde{i}_k ; \lambda)^{(N+c-1)} / T(\widetilde{i}_1,\dots, \widetilde{i}_k;\lambda)^{(N+c)}}(q,t)
		\times
		\psi_{T(\widetilde{i}_1,\dots, \widetilde{i}_k;\lambda)^{(N+c)} / 	\widetilde{\mu}			}(q,t)
	}
	\notag 
	\\
	&\hspace{0.3cm}\times 
	\prod_{s \in \mu}
	\frac{	q^{a_{\mu}(s) + 1} - t^{-\ell_{\mu}(s)}			}{
		t^{-(\ell_{\mu}(s) + 1)} - q^{a_{\mu}(s)}
	}
	\times 
	\left(		
	\prod_{s \in \widetilde{\mu}		}
	\frac{	q^{a_{\widetilde{			\mu				}}(s) + 1} - t^{-\ell_{\widetilde{		\mu		}}(s)}			}{
		t^{-(\ell_{\widetilde{\mu}}(s) + 1)} - q^{a_{\widetilde{\mu}}(s)}
	}		
	\right)^{-1}
	\notag 
\end{align}
\normalsize
\end{lem}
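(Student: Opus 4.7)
\textbf{Proof proposal for Lemma \ref{lem68-1212-13mar}.}
The plan is to reduce both sides to explicit finite products indexed by the skew shape $\mu\backslash\widetilde{\mu}$ (the boxes carrying super number $N+1$) and then match them factor by factor. Using \textbf{Assumption \ref{assum65-1451-18mar}} and the structural observation preceding \textbf{Lemma \ref{lem66-2142-12mar}}, I first label the positions in $\operatorname{Pos}((i_1,\dots,i_k);N+1)$ as $s_{j,\ell}$ with $1\le j\le r$ and $1\le \ell \le c_j$: the box $s_{j,\ell}$ lies in the rightmost column of row interval $\gamma_j$ (of common length $\kappa_j$) and in the $\ell$-th row from the top of the $c_j$ bottommost rows of that interval. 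Since $\kappa_1>\kappa_2>\cdots>\kappa_r$ and the rows in different intervals are disjoint, the pairs $(s_{i,\ell_1},s_{j,\ell_2})$ split into two qualitatively different types (same interval, $i=j$, versus different intervals $i<j$), and I will handle them separately.

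Next I evaluate the LHS. Under $\dualmapwithcompo$ followed by $\xi\to t^{-1}$, the ratio $z_b/z_a$ for $a=s_{i,\ell_1}<b=s_{j,\ell_2}$ becomes $q^{\kappa_j-\kappa_i}t^{\operatorname{row}(s_{i,\ell_1})-\operatorname{row}(s_{j,\ell_2})}$. A short case analysis using the genericity of $q,t$ from \textbf{Definition \ref{def31-1621-12jan}} shows that none of the four factors in the integrand vanish or blow up, so the limit is obtained by direct substitution. Collecting the constant prefactor
\[
\frac{1}{(q^{-1/2}t^{-1/2})^c}\Bigl(\tfrac{q^{1/2}-q^{-1/2}}{t^{-1/2}-t^{1/2}}\Bigr)^{c},
\]
the LHS is thus rewritten as a product of rational functions in $q,t$ indexed by pairs of boxes of $\mu\backslash\widetilde{\mu}$.

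For the RHS I first apply \textbf{Definition \ref{dfn223-2118}} to expand each $\psi$-factor. The numerator $\psi_{T^\prime(i_1,\dots,i_k;\lambda)^{(N+1)}/T^\prime(i_1,\dots,i_k;\lambda)^{(N+2)}}(t,q)$ equals $\psi_{\mu^\prime/\widetilde{\mu}^\prime}(t,q)$, which via \textbf{Proposition \ref{prop226-1127-19mar}} becomes $\psi^\prime_{\mu/\widetilde{\mu}}(q,t)$. Because the super $N+1$ boxes occupy the rightmost column of each row interval $\gamma_j$, the conditions $\lambda_i=\mu_i$ and $\lambda_j=\mu_j+1$ in the formula for $\psi^\prime_{\mu/\widetilde{\mu}}$ select pairs that correspond precisely to the pairs $(s_{i,\ell_1},s_{j,\ell_2})$ entering the LHS. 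For the denominator I use that the chain
\[
\mu=T(\widetilde{i}_1,\dots,\widetilde{i}_k;\lambda)^{(N+1)}\supset T(\widetilde{i}_1,\dots,\widetilde{i}_k;\lambda)^{(N+2)}\supset \cdots \supset T(\widetilde{i}_1,\dots,\widetilde{i}_k;\lambda)^{(N+c+1)}=\widetilde{\mu}
\]
is a chain of single-box removals (performed from bottom to top along $\mu\backslash\widetilde{\mu}$), so each $\psi$-factor is a short product whose total is again expressible over the boxes $s_{j,\ell}$. Finally the ratio $\bigl(\prod_{s\in\mu}\cdots\bigr)/\bigl(\prod_{s\in\widetilde{\mu}}\cdots\bigr)$ of Macdonald-automorphism factors collapses, using that $a_\mu(s_{j,\ell})=0$ and $\ell_\mu(s_{j,\ell})$ equals the number of boxes of $\mu$ strictly below $s_{j,\ell}$ in column $\kappa_j$, to an explicit product over $\{s_{j,\ell}\}$ together with correction factors accounting for the arm/leg length changes of neighboring boxes.

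The main obstacle is the combinatorial bookkeeping in matching the LHS and RHS products in the final step. Both sides are naturally indexed differently: the LHS is a product over \emph{pairs} $a<b$ in $\operatorname{Pos}(\cdot;N+1)$, while the RHS mixes a product over pairs of rows (coming from $\psi^\prime_{\mu/\widetilde{\mu}}$) with a product over single boxes (coming from the $b$-function ratio). The identity then reduces to showing that, after cancellation inside each row interval and across intervals, the factor contributed by the pair $(s_{i,\ell_1},s_{j,\ell_2})$ on the LHS coincides with the corresponding factor on the RHS; the telescoping within the denominator chain and the explicit values of $a_\mu,\ell_\mu$ at the $s_{j,\ell}$ will be essential. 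I expect that carrying out these cancellations cleanly is straightforward but lengthy, which is consistent with the paper's decision to relegate the detailed verification to Appendix \ref{appC-1214-13mar}.
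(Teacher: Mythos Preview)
Your overall strategy---evaluate both sides as explicit finite products in $q,t$ and match---coincides with the paper's approach in Appendix~\ref{appC-1214-13mar}, and your identification of the key ingredients (\textbf{Proposition~\ref{prop226-1127-19mar}} for $\psi_{\mu'/\widetilde\mu'}(t,q)=\psi'_{\mu/\widetilde\mu}(q,t)$, the single-box-removal structure of the denominator chain, the collapse of the $H$-ratio) is correct. However, your matching scheme contains a concrete error: you claim the conditions in $\psi'_{\mu/\widetilde\mu}$ ``select pairs that correspond precisely to the pairs $(s_{i,\ell_1},s_{j,\ell_2})$ entering the LHS.'' In the notation of \textbf{Proposition~\ref{prop226-1127-19mar}} applied to $\psi'_{\mu/\widetilde\mu}$, the constraints read $\mu_i=\widetilde\mu_i$ and $\mu_j=\widetilde\mu_j+1$, meaning row $i$ contains \emph{no} box of $\mu\backslash\widetilde\mu$ while row $j$ contains one. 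Hence $\psi'_{\mu/\widetilde\mu}$ is a product indexed by pairs (row \emph{without} an $N{+}1$ box, row \emph{with} an $N{+}1$ box), whereas the LHS product runs over pairs of positions \emph{both} carrying $N{+}1$. These index sets are disjoint, so the direct factor-by-factor correspondence you describe between $\psi'$ and the LHS cannot hold.

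In the paper's proof the roles are rearranged. The LHS (\textbf{Lemma~\ref{lemc1-1621-18mar}}) and the denominator product (\textbf{Lemma~\ref{lemc4-1621-18mar}}) are both expressed as products over boxes of $\widetilde\mu$ lying in rows that \emph{do} contain $N{+}1$. The $H$-ratio (\textbf{Lemma~\ref{lemc2-1621-18mar}}) splits into a piece indexed by such rows and a piece indexed by rows \emph{without} $N{+}1$; \textbf{Lemma~\ref{lemc3-1621-18mar}} then shows that $\psi'_{\mu/\widetilde\mu}$ cancels exactly this second piece (together with certain boundary factors). The identity follows by matching what remains among the ``rows with $N{+}1$'' contributions. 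If you redirect $\psi'$ to cancel against the mixed part of the $H$-ratio rather than against the LHS, the rest of your outline goes through.
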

\begin{proof}
The proof of this lemma is relegated to Appendix \ref{appC-1214-13mar}. 
\end{proof}

As a consequence, we have proved the inductive step. Therefore, the induction is complete.

\section{Proof of lemma \ref{lemm44-1106-29jan} for the case $(0,M)$ where $M \in \bb{Z}^{>0}$}
\label{sec7-1042-7apr}

In this section, we will prove \textbf{Lemma \ref{lemm44-1106-29jan}} for the case $(0,M)$ where $M \in \bb{Z}^{>0}$. This will be achieved by utilizing the results from the preceding section. 
Let $(i_1,\dots,i_k) \in \operatorname{RSSYBT}(0,M;\lambda)$. We define 
$(\widetilde{i}_1,\dots,\widetilde{i}_k) \in \operatorname{RSSYBT}(1,M;\lambda)$ by 
\begin{align}
(\widetilde{i}_1,\dots,\widetilde{i}_k) := (i_1 + 1,\dots,i_k + 1). 
\end{align}
It is clear that $T(\widetilde{i}_1,\dots,\widetilde{i}_k;\lambda)$ and $T(i_1,\dots,i_k;\lambda)$ have the same structure (see \textbf{Definition \ref{dfn227-2239-13mar}}). Then, by using \textbf{Proposition \ref{prp226-1326-15feb}}, we obtain that 
\begin{align}
\psi_{T^\prime(i_1,\dots,i_k;\lambda)}(q,t) = \psi_{T^\prime(\widetilde{i}_1,\dots,\widetilde{i}_k;\lambda)}(q,t). 
\label{eqn72-1045-14mar}
\end{align}
From \textbf{Proposition \ref{prp228-1020-29jan}}, we know that 
\begin{align}
\widetilde{\psi}_{T(i_1,\dots,i_k;\lambda)}(q,t^{-1}) &= \psi_{T^\prime(i_1,\dots,i_k;\lambda)}(t^{-1},q^{-1})\frac{H(\lambda,q,t^{-1})}{H(\lambda^\prime,t^{-1},q)}, 
\\
\widetilde{\psi}_{T(\widetilde{i}_1,\dots,\widetilde{i}_k;\lambda)}(q,t^{-1}) &= \psi_{T^\prime(\widetilde{i}_1,\dots,\widetilde{i}_k;\lambda)}(t^{-1},q^{-1})\frac{H(\lambda,q,t^{-1})}{H(\lambda^\prime,t^{-1},q)}. 
\end{align}
From equation \eqref{eqn72-1045-14mar}, we conclude that 
\begin{align}
\widetilde{\psi}_{T(i_1,\dots,i_k;\lambda)}(q,t^{-1})
=
\widetilde{\psi}_{T(\widetilde{i}_1,\dots,\widetilde{i}_k;\lambda)}(q,t^{-1}). 
\end{align}

Since $(\widetilde{i}_1,\dots,\widetilde{i}_k) \in \operatorname{RSSYBT}(1,M;\lambda)$, we can conclude from the case $(1,M)$ of \textbf{Lemma \ref{lemm44-1106-29jan}} that 
\begin{align}
&\lim_{\xi \rightarrow t^{-1}}\,\,
(
\dualmap
\comp 
\bigg|_{
	\substack{
		q_1 = q, \\
		q_2 = q^{-1}t,\\
		q_3 = t^{-1} \\
	}
}
)
\bigg[
\bigg(
\frac{q_1^{\frac{1}{2}} - q_1^{-\frac{1}{2}}}{
	q_3^{\frac{1}{2}} - q_3^{-\frac{1}{2}}
}
\bigg)^{|\lambda|}
\cals{N}_\lambda(z_1,\dots,z_k)
\prod_{1 \leq a < b \leq k}
\cals{D}^{(\widetilde{i}_a,\widetilde{i}_b)}\left(
\frac{z_b}{z_a}
; q, t
\right)
\bigg]
\\
&= 
\widetilde{\psi}_{T(\widetilde{i}_1,\dots,\widetilde{i}_k;\lambda)}(q,t^{-1})
\left(	q^{-\frac{1}{2}}t^{-\frac{1}{2}}				\right)^{|\lambda|}. 
\notag 
\end{align}
It is obvious that 
\begin{align}
\cals{D}^{(\widetilde{i}_a,\widetilde{i}_b)}\left(
\frac{z_b}{z_a}
; q, t
\right)
= 
\cals{D}^{(i_a,i_b)}\left(
\frac{z_b}{z_a}
; q, t
\right).
\end{align}
Thus, we get that 
\begin{align}
	&\lim_{\xi \rightarrow t^{-1}}\,\,
	(
	\dualmap
	\comp 
	\bigg|_{
		\substack{
			q_1 = q, \\
			q_2 = q^{-1}t,\\
			q_3 = t^{-1} \\
		}
	}
	)
	\bigg[
	\bigg(
	\frac{q_1^{\frac{1}{2}} - q_1^{-\frac{1}{2}}}{
		q_3^{\frac{1}{2}} - q_3^{-\frac{1}{2}}
	}
	\bigg)^{|\lambda|}
	\cals{N}_\lambda(z_1,\dots,z_k)
	\prod_{1 \leq a < b \leq k}
	\cals{D}^{(i_a,i_b)}\left(
	\frac{z_b}{z_a}
	; q, t
	\right)
	\bigg]
	\\
	&= 
	\widetilde{\psi}_{T(\widetilde{i}_1,\dots,\widetilde{i}_k;\lambda)}(q,t^{-1})
	\left(	q^{-\frac{1}{2}}t^{-\frac{1}{2}}				\right)^{|\lambda|}
	= 
	\widetilde{\psi}_{T(i_1,\dots,i_k;\lambda)}(q,t^{-1})
	\left(	q^{-\frac{1}{2}}t^{-\frac{1}{2}}				\right)^{|\lambda|}. 
	\notag 
\end{align}
Thus, we have proved \textbf{Lemma \ref{lemm44-1106-29jan}} for the case $(0,M)$.

\section{Future Research Directions}

In this section, we outline several directions for future research that naturally extend the results presented in this paper.

\begin{enumerate}[(1)]
\item \textbf{Generalization to quantum corner VOA of type $\vec{c} = (3^N1^M2^L)$.}
In this work, we showed that the correlation functions of the currents of the quantum corner VOA of type $\vec{c} = (3^N1^M)$ are given by super Macdonald polynomials (\textbf{Theorem \ref{thm42-main-1022}}). A natural next step would be to investigate the class of polynomials corresponding to the correlation functions of the currents of the quantum corner VOA of type $\vec{c} = (3^N1^M2^L)$. A key question is whether these polynomials would also possess the property of being partially symmetric.
\item \textbf{Exploring higher-order currents.} 
As indicated in \textbf{Theorem \ref{thm42-main-1022}}, our analysis only utilized the currents $\widetilde{T}^{\vec{c},\vec{u}}_{1}(z)$. A promising direction would be to investigate the correlation functions of the higher currents $\widetilde{T}^{\vec{c},\vec{u}}_{r}(z) \,\, (r \geq 2)$. This raises the question of whether it would still be possible to construct a class of polynomials from these higher-order currents.
\end{enumerate}

\appendix

\section{Proof of lemma \ref{lemm43-1156-22jan}}
\label{appA-1155-22jan}

This appendix aims to prove \textbf{Lemma \ref{lemm43-1156-22jan}}. For the reader's convenience, we restate the lemma here:

\small 
\begin{align}
	&
	\lim_{\xi \rightarrow t^{-1}}\,\,
	(
	\dualmap
	\comp 
	\bigg|_{
		\substack{
			q_1 = q, \\
			q_2 = q^{-1}t,\\
			q_3 = t^{-1} \\
		}
	}
	)
	\left(
	\cals{N}_{\lambda}(z_1,\dots,z_k )
	\times
	\prod_{1 \leq i < j \leq k}f^{\vec{c}}_{11}\left(\frac{z_j}{z_i} \right)
	\times
	\langle 0 |\widetilde{T}^{\vec{c},\vec{u}}_{1}(z_1 )\cdots \widetilde{T}^{\vec{c},\vec{u}}_{1}(z_k )|0\rangle
	\right)
	\\
	&= 
	\underbrace{				
		\sum_{i_1 = 1}^{N+M}
		\cdots
		\sum_{i_k = 1}^{N+M}
	}_{
		(i_1,\dots,i_k) \in 
		\text{RSSYBT}(N,M;\lambda)
	}
	\lim_{\xi \rightarrow t^{-1}}\,\,
	(
	\dualmap
	\comp 
	\bigg|_{
		\substack{
			q_1 = q, \\
			q_2 = q^{-1}t,\\
			q_3 = t^{-1} \\
		}
	}
	)
	\bigg[
	y_{i_1}\cdots y_{i_k}
	u_{i_1}\cdots u_{i_k}
	\notag \\
	&\hspace{7.5cm}\times
	\cals{N}_\lambda(z_1,\dots,z_k)
	\times
	\prod_{1 \leq a < b \leq k}
	\cals{D}^{(i_a,i_b)}\left(
	\frac{z_b}{z_a}
	; q, t
	\right)
	\bigg],
	\notag 
\end{align}
\normalsize
where the explicit formulas of $\cals{N}_{\lambda}(z_1,\dots,z_k )$
and 
$
\cals{D}^{(i,j)}\left(
\frac{z_b}{z_a}
; q, t
\right)$ 
are provided in equations \eqref{eqn32-1044} and \eqref{eqn126-1300-4dec}, respectively. Note that
\begin{align}
	\bigg|_{
		\substack{
			q_1 = q, \\
			q_2 = q^{-1}t,\\
			q_3 = t^{-1} \\
		}
	}
	\cals{N}_{\lambda}(z_1,\dots,z_k )
	= 
	\prod_{1 \leq c < d \leq \ell(\lambda)}
	\prod_{
		\substack{
			i \in I^{(c)}
			\\
			j \in I^{(d)}
		}
	}
	\frac{
		\left(1 - \frac{z_j}{z_i}\right)
		\left(1 - t\frac{z_j}{z_i}\right)
	}{
		\left(1 - q\frac{z_j}{z_i}\right)
		\left(1 - q^{-1}t\frac{z_j}{z_i}\right)
	}.
	\label{eqnD3-1517-13Jan}
\end{align}

Recall that the map $\dualmap$ sends $f(z_1,\dots,z_m)$ to
\begin{align}
	&f(y,qy,\dots,q^{\lambda_1 - 1}y
	\notag	\\
	&\hspace{0.4cm} \xi y,q\xi y,\dots,q^{\lambda_2 - 1}\xi y
	\notag \\
	&\hspace{2.8cm}\vdots
	\notag \\
	&\hspace{0.4cm} \xi^{\ell(\lambda) - 1} y,q\xi^{\ell(\lambda) - 1} y,\dots,q^{\lambda_{\ell(\lambda)} - 1}\xi^{\ell(\lambda) - 1} y). 
\end{align}
Consequently, under the map 
\begin{align}
(
\dualmap
\comp 
\bigg|_{
	\substack{
		q_1 = q, \\
		q_2 = q^{-1}t,\\
		q_3 = t^{-1} \\
	}
}
), 
\end{align}
the only factors in equation \eqref{eqnD3-1517-13Jan} that can generate a factor of $(1-t\xi)$ are 
\begin{align}
	\left(1 - t\frac{z_j}{z_i}\right)
	\hspace{0.3cm}
	\text{ and }
	\hspace{0.3cm}
	\left(1 - q^{-1}t\frac{z_j}{z_i}\right).
\end{align}

More precisely, the factor $(1 - t\frac{z_j}{z_i})$ generates $(1-t\xi)$ if 
the boxes corresponding to $i$ and $j$ are adjacent in the same column
\small 
\begin{align}
\begin{ytableau}
i  \\ 
j 
\end{ytableau}
\hspace{0.2cm}. 
\label{eqn-a4-1332-22jan}
\end{align}
\normalsize
Conversely, 
the factor $(1 - q^{-1}t\frac{z_j}{z_i})$ generates $(1-t\xi) $ 
if the corresponding boxes are adjacent diagonally. 
\begin{align}
	\begin{ytableau}
	i & \none
		\\
		\none & j
	\end{ytableau}\hspace{0.2cm}.
	\label{eqn-a5-1927-23jan}
\end{align}

Since $(1 - t\frac{z_j}{z_i})$ appears in the numerator and $(1 - q^{-1}t\frac{z_j}{z_i})$ appears in the denominator, 
we can conclude that the contribution from a triangular-shaped diagram 
\begin{align}
\ydiagram{2,1+1} 
\end{align}
will not contain the factor $(1 - t\xi)$. Based on this reasoning, to determine the total number of factors $(1-t\xi)$ appearing in equation \eqref{eqnD3-1517-13Jan}, it suffices to consider only pairs of boxes
of the form \eqref{eqn-a4-1332-22jan} within the first column of $\lambda$. This directly leads to the following proposition. 

\begin{prop}
For any $\lambda \in \operatorname{Par}(k)$,  
$\cals{N}_{\lambda}(z_1,\dots,z_k )$ contains exactly the factor $(1-t\xi)^{\ell(\lambda) - 1}$.
\end{prop}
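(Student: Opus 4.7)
The plan is to carry out the substitution inside $\cals{N}_\lambda$ explicitly and track the multiplicity of $(1-t\xi)$ as a factor of the resulting rational expression. Under $\dualmap$, the variable $z_i$ attached to the box in row $c$ and column $p$ of $\lambda$ becomes $q^{p-1}\xi^{c-1}y$. Hence for $i \in I^{(c)}$ at column $p$ and $j \in I^{(d)}$ at column $p'$ with $c < d$, we have $z_j/z_i = q^{p'-p}\xi^{d-c}$. Substituting into
\begin{align*}
\bigg|_{\substack{q_1=q,\\ q_2=q^{-1}t,\\ q_3=t^{-1}}}\cals{N}_\lambda(z_1,\dots,z_k)
= \prod_{1\le c<d\le \ell(\lambda)}\prod_{\substack{i\in I^{(c)}\\ j\in I^{(d)}}}
\frac{(1-z_j/z_i)(1-tz_j/z_i)}{(1-qz_j/z_i)(1-q^{-1}tz_j/z_i)},
\end{align*}
each binomial becomes a linear expression in the formal monomials of $q$ and $\xi$.

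As the author has already observed, only $(1-tz_j/z_i)$ in the numerator and $(1-q^{-1}tz_j/z_i)$ in the denominator can yield $(1-t\xi)$. The first step is to argue, by matching exponents in $q$ and $\xi$, that the numerator factor equals $(1-t\xi)$ precisely when $d=c+1$ and $p'=p$ (vertical adjacency), while the denominator factor equals $(1-t\xi)$ precisely when $d=c+1$ and $p'=p+1$ (diagonal adjacency). All remaining binomials take the form $(1-tq^a\xi^b)$ with $(a,b)\neq(0,1)$ in the numerator case and $(a,b)\neq(-1,1)$ in the denominator case, and are therefore coprime to $(1-t\xi)$ in $\bb{C}[q^{\pm 1},\xi^{\pm 1},y^{\pm 1}]$.

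Next I would count these contributions combinatorially using the shape of $\lambda$. The vertical pair at positions $(c,p)$ and $(c+1,p)$ requires $1\le p\le \lambda_{c+1}$, producing $\lambda_{c+1}$ numerator copies of $(1-t\xi)$ for the row pair $(c,c+1)$. The diagonal pair at $(c,p)$ and $(c+1,p+1)$ requires $1\le p\le \lambda_{c+1}-1$, producing $\lambda_{c+1}-1$ denominator copies. Hence the net exponent of $(1-t\xi)$ contributed by the consecutive row pair $(c,c+1)$ is $\lambda_{c+1}-(\lambda_{c+1}-1)=1$, and summing over $c=1,\dots,\ell(\lambda)-1$ yields the claimed total $\ell(\lambda)-1$. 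Geometrically, this is the observation that the triangular patterns the author already flagged cancel, leaving exactly one surviving vertical edge per pair of consecutive nonempty rows.

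The main (minor) obstacle is ensuring that no cross-cancellation from the other binomials manufactures or destroys additional $(1-t\xi)$ factors. This is handled by the generic-variable remark in the second paragraph: before any specialization or limit, $q$, $\xi$, $y$ are free, so irreducible binomials of distinct shapes cannot share a zero, and the multiplicity of $(1-t\xi)$ is simply the signed count above. With that in place, the remainder of the argument is the clean telescoping count of $\ell(\lambda)-1$.
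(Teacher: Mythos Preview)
Your proposal is correct and follows essentially the same approach as the paper: identify that only the numerator factor $(1-tz_j/z_i)$ (vertical adjacency) and the denominator factor $(1-q^{-1}tz_j/z_i)$ (diagonal adjacency) can produce $(1-t\xi)$, then count the net multiplicity. Your per-row-pair arithmetic $\lambda_{c+1}-(\lambda_{c+1}-1)=1$ is just an explicit version of the paper's pictorial ``triangle cancellation leaves the first-column vertical pairs'' argument, yielding the same total $\ell(\lambda)-1$.
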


Furthermore, based on the preceding explanation, we know that  $\cals{N}_{\lambda}(z_1,\dots,z_k )$ can be decomposed into the product 
\small 
\begin{align}
	&\cals{N}_{\lambda}(z_1,\dots,z_k )
	= 
	\Delta\left(	q_3^{-\frac{1}{2}}\frac{z_{\lambda_1 + 1}}{z_1}				\right)^{-1}
	\Delta\left(	q_3^{-\frac{1}{2}}\frac{z_{\lambda_1 + \lambda_2 + 1}}{z_{\lambda_1 + 1}}				\right)^{-1}
	\times 
	\cdots 
	\times 
	\Delta\left(	q_3^{-\frac{1}{2}}\frac{z_{\sum_{j = 1}^{\ell(\lambda) - 1}\lambda_j+1}}{z_{\sum_{j = 1}^{\ell(\lambda) - 2}\lambda_j+1}}				\right)^{-1}
	\times
	\cals{A}
	\label{d7-1950-21jan}
\end{align}
\normalsize
where neither the numerator nor the denominator of 
\small 
\begin{align}
	(
	\dualmap
	\comp 
	\bigg|_{
		\substack{
			q_1 = q, \\
			q_2 = q^{-1}t,\\
			q_3 = t^{-1} \\
		}
	}
	)
	\left(
	\cals{A}
	\right)
\end{align}
\normalsize
contain any factors of the form $(1-t\xi)$.
This implies, in particular, that 
\small 
\begin{align}
\lim_{\xi \rightarrow t^{-1}}\,\,
(
\dualmap
\comp 
\bigg|_{
	\substack{
		q_1 = q, \\
		q_2 = q^{-1}t,\\
		q_3 = t^{-1} \\
	}
}
)
\left(
\cals{A}
\right)
\end{align}
\normalsize
is a nonzero number.

From equation \eqref{d7-1950-21jan}, we obtain that 
\small 
\begin{align}
	&
	\lim_{\xi \rightarrow t^{-1}}\,\,
	(
	\dualmap
	\comp 
	\bigg|_{
		\substack{
			q_1 = q, \\
			q_2 = q^{-1}t,\\
			q_3 = t^{-1} \\
		}
	}
	)
	\left(
	\cals{N}_{\lambda}(z_1,\dots,z_k )
	\times
	\prod_{1 \leq i < j \leq k}f^{\vec{c}}_{11}\left(\frac{z_j}{z_i} \right)
	\times
	\langle 0 |\widetilde{T}^{\vec{c},\vec{u}}_{1}(z_1 )\cdots \widetilde{T}^{\vec{c},\vec{u}}_{1}(z_k )|0\rangle
	\right)
	\label{d7-eqn-1830-21jan}
	\\
	&= 
	\lim_{\xi \rightarrow t^{-1}}\,\,
	(
	\dualmap
	\comp 
	\bigg|_{
		\substack{
			q_1 = q, \\
			q_2 = q^{-1}t,\\
			q_3 = t^{-1} \\
		}
	}
	)
	\left(
	\cals{A}
	\right)
	\times 
	\lim_{\xi \rightarrow t^{-1}}\,\,
	(
	\dualmap
	\comp 
	\bigg|_{
		\substack{
			q_1 = q, \\
			q_2 = q^{-1}t,\\
			q_3 = t^{-1} \\
		}
	}
	)
	\Bigg[
	\Delta\left(	q_3^{-\frac{1}{2}}\frac{z_{\lambda_1 + 1}}{z_1}				\right)^{-1}
	\times 
	\cdots 
	\times 
	\Delta\left(	q_3^{-\frac{1}{2}}\frac{z_{\sum_{j = 1}^{\ell(\lambda) - 1}\lambda_j+1}}{z_{\sum_{j = 1}^{\ell(\lambda) - 2}\lambda_j+1}}				\right)^{-1}
	\notag \\
	&\hspace{6cm}\times
	\prod_{1 \leq i < j \leq k}f^{\vec{c}}_{11}\left(\frac{z_j}{z_i} \right)
	\times
	\langle 0 |\widetilde{T}^{\vec{c},\vec{u}}_{1}(z_1 )\cdots \widetilde{T}^{\vec{c},\vec{u}}_{1}(z_k )|0\rangle
	\Bigg]
	\notag 
	\\
	&= 
	\lim_{\xi \rightarrow t^{-1}}\,\,
	(
	\dualmap
	\comp 
	\bigg|_{
		\substack{
			q_1 = q, \\
			q_2 = q^{-1}t,\\
			q_3 = t^{-1} \\
		}
	}
	)
	\left(
	\cals{A}
	\right)
	\times
	\notag 
	\\
	&\hspace{0.3cm} \times \sum_{i_1 = 1}^{N+M}
	\cdots
	\sum_{i_k = 1}^{N+M}
	\lim_{\xi \rightarrow t^{-1}}\,\,
	(
	\dualmap
	\comp 
	\bigg|_{
		\substack{
			q_1 = q, \\
			q_2 = q^{-1}t,\\
			q_3 = t^{-1} \\
		}
	}
	)
	\Bigg[
	\Delta\left(	q_3^{-\frac{1}{2}}\frac{z_{\lambda_1 + 1}}{z_1}				\right)^{-1}
	\times 
	\cdots 
	\times 
	\Delta\left(	q_3^{-\frac{1}{2}}\frac{z_{\sum_{j = 1}^{\ell(\lambda) - 1}\lambda_j+1}}{z_{\sum_{j = 1}^{\ell(\lambda) - 2}\lambda_j+1}}				\right)^{-1}
	\notag 
	\\
	&\hspace{5.1cm}\times
	\prod_{1 \leq i < j \leq k}f^{\vec{c}}_{11}\left(\frac{z_j}{z_i} \right)
	\times
	y_{i_1} \cdots y_{i_k}
	\times 
	\langle 0 |			\normord{
		\widetilde{\Lambda}^{\vec{c},\vec{u}}_{i_1}(z_1)
	}
	\times
	\cdots
	\times
	\normord{
		\widetilde{\Lambda}^{\vec{c},\vec{u}}_{i_k}(z_k)
	}				
	|0\rangle
	\Bigg].
	\notag 
	\\
	&= 
	\lim_{\xi \rightarrow t^{-1}}\,\,
	(
	\dualmap
	\comp 
	\bigg|_{
		\substack{
			q_1 = q, \\
			q_2 = q^{-1}t,\\
			q_3 = t^{-1} \\
		}
	}
	)
	\left(
	\cals{A}
	\right)
	\times
	\notag 
	\\
	&\hspace{0.3cm} \times \sum_{i_1 = 1}^{N+M}
	\cdots
	\sum_{i_k = 1}^{N+M}
	\lim_{\xi \rightarrow t^{-1}}\,\,
	(
	\dualmap
	\comp 
	\bigg|_{
		\substack{
			q_1 = q, \\
			q_2 = q^{-1}t,\\
			q_3 = t^{-1} \\
		}
	}
	)
	\Bigg[
	y_{i_1}\cdots y_{i_k}
	u_{i_1}\cdots u_{i_k}
	\notag 
	\\
	&\hspace{4.7cm}\times
	\Delta\left(	q_3^{-\frac{1}{2}}\frac{z_{\lambda_1 + 1}}{z_1}				\right)^{-1}
	\times 
	\cdots 
	\times 
	\Delta\left(	q_3^{-\frac{1}{2}}\frac{z_{\sum_{j = 1}^{\ell(\lambda) - 1}\lambda_j+1}}{z_{\sum_{j = 1}^{\ell(\lambda) - 2}\lambda_j+1}}				\right)^{-1}
	\times
	\prod_{1 \leq a < b \leq k}
	\cals{D}^{(i_a,i_b)}\left(
	\frac{z_b}{z_a}
	; q, t
	\right)
	\Bigg]
	\notag
\end{align}
\normalsize
Note that to obtain the final equality of the above equation, we used proposition \ref{prp37-1018-23jan}.

\begin{lem}
\label{lema2-1452-28jan}
If the arrangement of $(i_1,\dots,i_k) \in \{1,\dots,N+M\}^k$ into a Young diagram $\lambda$ (as in equation \eqref{a11-eqn-1511-22jan}) 
\begin{align}
	\begin{ytableau}
		i_1 & i_2  & \none[\dots] & i_{\lambda_1 - 1}
		& i_{\lambda_1}
		\\
		i_{\lambda_1 + 1} & i_{\lambda_1 + 2} &  \none[\dots]
		& i_{\lambda_1 + \lambda_2} \\
		\none[\vdots] & \none[\vdots]
		& \none[\vdots]
	\end{ytableau}
\label{a11-eqn-1511-22jan}
\end{align}
results in at least one row that violates the conditions for being an element of 
$\text{RSSYBT}(N,M;\lambda)$, then 
\small 
\begin{align*}
&\lim_{\xi \rightarrow t^{-1}}\,\,
(
\dualmap
\comp 
\bigg|_{
	\substack{
		q_1 = q, \\
		q_2 = q^{-1}t,\\
		q_3 = t^{-1} \\
	}
}
)
\Bigg[
y_{i_1}\cdots y_{i_k}
u_{i_1}\cdots u_{i_k}
\notag 
\\
&\hspace{2.9cm}\times
\Delta\left(	q_3^{-\frac{1}{2}}\frac{z_{\lambda_1 + 1}}{z_1}				\right)^{-1}
\times 
\cdots 
\times 
\Delta\left(	q_3^{-\frac{1}{2}}\frac{z_{\sum_{j = 1}^{\ell(\lambda) - 1}\lambda_j+1}}{z_{\sum_{j = 1}^{\ell(\lambda) - 2}\lambda_j+1}}				\right)^{-1}
\times
\prod_{1 \leq a < b \leq k}
\cals{D}^{(i_a,i_b)}\left(
\frac{z_b}{z_a}
; q, t
\right)
\Bigg]
\notag
\\
&= 0. 
\end{align*}
\normalsize 
\end{lem}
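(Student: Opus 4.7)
The strategy is to exhibit a single factor inside the bracketed integrand that becomes identically zero under $\dualmap$, and then verify that this factor is not cancelled by any denominator elsewhere in the expression. First I would classify the row violation: since the row-wise conditions in \textbf{Definition \ref{dfn211-1443}} are only (2) (weakly decreasing along a row) and (4) (strictly decreasing for super numbers), a row violation guarantees the existence of two horizontally adjacent boxes $a, a+1$ in some row $c$ of $T(i_1,\dots,i_k;\lambda)$, sitting in consecutive columns $m, m+1$, such that either \textbf{(i)} $i_a < i_{a+1}$, or \textbf{(ii)} $i_a = i_{a+1}$ with this common value a super number. Under $\dualmap$ we have $z_a = q^{m-1}\xi^{c-1}y$ and $z_{a+1} = q^m\xi^{c-1}y$, so $z_{a+1}/z_a = q$ identically in $\xi$.

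Next I would read off the relevant factor from \eqref{eqn126-1300-4dec} under the substitution $q_1 = q,\ q_2 = q^{-1}t,\ q_3 = t^{-1}$. In case (i), the numerator of $\cals{D}^{(i_a,i_{a+1})}(z_b/z_a;q,t)$ is $(1-q^{-1}z_b/z_a)(1-qt^{-1}z_b/z_a)$; in case (ii) (super self-pairing) it is $(1-q^{-1}z_b/z_a)(1-qz_b/z_a)$. In both cases the factor $(1-q^{-1}z_{a+1}/z_a)$ appears in the numerator, and substituting $z_{a+1}/z_a = q$ makes it equal to $(1-q^{-1}\cdot q) = 0$.

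The decisive observation is that $(1-q^{-1}z_b/z_a)$ never appears as a denominator factor anywhere in the integrand. Inspecting \eqref{eqn126-1300-4dec}, the denominators of all four cases of $\cals{D}^{(i,j)}$ after the substitution lie in $\{(1-z_b/z_a),\,(1-tz_b/z_a),\,(1-t^{-1}z_b/z_a)\}$; and $\Delta(q_3^{-1/2}z)^{-1} = \tfrac{(1-z)(1-tz)}{(1-qz)(1-q^{-1}tz)}$ contributes only $(1-qz)$ and $(1-q^{-1}tz)$ to the denominators. None of these equals $(1-q^{-1}z_b/z_a)$. Hence the bracketed integrand factorizes as $(1-q^{-1}z_{a+1}/z_a)\cdot \widetilde{R}(z_1,\dots,z_k)$ for a rational function $\widetilde{R}$ whose denominator does not contain $(1-q^{-1}z_{a+1}/z_a)$. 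Applying $\dualmap$ produces $(1-q^{-1}\cdot q)\cdot(\widetilde{R}\circ\dualmap) = 0$ identically as a rational function of $\xi$, and the limit $\xi \to t^{-1}$ is trivially zero. The only subtlety—and the step requiring care—is this denominator enumeration: once it is checked that no $\cals{D}^{(i_c,i_d)}$ with $(c,d)\neq (a,a+1)$ and no $\Delta^{-1}$ factor can accidentally contribute $(1-q^{-1}z_{a+1}/z_a)$ to the denominator, the vanishing of the pointwise factor $\cals{D}^{(i_a,i_{a+1})}(q;q,t)=0$ upgrades directly to the identical vanishing of the $\xi$-rational function under the limit.
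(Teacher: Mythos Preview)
Your argument is correct and mirrors the paper's proof: the same two-case split on the row violation, the same vanishing numerator factor $(1-q^{-1}z_{a+1}/z_a)$, and the same enumeration of denominator types from $\cals{D}$ and $\Delta^{-1}$. The one place to tighten is the non-cancellation step. You check that no denominator factor equals $(1-q^{-1}z_{a+1}/z_a)$ \emph{as a polynomial in the $z$-variables}, which justifies the factorization $(1-q^{-1}z_{a+1}/z_a)\cdot\widetilde{R}$; but to conclude that $\dualmap[\text{integrand}]=0$ rather than an indeterminate $0/0$, you also need $\widetilde{R}\circ\dualmap$ to be finite, i.e.\ that no denominator factor vanishes \emph{after} the specialization. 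A factor like $(1-z_d/z_c)$ for some other pair $(c,d)$ is a different polynomial from $(1-q^{-1}z_{a+1}/z_a)$ yet could in principle still specialize to zero. The paper closes this by observing that under $\dualmap$ every denominator factor becomes one of $(1-q^m\xi^n)$, $(1-tq^m\xi^n)$, $(1-t^{-1}q^m\xi^n)$ with $(m,n)\in(\bb{Z}\times\bb{Z}^{\geq 0})\setminus\{(0,0)\}$, or $(1-q\xi)$, $(1-q^{-1}t\xi)$ from $\Delta^{-1}$, each a nonzero Laurent polynomial in $\xi$. Your denominator list is exactly the right input for this check; you just need to draw the conclusion at the $\xi$-level rather than the $z$-level.
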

\begin{proof}
First, recall from equation \eqref{eqn126-1300-4dec} that 
\begin{align}
	\bigg|_{
		\substack{
			q_1 = q, \\
			q_2 = q^{-1}t,\\
			q_3 = t^{-1} \\
		}
	}
	\cals{D}^{(i,j)}\left(
	\frac{z_b}{z_a}
	; q, t
	\right)
	:= 
	\begin{cases}
		\displaystyle 
		\frac{
			\left(1 - q^{-1}\frac{z_b}{z_a}\right)
			\left(1 - qt^{-1}\frac{z_b}{z_a}\right)
		}{
			\left(1 - t^{-1}\frac{z_b}{z_a}\right)
			\left(1 - \frac{z_b}{z_a}\right)
		}
		\hspace{0.3cm}
		&\text{ if } i < j
		\\
		\displaystyle 
		\frac{
			\left(1 - q^{-1}\frac{z_b}{z_a}\right)
			\left(1 - q\frac{z_b}{z_a}\right)
		}{
			\left(1 - t^{-1}\frac{z_b}{z_a}\right)
			\left(1 - t\frac{z_b}{z_a}\right)
		}
		\hspace{0.3cm}
		&\text{ if } i = j = \text{ super-number }
		\\
		1
		\hspace{0.3cm}
		&\text{ if } i = j = \text{ ordinary-number }
		\\
		\displaystyle
		\frac{
			\left(1 - q\frac{z_b}{z_a}\right)
			\left(1 - q^{-1}t\frac{z_b}{z_a}\right)
		}{
			\left(1 - t\frac{z_b}{z_a}\right)
			\left(1 - \frac{z_b}{z_a}\right)
		}
		\hspace{0.3cm}
		&\text{ if } i > j 
	\end{cases}
\label{eqn-a12-1300-23jan}
\end{align}
Next, we assume that at least one row in the Young tableau \eqref{a11-eqn-1511-22jan} violates the conditions for being an element of $\text{RSSYBT}(N,M;\lambda)$. This leads to two possible cases:
\begin{enumerate}[(1)]
	\item \label{case1-1600}
	There exists a pair of adjacent boxes within the same row that are not weakly decreasing (i.e., they are strictly increasing).
	\item \label{case2-1600}
	All boxes in the row are weakly decreasing, but there exists a pair of adjacent boxes within the same row containing the same super number.
\end{enumerate}

We begin by considering case \eqref{case1-1600}. Let us assume that 
\begin{ytableau}
	i_j & i_{j+1}
\end{ytableau}
denotes a pair of adjacent boxes within the same row where $i_j < i_{j+1}$. 
Then from \eqref{eqn-a12-1300-23jan}, the factor 
\begin{align}
	\frac{
		\left(1 - q^{-1}\frac{z_{j+1}}{z_j}\right)
		\left(1 - qt^{-1}\frac{z_{j+1}}{z_j}\right)
	}{
		\left(1 - \frac{z_{j+1}}{z_j}		\right)
		\left(1 - t^{-1}\frac{z_{j+1}}{z_j}\right)
	}
\end{align}
will appear. It is evident that 
\begin{align}
	\dualmap
	\left(
	\frac{
		\left(1 - q^{-1}\frac{z_{j+1}}{z_j}\right)
		\left(1 - qt^{-1}\frac{z_{j+1}}{z_j}\right)
	}{
		\left(1 - \frac{z_{j+1}}{z_j}\right)
		\left(1 - t^{-1}\frac{z_{j+1}}{z_j}\right)
	}
	\right)
	= 0,
\end{align}
due to the factor $\left(1 - q^{-1}\frac{z_{j+1}}{z_j}\right)$. 

The question is whether this factor $\left(1 - q^{-1}\frac{z_{j+1}}{z_j}\right)$ can be cancelled by other contributions from $\cals{D}^{(i_a,i_b)}\left(
\frac{z_b}{z_a}
; q, t
\right)$ and
\begin{align}
	\Delta\left(	q_3^{-\frac{1}{2}}\frac{z_{\lambda_1 + 1}}{z_1}				\right)^{-1}
	\times 
	\cdots 
	\times 
	\Delta\left(	q_3^{-\frac{1}{2}}\frac{z_{\sum_{j = 1}^{\ell(\lambda) - 1}\lambda_j+1}}{z_{\sum_{j = 1}^{\ell(\lambda) - 2}\lambda_j+1}}				\right)^{-1}. 
	\label{a15-1448-23jan}
\end{align}
We claim that such cancellation is not possible. Indeed, from \eqref{eqn-a12-1300-23jan}, we can see that the possible denominators are of the form
\begin{align}
	\left(1 - \frac{z_b}{z_a}\right), \hspace{0.3cm}
	\left(1 - t\frac{z_b}{z_a}\right), \hspace{0.3cm}
	\left(1 - t^{-1}\frac{z_b}{z_a}\right)
	\hspace{1cm}
	(1 \leq a < b \leq k)
\end{align}
Therefore, under the map $\dualmap$, the factors appearing in the denominator will be of the form
\begin{align}
	(1 - q^m\xi^n), \hspace{0.3cm}
	(1 - tq^m\xi^n), \hspace{0.3cm} 
	(1 - t^{-1}q^m\xi^n), 
	\label{a18-eqn-1550-22jan}
\end{align}
where $(m,n) \in ( \bb{Z} \times \bb{Z}^{\geq 0}) \backslash \{(0,0)\}$. It is clear that for any $(m,n) \in ( \bb{Z} \times \bb{Z}^{\geq 0}) \backslash \{(0,0)\}$, the factors in \eqref{a18-eqn-1550-22jan} can never equal zero.
Thus, the factor $\left(1 - q^{-1}\frac{z_{j+1}}{z_j}\right)$ cannot be cancelled by other contributions from $\cals{D}^{(i_a,i_b)}\left(
\frac{z_b}{z_a}
; q, t
\right)$.

Furthermore, given that 
\begin{align}
\bigg|_{
	\substack{
		q_1 = q, \\
		q_2 = q^{-1}t,\\
		q_3 = t^{-1} \\
	}
}
\Delta\left(q_3^{-\frac{1}{2}}	\frac{z_j}{z_i}		\right)
&= 
\frac{
	\left(1 - q\frac{z_j}{z_i}\right)
	\left(1 - q^{-1}t\frac{z_j}{z_i}\right)
}{
	\left(1 - \frac{z_j}{z_i}\right)
	\left(1 - t\frac{z_j}{z_i}\right)
}, 
\end{align}
a similar argument demonstrates that the factor  $\left(1 - q^{-1}\frac{z_{j+1}}{z_j}\right)$ cannot be cancelled by other contributions from
\begin{align}
	\Delta\left(	q_3^{-\frac{1}{2}}\frac{z_{\lambda_1 + 1}}{z_1}				\right)^{-1}
	\times 
	\cdots 
	\times 
	\Delta\left(	q_3^{-\frac{1}{2}}\frac{z_{\sum_{j = 1}^{\ell(\lambda) - 1}\lambda_j+1}}{z_{\sum_{j = 1}^{\ell(\lambda) - 2}\lambda_j+1}}				\right)^{-1}. 
\end{align}
Thus, in case \eqref{case1-1600}, we have 
\begin{align}
	&\lim_{\xi \rightarrow t^{-1}}\,\,
	(
	\dualmap
	\comp 
	\bigg|_{
		\substack{
			q_1 = q, \\
			q_2 = q^{-1}t,\\
			q_3 = t^{-1} \\
		}
	}
	)
	\Bigg[
	y_{i_1}\cdots y_{i_k}
	u_{i_1}\cdots u_{i_k}
	\\
	&\hspace{2.9cm}\times
	\Delta\left(	q_3^{-\frac{1}{2}}\frac{z_{\lambda_1 + 1}}{z_1}				\right)^{-1}
	\times 
	\cdots 
	\times 
	\Delta\left(	q_3^{-\frac{1}{2}}\frac{z_{\sum_{j = 1}^{\ell(\lambda) - 1}\lambda_j+1}}{z_{\sum_{j = 1}^{\ell(\lambda) - 2}\lambda_j+1}}				\right)^{-1}
	\times
	\prod_{1 \leq a < b \leq k}
	\cals{D}^{(i_a,i_b)}\left(
	\frac{z_b}{z_a}
	; q, t
	\right)
	\Bigg]
	\notag
	\\
	&= 0. 
	\notag 
\end{align}

Next, we consider case \eqref{case2-1600}. Let us assume that \begin{ytableau}
	i_j & i_{j+1}
\end{ytableau}
denotes a pair of adjacent boxes in the same row such that $i_j = i_{j+1} = \text{super number}$. From \eqref{eqn-a12-1300-23jan}, we see that the factor 
\begin{align}
\frac{
	\left(1 - q^{-1}		\frac{z_{j+1}}{z_j}		\right)
	\left(1 - q				\frac{z_{j+1}}{z_j}				\right)
}{
	\left(1 - t^{-1}		\frac{z_{j+1}}{z_j} \right)
	\left(1 - t\frac{z_{j+1}}{z_j}\right)
}
\end{align}
appears. As a consequence of the factor $\left(1 - q^{-1}\frac{z_{j+1}}{z_j}\right)$, 
\begin{align}
\dualmap\left(
\frac{
	\left(1 - q^{-1}		\frac{z_{j+1}}{z_j}		\right)
	\left(1 - q				\frac{z_{j+1}}{z_j}				\right)
}{
	\left(1 - t^{-1}		\frac{z_{j+1}}{z_j} \right)
	\left(1 - t\frac{z_{j+1}}{z_j}\right)
}
\right)
= 0. 
\end{align}
Similar to case  \eqref{case1-1600}, it is necessary to verify that $\left(1 - q^{-1}\frac{z_{j+1}}{z_j}\right)$ will not be cancelled by contributions from other factors
$\cals{D}^{(i_a,i_b)}\left(
\frac{z_b}{z_a}
; q, t
\right)$ and the factor in \eqref{a15-1448-23jan}. Applying the same reasoning as in case \eqref{case1-1600}, we conclude that $\left(1 - q^{-1}\frac{z_{j+1}}{z_j}\right)$ cannot be cancelled. Hence, 
\small 
\begin{align*}
	&\lim_{\xi \rightarrow t^{-1}}\,\,
	(
	\dualmap
	\comp 
	\bigg|_{
		\substack{
			q_1 = q, \\
			q_2 = q^{-1}t,\\
			q_3 = t^{-1} \\
		}
	}
	)
	\Bigg[
	y_{i_1}\cdots y_{i_k}
	u_{i_1}\cdots u_{i_k}
	\notag 
	\\
	&\hspace{2.9cm}\times
	\Delta\left(	q_3^{-\frac{1}{2}}\frac{z_{\lambda_1 + 1}}{z_1}				\right)^{-1}
	\times 
	\cdots 
	\times 
	\Delta\left(	q_3^{-\frac{1}{2}}\frac{z_{\sum_{j = 1}^{\ell(\lambda) - 1}\lambda_j+1}}{z_{\sum_{j = 1}^{\ell(\lambda) - 2}\lambda_j+1}}				\right)^{-1}
	\times
	\prod_{1 \leq a < b \leq k}
	\cals{D}^{(i_a,i_b)}\left(
	\frac{z_b}{z_a}
	; q, t
	\right)
	\Bigg]
	\notag
	\\
	&= 0 
\end{align*}
\normalsize 
\end{proof}

\subsection{Triangle-form cancellation}
We have previously demonstrated that if any row in the tableau \eqref{a11-eqn-1511-22jan} constructed from $(i_1,\dots,i_k) \in \{1,\dots,N+M\}^k$ violates the reverse SSYBT conditions, then
\begin{align}
	&\lim_{\xi \rightarrow t^{-1}}\,\,
	(
	\dualmap
	\comp 
	\bigg|_{
		\substack{
			q_1 = q, \\
			q_2 = q^{-1}t,\\
			q_3 = t^{-1} \\
		}
	}
	)
	\Bigg[
	y_{i_1}\cdots y_{i_k}
	u_{i_1}\cdots u_{i_k}
	\label{a20-eqn-1515-23jan}
	\\
	&\hspace{2.9cm}\times
	\Delta\left(	q_3^{-\frac{1}{2}}\frac{z_{\lambda_1 + 1}}{z_1}				\right)^{-1}
	\times 
	\cdots 
	\times 
	\Delta\left(	q_3^{-\frac{1}{2}}\frac{z_{\sum_{j = 1}^{\ell(\lambda) - 1}\lambda_j+1}}{z_{\sum_{j = 1}^{\ell(\lambda) - 2}\lambda_j+1}}				\right)^{-1}
	\times
	\prod_{1 \leq a < b \leq k}
	\cals{D}^{(i_a,i_b)}\left(
	\frac{z_b}{z_a}
	; q, t
	\right)
	\Bigg]
	\notag
\end{align}
equals 0. Next, we would like to show that if the reverse SSYBT conditions are violated in the column direction, then the quantity in equation \eqref{a20-eqn-1515-23jan} will also be 0. To show this, we must first introduce the concept of triangle-form cancellation in reverse SSYBT. Therefore, this subsection will be dedicated to 
defining and discussing triangle-form cancellation in reverse SSYBT.

\begin{prop}
\label{prpa3-1848-23jan}
If a triangular-shaped boxes 
\begin{align}
	\begin{ytableau}
		i_{c}  &   i_{c+1} \\
		\none & i_{d}
	\end{ytableau}
	\label{D22-1725-17jan}
\end{align}
is part of a reverse SSYBT, then 
\begin{align}
(
\dualmap
\comp 
\bigg|_{
	\substack{
		q_1 = q, \\
		q_2 = q^{-1}t,\\
		q_3 = t^{-1} \\
	}
}
)
\Bigg[
\underbrace{	\prod_{1 \leq a < b \leq k}				}_{a,b \in \{c,c+1,d\}}
\cals{D}^{(i_a,i_b)}\left(
\frac{z_b}{z_a}
; q, t
\right)
\Bigg], 
\label{A22-eqn-1555-23jan}
\end{align}
will not contain a factor of $(1 - t\xi)$ in
either the numerator or the denominator.
\end{prop}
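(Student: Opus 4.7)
Since the three boxes sit at positions $(\operatorname{row}\,r,\operatorname{col}\,j)$, $(r,j{+}1)$, $(r{+}1,j{+}1)$ inside the Young diagram of $\lambda$, \textbf{Definition \ref{dfn41-2057-30mar}} yields the images
\begin{align*}
\dualmap(z_c) = q^{j-1}\xi^{r-1}y,\qquad
\dualmap(z_{c+1}) = q^{j}\xi^{r-1}y,\qquad
\dualmap(z_d) = q^{j}\xi^{r}y,
\end{align*}
so that the three ratios entering the product \eqref{A22-eqn-1555-23jan} specialise to $z_{c+1}/z_c \mapsto q$, $z_d/z_{c+1} \mapsto \xi$, and $z_d/z_c \mapsto q\xi$. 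The first step of the plan is then to scan the four cases in \eqref{eqn-a12-1300-23jan} and list every numerator or denominator factor that can possibly reduce to $(1-t\xi)$ under these substitutions.

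A direct inspection shows that $(1-t\xi)$ can arise in only two places: from the numerator factor $(1-q^{-1}t\cdot z_d/z_c)$ of the $i>j$ formula applied to the pair $(c,d)$, and from the denominator factor $(1-t\cdot z_d/z_{c+1})$ appearing in both the $i>j$ formula and the $i=j$-super formula applied to the pair $(c+1,d)$. The pair $(c,c+1)$, whose ratio specialises to the $\xi$-independent value $q$, cannot produce $(1-t\xi)$ in any of the four cases. It therefore suffices to exhibit a bijection between the $(1-t\xi)$-numerators and the $(1-t\xi)$-denominators arising from the three pairs.

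This bijection follows from a short case analysis using the reverse-SSYBT rules. The row pair forces $i_c \geq i_{c+1}$, with strict inequality if both indices are super, while the column pair forces $i_{c+1} \geq i_d$, with strict inequality if both indices are ordinary. Taken together, these two rules exclude the triple equality $i_c = i_{c+1} = i_d$, leaving exactly three admissible configurations: all three indices distinct, or $i_c = i_{c+1}$ both ordinary, or $i_{c+1} = i_d$ both super. In each of these configurations one reads off from \eqref{eqn-a12-1300-23jan} that $\cals{D}^{(i_{c+1},i_d)}$ contributes exactly one $(1-t\xi)$ in its denominator and $\cals{D}^{(i_c,i_d)}$ contributes exactly one $(1-t\xi)$ in its numerator, so the two cancel inside the product \eqref{A22-eqn-1555-23jan} and no $(1-t\xi)$ survives. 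The one place where genuine care is required — and thus the main point of the argument — is the exclusion of the triple-equal configuration by the reverse-SSYBT conventions of \textbf{Definition \ref{dfn211-1443}}, since it is precisely this exclusion that prevents an orphaned $(1-t\xi)$ from appearing either upstairs or downstairs.
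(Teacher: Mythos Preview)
Your argument is correct and follows essentially the same route as the paper: a case analysis on the relative ordering and ordinary/super types of $i_c,i_{c+1},i_d$ dictated by the reverse-SSYBT rules, together with the observation that the only possible $(1-t\xi)$ contributions come from the pairs $(c,d)$ (numerator) and $(c+1,d)$ (denominator) and always cancel. The paper carries this out by explicitly writing the full product in five subcases, whereas you isolate just the $(1-t\xi)$-producing factors and organise the cases by equality pattern, which is a cleaner presentation of the same idea; one minor remark is that your emphasis on the triple-equality exclusion is only essential in the all-super case (where an orphaned denominator $(1-t\xi)$ would otherwise survive), since the all-ordinary triple would contribute $1$ and cause no trouble anyway.
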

\begin{proof}
We will divide the proof into two cases: the first where $i_c$ is a super number, and the second where 
$i_c$ is an ordinary number. Let us begin with the first case.

Since $i_c$ is a super number and the triangular-shaped diagram \ref{D22-1725-17jan} is part of a reverse SSYBT, we can immediately conclude that $i_d < i_c$. If $i_{c+1}$ is a super number and $i_d < i_{c+1}$, then 
\small  
\begin{align}
	&(
	\dualmap
	\comp 
	\bigg|_{
		\substack{
			q_1 = q, \\
			q_2 = q^{-1}t,\\
			q_3 = t^{-1} \\
		}
	}
	)
	\Bigg[
	\underbrace{	\prod_{1 \leq a < b \leq k}				}_{a,b \in \{c,c+1,d\}}
	\cals{D}^{(i_a,i_b)}\left(
	\frac{z_b}{z_a}
	; q, t
	\right)
	\Bigg]
	= 
	\frac{(1 - q^2)(1 - t)}{(1 - q)(1 - tq)}
	\times
	\frac{(1 - q\xi)(1 - q^{-1}t\xi)}{(1 - \xi)
		\cancel{	(1 - t\xi)			}}
	\times
	\frac{
		(1 - q^2\xi)\cancel{		(1 - t\xi)		}		
	}{
		(1 - q\xi)(1 - qt\xi)
	}. 
\end{align}
\normalsize
If $i_{c+1}$ is a super number and $i_d = i_{c+1}$, then 
\small 
\begin{align}
&(
\dualmap
\comp 
\bigg|_{
	\substack{
		q_1 = q, \\
		q_2 = q^{-1}t,\\
		q_3 = t^{-1} \\
	}
}
)
\Bigg[
\underbrace{	\prod_{1 \leq a < b \leq k}				}_{a,b \in \{c,c+1,d\}}
\cals{D}^{(i_a,i_b)}\left(
\frac{z_b}{z_a}
; q, t
\right)
\Bigg]
=
	\frac{(1 - q^2)(1 - t)}{(1 - q)(1 - tq)}
	\times
	\frac{
		(1 - q\xi)(1 - q^{-1}\xi)
	}{
		\cancel{	(1 - t\xi)		}(1 - t^{-1}\xi)
	}
	\times
	\frac{
		(1 - q^2\xi)\cancel{		(1 - t\xi)		}		
	}{
		(1 - q\xi)(1 - qt\xi)
	}. 
\end{align}
\normalsize
If $i_{c+1}$ is an ordinary number, then
\small 
\begin{align}
&(
\dualmap
\comp 
\bigg|_{
	\substack{
		q_1 = q, \\
		q_2 = q^{-1}t,\\
		q_3 = t^{-1} \\
	}
}
)
\Bigg[
\underbrace{	\prod_{1 \leq a < b \leq k}				}_{a,b \in \{c,c+1,d\}}
\cals{D}^{(i_a,i_b)}\left(
\frac{z_b}{z_a}
; q, t
\right)
\Bigg]
=
\frac{(1 - q^2)(1 - t)}{(1 - q)(1 - tq)}
\times
\frac{(1 - q\xi)(1 - q^{-1}t\xi)}{(1 - \xi)
	\cancel{	(1 - t\xi)			}}
\times
\frac{
	(1 - q^2\xi)\cancel{		(1 - t\xi)		}		
}{
	(1 - q\xi)(1 - qt\xi)
}
\end{align}
\normalsize
Therefore, we have shown that in the first case, the quantity \eqref{A22-eqn-1555-23jan} does not contain a factor of $(1-t\xi)$.

Next, we consider the second case where $i_c$ is an ordinary number. Since the triangular-shaped diagram in  \ref{D22-1725-17jan} is part of a reverse SSYBT, $i_{c+1}$ must also be an ordinary number, and $i_d < i_{c+1}$. Thus, we have $i_d < i_{c+1} \leq i_c$. If $i_{c+1} < i_c$, then 
\small 
\begin{align}
&(
\dualmap
\comp 
\bigg|_{
	\substack{
		q_1 = q, \\
		q_2 = q^{-1}t,\\
		q_3 = t^{-1} \\
	}
}
)
\Bigg[
\underbrace{	\prod_{1 \leq a < b \leq k}				}_{a,b \in \{c,c+1,d\}}
\cals{D}^{(i_a,i_b)}\left(
\frac{z_b}{z_a}
; q, t
\right)
\Bigg]
= 
\frac{(1 - q^2)(1 - t)}{(1 - q)(1 - tq)}
\times
\frac{(1 - q\xi)(1 - q^{-1}t\xi)}{(1 - \xi)
	\cancel{	(1 - t\xi)			}}
\times
\frac{
	(1 - q^2\xi)\cancel{		(1 - t\xi)		}		
}{
	(1 - q\xi)(1 - qt\xi)
},
\end{align}
\normalsize
while if $i_{c+1} = i_c$, then 
\small 
\begin{align}
&(
\dualmap
\comp 
\bigg|_{
	\substack{
		q_1 = q, \\
		q_2 = q^{-1}t,\\
		q_3 = t^{-1} \\
	}
}
)
\Bigg[
\underbrace{	\prod_{1 \leq a < b \leq k}				}_{a,b \in \{c,c+1,d\}}
\cals{D}^{(i_a,i_b)}\left(
\frac{z_b}{z_a}
; q, t
\right)
\Bigg]
= 
\frac{(1 - q\xi)(1 - q^{-1}t\xi)}{(1 - \xi)
	\cancel{	(1 - t\xi)			}}
\times
\frac{
	(1 - q^2\xi)\cancel{		(1 - t\xi)		}		
}{
	(1 - q\xi)(1 - qt\xi)
}. 
\end{align}
\normalsize
Therefore, we have shown that in the second case as well, the quantity \eqref{A22-eqn-1555-23jan} does not contain a factor of $(1-t\xi)$. 
\end{proof}

\begin{prop}
Let
\begin{align}
	\begin{ytableau}
		i_1 & i_2  & \none[\dots] & i_{\lambda_1 - 1}
		& i_{\lambda_1}
		\\
		i_{\lambda_1 + 1} & i_{\lambda_1 + 2} &  \none[\dots]
		& i_{\lambda_1 + \lambda_2} \\
		\none[\vdots] & \none[\vdots]
		& \none[\vdots]
	\end{ytableau}
\label{tableau-a23-2053-23jan}
\end{align}
be a reverse SSYBT of shape $\lambda$ where $|\lambda| = k$. Then, we can decompose  
\begin{align}
	(
	\dualmap
	\comp 
	\bigg|_{
		\substack{
			q_1 = q, \\
			q_2 = q^{-1}t,\\
			q_3 = t^{-1} \\
		}
	}
	)
	\Bigg[
	\prod_{1 \leq a < b \leq k}			
	\cals{D}^{(i_a,i_b)}\left(
	\frac{z_b}{z_a}
	; q, t
	\right)
	\Bigg]
	=
	\frac{1}{(1 - t\xi)^{\ell(\lambda)-1}} \times \cals{F}, 
\end{align}
where neither the numerator nor the denominator of $\cals{F}$ contains the factor $(1 - t\xi)$.
\end{prop}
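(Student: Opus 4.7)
The plan is to use \textbf{Proposition \ref{prpa3-1848-23jan}} as the core cancellation device and to organize the pairs $(a,b)$ with $1\leq a < b \leq k$ into triangle groups plus leftover pairs, so that only a small controlled collection of leftover pairs produces the factor $(1-t\xi)$.

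The first step is a pair-by-pair classification: for each $(a,b)$, determine whether the specialized factor $\cals{D}^{(i_a,i_b)}$ (under $\dualmap$) contains $(1-t\xi)$. Since $\dualmap$ sends a box at position $(r,c)$ of $\lambda$ to $z_{(r,c)} = q^{c-1}\xi^{r-1}y$, one has $z_b/z_a = q^{c_2-c_1}\xi^{r_2-r_1}$ for a pair at positions $(r_1,c_1),(r_2,c_2)$. Examining \eqref{eqn-a12-1300-23jan} term by term, a factor $(1-t\xi)$ can arise only in two configurations: (a) vertical pair $((r,c),(r+1,c))$, producing $(1-t\xi)$ in the denominator; (b) diagonal pair $((r,c),(r+1,c+1))$, producing $(1-t\xi)$ in the numerator. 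In (a) the reverse-SSYBT axioms force $i_a \geq i_b$ with equality only for super numbers, and either way the relevant factor sits in the denominator of $\cals{D}^{(i_a,i_b)}$. In (b) the weakly-decreasing row and column conditions combined with the strict-decrease axioms rule out $i_a = i_b$, so $i_a > i_b$ and the relevant factor sits in the numerator. All other pair configurations yield factors of shape $(1-c\,q^m\xi^n)$ that cannot equal $(1-t\xi)$.

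Next, for each $r\in\{1,\ldots,\ell(\lambda)-1\}$ and $j\in\{1,\ldots,\lambda_{r+1}-1\}$ I would form the triangle
\begin{align*}
T_j^r := \{(r,j),(r,j+1),(r+1,j+1)\},
\end{align*}
whose three constituent pairs are the horizontal pair in row $r$, the diagonal pair, and the vertical pair in column $j+1$. A direct check shows each pair belongs to at most one triangle, so these groups are disjoint. By \textbf{Proposition \ref{prpa3-1848-23jan}}, every triangle contributes a factor free of $(1-t\xi)$. The pairs not contained in any triangle fall into five families: leftmost-column vertical pairs $((r,1),(r+1,1))$, unused horizontal pairs $((r,j),(r,j+1))$ with $j\geq\lambda_{r+1}$, same-row non-adjacent pairs, reverse-diagonal and longer-offset pairs between adjacent rows, and pairs between non-adjacent rows. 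By the classification above, only the $\ell(\lambda)-1$ leftmost-column vertical pairs produce $(1-t\xi)$, each as a single denominator factor. Defining $\cals{F}$ to be the product of all triangle contributions together with all leftover contributions except these $\ell(\lambda)-1$ vertical pairs yields the claimed factorization.

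The principal technical obstacle is the classification step, which requires a careful combination of the SSYBT axioms (weakly decreasing in rows and columns, strictly decreasing for ordinary numbers in columns, strictly decreasing for super numbers in rows) with the case-by-case specialization \eqref{eqn-a12-1300-23jan} of $\cals{D}^{(i,j)}$ to ensure that no hidden $(1-t\xi)$ arises from an $i_a = i_b$ configuration outside (a) and (b). Once this classification is secured, the triangle-plus-leftover partition assembles the desired decomposition without further obstacle.
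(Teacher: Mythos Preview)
Your proposal is correct and follows essentially the same route as the paper: classify which adjacent pairs can produce a $(1-t\xi)$ factor, use the triangle-form cancellation of \textbf{Proposition \ref{prpa3-1848-23jan}} to eliminate all such factors except those coming from the first column, and read off the $(1-t\xi)^{-(\ell(\lambda)-1)}$ from the $\ell(\lambda)-1$ vertical first-column pairs. Your partition into explicit triangles $T_j^r$ and leftover families is slightly more detailed than the paper's argument, but the underlying mechanism is identical.
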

\begin{proof}
To determine the total number of $(1-t\xi)$ factors remaining after cancellations, 
it is sufficient to consider only the $\cals{D}^{(i,j)}\left(
\frac{z_b}{z_a}
; q, t
\right)$ arising from pairs of boxes of the forms  \eqref{eqn-a4-1332-22jan} and 
\eqref{eqn-a5-1927-23jan}.
Moreover, by using proposition \ref{prpa3-1848-23jan}, 
we know that we only need to examine the first column
\begin{align}
\begin{ytableau}
	i_1 
	\\
	i_{\lambda_1 + 1}  \\
	\none[\vdots] 
\end{ytableau}
\end{align}

Given that the tableau \eqref{tableau-a23-2053-23jan} is a reverse SSYBT, it follows that for each 
$j \in \{1,2,\dots,\ell(\lambda) - 1\}$, 
\begin{align}
\cals{D}^{(i_{\sum_{\alpha = 1}^{j-1}\lambda_\alpha+1},i_{\sum_{\alpha = 1}^{j}\lambda_\alpha+1})}\left(
\frac{z_{\sum_{\alpha = 1}^{j}\lambda_\alpha+1}}{z_{\sum_{\alpha = 1}^{j-1}\lambda_\alpha+1}}
; q, t
\right)
\end{align}
contains exactly one factor of $(1-t\xi)^{-1}$. 
Therefore, the contribution from the first column, 
\begin{align}
	(
	\dualmap
	\comp 
	\bigg|_{
		\substack{
			q_1 = q, \\
			q_2 = q^{-1}t,\\
			q_3 = t^{-1} \\
		}
	}
	)
	\Bigg[
	\underbrace{		\prod_{1 \leq a < b \leq k}						}_{
	\substack{
	a,b \in
	\left\{	\sum_{\alpha = 1}^{j}\lambda_\alpha+1 ~|~ j \in \{0,1,2,\dots,\ell(\lambda) - 1\}	\right\}
	}
	}
	\cals{D}^{(i_a,i_b)}\left(
	\frac{z_b}{z_a}
	; q, t
	\right)
	\Bigg]
\end{align}
contains exactly the factor $(1-t\xi)^{-(\ell(\lambda) - 1)}$. 
This is because if a pair of boxes is not adjacent, the corresponding $\cals{D}^{(i_a,i_b)}\left(
\frac{z_b}{z_a}
; q, t
\right)$ will not contain $(1- t\xi)$. 
Consequently, we can conclude that 
\begin{align}
	(
	\dualmap
	\comp 
	\bigg|_{
		\substack{
			q_1 = q, \\
			q_2 = q^{-1}t,\\
			q_3 = t^{-1} \\
		}
	}
	)
	\Bigg[
	\prod_{1 \leq a < b \leq k}			
	\cals{D}^{(i_a,i_b)}\left(
	\frac{z_b}{z_a}
	; q, t
	\right)
	\Bigg]
\end{align}
also contains exactly the factor $(1-t\xi)^{-(\ell(\lambda) - 1)}$. 
\end{proof}

\subsection{Breaking pair, breaking triangle, and breaking band}

In subsection \ref{subsec-a3-1523-27jan}, we will show that 
if the reverse SSYBT conditions are violated in the column direction, then the quantity in equation \eqref{a20-eqn-1515-23jan} will also be 0. To demonstrate this, we will introduce the concepts of a breaking pair, breaking triangle, and breaking band, which will be defined in this subsection.

\begin{dfn}[Breaking pair]
Let 
\begin{align}
	\begin{ytableau}
		i_1 & i_2  & \none[\dots] & i_{\lambda_1 - 1}
		& i_{\lambda_1}
		\\
		i_{\lambda_1 + 1} & i_{\lambda_1 + 2} &  \none[\dots]
		& i_{\lambda_1 + \lambda_2} \\
		\none[\vdots] & \none[\vdots]
		& \none[\vdots]
	\end{ytableau}
	\label{A28-eqn-2158-23jan}
\end{align}
be a tableau of shape $\lambda$ constructed by arranging $(i_1,\dots,i_k) \in \{1,\dots,N+M\}^k$ into the form of a Young diagram $\lambda$. Suppose that every row of the tableau \eqref{A28-eqn-2158-23jan} satisfies the reverse SSYBT conditions. Consider a pair of adjacent boxes
\begin{align}
	\begin{ytableau}
		i_a
		\\
		i_b
	\end{ytableau}
\label{eqn-a29-1554-26jan}
\end{align}
in the tableau \eqref{A28-eqn-2158-23jan}. We say that it is a breaking pair if it violates the reverse SSYBT conditions. If the adjacent pair of boxes \eqref{eqn-a29-1554-26jan}
is not a breaking pair, we then call it a non-breaking pair. 
\end{dfn}

\begin{dfn}[Breaking triangle]
\label{dfna6-1448-17apr}
Let 
\begin{align}
	\begin{ytableau}
		i_1 & i_2  & \none[\dots] & i_{\lambda_1 - 1}
		& i_{\lambda_1}
		\\
		i_{\lambda_1 + 1} & i_{\lambda_1 + 2} &  \none[\dots]
		& i_{\lambda_1 + \lambda_2} \\
		\none[\vdots] & \none[\vdots]
		& \none[\vdots]
	\end{ytableau}
	\label{A30-eqn-2158-23jan}
\end{align}
be a tableau of shape $\lambda$ constructed by arranging $(i_1,\dots,i_k) \in \{1,\dots,N+M\}^k$ into the form of a Young diagram $\lambda$. Suppose that every row of the tableau \eqref{A30-eqn-2158-23jan} satisfies the reverse SSYBT conditions. Let 
\begin{align}
	\begin{ytableau}
		i_a
		\\
		i_b
	\end{ytableau}
\label{a31-2204-23jan-eqn}
\end{align}
be a breaking pair in the tableau \eqref{A30-eqn-2158-23jan}. If the breaking pair \eqref{a31-2204-23jan-eqn} is not in the first column, we then call 
\begin{align}
	\begin{ytableau}
		i_{a-1} & i_a
		\\
		\none & i_b
	\end{ytableau}
\end{align}
a breaking triangle of the tableau \eqref{A30-eqn-2158-23jan}.
\end{dfn}

Now, let 
\begin{align}
	\begin{ytableau}
		i_{a-1} & i_a
		\\
		\none & i_b
	\end{ytableau}
	\label{a33-eqn-1436-26jan}
\end{align}
be a breaking triangle. Recall from \textbf{Definition \ref{dfna6-1448-17apr}}
that we assumed every row satisfies the reverse SSYBT conditions. Therefore, we can conclude that $i_{a-1} > i_a$ or $i_{a-1} = i_a = \text{ ordinary number }$. On the other hand, since we know that
\begin{align}
	\begin{ytableau}
		i_a
		\\
		i_b
	\end{ytableau}
\end{align}
violates the reverse SSYBT conditions, we can conclude that $i_a < i_b$ or $i_a = i_b = \text{ordinary number}$. 

From this, we can deduce that the possible cases are
\begin{enumerate}[(1)]
\item \label{case1-1429}$i_a < i_b < i_{a-1}$
\item  \label{case2-1429} $i_a< i_b = i_{a-1}$
\item  \label{case3-1429}$i_a < i_{a-1} < i_b$
\item  \label{case4-1429}$\text{ordinary number} = i_a = i_{a-1} < i_b$
\item  \label{case5-1429}$\text{ordinary number} = i_a = i_b < i_{a-1}$
\item  \label{case6-1429}$\text{ordinary number} = i_a = i_b = i_{a-1}$
\end{enumerate}
In the following, we will analyze the contributions from breaking triangles in each of these cases. 

\textbf{\underline{Case \eqref{case1-1429} $i_a < i_b < i_{a-1}$ }}
In this case, we get that 
\small 
\begin{align*}
		&(
		\dualmap
		\comp 
		\bigg|_{
			\substack{
				q_1 = q, \\
				q_2 = q^{-1}t,\\
				q_3 = t^{-1} \\
			}
		}
		)
		\Bigg[
		\underbrace{		\prod_{1 \leq c < d \leq k}			}_{c,d \in \{a-1,a,b\}}
		\cals{D}^{(i_c,i_d)}\left(
		\frac{z_d}{z_c}
		; q, t
		\right)
		\Bigg]
		\\
	&=
	\frac{(1 - q^2)(1 - t)}{(1 - q)(1 - tq)}
	\times
	\frac{(1 - q^2\xi)(1 - t\xi)}{(1 - q\xi)(1 - qt\xi)}
	\times
	\frac{
		(1 - q^{-1}\xi)(1 - qt^{-1}\xi)
	}{
		(1 - \xi)(1 - t^{-1}\xi)
	}.
\end{align*}
\normalsize
We can see that there is one factor of $(1-t\xi)$. Furthermore, since each row must satisfy the reverse SSYBT conditions, the number in the box below $i_{a-1}$ (denoted as $i_{b-1}$) can be less than, equal to, or greater than $i_{a-1}$, but must be greater than or equal to $i_b$.

\begin{itemize}
\item if $i_{b-1} < i_{a-1}$, then we get that 
\begin{align}
	\begin{ytableau}
		i_{a-1}
		\\
		i_{b-1}
	\end{ytableau}
\label{a35-eqn-1551-26jan}
\end{align}
is not a breaking pair. 
\item if $i_{b-1} = i_{a-1} = \text{super number}$, then we get that the pair of boxes in \eqref{a35-eqn-1551-26jan}
is not a breaking pair. 
\item if $i_{b-1} = i_{a-1} = \text{ordinary number}$, then we get that the pair of boxes in \eqref{a35-eqn-1551-26jan}
is a breaking pair. 
\item if $i_{b-1} > i_{a-1}$, then we get that the pair of boxes in \eqref{a35-eqn-1551-26jan}
is a breaking pair. 
\end{itemize}
Thus, in case \eqref{case1-1429}, the adjacent pair of boxes
\begin{align}
\begin{ytableau}
	i_{a-1}
	\\
	i_{b-1}
\end{ytableau} 
\end{align}
can be either a breaking pair or a non-breaking pair. 
\vspace{0.2cm}

\textbf{\underline{Case \eqref{case2-1429} $i_a< i_b = i_{a-1}$ }}
In this case, if $i_b = i_{a-1} = \text{super number}$, then 
\small 
\begin{align}
	&(
	\dualmap
	\comp 
	\bigg|_{
		\substack{
			q_1 = q, \\
			q_2 = q^{-1}t,\\
			q_3 = t^{-1} \\
		}
	}
	)
	\Bigg[
	\underbrace{		\prod_{1 \leq c < d \leq k}			}_{c,d \in \{a-1,a,b\}}
	\cals{D}^{(i_c,i_d)}\left(
	\frac{z_d}{z_c}
	; q, t
	\right)
	\Bigg]
	\\
	&=
	\frac{(1 - q^2)(1 - t)}{(1 - q)(1 - tq)}
	\times
	\frac{
		(1 - q^{-1}\xi)(1 - qt^{-1}\xi)
	}{
		(1 - \xi)(1 - t^{-1}\xi)
	}
	\times
	\frac{
		(1 - \xi)(1 - q^2\xi)
	}{
		(1 - qt^{-1}\xi)(1 - qt\xi)
	}.
\notag 
\end{align}
\normalsize
However, if $i_b = i_{a-1} = \text{ordinary number}$, then  
\small 
\begin{align}
	(
	\dualmap
	\comp 
	\bigg|_{
		\substack{
			q_1 = q, \\
			q_2 = q^{-1}t,\\
			q_3 = t^{-1} \\
		}
	}
	)
	\Bigg[
	\underbrace{		\prod_{1 \leq c < d \leq k}			}_{c,d \in \{a-1,a,b\}}
	\cals{D}^{(i_c,i_d)}\left(
	\frac{z_d}{z_c}
	; q, t
	\right)
	\Bigg]
	= 
	\frac{(1 - q^2)(1 - t)}{(1 - q)(1 - tq)}
	\times
	\frac{
		(1 - q^{-1}\xi)(1 - qt^{-1}\xi)
	}{
		(1 - \xi)(1 - t^{-1}\xi)
	}. 
\end{align}
\normalsize
We see that regardless of whether $i_b = i_{a-1} = \text{super number}$ or $i_b = i_{a-1} = \text{ordinary number}$, the factor $(1-t\xi)$ does not appear. 
Furthermore, we can show that in case \eqref{case2-1429}, the adjacent pair of boxes 
\begin{align}
	\begin{ytableau}
		i_{a-1}
		\\
		i_{b-1}
	\end{ytableau} 
\end{align}
is necessarily a breaking pair. 
\vspace{0.2cm}

\textbf{\underline{Case \eqref{case3-1429} $i_a < i_{a-1} < i_b$ }}
In this case, we get that 
\small 
\begin{align*}
	&(
	\dualmap
	\comp 
	\bigg|_{
		\substack{
			q_1 = q, \\
			q_2 = q^{-1}t,\\
			q_3 = t^{-1} \\
		}
	}
	)
	\Bigg[
	\underbrace{		\prod_{1 \leq c < d \leq k}			}_{c,d \in \{a-1,a,b\}}
	\cals{D}^{(i_c,i_d)}\left(
	\frac{z_d}{z_c}
	; q, t
	\right)
	\Bigg]
	\\
	&=
	\frac{(1 - q^2)(1 - t)}{(1 - q)(1 - tq)}
	\times
	\frac{
		\cancel{	(1 - \xi)		}(1 - q^2t^{-1}\xi)
	}{
		(1 - q\xi)\cancel{		(1 - qt^{-1}\xi)		}
	}
	\times
	\frac{
		(1 - q^{-1}\xi)	 \cancel{	(1 - qt^{-1}\xi)	}	
	}{
		\cancel{	(1 - \xi)	}(1 - t^{-1}\xi)
	}.
\end{align*}
\normalsize
We see that the factor $(1-t\xi)$ does not appear. Furthermore, we can show that in case \eqref{case3-1429}, 
the adjacent pair of boxes 
\begin{align}
	\begin{ytableau}
		i_{a-1}
		\\
		i_{b-1}
	\end{ytableau} 
\end{align}
is necessarily a breaking pair. 
\vspace{0.2cm}

\textbf{\underline{Case \eqref{case4-1429} $\text{ordinary number} = i_a = i_{a-1} < i_b$}}
In this case, we get that 
\small 
\begin{align*}
	&(
	\dualmap
	\comp 
	\bigg|_{
		\substack{
			q_1 = q, \\
			q_2 = q^{-1}t,\\
			q_3 = t^{-1} \\
		}
	}
	)
	\Bigg[
	\underbrace{		\prod_{1 \leq c < d \leq k}			}_{c,d \in \{a-1,a,b\}}
	\cals{D}^{(i_c,i_d)}\left(
	\frac{z_d}{z_c}
	; q, t
	\right)
	\Bigg]
	=
	\frac{
		\cancel{	(1 - \xi)		}(1 - q^2t^{-1}\xi)
	}{
		(1 - q\xi)\cancel{		(1 - qt^{-1}\xi)		}
	}
	\times
	\frac{
		(1 - q^{-1}\xi)	 \cancel{	(1 - qt^{-1}\xi)	}	
	}{
		\cancel{	(1 - \xi)	}(1 - t^{-1}\xi)
	}
\end{align*}
\normalsize
We see that the factor $(1-t\xi)$ does not appear. Furthermore, we can show that in case \eqref{case4-1429}, the adjacent pair of boxes 
\begin{align}
	\begin{ytableau}
		i_{a-1}
		\\
		i_{b-1}
	\end{ytableau} 
\end{align}
is necessarily a breaking pair. 
\vspace{0.2cm}

\textbf{\underline{Case \eqref{case5-1429} $\text{ordinary number} = i_a = i_b < i_{a-1}$}}
In this case, we get that
\small  
\begin{align*}
	&(
	\dualmap
	\comp 
	\bigg|_{
		\substack{
			q_1 = q, \\
			q_2 = q^{-1}t,\\
			q_3 = t^{-1} \\
		}
	}
	)
	\Bigg[
	\underbrace{		\prod_{1 \leq c < d \leq k}			}_{c,d \in \{a-1,a,b\}}
	\cals{D}^{(i_c,i_d)}\left(
	\frac{z_d}{z_c}
	; q, t
	\right)
	\Bigg]
	=
	\frac{(1 - q^2)(1 - t)}{(1 - q)(1 - tq)}
	\times
	\frac{(1 - q^2\xi)(1 - t\xi)}{(1 - q\xi)(1 - qt\xi)}
\end{align*}
\normalsize
We see that there is one factor of $(1-t\xi)$. Furthermore, as in case \eqref{case1-1429}, the adjacent pair of boxes
\begin{align}
	\begin{ytableau}
		i_{a-1}
		\\
		i_{b-1}
	\end{ytableau} 
\end{align}
can be either a breaking pair or a non-breaking pair. 
\vspace{0.2cm}

\textbf{\underline{Case \eqref{case6-1429} $\text{ordinary number} = i_a = i_b = i_{a-1}$}}
In this case, we get that
\small 
\begin{align*}
&(
\dualmap
\comp 
\bigg|_{
	\substack{
		q_1 = q, \\
		q_2 = q^{-1}t,\\
		q_3 = t^{-1} \\
	}
}
)
\Bigg[
\underbrace{		\prod_{1 \leq c < d \leq k}			}_{c,d \in \{a-1,a,b\}}
\cals{D}^{(i_c,i_d)}\left(
\frac{z_d}{z_c}
; q, t
\right)
\Bigg]
=
1
\end{align*}
\normalsize
We see that the factor $(1-t\xi)$ does not appear. 
Furthermore, we can show that in case \eqref{case6-1429}, the adjacent pair of boxes
\begin{align}
	\begin{ytableau}
		i_{a-1}
		\\
		i_{b-1}
	\end{ytableau} 
\end{align}
is necessarily a breaking pair. 
\vspace{0.3cm}

The preceding analysis of \textbf{Case \eqref{case1-1429}} - \textbf{Case \eqref{case6-1429}} will lead to 
the definition of breaking triangle types as stated in \textbf{Definition \ref{dfna7-2052-26jan}} below. 

\begin{dfn}[Types of breaking triangle]
\label{dfna7-2052-26jan}
Suppose
	\begin{align}
			\begin{ytableau}
					i_{a-1} & i_a
					\\
					\none & i_b
				\end{ytableau}
		\end{align}
	is a breaking triangle.
	We say that it is an unstoppable breaking triangle if exactly one of the following conditions holds:
	\begin{enumerate}[(1)]
	\item 	$i_a< i_b = i_{a-1}$,
	\item  $i_a < i_{a-1} < i_b$,
	\item  $\text{ordinary number} = i_a = i_{a-1} < i_b$,
	\item  $\text{ordinary number} = i_a = i_b = i_{a-1}$.
	\end{enumerate}
	Conversely, we say that it is a stoppable breaking triangle if it is not an unstoppable breaking triangle. In other word, it is a a stoppable breaking triangle if exactly one of the following conditions holds:
	\begin{enumerate}[(1)]
		\item	$i_a < i_b < i_{a-1}$,
		\item  $\text{ordinary number} = i_a = i_b < i_{a-1}$. 
	\end{enumerate}
\end{dfn}

\begin{prop}
If 
\begin{align}
	\begin{ytableau}
		i_{a-1} & i_a
		\\
		\none & i_b
	\end{ytableau}
\end{align}
be an unstoppable breaking triangle, then 
\begin{align}
	\begin{ytableau}
		i_{a-1}
		\\
		i_{b-1}
	\end{ytableau} 
\end{align}
is a breaking pair. 
\end{prop}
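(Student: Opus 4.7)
The plan is to verify the claim directly by a short case analysis, using only the row condition on the row below together with the defining inequalities of each unstoppable type. The first step is a geometric one: observe that if $i_a$ sits in position $(r,c)$ of the tableau then $i_{a-1}$ is at $(r,c-1)$, $i_b$ is at $(r+1,c)$, and $i_{b-1}$ is at $(r+1,c-1)$, so the pair $(i_{a-1},i_{b-1})$ is indeed a vertically adjacent pair to which the breaking/non-breaking terminology applies.

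Next I would record the one general fact that drives every case: since every row of the tableau satisfies the RSSYBT conditions (this was the standing hypothesis in Definition \ref{dfna6-1448-17apr}), the row containing $i_{b-1}$ and $i_b$ gives $i_{b-1}\geq i_b$, with the further property that the equality $i_{b-1}=i_b$ forces both entries to be ordinary numbers (super numbers being strictly decreasing along a row). So to show $(i_{a-1},i_{b-1})$ is a breaking pair it suffices to rule out the cases $i_{b-1}<i_{a-1}$ and $i_{b-1}=i_{a-1}=\text{super number}$.

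Then I would go through the four unstoppable types of Definition \ref{dfna7-2052-26jan} one by one. For type (2), $i_a<i_{a-1}<i_b$, the chain $i_{b-1}\geq i_b>i_{a-1}$ already gives the strict violation. Type (3), $\text{ordinary}=i_a=i_{a-1}<i_b$, is identical. For type (1), $i_a<i_b=i_{a-1}$, we have $i_{b-1}\geq i_b=i_{a-1}$; if the inequality is strict we are done, and if $i_{b-1}=i_{a-1}$ then $i_{b-1}=i_b$, forcing both to be ordinary by the row-level remark above, and two equal ordinary entries in a column violate the strict-decrease condition. For type (4), $\text{ordinary}=i_a=i_{a-1}=i_b$, the same two-subcase argument applies: either $i_{b-1}>i_{a-1}$ outright, or $i_{b-1}=i_{a-1}$ in which case $i_{b-1}=i_b$ forces ordinariness (automatic here), and again two equal ordinary entries stack vertically.

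There is really no technical obstacle in this argument; the only subtlety worth flagging is in types (1) and (4), where the possibility $i_{b-1}=i_{a-1}$ is not excluded by the row inequality alone and one must appeal to the rule forbidding equal super numbers in a row (equivalently, using the two row conditions (2) and (4) of Definition \ref{dfn211-1443} simultaneously) to conclude that such an equality necessarily produces a column violation of the ordinary-number strict-decrease rule rather than a genuine RSSYBT configuration.
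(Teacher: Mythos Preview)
Your argument is correct and follows the same case-by-case approach as the paper. The paper's own proof simply cites the earlier analysis of Cases~(1)--(6), where in each unstoppable case the fact that $(i_{a-1},i_{b-1})$ is a breaking pair was asserted but not argued in detail; you have supplied exactly those details, and your preliminary observation that the row condition forces $i_{b-1}\geq i_b$ with equality only when both are ordinary is the clean way to handle types~(1) and~(4).
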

\begin{proof}
This follows directly from the analysis we did in \textbf{Case \eqref{case1-1429}} - \textbf{Case \eqref{case6-1429}} above. 
\end{proof}

\begin{prop}
\label{prpa9-1257-27jan}
If 
\begin{align}
	\begin{ytableau}
		i_{a-1} & i_a
		\\
		\none & i_b
	\end{ytableau}
\label{eqn-a58-1043-18apr}
\end{align}
is an unstoppable breaking triangle, then 
\begin{align}
&(
\dualmap
\comp 
\bigg|_{
	\substack{
		q_1 = q, \\
		q_2 = q^{-1}t,\\
		q_3 = t^{-1} \\
	}
}
)
\Bigg[
\underbrace{		\prod_{1 \leq c < d \leq k}			}_{c,d \in \{a-1,a,b\}}
\cals{D}^{(i_c,i_d)}\left(
\frac{z_d}{z_c}
; q, t
\right)
\Bigg]
\label{eqn-A-48-1041}
\end{align}
will not contain a factor of $(1-t\xi)$ in either its numerator or denominator. Conversely, if \eqref{eqn-a58-1043-18apr} is a stoppable breaking triangle, then exactly one factor of $(1-t\xi)$ will appear in the numerator of \eqref{eqn-A-48-1041}. 
\end{prop}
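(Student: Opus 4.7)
The proof plan is to observe that this proposition is essentially a bookkeeping corollary of the case-by-case analysis carried out immediately preceding \textbf{Definition \ref{dfna7-2052-26jan}}. The classification into stoppable versus unstoppable breaking triangles was precisely engineered so that ``unstoppable'' corresponds to cases \eqref{case2-1429}, \eqref{case3-1429}, \eqref{case4-1429}, \eqref{case6-1429}, while ``stoppable'' corresponds to cases \eqref{case1-1429} and \eqref{case5-1429}. So the plan is to simply point back to the explicit formulas computed in each of the six cases and read off the power of $(1-t\xi)$ that survives after cancellations.

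More concretely, I would structure the proof as two short bullets. First, for the unstoppable direction, I would go case by case through \eqref{case2-1429}--\eqref{case4-1429} and \eqref{case6-1429}: for case \eqref{case2-1429} I split into the sub-cases $i_b = i_{a-1}$ super or ordinary and note that in both formulas displayed above no $(1-t\xi)$ factor appears (the candidate $(1-t\xi)$ from the diagonal pair is absent because the relevant $\cals{D}$ evaluates in such a way that the numerator of $(1-q^{-1}t\cdot)$ never specializes to a $(1-t\xi)$, and the possible sources of $(1-t\xi)$ in the column pair are cancelled by triangle-form cancellation as in \textbf{Proposition \ref{prpa3-1848-23jan}}); cases \eqref{case3-1429}, \eqref{case4-1429}, \eqref{case6-1429} are verified by inspection of the displayed formulas, which I have written out already. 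Second, for the stoppable direction, the formulas in cases \eqref{case1-1429} and \eqref{case5-1429} exhibit a single uncancelled $(1-t\xi)$ factor in the numerator, coming from the vertical pair contribution $\cals{D}^{(i_{a-1},i_b)}$ which produces the factor $(1-t\xi)$ since $i_{a-1} > i_b$ (so the ``$i > j$'' branch of \eqref{eqn126-1300-4dec} applies and yields $(1-q_2\tfrac{z_b}{z_a})$ under the substitution $q_2 = q^{-1}t$, which becomes $(1-t\xi)$ after the map $\dualmap$).

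The main obstacle, if any, is merely notational rather than mathematical: one must carefully track, in each of the six sub-cases, which of the three $\cals{D}^{(i_c,i_d)}$ factors (for the pairs $(a-1,a)$ horizontal, $(a,b)$ diagonal, and $(a-1,b)$ vertical) contributes a $(1-t\xi)$ to the numerator or denominator, and check that the triangle-form cancellation identity already established in \textbf{Proposition \ref{prpa3-1848-23jan}} is \emph{not} available here (precisely because \eqref{eqn-a58-1043-18apr} is a breaking triangle, i.e.\ it is \emph{not} part of a valid reverse SSYBT), so that one cannot blindly appeal to that cancellation. The key observation making everything work is that the horizontal pair $(i_{a-1},i_a)$ still satisfies the reverse SSYBT inequality $i_{a-1} \geq i_a$ (this was the standing assumption for the whole subsection, that every row is valid), so the horizontal contribution behaves exactly as in the non-breaking case; only the vertical pair $(i_{a-1},i_b)$ and the diagonal pair $(i_a,i_b)$ can introduce or destroy a $(1-t\xi)$ factor, and these are precisely what the six-case analysis tracks.

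Because all the required computations are already displayed verbatim in the excerpt immediately above the proposition, the proof itself can be written in one sentence: ``The statement follows by direct inspection of the explicit evaluations of the product $\prod_{c<d,\,c,d\in\{a-1,a,b\}}\cals{D}^{(i_c,i_d)}(z_d/z_c;q,t)$ under $(\dualmap \circ |_{q_1=q,\,q_2=q^{-1}t,\,q_3=t^{-1}})$ carried out in Cases \eqref{case1-1429}--\eqref{case6-1429}, together with \textbf{Definition \ref{dfna7-2052-26jan}}.'' In a slightly more expanded write-up I would include a small table listing, for each of the six cases, the power of $(1-t\xi)$ in the resulting rational function, and then simply observe that this power is $0$ exactly for the unstoppable cases and $1$ (in the numerator) exactly for the stoppable cases.
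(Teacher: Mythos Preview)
Your proposal is correct and takes essentially the same approach as the paper: the paper's proof is literally the single sentence ``This follows directly from the analysis we did in \textbf{Case \eqref{case1-1429}} -- \textbf{Case \eqref{case6-1429}} above,'' which is precisely what you arrive at. Your expanded commentary (the table idea, the tracking of which pair contributes the surviving $(1-t\xi)$) goes beyond what the paper writes but is consistent with it; note only that your attribution of the $(1-t\xi)$ in the stoppable cases to the pair $(i_{a-1},i_b)$ as a ``vertical'' pair is a slight mislabel, since in the triangle \eqref{eqn-a58-1043-18apr} that pair is diagonal while $(i_a,i_b)$ is the vertical one --- but this does not affect the argument.
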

\begin{proof}
This follows directly from the analysis we did in \textbf{Case \eqref{case1-1429}} - \textbf{Case \eqref{case6-1429}} above. 
\end{proof}

\begin{dfn}[Breaking band]
Let 
\begin{align}
	\begin{ytableau}
		i_1 & i_2  & \none[\dots] & i_{\lambda_1 - 1}
		& i_{\lambda_1}
		\\
		i_{\lambda_1 + 1} & i_{\lambda_1 + 2} &  \none[\dots]
		& i_{\lambda_1 + \lambda_2} \\
		\none[\vdots] & \none[\vdots]
		& \none[\vdots]
	\end{ytableau}
	\label{A70-eqn-2158-27jan}
\end{align}
be a tableau of shape $\lambda$ constructed by arranging $(i_1,\dots,i_k) \in \{1,\dots,N+M\}^k$ into the form of a Young diagram $\lambda$. Assume that every row of the tableau \eqref{A70-eqn-2158-27jan} satisfies the reverse SSYBT conditions. Let 
\begin{align}
	\begin{ytableau}
		i_a & i_{a+1}
		\\
		i_b & i_{b+1}
	\end{ytableau}
	\cdots
	\begin{ytableau}
		i_{a + \beta}
		\\
		i_{b + \beta}
	\end{ytableau}
\label{eqn-a50-1051-27jan}
\end{align}
be a part of tableau \eqref{A70-eqn-2158-27jan}. We define the boxes in \eqref{eqn-a50-1051-27jan} as a breaking band starting from the column $j$ to the column $i$ (where $j \geq i$) if all of the following conditions hold:
\begin{enumerate}[(1)]
\item For each $\gamma \in \{0,\dots,\beta\}$
\begin{align}
	\begin{ytableau}
		i_{a+\gamma}
		\\
		i_{b+\gamma}
	\end{ytableau} 
\end{align}
is a breaking pair. 
\item The breaking pair 
\begin{align}
	\begin{ytableau}
		i_{a+\beta}
		\\
		i_{b+\beta}
	\end{ytableau} 
\end{align}
is in column $j$.
\item
The breaking pair 
\begin{align}
	\begin{ytableau}
		i_{a}
		\\
		i_{b}
	\end{ytableau} 
\end{align}
is in column $i$. 
\item The pair of boxes 
\begin{align}
	\begin{ytableau}
		i_{a+\beta + 1}
		\\
		i_{b+\beta + 1}
	\end{ytableau} 
\end{align}
(if it exists in tableau \eqref{A70-eqn-2158-27jan}) is not a breaking pair. 
\item The pair of boxes 
\begin{align}
	\begin{ytableau}
		i_{a - 1}
		\\
		i_{b - 1}
	\end{ytableau} 
\end{align}
(if it exists in tableau \eqref{A70-eqn-2158-27jan}) is not a breaking pair. 
\end{enumerate}
\end{dfn}

Note that for a breaking band starting from the column $j$ to the column $i$, we sometimes refer to $j$ and $i$ are the starting and ending columns of this breaking band, respectively.
It is also important to note that a breaking band is always considered as a part of a tableau $T$, where every row of $T$ satisfies the reverse SSYBT conditions, as in \eqref{A70-eqn-2158-27jan}. 

\begin{prop}
\label{prpa11-1205-28jan}
Let 
\small 
\begin{align}
	\begin{ytableau}
		i_a & i_{a+1}
		\\
		i_b & i_{b+1}
	\end{ytableau}
	\cdots
	\begin{ytableau}
		i_{a + \beta}
		\\
		i_{b + \beta}
	\end{ytableau}
\label{eqna56-1256-27jan}
\end{align}
\normalsize
be a breaking band starting from the column $j$ to the column $i$ in the tableau \eqref{A70-eqn-2158-27jan}. Then, the following statements hold:
\begin{enumerate}[(1)]
\item 
\label{statementa-1259}
If the ending column $i > 1$, then the contribution from the boxes 
\small 
\begin{align}
	\begin{ytableau}
		i_{a-1}	& i_a & i_{a+1}
		\\
		\none & i_b & i_{b+1}
	\end{ytableau}
	\cdots
	\begin{ytableau}
		i_{a + \beta}
		\\
		i_{b + \beta}
	\end{ytableau}
\label{a57-eqn-1212-27jan}
\end{align}
\normalsize
to $(
\dualmap
\comp 
\bigg|_{
	\substack{
		q_1 = q, \\
		q_2 = q^{-1}t,\\
		q_3 = t^{-1} \\
	}
}
)
\Bigg[
\prod_{1 \leq c < d \leq k}			
\cals{D}^{(i_c,i_d)}\left(
\frac{z_d}{z_c}
; q, t
\right)
\Bigg]$
contains a factor of $(1-t\xi)^n$, where $n \in \bb{Z}^{\geq 1}$. More precisely, we can write 
\small 
\begin{align}
	&(
	\dualmap
	\comp 
	\bigg|_{
		\substack{
			q_1 = q, \\
			q_2 = q^{-1}t,\\
			q_3 = t^{-1} \\
		}
	}
	)
	\Bigg[
	\underbrace{		\prod_{1 \leq c < d \leq k}			}_{c,d \in \{a-1,a,\dots,a+\beta, b, \dots, b+\beta\}}
	\cals{D}^{(i_c,i_d)}\left(
	\frac{z_d}{z_c}
	; q, t
	\right)
	\Bigg]
	= 
	(1-t\xi)^n
	\times
	\text{other factor}, 
\label{a58-1147-27jan}
\end{align}
\normalsize
for some integer $n \in \bb{Z}^{\geq 1}$ and \say{other factor} in \eqref{a58-1147-27jan} does not contain any factor of $(1-t\xi)$ in either its numerator or denominator. 
\item 
\label{statement-b-1437}
If the ending column $i = 1$, then the contribution from the boxes 
\small 
\begin{align}
	\begin{ytableau}
		i_a & i_{a+1}
		\\
		i_b & i_{b+1}
	\end{ytableau}
	\cdots
	\begin{ytableau}
		i_{a + \beta}
		\\
		i_{b + \beta}
	\end{ytableau}
\label{a70-1203-18apr}
\end{align}
\normalsize
to $(
\dualmap
\comp 
\bigg|_{
	\substack{
		q_1 = q, \\
		q_2 = q^{-1}t,\\
		q_3 = t^{-1} \\
	}
}
)
\Bigg[
\prod_{1 \leq c < d \leq k}			
\cals{D}^{(i_c,i_d)}\left(
\frac{z_d}{z_c}
; q, t
\right)
\Bigg]$
contains a factor of $(1-t\xi)^n$, where $n \in \bb{Z}^{\geq 0}$. More precisely, we can write 
\small 
\begin{align}
	&(
	\dualmap
	\comp 
	\bigg|_{
		\substack{
			q_1 = q, \\
			q_2 = q^{-1}t,\\
			q_3 = t^{-1} \\
		}
	}
	)
	\Bigg[
	\underbrace{		\prod_{1 \leq c < d \leq k}			}_{c,d \in \{a,\dots,a+\beta, b, \dots, b+\beta\}}
	\cals{D}^{(i_c,i_d)}\left(
	\frac{z_d}{z_c}
	; q, t
	\right)
	\Bigg]
	= 
	(1-t\xi)^n
	\times
	\text{other factor}, 
\label{a59-1147-27jan}
\end{align}
\normalsize
where $n \in \bb{Z}^{\geq 0}$
and \say{other factor} in \eqref{a59-1147-27jan} does not contain any factor of $(1-t\xi)$ in either its numerator or denominator.  
\end{enumerate}
\end{prop}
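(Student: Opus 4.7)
The plan is to prove Proposition \ref{prpa11-1205-28jan} by a direct enumeration of all possible sources of the factor $(1-t\xi)$ in the relevant product of $\cals{D}^{(i_c,i_d)}$ terms, combined with the structural constraints imposed by conditions (4) and (5) of the breaking band definition. I first catalog where $(1-t\xi)$ can arise: under $q_1 = q$, $q_2 = q^{-1}t$, $q_3 = t^{-1}$ and the map $\dualmap$, each ratio $z_d/z_c$ becomes either $q^m$ (for boxes in the same row) or $q^m\xi$ (for boxes in vertically adjacent rows), so no power $\xi^\ell$ with $\ell \geq 2$ ever appears. A short inspection of the four cases of \eqref{eqn-a12-1300-23jan} then shows that $(1-t\xi)$ can come only from the factor $(1-tu)$ specialized at $u=\xi$ (which sits in the denominator of $\cals{D}^{(i,j)}$ when $i=j$ is super or when $i>j$) and from the factor $(1-q^{-1}tu)$ specialized at $u=q\xi$ (which sits in the numerator of $\cals{D}^{(i,j)}$ when $i>j$).

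Next I would observe that inside a breaking band the vertically adjacent pairs are precisely the breaking pairs $(i_{a+\gamma},i_{b+\gamma})$ for $\gamma\in\{0,\dots,\beta\}$, which by definition satisfy either $i_{a+\gamma}<i_{b+\gamma}$ or $i_{a+\gamma}=i_{b+\gamma}$ ordinary. Neither situation falls into the cases flagged above for vertical adjacency, so $(1-t\xi)$ never appears in any denominator. Consequently, the only $(1-t\xi)$ factors come from down-right diagonal pairs $(i_{a+\gamma},i_{b+\gamma+1})$ with $\gamma=0,\dots,\beta-1$ and, in the setting of item \eqref{statementa-1259}, also from $(i_{a-1},i_b)$; each such pair produces at most one $(1-t\xi)$ in the numerator, occurring exactly when $i_c>i_d$. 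Pairs involving $i_{a-1}$ and a more distant box $i_{b+\gamma'}$ with $\gamma'\geq 1$ have ratio $q^{\gamma'+1}\xi$, which never yields $(1-t\xi)$. This description immediately yields item \eqref{statement-b-1437} with $n\geq 0$.

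For item \eqref{statementa-1259} I would then argue that the leftmost triangle $(i_{a-1},i_a,i_b)$ must be stoppable, so that the diagonal pair $(i_{a-1},i_b)$ satisfies $i_{a-1}>i_b$ and contributes exactly one $(1-t\xi)$ factor. Reviewing the six cases of the triangle analysis preceding \textbf{Definition \ref{dfna7-2052-26jan}}, in each of the unstoppable cases (Cases 2, 3, 4, and 6) the pair $(i_{a-1},i_{b-1})$ is forced to be a breaking pair, which violates condition (5) of the breaking band. Hence the leftmost triangle falls into Case 1 or Case 5, both of which impose $i_{a-1}>i_b$. Combined with the non-negativity established above, this gives $n\geq 1$.

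The main obstacle I anticipate is not any single computation but the uniform bookkeeping needed to rule out denominator contributions of $(1-t\xi)$ across every pair in the band, including non-adjacent and cross-row pairs involving $i_{a-1}$. Once that uniformity is in place, the absence of cancellations is automatic, and both items of the proposition follow directly from the stoppable-triangle classification together with the diagonal-pair enumeration.
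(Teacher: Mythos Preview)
Your proposal is correct and follows essentially the same route as the paper's proof. The paper packages the analysis into a decomposition $C_{BT}\times C_{OP}$ (resp.\ $C_{BT}\times C_{BP}\times C_{OP}$) and then invokes \textbf{Proposition \ref{prpa9-1257-27jan}} on breaking triangles, whereas you go one layer deeper and identify directly that the only possible source of $(1-t\xi)$ is the down-right diagonal pair $(i_{a+\gamma-1},i_{b+\gamma})$ with $i_{a+\gamma-1}>i_{b+\gamma}$; these are equivalent formulations of the same observation. One point where you are actually \emph{more} explicit than the paper: you spell out why the leftmost triangle must be stoppable when $i>1$, namely because an unstoppable triangle would force $(i_{a-1},i_{b-1})$ to be a breaking pair, contradicting condition~(5) of the breaking-band definition---the paper states this conclusion in a single clause without justification.
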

\begin{proof}
\eqref{statementa-1259}
Let the ending column be $i > 1$. It is clear that we can write 
\small 
\begin{align}
&(
\dualmap
\comp 
\bigg|_{
	\substack{
		q_1 = q, \\
		q_2 = q^{-1}t,\\
		q_3 = t^{-1} \\
	}
}
)
\Bigg[
\underbrace{		\prod_{1 \leq c < d \leq k}			}_{c,d \in \{a-1,a,\dots,a+\beta, b, \dots, b+\beta\}}
\cals{D}^{(i_c,i_d)}\left(
\frac{z_d}{z_c}
; q, t
\right)
\Bigg]
= 
C_{BT} \times C_{OP}, 
\end{align}
\normalsize
where $C_{BT}$ denotes the contribution from breaking triangles in \eqref{a57-eqn-1212-27jan} and $C_{OP}$ denotes the contribution from other pairs of boxes in \eqref{a57-eqn-1212-27jan}. 

The contribution $C_{OP}$ from other pairs of boxes in \eqref{a57-eqn-1212-27jan} does not contain 
$(1-t\xi)$ as a factor in either its numerator or denominator. This is because $(1-t\xi)$ arise only from contributions associated with adjacent boxes (vertical, horizontal, or diagonal).

Therefore, to demonstrate that 
\begin{align}
	&(
	\dualmap
	\comp 
	\bigg|_{
		\substack{
			q_1 = q, \\
			q_2 = q^{-1}t,\\
			q_3 = t^{-1} \\
		}
	}
	)
	\Bigg[
	\underbrace{		\prod_{1 \leq c < d \leq k}			}_{c,d \in \{a-1,a,\dots,a+\beta, b, \dots, b+\beta\}}
	\cals{D}^{(i_c,i_d)}\left(
	\frac{z_d}{z_c}
	; q, t
	\right)
	\Bigg]
\end{align}
contains a factor of $(1-t\xi)$, we need only analyze $C_{BT}$. Since $i > 1$, the breaking band \eqref{eqna56-1256-27jan} contains a stoppable breaking triangle. 
By proposition \ref{prpa9-1257-27jan}, the contribution from breaking triangles, $C_{BT}$, must contain a factor of $(1-t\xi)^n$ for some integer $n \in \bb{Z}^{\geq 1}$. This completes the proof of statement \eqref{statementa-1259}. 

\eqref{statement-b-1437}
Let the ending column be $i = 1$. It is clear that we can write 
\small 
\begin{align}
&(
\dualmap
\comp 
\bigg|_{
	\substack{
		q_1 = q, \\
		q_2 = q^{-1}t,\\
		q_3 = t^{-1} \\
	}
}
)
\Bigg[
\underbrace{		\prod_{1 \leq c < d \leq k}			}_{c,d \in \{a,\dots,a+\beta, b, \dots, b+\beta\}}
\cals{D}^{(i_c,i_d)}\left(
\frac{z_d}{z_c}
; q, t
\right)
\Bigg]
= C_{BT} \times C_{BP} \times C_{OP}, 
\label{a64-27jan-1435}
\end{align}
\normalsize
where $C_{BT}$ is the contribution from the breaking triangles 
\begin{align}
\begin{ytableau}
	i_{a}	& i_{a+1} & i_{a+2}
	\\
	\none & i_{b+1} & i_{b+2}
\end{ytableau}
\cdots
\begin{ytableau}
	i_{a + \beta}
	\\
	i_{b + \beta}
\end{ytableau}
\end{align}
in \eqref{a70-1203-18apr}, $C_{BP}$ is the contribution from the breaking pair 
\begin{align}
\begin{ytableau}
	i_{a}
	\\
	i_{b}
\end{ytableau}
\end{align}
in \eqref{a70-1203-18apr}, 
and $C_{OP}$ is the contribution from other pairs of boxes in \eqref{a70-1203-18apr}. 

Similar to the case $i > 1$, the contribution $C_{OP}$ does not contain $(1-t\xi)$ as a factor in its numerator or denominator. This is because $(1-t\xi)$ arise only from contributions associated with adjacent boxes. 

Note that the breaking triangles appearing in 
\begin{align}
\begin{ytableau}
	i_{a+1} & i_{a+2}
	\\
	i_{b+1} & i_{b+2}
\end{ytableau}
\cdots
\begin{ytableau}
	i_{a + \beta}
	\\
	i_{b + \beta}
\end{ytableau}
\end{align}
and the breaking triangle 
\begin{align}
\begin{ytableau}
	i_{a}	& i_{a+1} 
	\\
	\none & i_{b+1} 
\end{ytableau}
\end{align}
could be either unstoppable or stoppable. Therefore, based on \textbf{Proposition \ref{prpa9-1257-27jan}}, we can conclude that there will be a factor of $(1-t\xi)^n$ where $n \in \bb{Z}^{\geq 0}$ in $C_{BT}$. 
Moreover, since 
\begin{align}
\begin{ytableau}
	i_{a}
	\\
	i_{b}
\end{ytableau} 
\end{align}
is a breaking pair, there will be no factor of $(1-t\xi)$ in $C_{BP}$. 

In conclusion, $C_{BT} \times C_{BP} \times C_{OP}$ contains a factor of $(1-t\xi)^n$ where $n \in \bb{Z}^{\geq 0}$.
This completes the proof of statement \eqref{statement-b-1437}. 
\end{proof}

\begin{thm}
\label{a12-28jan-1342}
Let 
\begin{align}
	\begin{ytableau}
		i_1 & i_2  & \none[\dots] & i_{\lambda_1 - 1}
		& i_{\lambda_1}
		\\
		i_{\lambda_1 + 1} & i_{\lambda_1 + 2} &  \none[\dots]
		& i_{\lambda_1 + \lambda_2} \\
		\none[\vdots] & \none[\vdots]
		& \none[\vdots]
	\end{ytableau}
	\label{A71-eqn-1552-27jan}
\end{align}
be a tableau of shape $\lambda$ constructed by arranging $(i_1,\dots,i_k) \in \{1,\dots,N+M\}^k$ into the form of a Young diagram $\lambda$. Suppose that every row of the tableau \eqref{A71-eqn-1552-27jan} satisfies the RSSYBT condition. If the tableau \eqref{A71-eqn-1552-27jan} contains at least one breaking band, then 
\small 
\begin{align}
	&(
	\dualmap
	\comp 
	\bigg|_{
		\substack{
			q_1 = q, \\
			q_2 = q^{-1}t,\\
			q_3 = t^{-1} \\
		}
	}
	)
	\Bigg[
	\prod_{1 \leq c < d \leq k}
	\cals{D}^{(i_c,i_d)}\left(
	\frac{z_d}{z_c}
	; q, t
	\right)
	\Bigg]
	= 
	\frac{1}{	(1-t\xi)^n	}
	\times
	\text{other factors}
	\label{a72-eqn-1600-27jan}
\end{align}
\normalsize
where $n \leq \ell(\lambda) - 2$ and \say{other factors} in \eqref{a72-eqn-1600-27jan}
does not contain any factor of $(1-t\xi)$ in either its numerator or denominator.  
\end{thm}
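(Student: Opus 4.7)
The plan is to bound the exponent of $(1-t\xi)^{-1}$ in the evaluated product $\prod_{1\le c<d\le k}\cals{D}^{(i_c,i_d)}(z_d/z_c;q,t)$ by partitioning the pairs $(c,d)$ into spatially local structural units. Direct inspection of formula \eqref{eqn-a12-1300-23jan}, combined with the observation that under $\dualmap$ the ratio $z_d/z_c$ takes the form $q^m\xi^n$ with $m\in\bb{Z}$ and $n\in\bb{Z}^{\ge 0}$, shows that the factor $(1-t\xi)$ can arise only from vertically adjacent pairs (where $z_d/z_c=\xi$) or from immediately diagonal pairs (where $z_d/z_c=q\xi$); all other pairs produce factors of the form $(1-q^{m'}t^{n'}\xi^{\ell'})$ with $(n',\ell')\ne(1,1)$, which do not vanish at $\xi=t^{-1}$.

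The partition is organized by row gap $(r,r+1)$ for $r=1,\dots,\ell(\lambda)-1$. For each such gap and each column $c$ with $2\le c\le\lambda_{r+1}$, the three boxes at positions $(r,c-1),(r,c),(r+1,c)$ form a triangle of the shape considered in Proposition \ref{prpa3-1848-23jan}; if its vertical pair in column $c$ is non-breaking, the triangle is a non-breaking triangle contributing exponent zero, and if it is breaking, the triangle is absorbed into the unique breaking band of that row gap containing column $c$. The column-$1$ vertical pair of the gap is either non-breaking --- in which case direct evaluation using the two RSSYBT-consistent cases ($i_a>i_b$, or $i_a=i_b$ a super number, the ordinary-equal case being forbidden by column strictness) yields exactly one $(1-t\xi)^{-1}$ --- or it is the ending pair of a breaking band ending at column $1$. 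Finally, each breaking band, taken together with its left-boundary box $i_{a-1}$ when its ending column $i>1$, forms a self-contained unit whose combined contribution is $(1-t\xi)^n$ with $n\ge 1$ (when $i>1$) or $n\ge 0$ (when $i=1$), by Proposition \ref{prpa11-1205-28jan}.

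Assembling these counts, let $B_1$ denote the total number of breaking bands ending at column $1$, and $B_{>1}$ the total number ending at a column strictly greater than $1$. Each row gap admits at most one breaking band ending at column $1$, so the number of non-breaking column-$1$ vertical pairs is exactly $\ell(\lambda)-1-B_1$. The net exponent of $(1-t\xi)$ in $\prod\cals{D}^{(i_c,i_d)}$ is therefore at least
\[
-(\ell(\lambda)-1-B_1)+0\cdot B_1+1\cdot B_{>1} \;=\; -(\ell(\lambda)-1)+B_1+B_{>1}.
\]
Since the tableau contains at least one breaking band by hypothesis, $B_1+B_{>1}\ge 1$, so the net exponent is $\ge -(\ell(\lambda)-2)$; equivalently, the power $n$ of $(1-t\xi)^{-1}$ satisfies $n\le\ell(\lambda)-2$, as required.

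The main obstacle will be verifying that the proposed partition is genuinely disjoint and exhaustive on the set of pairs that produce $(1-t\xi)$. The key checks are: (i) the right-boundary triangle of every breaking band is automatically non-breaking --- its vertical pair being non-breaking by the band's maximality --- so it is correctly assigned to Proposition \ref{prpa3-1848-23jan}; (ii) the left-boundary inclusion of $i_{a-1}$ in Proposition \ref{prpa11-1205-28jan} captures exactly the stoppable breaking triangle on the left end, and does not overlap with the pairs counted by the neighbouring non-breaking triangle in column $i-1$ (the two units share only the box $i_{a-1}$, not any pair of boxes); and (iii) multiple breaking bands within the same row gap, separated by non-breaking vertical pairs, are handled independently, each contributing its own $(1-t\xi)^{n}$ factor with the appropriate lower bound. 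Once these boundary behaviours are confirmed, the counting reduces to the straightforward summation carried out above.
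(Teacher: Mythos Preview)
Your proposal is correct and follows essentially the same approach as the paper's proof: decomposing the product into breaking-band units (with the left-boundary box when the ending column exceeds $1$), non-breaking triangles, and column-$1$ vertical pairs, then invoking Propositions~\ref{prpa3-1848-23jan} and~\ref{prpa11-1205-28jan} to count $(1-t\xi)$ factors. Your parameters $B_1,B_{>1}$ correspond to the paper's $g_1,g-g_1$, and the resulting inequality $B_1+B_{>1}\ge 1\Rightarrow n\le\ell(\lambda)-2$ is exactly the paper's final step.
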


\begin{nota}\mbox{}
\begin{align*}
&T\left( \,\,
\begin{ytableau}
	i_{a}	& i_{a+1} & i_{a+2}
	\\
	i_b & i_{b+1} & i_{b+2}
\end{ytableau}
\cdots
\begin{ytableau}
	i_{a + \beta}
	\\
	i_{b + \beta}
\end{ytableau}
\,\,
\right)
=
T\left(		\,\,
\begin{ytableau}
	i_{a}	& i_{a+1} & i_{a+2}
	\\
	\none & i_{b+1} & i_{b+2}
\end{ytableau}
\cdots
\begin{ytableau}
	i_{a + \beta}
	\\
	i_{b + \beta}
\end{ytableau}
\,\,
\right)
\\
&:=
(
\dualmap
\comp 
\bigg|_{
	\substack{
		q_1 = q, \\
		q_2 = q^{-1}t,\\
		q_3 = t^{-1} \\
	}
}
)
\Bigg[
\underbrace{	\prod_{1 \leq c < d \leq k}					}_{
c,d \in \{a,a+1,b+1\}
}
\cals{D}^{(i_c,i_d)}\left(
\frac{z_d}{z_c}
; q, t
\right)
\times \cdots
\times
\underbrace{	\prod_{1 \leq c < d \leq k}					}_{
	c,d \in \{a+\beta-1,a+\beta,b+\beta\}
}
\cals{D}^{(i_c,i_d)}\left(
\frac{z_d}{z_c}
; q, t
\right)
\Bigg]
\end{align*}
\begin{align*}
P\left(
\,\,
\begin{ytableau}
	i_{a}
	\\
	i_{b}
\end{ytableau} 
\,\,
\right)
:=
(
\dualmap
\comp 
\bigg|_{
	\substack{
		q_1 = q, \\
		q_2 = q^{-1}t,\\
		q_3 = t^{-1} \\
	}
}
)
\Bigg[
\cals{D}^{(i_a,i_b)}\left(
\frac{z_b}{z_a}
; q, t
\right)
\Bigg]
\end{align*}
\end{nota}

\begin{proof}
By the assumption of the theorem, the RSSYBT conditions is satisfied for every row of the tableau
\eqref{A71-eqn-1552-27jan}. This implies that any breaking occurs only in the column direction. 
Therefore, the vertical pairs of boxes 
\begin{align}
\begin{ytableau}
	i_{a}
	\\
	i_{b}
\end{ytableau} 
\end{align}
in tableau \eqref{A71-eqn-1552-27jan} can be classified into three types:
\begin{enumerate}[(1)]
\item Vertical pairs of boxes which are breaking pairs, i.e. contained in a breaking band. 
\item 
\label{con2-1645-27jan}
Vertical pairs of boxes which are not breaking pairs, and is not in the first column of tableau \eqref{A71-eqn-1552-27jan}.
\item 
\label{con3-1645-27jan}
Vertical pairs of boxes which are not breaking pairs, and is in the first column of tableau \eqref{A71-eqn-1552-27jan}.
\end{enumerate}
Therefore, we can write 
\footnotesize
\begin{align}
&(
\dualmap
\comp 
\bigg|_{
	\substack{
		q_1 = q, \\
		q_2 = q^{-1}t,\\
		q_3 = t^{-1} \\
	}
}
)
\Bigg[
\prod_{1 \leq c < d \leq k}
\cals{D}^{(i_c,i_d)}\left(
\frac{z_d}{z_c}
; q, t
\right)
\Bigg]
\\
&= 
\prod
\left\{
T\left(
\begin{ytableau}
	i_{a-1}	& i_a & i_{a+1}
	\\
	\none & i_b & i_{b+1}
\end{ytableau}
\cdots
\begin{ytableau}
	i_{a + \beta}
	\\
	i_{b + \beta}
\end{ytableau}
\right)
\;\middle\vert\;
\begin{array}{@{}l@{}}
	\begin{ytableau}
		i_{a}	& i_{a+1} & i_{a+2}
		\\
		i_b & i_{b+1} & i_{b+2}
	\end{ytableau}
	\cdots
	\begin{ytableau}
		i_{a + \beta}
		\\
		i_{b + \beta}
	\end{ytableau}
	\\
	\text{ is a breaking band such that } 
	\\
	\begin{ytableau}
		i_{a}
		\\
		i_{b}
	\end{ytableau} 
	\text{ is not in the first column}. 
\end{array}
\right\}
\label{a74-28jan-1241}
\\
&\times
\prod
\left\{
T\left(
\begin{ytableau}
	i_{a}	& i_{a+1} & i_{a+2}
	\\
	i_b & i_{b+1} & i_{b+2}
\end{ytableau}
\cdots
\begin{ytableau}
	i_{a + \beta}
	\\
	i_{b + \beta}
\end{ytableau}
\right)
\label{a75-28jan-1241}
\;\middle\vert\;
\begin{array}{@{}l@{}}
\begin{ytableau}
	i_{a}	& i_{a+1} & i_{a+2}
	\\
	i_b & i_{b+1} & i_{b+2}
\end{ytableau}
\cdots
\begin{ytableau}
	i_{a + \beta}
	\\
	i_{b + \beta}
\end{ytableau}
\\
\text{ is a breaking band such that } 
\\
\begin{ytableau}
	i_{a}
	\\
	i_{b}
\end{ytableau} 
\text{ is in the first column}. 
\end{array}
\right\}
\\
&\times\prod
\left\{
P\left(
\begin{ytableau}
	i_{a}
	\\
	i_{b}
\end{ytableau} 
\right)
\;\middle\vert\;
\begin{array}{@{}l@{}}
	\begin{ytableau}
		i_{a}	& i_{a+1} & i_{a+2}
		\\
		i_b & i_{b+1} & i_{b+2}
	\end{ytableau}
	\cdots
	\begin{ytableau}
		i_{a + \beta}
		\\
		i_{b + \beta}
	\end{ytableau}
	\\
	\text{ is a breaking band such that } 
	\\
	\begin{ytableau}
		i_{a}
		\\
		i_{b}
	\end{ytableau} 
	\text{ is in the first column}. 
\end{array}
\right\}
\times\prod
\left\{
T\left(
\begin{ytableau}
	i_{a-1} & i_a
	\\
	\none & i_b
\end{ytableau}
\right)
\;\middle\vert\;
\begin{array}{@{}l@{}}
	\begin{ytableau}
		i_{a}
		\\
		i_{b}
	\end{ytableau} 
	\text{ belongs to the type \eqref{con2-1645-27jan}}
\end{array}
\right\}
\\
&\times\prod
\left\{
P\left(
\begin{ytableau}
	i_{a}
	\\
	i_{b}
\end{ytableau}
\right)
\label{a76-28jan-1241}
\;\middle\vert\;
\begin{array}{@{}l@{}}
	\begin{ytableau}
		i_{a}
		\\
		i_{b}
	\end{ytableau} 
	\text{ belongs to the type \eqref{con3-1645-27jan}}
\end{array}
\right\}
\\
&\times\prod
\left\{
\text{Contribution from other pairs of boxes in } \eqref{A71-eqn-1552-27jan}
\right\},
\label{eqn-a88-2144-17aug}
\end{align}
\normalsize
where in equations \eqref{a74-28jan-1241} - \eqref{eqn-a88-2144-17aug}, the notation $\{\cdot\}$ denotes a \textbf{multiset}. Unlike a standard set where the number of occurrences of elements are not taken into account, a multiset is a collection of elements in which the number of occurrences of each element is significant. 

Following the reasoning similar to that used in the proof of 
\textbf{Proposition \ref{prpa11-1205-28jan}}, we can conclude that 
\small 
\begin{align}
\prod
\left\{
\text{Contribution from other pairs of boxes in } \eqref{A71-eqn-1552-27jan}
\right\}, 
\end{align}
\normalsize
does not contain any factor of $(1-t\xi)$ in either the numerator or the denominator. 
Furthermore, the formula for $\cals{D}^{(i_c,i_d)}\left(
\frac{z_d}{z_c}
; q, t
\right)$,
as given in equation \eqref{eqn126-1300-4dec} indicates that 
the contribution from breaking pairs likewise does not contain the 
factor $(1-t\xi)$. Moreover, by applying triangle-form cancellation, as described in \textbf{Proposition \ref{prpa3-1848-23jan}}, it can be shown that the product 
\small 
\begin{align}
\prod
\left\{
T\left(
\begin{ytableau}
	i_{a-1} & i_a
	\\
	\none & i_b
\end{ytableau}
\right)
\;\middle\vert\;
\begin{array}{@{}l@{}}
	\begin{ytableau}
		i_{a}
		\\
		i_{b}
	\end{ytableau} 
	\text{ belongs to the constituent \eqref{con2-1645-27jan}}
\end{array}
\right\}
\end{align}
\normalsize
does not contain any factor of the form $(1-t\xi)$. 

Consequently, to determine the overall power of the factor $(1-t\xi)$ appearing in 
\begin{align*}
(
\dualmap
\comp 
\bigg|_{
	\substack{
		q_1 = q, \\
		q_2 = q^{-1}t,\\
		q_3 = t^{-1} \\
	}
}
)
\Bigg[
\prod_{1 \leq c < d \leq k}
\cals{D}^{(i_c,i_d)}\left(
\frac{z_d}{z_c}
; q, t
\right)
\Bigg]
\end{align*}
it is sufficient to consider only the quantities in 
\eqref{a74-28jan-1241}, 
\eqref{a75-28jan-1241}, and 
\eqref{a76-28jan-1241}. 

Now, suppose the tableau \eqref{A71-eqn-1552-27jan} contains a total of $g$ breaking bands. Let $g_1$ be the number of these bands whose ending column is column $1$. Thus, $g-g_1$ bands end in a column other than the column $1$. Applying \textbf{Proposition \ref{prpa11-1205-28jan}}, we find that \eqref{a74-28jan-1241} contributes a factor of $(1 - t\xi)^{n_1}$ where $n_1 \in \bb{Z}^{\geq g - g_1}$. Similarly, from the same proposition, \eqref{a75-28jan-1241} contributes a factor of $(1-t\xi)^{n_2}$ where $n_2 \in \bb{Z}^{\geq 0}$.

Since there are $g_1$ bands ending in the column $1$, there are $\ell(\lambda) - 1 - g_1$ pairs of adjacent boxes in the column $1$ that do not break the reverse SSYBT conditions.
This results in a factor of 
\begin{align}
\frac{1}{(1-t\xi)^{\ell(\lambda) - 1 - g_1}}
\end{align}
in \eqref{a76-28jan-1241}. 

Specifically,
\begin{align}
\eqref{a74-28jan-1241}
\times 
\eqref{a75-28jan-1241}
\times
\eqref{a76-28jan-1241}
=
\frac{
	(1 - t\xi)^{n_1 + n_2}
}{
	(1 - t\xi)^{\ell(\lambda) - 1 - g_1}
}
\times
\text{other factors}
\end{align}
where the \say{other factors} do not contain $(1-t\xi)$ in either their numerator or denominator.

Note that $n_1 + n_2 + g_1 \in \bb{Z}^{\geq g}$. Since we assume that there is at least one breaking band in 
tableau \eqref{A71-eqn-1552-27jan}, we have $g \geq 1$.
It then follows from $n_1 + n_2 + g_1 \geq g$ that 
$n_1 + n_2 + g_1 \geq 1$. This inequality implies the assertion of \textbf{Theorem \ref{a12-28jan-1342}}. 
\end{proof}

\subsection{Proof of lemma \ref{lemm43-1156-22jan}}
\label{subsec-a3-1523-27jan}

From equation \eqref{d7-eqn-1830-21jan}, we know that 
\small 
\begin{align}
&
\lim_{\xi \rightarrow t^{-1}}\,\,
(
\dualmap
\comp 
\bigg|_{
	\substack{
		q_1 = q, \\
		q_2 = q^{-1}t,\\
		q_3 = t^{-1} \\
	}
}
)
\left(
\cals{N}_{\lambda}(z_1,\dots,z_k )
\times
\prod_{1 \leq i < j \leq k}f^{\vec{c}}_{11}\left(\frac{z_j}{z_i} \right)
\times
\langle 0 |\widetilde{T}^{\vec{c},\vec{u}}_{1}(z_1 )\cdots \widetilde{T}^{\vec{c},\vec{u}}_{1}(z_k )|0\rangle
\right)
\\
&= 
\lim_{\xi \rightarrow t^{-1}}\,\,
(
\dualmap
\comp 
\bigg|_{
	\substack{
		q_1 = q, \\
		q_2 = q^{-1}t,\\
		q_3 = t^{-1} \\
	}
}
)
\left(
\cals{A}
\right)
\times
\notag 
\\
&\hspace{0.3cm} \times \sum_{i_1 = 1}^{N+M}
\cdots
\sum_{i_k = 1}^{N+M}
\lim_{\xi \rightarrow t^{-1}}\,\,
(
\dualmap
\comp 
\bigg|_{
	\substack{
		q_1 = q, \\
		q_2 = q^{-1}t,\\
		q_3 = t^{-1} \\
	}
}
)
\Bigg[
y_{i_1}\cdots y_{i_k}
u_{i_1}\cdots u_{i_k}
\notag 
\\
&\hspace{4.7cm}\times
\Delta\left(	q_3^{-\frac{1}{2}}\frac{z_{\lambda_1 + 1}}{z_1}				\right)^{-1}
\times 
\cdots 
\times 
\Delta\left(	q_3^{-\frac{1}{2}}\frac{z_{\sum_{j = 1}^{\ell(\lambda) - 1}\lambda_j+1}}{z_{\sum_{j = 1}^{\ell(\lambda) - 2}\lambda_j+1}}				\right)^{-1}
\times
\prod_{1 \leq a < b \leq k}
\cals{D}^{(i_a,i_b)}\left(
\frac{z_b}{z_a}
; q, t
\right)
\Bigg]
\notag
\end{align}
\normalsize
where $\lim_{\xi \rightarrow t^{-1}}\,\,
(
\dualmap
\comp 
\bigg|_{
	\substack{
		q_1 = q, \\
		q_2 = q^{-1}t,\\
		q_3 = t^{-1} \\
	}
}
)
\left(
\cals{A}
\right)$
is a nonzero number. By using \textbf{Lemma \ref{lema2-1452-28jan}}, we get that 
\small 
\begin{align}
	&
	\lim_{\xi \rightarrow t^{-1}}\,\,
	(
	\dualmap
	\comp 
	\bigg|_{
		\substack{
			q_1 = q, \\
			q_2 = q^{-1}t,\\
			q_3 = t^{-1} \\
		}
	}
	)
	\left(
	\cals{N}_{\lambda}(z_1,\dots,z_k )
	\times
	\prod_{1 \leq i < j \leq k}f^{\vec{c}}_{11}\left(\frac{z_j}{z_i} \right)
	\times
	\langle 0 |\widetilde{T}^{\vec{c},\vec{u}}_{1}(z_1 )\cdots \widetilde{T}^{\vec{c},\vec{u}}_{1}(z_k )|0\rangle
	\right)
	\label{eqn-a82-1643-28jan}
	\\
	&= 
	\lim_{\xi \rightarrow t^{-1}}\,\,
	(
	\dualmap
	\comp 
	\bigg|_{
		\substack{
			q_1 = q, \\
			q_2 = q^{-1}t,\\
			q_3 = t^{-1} \\
		}
	}
	)
	\left(
	\cals{A}
	\right)
	\times
	\notag 
	\\
	&\hspace{0.3cm} \times 
	\underbrace{				
	\sum_{i_1 = 1}^{N+M}
	\cdots
	\sum_{i_k = 1}^{N+M}
	}_{
	(i_1,\dots,i_k) \in \text{Row-RSSYBT}(N,M;\lambda)
	}
	\lim_{\xi \rightarrow t^{-1}}\,\,
	(
	\dualmap
	\comp 
	\bigg|_{
		\substack{
			q_1 = q, \\
			q_2 = q^{-1}t,\\
			q_3 = t^{-1} \\
		}
	}
	)
	\Bigg[
	y_{i_1}\cdots y_{i_k}
	u_{i_1}\cdots u_{i_k}
	\notag 
	\\
	&\hspace{4.7cm}\times
	\Delta\left(	q_3^{-\frac{1}{2}}\frac{z_{\lambda_1 + 1}}{z_1}				\right)^{-1}
	\times 
	\cdots 
	\times 
	\Delta\left(	q_3^{-\frac{1}{2}}\frac{z_{\sum_{j = 1}^{\ell(\lambda) - 1}\lambda_j+1}}{z_{\sum_{j = 1}^{\ell(\lambda) - 2}\lambda_j+1}}				\right)^{-1}
	\times
	\prod_{1 \leq a < b \leq k}
	\cals{D}^{(i_a,i_b)}\left(
	\frac{z_b}{z_a}
	; q, t
	\right)
	\Bigg]
	\notag
\end{align}
\normalsize
where 
\small 
\begin{align}
\text{Row-RSSYBT}(N,M;\lambda)
:= 
\left\{
(i_1,\dots,i_k) \in \{1,\dots,N+M\}^k
\;\middle\vert\;
\begin{array}{@{}l@{}}
\begin{ytableau}
	i_1 & i_2  & \none[\dots] & i_{\lambda_1 - 1}
	& i_{\lambda_1}
	\\
	i_{\lambda_1 + 1} & i_{\lambda_1 + 2} &  \none[\dots]
	& i_{\lambda_1 + \lambda_2} \\
	\none[\vdots] & \none[\vdots]
	& \none[\vdots]
\end{ytableau}
\\
\text{has shape $\lambda$, and all of }
\\
\text{its rows satisfy RSSYBT conditions.}
\end{array}
\right\}. 
\end{align}
\normalsize

We know that in 
\small 
\begin{align}
(
\dualmap
\comp 
\bigg|_{
	\substack{
		q_1 = q, \\
		q_2 = q^{-1}t,\\
		q_3 = t^{-1} \\
	}
}
)
\Bigg[
\Delta\left(	q_3^{-\frac{1}{2}}\frac{z_{\lambda_1 + 1}}{z_1}				\right)^{-1}
\times 
\cdots 
\times 
\Delta\left(	q_3^{-\frac{1}{2}}\frac{z_{\sum_{j = 1}^{\ell(\lambda) - 1}\lambda_j+1}}{z_{\sum_{j = 1}^{\ell(\lambda) - 2}\lambda_j+1}}				\right)^{-1}
\Bigg], 
\end{align}
\normalsize
there are exactly $\ell(\lambda) - 1$ numbers of $(1-t\xi)$. Therefore, according to \textbf{Theorem \ref{a12-28jan-1342}}, we obtain that 
\small 
\begin{align}
&\underbrace{				
	\sum_{i_1 = 1}^{N+M}
	\cdots
	\sum_{i_k = 1}^{N+M}
}_{
	(i_1,\dots,i_k) \in \text{Row-RSSYBT}(N,M;\lambda)
}
\lim_{\xi \rightarrow t^{-1}}\,\,
(
\dualmap
\comp 
\bigg|_{
	\substack{
		q_1 = q, \\
		q_2 = q^{-1}t,\\
		q_3 = t^{-1} \\
	}
}
)
\Bigg[
y_{i_1}\cdots y_{i_k}
u_{i_1}\cdots u_{i_k}
\label{eqn-a85-1642-28jan}
\\
&\hspace{4.7cm}\times
\Delta\left(	q_3^{-\frac{1}{2}}\frac{z_{\lambda_1 + 1}}{z_1}				\right)^{-1}
\times 
\cdots 
\times 
\Delta\left(	q_3^{-\frac{1}{2}}\frac{z_{\sum_{j = 1}^{\ell(\lambda) - 1}\lambda_j+1}}{z_{\sum_{j = 1}^{\ell(\lambda) - 2}\lambda_j+1}}				\right)^{-1}
\times
\prod_{1 \leq a < b \leq k}
\cals{D}^{(i_a,i_b)}\left(
\frac{z_b}{z_a}
; q, t
\right)
\Bigg]
\notag
\\
&= 
\underbrace{				
	\sum_{i_1 = 1}^{N+M}
	\cdots
	\sum_{i_k = 1}^{N+M}
}_{
	(i_1,\dots,i_k) \in \text{RSSYBT}(N,M;\lambda)
}
\lim_{\xi \rightarrow t^{-1}}\,\,
(
\dualmap
\comp 
\bigg|_{
	\substack{
		q_1 = q, \\
		q_2 = q^{-1}t,\\
		q_3 = t^{-1} \\
	}
}
)
\Bigg[
y_{i_1}\cdots y_{i_k}
u_{i_1}\cdots u_{i_k}
\notag 
\\
&\hspace{4.7cm}\times
\Delta\left(	q_3^{-\frac{1}{2}}\frac{z_{\lambda_1 + 1}}{z_1}				\right)^{-1}
\times 
\cdots 
\times 
\Delta\left(	q_3^{-\frac{1}{2}}\frac{z_{\sum_{j = 1}^{\ell(\lambda) - 1}\lambda_j+1}}{z_{\sum_{j = 1}^{\ell(\lambda) - 2}\lambda_j+1}}				\right)^{-1}
\times
\prod_{1 \leq a < b \leq k}
\cals{D}^{(i_a,i_b)}\left(
\frac{z_b}{z_a}
; q, t
\right)
\Bigg]
\notag
\end{align}
\normalsize
where $\text{RSSYBT}(N,M;\lambda)$ is the set defined in \eqref{eqn46-1534-28jan}. Substituting \eqref{eqn-a85-1642-28jan} into \eqref{eqn-a82-1643-28jan}, and employing \eqref{d7-1950-21jan}, we then obtain  
\small 
\begin{align}
	&
	\lim_{\xi \rightarrow t^{-1}}\,\,
	(
	\dualmap
	\comp 
	\bigg|_{
		\substack{
			q_1 = q, \\
			q_2 = q^{-1}t,\\
			q_3 = t^{-1} \\
		}
	}
	)
	\left(
	\cals{N}_{\lambda}(z_1,\dots,z_k )
	\times
	\prod_{1 \leq i < j \leq k}f^{\vec{c}}_{11}\left(\frac{z_j}{z_i} \right)
	\times
	\langle 0 |\widetilde{T}^{\vec{c}}_{1}(z_1 )\cdots \widetilde{T}^{\vec{c}}_{1}(z_k )|0\rangle
	\right)
	\\
	&= 
	\underbrace{				
		\sum_{i_1 = 1}^{N+M}
		\cdots
		\sum_{i_k = 1}^{N+M}
	}_{
		(i_1,\dots,i_k) \in 
		\text{RSSYBT}(N,M;\lambda)
	}
	\lim_{\xi \rightarrow t^{-1}}\,\,
	(
	\dualmap
	\comp 
	\bigg|_{
		\substack{
			q_1 = q, \\
			q_2 = q^{-1}t,\\
			q_3 = t^{-1} \\
		}
	}
	)
	\bigg[
	y_{i_1}\cdots y_{i_k}
	u_{i_1}\cdots u_{i_k}
	\notag \\
	&\hspace{7.5cm}\times
	\cals{N}_\lambda(z_1,\dots,z_k)
	\times
	\prod_{1 \leq a < b \leq k}
	\cals{D}^{(i_a,i_b)}\left(
	\frac{z_b}{z_a}
	; q, t
	\right)
	\bigg],
	\notag 
\end{align}
\normalsize
as desired. So we have proved \textbf{Lemma \ref{lemm43-1156-22jan}}.

\section{Proof of lemma \ref{lemm54-1412-2feb}}
\label{secappB-1410-2feb}

The goal of this appendix is to provide a proof for \textbf{Lemma \ref{lemm54-1412-2feb}}. It is important to note that the statement of \textbf{Lemma \ref{lemm54-1412-2feb}} holds under the induction hypothesis and the assumption given in equation \eqref{eqn53-1641-1feb}. Recall from equation \eqref{eqn21-1749-1jan} that $\lambda^{(1)}$ can always be written as 

\begin{align}
	\lambda^{(1)} = (\lambda^{(1) \, m_1}_1		\lambda^{(1) \, m_2}_{m_1 + 1}		\lambda^{(1) \, m_3}_{m_1 + m_2 + 1}	\cdots \lambda^{(1) \, m_n}_{m_1 + \cdots + m_{n-1} + 1}	). 
\end{align}
where $\lambda_1^{(1)}, \lambda_{m_1 + 1}^{(1)}, \dots, \lambda_{\sum_{i = 1}^{n-1}m_i+ 1}^{(1)} \in \bb{Z}^{\geq 1}$ and $m_1,m_2,\dots, m_n \in \bb{Z}^{\geq 1}$ satisfy $m_1 + \cdots + m_n = \ell(\lambda^{(1)})$.

From \textbf{Proposition \ref{prp51-1812-1feb}}, it follows that any box with the number $1$ must be in the last row of its corresponding row interval. Therefore, without loss of generality, we may assume that the rows 
containing boxes with the number $1$ are precisely the rows $m_1 + \cdots + m_{\zeta_1}, \dots, m_1 + \cdots + m_{\zeta_r}$ (where $1 \leq \zeta_1 < \dots < \zeta_r \leq n$), and that the number of boxes with the number $1$ in these rows are $c_1, \dots, c_r$, respectively.

\begin{lem}
\label{lemb1-1353-3feb}
	\begin{align}
	\psi_{\lambda^{(1)}/\lambda^{(2)}}(q,t)
	&= 
	\prod_{j = 1}^{r}
	\prod_{\gamma = 0}^{n - \zeta_j}
	\prod_{\beta = 0}^{c_j - 1}
	\frac{
		1 - q^{\lambda^{(2)}_{m_1 + \cdots + m_{\zeta_j}} - \lambda^{(2)}_{m_1 + \cdots + m_{\zeta_j + \gamma}} + \beta}t^{
			m_1 + \cdots + m_{\zeta_j + \gamma} - (m_1 + \cdots + m_{\zeta_j}) + 1
		}
	}{
		1 - q^{\lambda^{(2)}_{m_1 + \cdots + m_{\zeta_j}} - \lambda^{(1)}_{m_1 + \cdots + m_{\zeta_j + \gamma} + 1} + \beta}t^{
			m_1 + \cdots + m_{\zeta_j + \gamma} - (m_1 + \cdots + m_{\zeta_j}) + 1
		}
	}
	\\
	&\times
	\prod_{j = 1}^{r}
	\prod_{\gamma = 0}^{n - \zeta_j}
	\prod_{\beta = 0}^{c_j - 1}
	\frac{
		1 - q^{\lambda^{(2)}_{m_1 + \cdots + m_{\zeta_j}} - \lambda^{(1)}_{m_1 + \cdots + m_{\zeta_j + \gamma} + 1} + 1 + \beta}t^{
			m_1 + \cdots + m_{\zeta_j + \gamma} - (m_1 + \cdots + m_{\zeta_j})
		}
	}{
		1 - q^{\lambda^{(2)}_{m_1 + \cdots + m_{\zeta_j}} - \lambda^{(2)}_{m_1 + \cdots + m_{\zeta_j + \gamma}} + 1 + \beta}t^{
			m_1 + \cdots + m_{\zeta_j + \gamma} - (m_1 + \cdots + m_{\zeta_j})
		}
	}
	\notag 
\end{align}
\end{lem}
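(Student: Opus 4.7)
The plan is to start from the definition
\[
\psi_{\lambda^{(1)}/\lambda^{(2)}}(q,t) = \prod_{1 \le i \le j \le \ell(\lambda^{(2)})}
\frac{f_{q,t}(q^{\lambda^{(2)}_i - \lambda^{(2)}_j}t^{j-i}) f_{q,t}(q^{\lambda^{(1)}_i - \lambda^{(1)}_{j+1}}t^{j-i})}{f_{q,t}(q^{\lambda^{(1)}_i - \lambda^{(2)}_j}t^{j-i}) f_{q,t}(q^{\lambda^{(2)}_i - \lambda^{(1)}_{j+1}}t^{j-i})},
\]
and to reduce it step by step using the block structure of the two partitions. The first reduction is to observe that whenever $\lambda^{(1)}_i = \lambda^{(2)}_i$ the numerator and denominator factors carrying the outer index $i$ cancel pairwise, so the outer product collapses to $i \in \{i_1,\dots,i_r\}$ with $i_j := m_1 + \cdots + m_{\zeta_j}$. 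For these rows $\lambda^{(1)}_{i_j} - \lambda^{(2)}_{i_j} = c_j$ by \textbf{Proposition \ref{prp51-1812-1feb}} and the assumption on the boxes labeled $1$.

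The second step is to split the inner product over $j$ according to the row intervals. Parametrize $j = m_1 + \cdots + m_{\zeta_j + \gamma - 1} + s$ with $\gamma \in \{0,\dots,n-\zeta_j\}$ and $s$ ranging inside the interval. On each row interval the values $\lambda^{(1)}_j, \lambda^{(2)}_j$ are constant except in the last row of an interval containing $1$'s, where $\lambda^{(1)}$ and $\lambda^{(2)}$ differ by the corresponding $c$. I then apply the elementary identity
\[
\frac{f_{q,t}(q^c u)}{f_{q,t}(u)} = \prod_{\beta=0}^{c-1}\frac{1 - q^{\beta+1}u}{1 - q^\beta t u},
\]
which follows immediately from $f_{q,t}(u) = (tu;q)_\infty/(qu;q)_\infty$, to rewrite each of the four $f_{q,t}$ ratios attached to $(i_j, \gamma)$ as a product over $\beta \in \{0,\dots,c_j-1\}$.

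The third step is telescoping within each interval in the $s$ variable. Because $\lambda^{(1)}_j$ and $\lambda^{(2)}_j$ are constant as $j$ runs through the interior of a row interval, consecutive values of $j$ produce cancelling $f_{q,t}$-ratios and only the boundary factors (the last row of the interval, where the power of $t$ equals $m_1 + \cdots + m_{\zeta_j + \gamma} - (m_1 + \cdots + m_{\zeta_j})$ or one less than that) survive. Combining the contributions of the outer $i = i_j$ with these surviving boundary factors yields the two stated products over $(j,\gamma,\beta)$, the first coming from the pair $(f(q^{\lambda^{(2)}_{i_j}-\lambda^{(2)}_j}\cdots)/f(q^{\lambda^{(1)}_{i_j}-\lambda^{(1)}_{j+1}}\cdots))$ at the top of each interval and the second from the pair with $\lambda^{(1)}_{i_j}-\lambda^{(2)}_j$ and $\lambda^{(2)}_{i_j}-\lambda^{(1)}_{j+1}$ at its bottom.

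The main obstacle I anticipate is bookkeeping: correctly pairing up the four $f_{q,t}$ arguments inside each interval, tracking the shift by one in the exponent of $t$ between the two displayed products (which reflects whether the factor comes from the top or the bottom of the row interval), and handling the edge cases $\gamma = 0$ (where $j$ starts at $i_j$ instead of at the top of its interval) and $\gamma = n-\zeta_j$ (where one must interpret $\lambda^{(1)}_{m_1 + \cdots + m_{\zeta_j+\gamma}+1} = 0$). Once the indexing is set up correctly, however, the computation is a straightforward telescoping and the resulting formula is exactly the claimed one.
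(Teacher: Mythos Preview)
Your approach is correct and essentially the same as the paper's: reduce the outer index to the rows $i_j = m_1+\cdots+m_{\zeta_j}$ (where $\lambda^{(1)}_{i_j}=\lambda^{(2)}_{i_j}+c_j$), reduce the inner index to the row-interval endpoints, and then expand the surviving $f_{q,t}$-ratios using your identity for $f_{q,t}(q^{c_j}u)/f_{q,t}(u)$.

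One small correction to your description: the inner reduction is not a telescoping between consecutive indices. After the outer reduction, the factor at an inner index $\beta$ reads
\[
\frac{f_{q,t}(q^{\lambda^{(2)}_{i_j}-\lambda^{(2)}_\beta}t^{\beta-i_j})\,f_{q,t}(q^{\lambda^{(2)}_{i_j}-\lambda^{(1)}_{\beta+1}+c_j}t^{\beta-i_j})}
{f_{q,t}(q^{\lambda^{(2)}_{i_j}-\lambda^{(1)}_{\beta+1}}t^{\beta-i_j})\,f_{q,t}(q^{\lambda^{(2)}_{i_j}-\lambda^{(2)}_\beta+c_j}t^{\beta-i_j})},
\]
and this is \emph{individually} equal to $1$ whenever $\lambda^{(2)}_\beta=\lambda^{(1)}_{\beta+1}$, i.e.\ whenever $\beta$ is not the last row of a row interval of $\lambda^{(1)}$. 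So only $\beta\in\{m_1+\cdots+m_{\zeta_j+\gamma}:0\le\gamma\le n-\zeta_j\}$ survive, exactly as you need. The $+1$ shift in the $t$-exponent between the two displayed products then comes from the two sides of your $f_{q,t}$ identity (the factors $1-q^{\beta+1}u$ versus $1-q^\beta t\,u$), not from a top/bottom distinction within an interval.
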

\begin{proof}
It is known from equation \eqref{eqn27-2052} that 

\small
\begin{align}
	\psi_{\lambda^{(1)}/\lambda^{(2)}}(q,t)
	&= 
	\prod_{
		1 \leq \alpha \leq \beta \leq \ell(\lambda^{(2)})		}
	\frac{
		f_{q,t}\left(q^{\lambda^{(2)}_{\alpha} - \lambda^{(2)}_{\beta}}t^{\beta - \alpha}\right)
		f_{q,t}\left(q^{\lambda^{(1)}_{\alpha} - \lambda^{(1)}_{\beta + 1}}t^{\beta - \alpha}\right)
	}{
		f_{q,t}\left(q^{\lambda^{(1)}_{\alpha} - \lambda^{(2)}_{\beta}}t^{\beta - \alpha}\right)
		f_{q,t}\left(q^{\lambda^{(2)}_{\alpha} - \lambda^{(1)}_{\beta + 1}}t^{\beta - \alpha}\right)
	}
	\\
	&= 
	\prod_{j = 1}^{r}
	\underbrace{		\prod_{
			1 \leq \alpha \leq \beta \leq \ell(\lambda^{(2)})		}						}_{
		\alpha = m_1 + \cdots + m_{\zeta_j}
	}
	\frac{
		f_{q,t}\left(q^{\lambda^{(2)}_{\alpha} - \lambda^{(2)}_{\beta}}t^{\beta - \alpha}\right)
		f_{q,t}\left(q^{\lambda^{(2)}_{\alpha} - \lambda^{(1)}_{\beta + 1} + c_j}t^{\beta - \alpha}\right)
	}{
		f_{q,t}\left(q^{\lambda^{(2)}_{\alpha} - \lambda^{(1)}_{\beta + 1}}t^{\beta - \alpha}\right)
		f_{q,t}\left(q^{\lambda^{(2)}_{\alpha} - \lambda^{(2)}_{\beta} + c_j}t^{\beta - \alpha}\right)
	}
	\notag 
\end{align}
\normalsize
We note that an index $\beta$ contributes $1$ if $\lambda^{(2)}_{\beta} = \lambda^{(1)}_{\beta + 1}$. 
Therefore, it suffices to consider only $\beta$ for which $\lambda^{(2)}_{\beta} \neq \lambda^{(1)}_{\beta + 1}$. 

So, for $\alpha = m_1 + \cdots + m_{\zeta_j}$, we consider only $\beta$ within the set 
\begin{align}
	\beta \in 
	\{
	m_1 + \cdots + m_{\zeta_j}, 
	m_1 + \cdots + m_{\zeta_j+1}, m_1 + \cdots + m_{\zeta_j + 2}, \cdots, m_1 + \cdots + m_n
	\}.
\end{align}
Thus, we obtain that 

\small 
\begin{align}
	\psi_{\lambda^{(1)}/\lambda^{(2)}}(q,t)
	&= 
	\prod_{j = 1}^{r}
	\prod_{\gamma = 0}^{n - \zeta_j}
	\Bigg[
	\frac{
		f_{q,t}\left(q^{\lambda^{(2)}_{m_1 + \cdots + m_{\zeta_j}} - \lambda^{(2)}_{m_1 + \cdots + m_{\zeta_j + \gamma}} } t^{m_1 + \cdots + m_{\zeta_j + \gamma} - (m_1 + \cdots + m_{\zeta_j})}\right)
	}{
		f_{q,t}\left(q^{\lambda^{(2)}_{m_1 + \cdots + m_{\zeta_j}} - \lambda^{(1)}_{m_1 + \cdots + m_{\zeta_j + \gamma} + 1} } t^{m_1 + \cdots + m_{\zeta_j + \gamma} - (m_1 + \cdots + m_{\zeta_j})}\right)
	}
	\label{eqn12-1140-18} 
	\\
	&\hspace{1.7cm}\times 
	\frac{
	f_{q,t}\left(q^{\lambda^{(2)}_{m_1 + \cdots + m_{\zeta_j}} - \lambda^{(1)}_{m_1 + \cdots + m_{\zeta_j + \gamma} + 1} + c_j} t^{m_1 + \cdots + m_{\zeta_j + \gamma} - (m_1 + \cdots + m_{\zeta_j})}\right)
	}{
	f_{q,t}\left(q^{\lambda^{(2)}_{m_1 + \cdots + m_{\zeta_j}} - \lambda^{(2)}_{m_1 + \cdots + m_{\zeta_j + \gamma} } + c_j} t^{m_1 + \cdots + m_{\zeta_j + \gamma} - (m_1 + \cdots + m_{\zeta_j})}\right)
	}
	\Bigg]
	\notag 
\end{align}
\normalsize
Since $\displaystyle f(u) := \frac{(tu;q)_\infty}{(qu;q)_\infty}$, we obtain that 
\begin{align}
	\psi_{\lambda^{(1)}/\lambda^{(2)}}(q,t)
	&= 
	\prod_{j = 1}^{r}
	\prod_{\gamma = 0}^{n - \zeta_j}
	\prod_{\beta = 0}^{c_j - 1}
	\frac{
		1 - q^{\lambda^{(2)}_{m_1 + \cdots + m_{\zeta_j}} - \lambda^{(2)}_{m_1 + \cdots + m_{\zeta_j + \gamma}} + \beta}t^{
			m_1 + \cdots + m_{\zeta_j + \gamma} - (m_1 + \cdots + m_{\zeta_j}) + 1
		}
	}{
		1 - q^{\lambda^{(2)}_{m_1 + \cdots + m_{\zeta_j}} - \lambda^{(1)}_{m_1 + \cdots + m_{\zeta_j + \gamma} + 1} + \beta}t^{
			m_1 + \cdots + m_{\zeta_j + \gamma} - (m_1 + \cdots + m_{\zeta_j}) + 1
		}
	}
	\\
	&\times
	\prod_{j = 1}^{r}
	\prod_{\gamma = 0}^{n - \zeta_j}
	\prod_{\beta = 0}^{c_j - 1}
	\frac{
		1 - q^{\lambda^{(2)}_{m_1 + \cdots + m_{\zeta_j}} - \lambda^{(1)}_{m_1 + \cdots + m_{\zeta_j + \gamma} + 1} + 1 + \beta}t^{
			m_1 + \cdots + m_{\zeta_j + \gamma} - (m_1 + \cdots + m_{\zeta_j})
		}
	}{
		1 - q^{\lambda^{(2)}_{m_1 + \cdots + m_{\zeta_j}} - \lambda^{(2)}_{m_1 + \cdots + m_{\zeta_j + \gamma}} + 1 + \beta}t^{
			m_1 + \cdots + m_{\zeta_j + \gamma} - (m_1 + \cdots + m_{\zeta_j})
		}
	}. 
	\notag 
\end{align}
\end{proof}

For later convenience, we introduce

\small 
\begin{align}
&\operatorname{Factor 1} := 
\left(
\lim_{\xi \rightarrow t^{-1}}\,\,
\dualmap
\right)
\Bigg[
\underbrace{		\prod_{1 \leq \alpha < \beta \leq \ell(\lambda)}				}_{
	\substack{
		\alpha \in \operatorname{Row}(T(i_1,\dots,i_k ; \lambda)|1)
		\\
		\beta \in \operatorname{Row}(T(i_1,\dots,i_k ; \lambda)|1)
	}
}
\prod_{
	\substack{
		a \in I^{(\alpha)}
		\\
		b \in I^{(\beta)}
		\\
		a \text{ is a box with number $1$}
		\\
		b \text{ is a box with number $1$}
	}
}
\frac{
	\left(1 - \frac{z_b}{z_a}\right)
	\left(1 - t\frac{z_b}{z_a}\right)
}{
	\left(1 - q\frac{z_b}{z_a}\right)
	\left(1 - q^{-1}t\frac{z_b}{z_a}\right)
}
\Bigg], 
\label{eqnb6-1335-5mar}
\\
&\operatorname{Factor 2} :=  
\left(
\lim_{\xi \rightarrow t^{-1}}\,\,
\dualmap
\right)
\Bigg[
\underbrace{		\prod_{1 \leq \alpha < \beta \leq \ell(\lambda)}				}_{
	\substack{
		\alpha \in \operatorname{Row}(T(i_1,\dots,i_k ; \lambda)|1)
		\\
		\beta \in \operatorname{Row}(T(i_1,\dots,i_k ; \lambda)|1)
	}
}
\prod_{
	\substack{
		a \in I^{(\alpha)}
		\\
		b \in I^{(\beta)}
		\\
		a \text{ is a box with number $1$}
		\\
		b \text{ is \textbf{not} a box with number $1$}
	}
}
\frac{
	\left(1 - \frac{z_b}{z_a}\right)
	\left(1 - t\frac{z_b}{z_a}\right)
}{
	\left(1 - q\frac{z_b}{z_a}\right)
	\left(1 - q^{-1}t\frac{z_b}{z_a}\right)
}
\Bigg],
\\
&\operatorname{Factor 3} := 
\left(
\lim_{\xi \rightarrow t^{-1}}\,\,
\dualmap
\right)
\Bigg[
\underbrace{			\prod_{1 \leq \alpha < \beta \leq \ell(\lambda)}					}_{
	\substack{
		\alpha \in \operatorname{Row}(T(i_1,\dots,i_k ; \lambda)|1)
		\\
		\beta \notin  \operatorname{Row}(T(i_1,\dots,i_k ; \lambda)|1)
	}
}
\prod_{
	\substack{
		a \in I^{(\alpha)}
		\\
		b \in I^{(\beta)}
		\\
		a \text{ is a box with number $1$}
		\\
		b \text{ is \textbf{not} a box with number $1$}
	}
}
\frac{
	\left(1 - \frac{z_b}{z_a}\right)
	\left(1 - t\frac{z_b}{z_a}\right)
}{
	\left(1 - q\frac{z_b}{z_a}\right)
	\left(1 - q^{-1}t\frac{z_b}{z_a}\right)
}
\Bigg],
\\
&\operatorname{Factor 4} := 
\left(
\lim_{\xi \rightarrow t^{-1}}\,\,
\dualmap
\right)
\Bigg[
\underbrace{			\prod_{1 \leq a < b \leq k}					}_{
	\substack{
		a \text{ is a box with number $1$}
		\\
		b \text{ is \textbf{not} a box with number $1$}
	}
}
\frac{
	\left(1 - q^{-1}\frac{z_b}{z_a}\right)
	\left(1 - qt^{-1}\frac{z_b}{z_a}\right)
}{
	\left(1 - t^{-1}\frac{z_b}{z_a}\right)
	\left(1 - \frac{z_b}{z_a}\right)
}
\Bigg],
\\
&\operatorname{Factor 5} := 
\left(
\lim_{\xi \rightarrow t^{-1}}\,\,
\dualmap
\right)
\Bigg[
\underbrace{				\prod_{1 \leq a < b \leq k}							}_{
	\substack{
		a \text{ is \textbf{not} a box with number $1$}
		\\
		b \text{ is a box with number $1$}
		\\
		a \text{ and } b \text{ are in the same row}
	}
}
\frac{
	\left(1 - q\frac{z_b}{z_a}\right)
	\left(1 - q^{-1}t\frac{z_b}{z_a}\right)
}{
	\left(1 - t\frac{z_b}{z_a}\right)
	\left(1 - \frac{z_b}{z_a}\right)
}
\Bigg]. 
\label{eqnb10-1336-5mar}
\end{align}
\normalsize
So, to prove \textbf{Lemma \ref{lemm54-1412-2feb}}, we have to show that 
\begin{align}
	\psi_{\lambda^{(1)}/\lambda^{(2)}}(q,t)
	= 
	\operatorname{Factor 1}
	\times
	\operatorname{Factor 2}
	\times
	\operatorname{Factor 3}
	\times
	\operatorname{Factor 4}
	\times
	\operatorname{Factor 5}. 
\label{eqnb11-1602-5mar}
\end{align}
According to equations \eqref{eqnb6-1335-5mar} - \eqref{eqnb10-1336-5mar}, one can easily show that 
\small 
\begin{align}
&\operatorname{Factor 1} 
= 
\prod_{j = 1}^{r - 1}
\prod_{j^\prime = j+1}^{r}
\prod_{a = 1}^{c_j}
\frac{
1 - t^{m_1 + \cdots + m_{\zeta_{j^\prime}} - (m_1 + \cdots + m_{\zeta_j})}q^{\lambda^{(2)}_{	\sum_{\ell = 1}^{\zeta_j}		m_\ell} - \lambda^{(2)}_{	\sum_{\ell = 1}^{\zeta_{j^\prime}}		m_\ell} + a- 1}
}{
1 - t^{m_1 + \cdots + m_{\zeta_{j^\prime}} - (m_1 + \cdots + m_{\zeta_j})}q^{\lambda^{(2)}_{	\sum_{\ell = 1}^{\zeta_j}		m_\ell} - \lambda^{(2)}_{	\sum_{\ell = 1}^{\zeta_{j^\prime}}		m_\ell} + a-c_{j^\prime}-1}
}
\label{b12-eqn-1525-5mar}
\\
&\hspace{1.5cm}\times 
\prod_{j = 1}^{r - 1}
\prod_{j^\prime = j+1}^{r}
\prod_{a = 1}^{c_j}
\frac{
1 - t^{m_1 + \cdots + m_{\zeta_{j^\prime}} - (m_1 + \cdots + m_{\zeta_j}) - 1}q^{\lambda^{(2)}_{	\sum_{\ell = 1}^{\zeta_j}		m_\ell} - \lambda^{(2)}_{	\sum_{\ell = 1}^{\zeta_{j^\prime}}		m_\ell} + a-c_{j^\prime}}
}{
1 - t^{m_1 + \cdots + m_{\zeta_{j^\prime}} - (m_1 + \cdots + m_{\zeta_j}) - 1}q^{\lambda^{(2)}_{	\sum_{\ell = 1}^{\zeta_j}		m_\ell} - \lambda^{(2)}_{	\sum_{\ell = 1}^{\zeta_{j^\prime}}		m_\ell} +  a}
}, 
\notag 
\\
&\operatorname{Factor 2}
= 
\prod_{j = 1}^{r - 1}
\prod_{j^\prime = j+1}^{r}
\prod_{a =  1 }^{	c_j	 }
\frac{
1 - t^{m_1 + \cdots + m_{\zeta_{j^\prime}} - (m_1 + \cdots + m_{\zeta_j})}q^{\lambda^{(2)}_{	\sum_{\ell = 1}^{\zeta_j}		m_\ell}  + a-1}
}{
1 - t^{m_1 + \cdots + m_{\zeta_{j^\prime}} - (m_1 + \cdots + m_{\zeta_j})}q^{\lambda^{(2)}_{	\sum_{\ell = 1}^{\zeta_j}		m_\ell} - \lambda^{(2)}_{	\sum_{\ell = 1}^{\zeta_{j^\prime}}		m_\ell}  + a -1}
}
\label{b13-eqn-1525-5mar}
\\
&\hspace{1.5cm}\times
\prod_{j = 1}^{r - 1}
\prod_{j^\prime = j+1}^{r}
\prod_{a =  1 }^{	c_j	 }
\frac{
1 - t^{m_1 + \cdots + m_{\zeta_{j^\prime}} - (m_1 + \cdots + m_{\zeta_j}) - 1}q^{\lambda^{(2)}_{	\sum_{\ell = 1}^{\zeta_j}		m_\ell} - 		\lambda^{(2)}_{	\sum_{\ell = 1}^{\zeta_{j^\prime}}		m_\ell}	+ a}
}{
1 - t^{m_1 + \cdots + m_{\zeta_{j^\prime}} - (m_1 + \cdots + m_{\zeta_j}) - 1}q^{\lambda^{(2)}_{	\sum_{\ell = 1}^{\zeta_j}		m_\ell} +  a}
}, 
\notag 
\\
&\operatorname{Factor 3}
=
\prod_{j = 1}^{r - 1}
\underbrace{		\prod_{\gamma = 1}^{m_{\zeta_j + 1} + \cdots + m_n}					}_{
	\gamma \notin 
	\bigcup_{j^\prime = j+1}^{r}
	\{
	m_{\zeta_j + 1} + \cdots + m_{\zeta_{j^\prime}}
	\}
}
\prod_{a = 1}^{c_j}
\frac{
	\left(	1 - t^\gamma q^{\lambda^{(2)}_{\sum_{\ell = 1}^{\zeta_j}m_\ell} + a - 1}			\right)
	\left(	1 - t^{\gamma - 1} q^{\lambda^{(2)}_{\sum_{\ell = 1}^{\zeta_j}m_\ell} -	\lambda^{(2)}_{\sum_{\ell = 1}^{\zeta_j}m_\ell+\gamma}		+ a 			}			\right)
}{
	\left(	1 - t^\gamma q^{\lambda^{(2)}_{\sum_{\ell = 1}^{\zeta_j}m_\ell}  -\lambda^{(2)}_{\sum_{\ell = 1}^{\zeta_j}m_\ell+\gamma}+  a	-1}			\right)
	\left(	1 - t^{\gamma - 1} q^{\lambda^{(2)}_{\sum_{\ell = 1}^{\zeta_j}m_\ell} + a }			\right)
}
\label{b13-1335-3feb}
\\
&\hspace{1.4cm}
\times \prod_{\gamma = 1}^{m_{\zeta_r + 1} + \cdots + m_n }
\prod_{a = 1}^{c_r}
\frac{
	\left(	1 - t^\gamma q^{\lambda^{(2)}_{\sum_{\ell = 1}^{\zeta_r}m_\ell} + a - 1}			\right)
	\left(	1 - t^{\gamma - 1} q^{\lambda^{(2)}_{\sum_{\ell = 1}^{\zeta_r}m_\ell} -	\lambda^{(2)}_{\sum_{\ell = 1}^{\zeta_r}m_\ell+\gamma}		+ a 			}			\right)
}{
	\left(	1 - t^\gamma q^{\lambda^{(2)}_{\sum_{\ell = 1}^{\zeta_r}m_\ell}  -\lambda^{(2)}_{\sum_{\ell = 1}^{\zeta_r}m_\ell+\gamma}+  a	-1}			\right)
	\left(	1 - t^{\gamma - 1} q^{\lambda^{(2)}_{\sum_{\ell = 1}^{\zeta_r}m_\ell} + a }			\right)
}, 
\notag 
\end{align}
\begin{align}
\operatorname{Factor 4}
&=\prod_{j = 1}^{r - 1}
\underbrace{			\prod_{\gamma = 1}^{m_{\zeta_j +1} + \cdots + m_n}						}_{
	\gamma \notin 
	\bigcup_{j^\prime = j+1}^{r}
	\{
	m_{\zeta_j + 1} + \cdots + m_{\zeta_{j^\prime}}
	\}
}
\prod_{a = 	 1	}^{c_j}
\frac{
	\left(	1 - t^{\gamma}q^{\lambda^{(2)}_{\sum_{\ell = 1}^{\zeta_j}m_\ell}	+ a }							\right)
	\left(	1 - t^{\gamma + 1}q^{\lambda^{(2)}_{\sum_{\ell = 1}^{\zeta_j}m_\ell}	 - \lambda^{(2)}_{\sum_{\ell = 1}^{\zeta_j}m_\ell + \gamma} + a- 1}							\right)
}{
	\left(	1 - t^{\gamma}q^{\lambda^{(2)}_{\sum_{\ell = 1}^{\zeta_j}m_\ell}	-\lambda^{(2)}_{\sum_{\ell = 1}^{\zeta_j}m_\ell + \gamma} + a}							\right)
	\left(	1 - t^{\gamma + 1}q^{\lambda^{(2)}_{\sum_{\ell = 1}^{\zeta_j}m_\ell}	+a-1}							\right)
}
\label{b14-1335-3feb}
\\
&\times 
\prod_{j = 1}^{r - 1}
\underbrace{			\prod_{\gamma = 1}^{m_{\zeta_j +1} + \cdots + m_n}				}_{
	\gamma \in 
	\bigcup_{j^\prime = j+1}^{r}
	\{
	m_{\zeta_j + 1} + \cdots + m_{\zeta_{j^\prime}}
	\}
}
\prod_{a = 	 1	}^{c_j}
\frac{
	\left(	1 - t^{\gamma}q^{\lambda^{(2)}_{\sum_{\ell = 1}^{\zeta_j}m_\ell}	+ a }							\right)
	\left(	1 - t^{\gamma + 1}q^{\lambda^{(2)}_{\sum_{\ell = 1}^{\zeta_j}m_\ell}	 - \lambda^{(2)}_{\sum_{\ell = 1}^{\zeta_j}m_\ell + \gamma} + a- 1}							\right)
}{
	\left(	1 - t^{\gamma}q^{\lambda^{(2)}_{\sum_{\ell = 1}^{\zeta_j}m_\ell}	-\lambda^{(2)}_{\sum_{\ell = 1}^{\zeta_j}m_\ell + \gamma} + a}							\right)
	\left(	1 - t^{\gamma + 1}q^{\lambda^{(2)}_{\sum_{\ell = 1}^{\zeta_j}m_\ell}	+a-1}							\right)
}
\notag 
\\
&\times 
\prod_{\gamma = 1}^{m_{\zeta_r +1} + \cdots + m_n}
\prod_{a = 	 1	}^{c_r}
\frac{
	\left(	1 - t^{\gamma}q^{\lambda^{(2)}_{\sum_{\ell = 1}^{\zeta_r}m_\ell}	+ a }							\right)
	\left(	1 - t^{\gamma + 1}q^{\lambda^{(2)}_{\sum_{\ell = 1}^{\zeta_r}m_\ell}	 - \lambda^{(2)}_{\sum_{\ell = 1}^{\zeta_r}m_\ell + \gamma} + a- 1}							\right)
}{
	\left(	1 - t^{\gamma}q^{\lambda^{(2)}_{\sum_{\ell = 1}^{\zeta_r}m_\ell}	-\lambda^{(2)}_{\sum_{\ell = 1}^{\zeta_r}m_\ell + \gamma} + a}							\right)
	\left(	1 - t^{\gamma + 1}q^{\lambda^{(2)}_{\sum_{\ell = 1}^{\zeta_r}m_\ell}	+a-1}							\right)
},
\notag 
\\
\operatorname{Factor 5}
&= \prod_{j = 1}^{r}
\prod_{b = 1}^{c_j}
\frac{
	\left(1 - q^{\lambda^{(2)}_{\sum_{\ell = 1}^{\zeta_j}m_\ell} + b}\right)
	\left(1 - tq^{b - 1}\right)
}{
	\left(1 - q^{b 	}\right)
	\left(1 - tq^{\lambda^{(2)}_{\sum_{\ell = 1}^{\zeta_j}m_\ell} + b-1}\right)
}. 
\label{b15-1521-3feb}
\end{align}
\normalsize

\begin{lem}
\label{lemb2-1432-5mar}
\mbox{}
\small 
\begin{align}
	&\operatorname{Factor 3} \times \operatorname{Factor 4}
	\\
	&= 
	\widetilde{\circled{1}} 
	\times 
	\prod_{j = 1}^{r - 1}
	\prod_{a = 1}^{c_j}
	\prod_{\beta = 0}^{r - 1 - j}
	\frac{
		1 - t^{	m_{1} +  \cdots + m_{\zeta_{j + \beta + 1} - 1}	- (m_1 + \cdots + m_{\zeta_j})		}q^{ \lambda^{(2)}_{\sum_{\ell = 1}^{\zeta_j}m_\ell}	- 	\lambda^{(1)}_{\sum_{\ell = 1}^{\zeta_{j+\beta + 1}}m_\ell - 1}	 + a 	}
	}{
		1 - t^{	m_{1} +  \cdots + m_{\zeta_{j + \beta + 1} }	- (m_1 + \cdots + m_{\zeta_j}) - 1		}q^{ \lambda^{(2)}_{\sum_{\ell = 1}^{\zeta_j}m_\ell}	- 	\lambda^{(1)}_{\sum_{\ell = 1}^{\zeta_{j+\beta + 1}}m_\ell - 1}	 + a 	}
	}
	\notag 
	\\
	&\times 
	\prod_{j = 1}^{r - 1}
	\prod_{a = 1}^{c_j}
	\prod_{\beta = 0}^{r - 1 - j}
	\frac{
		1 - t^{	m_{1} +  \cdots + m_{\zeta_{j + \beta + 1} }	- (m_1 + \cdots + m_{\zeta_j}) 		}q^{ \lambda^{(2)}_{\sum_{\ell = 1}^{\zeta_j}m_\ell}	- 	\lambda^{(2)}_{\sum_{\ell = 1}^{\zeta_{j+\beta + 1}}m_\ell - 1}	 + a  - 1	}
	}{
		1 - t^{	m_{1} +  \cdots + m_{\zeta_{j + \beta + 1} - 1}	- (m_1 + \cdots + m_{\zeta_j})	+ 1	}q^{ \lambda^{(2)}_{\sum_{\ell = 1}^{\zeta_j}m_\ell}	- 	\lambda^{(2)}_{\sum_{\ell = 1}^{\zeta_{j+\beta + 1}}m_\ell - 1}	 + a  - 1	 }
	}
	\notag 
	\\
	&\times
	\prod_{j = 1}^{r - 1}
	\prod_{a = 1}^{c_j}
	\prod_{\beta = 0}^{r - j - 1}
	\frac{
		1 - t^{	m_{1} +  \cdots + m_{\zeta_{j + \beta } }	- (m_1 + \cdots + m_{\zeta_j})	+ 1	}q^{ \lambda^{(2)}_{\sum_{\ell = 1}^{\zeta_j}m_\ell}	+ a  - 1	 }
	}{
		1 - t^{	m_{1} +  \cdots + m_{\zeta_{j + \beta + 1} }	- (m_1 + \cdots + m_{\zeta_j}) 		}q^{ \lambda^{(2)}_{\sum_{\ell = 1}^{\zeta_j}m_\ell}	 + a  - 1	}
	}
	\notag 
	\\
	&\times 
	\prod_{j = 1}^{r - 1}
	\prod_{a = 1}^{c_j}
	\frac{
		1 - t^{	m_{1} +  \cdots + m_{\zeta_{r } }	- (m_1 + \cdots + m_{\zeta_j})	+ 1	}q^{ \lambda^{(2)}_{\sum_{\ell = 1}^{\zeta_j}m_\ell}	+ a  - 1	 }
	}{
		1 - t^{	m_{1} +  \cdots + m_{n}	- (m_1 + \cdots + m_{\zeta_j})	+ 1	}q^{ \lambda^{(2)}_{\sum_{\ell = 1}^{\zeta_j}m_\ell}	 + a  - 1	}
	}
	\notag 
	\\
	&\times 
	\prod_{j = 1}^{r - 1}
	\prod_{a = 1}^{c_j}
	\prod_{\beta = 0}^{r - 1 - j}
	\frac{
		1 - t^{	m_{1} +  \cdots + m_{\zeta_{j + \beta + 1} }	- (m_1 + \cdots + m_{\zeta_j}) 	- 1	}q^{ \lambda^{(2)}_{\sum_{\ell = 1}^{\zeta_j}m_\ell}	 + a  	}
	}{
		1 - t^{	m_{1} +  \cdots + m_{\zeta_{j + \beta } }	- (m_1 + \cdots + m_{\zeta_j})		}q^{ \lambda^{(2)}_{\sum_{\ell = 1}^{\zeta_j}m_\ell}	+ a  	 }
	}
	\notag 
	\\
	&\times 
	\prod_{j = 1}^{r - 1}
	\prod_{a = 1}^{c_j}
	\frac{
		1 - t^{	m_{1} +  \cdots + m_{n}	- (m_1 + \cdots + m_{\zeta_j})	}q^{ \lambda^{(2)}_{\sum_{\ell = 1}^{\zeta_j}m_\ell}	 + a  	}
	}{
		1 - t^{	m_{1} +  \cdots + m_{\zeta_{r } }	- (m_1 + \cdots + m_{\zeta_j})		}q^{ \lambda^{(2)}_{\sum_{\ell = 1}^{\zeta_j}m_\ell}	+ a  	 }
	}
	\notag 
	\\
	&\times 
	\widetilde{\circled{2}}
	\times 
	\prod_{a = 1}^{c_r}
	\frac{
		1 - tq^{ \lambda^{(2)}_{\sum_{\ell = 1}^{\zeta_r}m_\ell}	+ a  - 1	 }
	}{
		1 - t^{	m_{1} +  \cdots + m_{n}	- (m_1 + \cdots + m_{\zeta_r})	+ 1	}q^{ \lambda^{(2)}_{\sum_{\ell = 1}^{\zeta_r}m_\ell}	 + a  - 1	}
	}
	\times 
	\prod_{a = 1}^{c_r}
	\frac{
		1 - t^{	m_{1} +  \cdots + m_{n}	- (m_1 + \cdots + m_{\zeta_r})	}q^{ \lambda^{(2)}_{\sum_{\ell = 1}^{\zeta_r}m_\ell}	 + a  	}
	}{
		1 - q^{ \lambda^{(2)}_{\sum_{\ell = 1}^{\zeta_r}m_\ell}	+ a  	 }
	}
	\notag 
	\\
	&\times 
	\prod_{j = 1}^{r - 1}
	\underbrace{			\prod_{\gamma = 1}^{m_{\zeta_j +1} + \cdots + m_n}				}_{
		\gamma \in 
		\bigcup_{j^\prime = j+1}^{r}
		\{
		m_{\zeta_j + 1} + \cdots + m_{\zeta_{j^\prime}}
		\}
	}
	\prod_{a = 	 1	}^{c_j}
	\frac{
		\left(	1 - t^{\gamma}q^{\lambda^{(2)}_{\sum_{\ell = 1}^{\zeta_j}m_\ell}	+ a }							\right)
		\left(	1 - t^{\gamma + 1}q^{\lambda^{(2)}_{\sum_{\ell = 1}^{\zeta_j}m_\ell}	 - \lambda^{(2)}_{\sum_{\ell = 1}^{\zeta_j}m_\ell + \gamma} + a- 1}							\right)
	}{
		\left(	1 - t^{\gamma}q^{\lambda^{(2)}_{\sum_{\ell = 1}^{\zeta_j}m_\ell}	-\lambda^{(2)}_{\sum_{\ell = 1}^{\zeta_j}m_\ell + \gamma} + a}							\right)
		\left(	1 - t^{\gamma + 1}q^{\lambda^{(2)}_{\sum_{\ell = 1}^{\zeta_j}m_\ell}	+a-1}							\right)
	}
\notag 
\end{align}
\normalsize 
where 

\small 
\begin{align}
	\widetilde{\circled{1}}
	&:= 
	\prod_{j = 1}^{r - 1}
	\prod_{a = 1}^{c_j}
	\underbrace{			\prod_{\gamma = 0}^{n - \zeta_j - 1}							}_{
		\gamma \notin \{\zeta_{j+1} - \zeta_j - 1, \zeta_{j+2} - \zeta_j - 1, \dots, \zeta_r - \zeta_j - 1\}
	}
	\left(
	1 - q^{	\lambda^{(2)}_{m_1 + \cdots + m_{\zeta_j}} - \lambda^{(1)}_{m_1 + \cdots + m_{\zeta_j + \gamma} + 1 } 	+ a				}t^{m_1 + \cdots + m_{\zeta_j + \gamma} - (m_1 + \cdots + m_{\zeta_j}) }
	\right)
	\label{b17-1450-3feb}
	\\
	&\times 
	\prod_{j = 1}^{r - 1}
	\prod_{a = 1}^{c_j}
	\underbrace{			\prod_{\gamma = 1}^{n - \zeta_j }							}_{
		\gamma \notin \{\zeta_{j+1} - \zeta_j , \zeta_{j+2} - \zeta_j , \dots, \zeta_r - \zeta_j \}
	}
	\frac{1}{
		1 - q^{	\lambda^{(2)}_{m_1 + \cdots + m_{\zeta_j}} - \lambda^{(1)}_{m_1 + \cdots + m_{\zeta_j + \gamma} } 	+ a				}t^{m_1 + \cdots + m_{\zeta_j + \gamma} - (m_1 + \cdots + m_{\zeta_j}) }
	}
	\notag 
	\\
	&\times 
	\prod_{j = 1}^{r - 1}
	\prod_{a = 1}^{c_j}
	\underbrace{			\prod_{\gamma = 1}^{n - \zeta_j }							}_{
		\gamma \notin \{\zeta_{j+1} - \zeta_j , \zeta_{j+2} - \zeta_j , \dots, \zeta_r - \zeta_j \}
	}
	\frac{
	1 - q^{	\lambda^{(2)}_{m_1 + \cdots + m_{\zeta_j}} - \lambda^{(2)}_{m_1 + \cdots + m_{\zeta_j + \gamma} } 	+ a	- 1			}t^{m_1 + \cdots + m_{\zeta_j + \gamma } - (m_1 + \cdots + m_{\zeta_j}) + 1}
	}{
		1 - q^{	\lambda^{(2)}_{m_1 + \cdots + m_{\zeta_j}} - \lambda^{(2)}_{m_1 + \cdots + m_{\zeta_j + \gamma} } 	+ a	- 1			}t^{m_1 + \cdots + m_{\zeta_j + \gamma - 1} - (m_1 + \cdots + m_{\zeta_j}) + 1}
	}
	\notag 
	\\
	\widetilde{\circled{2}}
	&:= 
	\prod_{a = 1}^{c_r}
	\prod_{\gamma = 0}^{n - \zeta_r - 1}					
	\left(
	1 - q^{	\lambda^{(2)}_{m_1 + \cdots + m_{\zeta_r}} - \lambda^{(1)}_{m_1 + \cdots + m_{\zeta_r + \gamma} + 1 } 	+ a				}t^{m_1 + \cdots + m_{\zeta_r + \gamma} - (m_1 + \cdots + m_{\zeta_r}) }
	\right)
	\label{b18-1450-3feb}
	\\
	&\times 
	\prod_{a = 1}^{c_r}
	\prod_{\gamma = 1}^{n - \zeta_r }							
	\frac{1}{
		1 - q^{	\lambda^{(2)}_{m_1 + \cdots + m_{\zeta_r}} - \lambda^{(1)}_{m_1 + \cdots + m_{\zeta_r + \gamma} } 	+ a				}t^{m_1 + \cdots + m_{\zeta_r + \gamma} - (m_1 + \cdots + m_{\zeta_r}) }
	}
	\notag 
	\\
	&\times 
	\prod_{a = 1}^{c_r}
	\prod_{\gamma = 1}^{n - \zeta_r }						
	\frac{
		1 - q^{	\lambda^{(2)}_{m_1 + \cdots + m_{\zeta_r}} - \lambda^{(2)}_{m_1 + \cdots + m_{\zeta_r + \gamma} } 	+ a	- 1			}t^{m_1 + \cdots + m_{\zeta_r + \gamma } - (m_1 + \cdots + m_{\zeta_r}) + 1}
	}{
		1 - q^{	\lambda^{(2)}_{m_1 + \cdots + m_{\zeta_r}} - \lambda^{(2)}_{m_1 + \cdots + m_{\zeta_r + \gamma} } 	+ a	- 1			}t^{m_1 + \cdots + m_{\zeta_r + \gamma - 1} - (m_1 + \cdots + m_{\zeta_r}) + 1}
	}				
	\notag 
\end{align}
\normalsize
\end{lem}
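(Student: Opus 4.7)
The plan is to compute $\operatorname{Factor 3}\times\operatorname{Factor 4}$ directly by applying $\dualmap$ and then taking $\xi\to t^{-1}$, and to show the result equals the RHS after a careful reindexing. First, I would translate the box indices into the coordinates used by $\dualmap$: a box in row $\alpha$ of $\lambda$ with column position $a\in I^{(\alpha)}$ evaluates to $z_a = q^{a-\sum_{j<\alpha}\lambda_j - 1}\,\xi^{\alpha-1}y$, so that each ratio $z_b/z_a$ appearing in Factor~3 or Factor~4 becomes of the form $q^{\Delta\text{col}}\xi^{\Delta\text{row}}$ with $\Delta\text{row}\geq 1$ (Factor~3) or $\Delta\text{row}\geq 0$ with $b$ strictly below and to the right of $a$ (Factor~4). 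After this substitution, each product becomes a finite product of elementary factors $(1 - q^m t^n)$ obtained by $\xi\to t^{-1}$, together with a potential singularity that we must track.

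Next, I would organize the products by the row-interval data encoded in Assumption (that the boxes labeled~$1$ lie in the last rows $m_1+\cdots+m_{\zeta_j}$ for $j=1,\dots,r$ with multiplicities $c_1,\dots,c_r$). For a fixed source row $m_1+\cdots+m_{\zeta_j}$ and a target row in some interval $\zeta_{j+\beta}<\gamma\leq\zeta_{j+\beta+1}$, the column positions within each row are determined by $\lambda^{(1)}$ and $\lambda^{(2)}$ of the corresponding rows, and the row-differences are precisely the exponents of $t$ appearing in the asserted formula. The products in Factor~3 over $\gamma$ and $a$ then telescope with the corresponding products in Factor~4: the factor $(1-\tfrac{z_b}{z_a})$ in the numerator of Factor~3 (when $a$ and $b$ are in adjacent row intervals) cancels with the $(1-t^{-1}\tfrac{z_b}{z_a})$ in the denominator of Factor~4, and similarly the $(1-t\tfrac{z_b}{z_a})/(1-q^{-1}t\tfrac{z_b}{z_a})$ factors of Factor~3 pair with the $(1-qt^{-1}\tfrac{z_b}{z_a})/(1-q\tfrac{z_b}{z_a})$ factors of Factor~4. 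I would record these cancellations interval by interval, collecting the surviving boundary terms in two groups: those with source row $\zeta_j$ for $j<r$ (yielding the lines involving $\zeta_{j+\beta+1}$ and $\zeta_{j+\beta}$ on the RHS), and those with source row $\zeta_r$ (yielding the $\widetilde{\circled{2}}$ contribution together with the two explicit products over $a\in\{1,\dots,c_r\}$).

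Two special sub-cases need separate attention. First, when $\gamma$ reaches a row index $\zeta_{j^\prime}$ that itself contains $1$'s, the interior cancellation between Factor~3 and Factor~4 is obstructed because the corresponding column in Factor~3 is excluded by the constraint that $b$ is not a box with number $1$; these are precisely the surviving terms collected in the final line of the asserted formula, indexed by $\gamma\in\bigcup_{j'>j}\{m_{\zeta_j+1}+\cdots+m_{\zeta_{j'}}\}$. Second, at the \emph{top} boundary $\gamma=0$ (i.e.\ $a,b$ in the same row interval) and the \emph{bottom} boundary $\gamma=n-\zeta_j$, the telescoping is incomplete and leaves residual factors; these residuals, together with the correction coming from rewriting $\lambda^{(2)}$ and $\lambda^{(1)}$ at the boundary row $\zeta_{j+\beta+1}-1$, give the terms $\widetilde{\circled{1}}$ and the boundary products with $m_1+\cdots+m_n$ and $m_1+\cdots+m_{\zeta_r}$ in the exponents of $t$.

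The main obstacle will be the bookkeeping: there are four nested products each in Factor~3 and Factor~4 (indexed by $j,\gamma,a,\beta$ or their analogues), and the pairings between numerator and denominator factors depend delicately on whether $\gamma$ coincides with some $\zeta_{j'}$ and on whether $a$ or $b$ is the box labeled $1$. I would organize the proof as a single large cancellation table, treating the four regimes (source interval $<r$ vs $=r$, and interior $\gamma$ vs boundary $\gamma$) as four cases. Once each case is verified, the product of the case-by-case identities reproduces the asserted factorization, and the proof is complete. The computation is purely algebraic and of the same flavor as the derivation of equations \eqref{b12-eqn-1525-5mar}--\eqref{b15-1521-3feb}, so no new conceptual input is required beyond careful indexing.
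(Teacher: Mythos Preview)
Your plan is essentially the paper's approach: telescope the product $\operatorname{Factor 3}\times\operatorname{Factor 4}$ over the row-offset index, keeping track of boundary terms and of the excluded values of $\gamma$ where the target row itself contains a box labeled $1$. The paper executes this more economically by starting not from the $z$-variable definitions but from the already-evaluated closed forms \eqref{b13-1335-3feb} and \eqref{b14-1335-3feb}; it then isolates four elementary telescoping identities \eqref{b20-1428-5mar}--\eqref{eqnb23-1429-5mar} in the index $\gamma$ (each pairing one numerator factor from \eqref{b13-1335-3feb} with one denominator factor from \eqref{b14-1335-3feb}, or vice versa), and applying those to the product gives the asserted factorization directly. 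One minor imprecision in your write-up: the cross-cancellation you describe is not between $(1-z_b/z_a)$ and $(1-t^{-1}z_b/z_a)$ for the \emph{same} pair $(a,b)$---those are distinct factors---but rather between terms with the same $q$-exponent and $t$-exponents differing by one as $\gamma$ ranges within a row interval, which is exactly the content of \eqref{b20-1428-5mar}--\eqref{eqnb23-1429-5mar}.
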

\begin{proof}
First, one can easily show that the equations \eqref{b20-1428-5mar} - \eqref{eqnb23-1429-5mar} below hold. 
\small 
\begin{align}
&
\prod_{j = 1}^{r - 1}
\underbrace{		\prod_{\gamma = 1}^{m_{\zeta_j + 1} + \cdots + m_n}					}_{
	\gamma \notin 
	\bigcup_{j^\prime = j+1}^{r}
	\{
	m_{\zeta_j + 1} + \cdots + m_{\zeta_{j^\prime}}
	\}
}
\prod_{a = 1}^{c_j}
\frac{
		1 - t^{\gamma - 1} q^{\lambda^{(2)}_{\sum_{\ell = 1}^{\zeta_j}m_\ell} -	\lambda^{(2)}_{\sum_{\ell = 1}^{\zeta_j}m_\ell+\gamma}		+ a 			}			
}{
	1 - t^{\gamma}q^{\lambda^{(2)}_{\sum_{\ell = 1}^{\zeta_j}m_\ell}	-\lambda^{(2)}_{\sum_{\ell = 1}^{\zeta_j}m_\ell + \gamma} + a}						
}
\label{b20-1428-5mar}
\\
&\hspace{0.3cm}= 
\prod_{j = 1}^{r - 1}
\prod_{a = 1}^{c_j}
\underbrace{			\prod_{\gamma = 0}^{n - \zeta_j - 1}							}_{
	\gamma \notin \{\zeta_{j+1} - \zeta_j - 1, \zeta_{j+2} - \zeta_j - 1, \dots, \zeta_r - \zeta_j - 1\}
}
\left(
1 - q^{	\lambda^{(2)}_{m_1 + \cdots + m_{\zeta_j}} - \lambda^{(1)}_{m_1 + \cdots + m_{\zeta_j + \gamma} + 1 } 	+ a				}t^{m_1 + \cdots + m_{\zeta_j + \gamma} - (m_1 + \cdots + m_{\zeta_j}) }
\right)
\notag 
\\
&\hspace{0.3cm}\times 
\prod_{j = 1}^{r - 1}
\prod_{a = 1}^{c_j}
\underbrace{			\prod_{\gamma = 1}^{n - \zeta_j }							}_{
	\gamma \notin \{\zeta_{j+1} - \zeta_j , \zeta_{j+2} - \zeta_j , \dots, \zeta_r - \zeta_j \}
}
\frac{1}{
	1 - q^{	\lambda^{(2)}_{m_1 + \cdots + m_{\zeta_j}} - \lambda^{(1)}_{m_1 + \cdots + m_{\zeta_j + \gamma} } 	+ a				}t^{m_1 + \cdots + m_{\zeta_j + \gamma} - (m_1 + \cdots + m_{\zeta_j}) }
}
\notag 
\\
&\hspace{0.3cm}\times
\prod_{j = 1}^{r - 1}
\prod_{a = 1}^{c_j}
\prod_{\beta = 0}^{r - 1 - j}
\frac{
	1 - t^{	m_{1} +  \cdots + m_{\zeta_{j + \beta + 1} - 1}	- (m_1 + \cdots + m_{\zeta_j})		}q^{ \lambda^{(2)}_{\sum_{\ell = 1}^{\zeta_j}m_\ell}	- 	\lambda^{(1)}_{\sum_{\ell = 1}^{\zeta_{j+\beta + 1}}m_\ell - 1}	 + a 	}
}{
	1 - t^{	m_{1} +  \cdots + m_{\zeta_{j + \beta + 1} }	- (m_1 + \cdots + m_{\zeta_j}) - 1		}q^{ \lambda^{(2)}_{\sum_{\ell = 1}^{\zeta_j}m_\ell}	- 	\lambda^{(1)}_{\sum_{\ell = 1}^{\zeta_{j+\beta + 1}}m_\ell - 1}	 + a 	}
}, 
\notag 
\\
&\prod_{j = 1}^{r - 1}
\underbrace{		\prod_{\gamma = 1}^{m_{\zeta_j + 1} + \cdots + m_n}					}_{
	\gamma \notin 
	\bigcup_{j^\prime = j+1}^{r}
	\{
	m_{\zeta_j + 1} + \cdots + m_{\zeta_{j^\prime}}
	\}
}
\prod_{a = 1}^{c_j}
\frac{
	1 - t^{\gamma + 1}q^{\lambda^{(2)}_{\sum_{\ell = 1}^{\zeta_j}m_\ell}	 - \lambda^{(2)}_{\sum_{\ell = 1}^{\zeta_j}m_\ell + \gamma} + a- 1}			
}{
	1 - t^\gamma q^{\lambda^{(2)}_{\sum_{\ell = 1}^{\zeta_j}m_\ell}  -\lambda^{(2)}_{\sum_{\ell = 1}^{\zeta_j}m_\ell+\gamma}+  a	-1}			
}
\\
&\hspace{0.3cm}= 
\prod_{j = 1}^{r - 1}
\prod_{a = 1}^{c_j}
\underbrace{			\prod_{\gamma = 1}^{n - \zeta_j }							}_{
	\gamma \notin \{\zeta_{j+1} - \zeta_j , \zeta_{j+2} - \zeta_j , \dots, \zeta_r - \zeta_j \}
}
\frac{
1 - q^{	\lambda^{(2)}_{m_1 + \cdots + m_{\zeta_j}} - \lambda^{(2)}_{m_1 + \cdots + m_{\zeta_j + \gamma} } 	+ a	- 1			}t^{m_1 + \cdots + m_{\zeta_j + \gamma } - (m_1 + \cdots + m_{\zeta_j}) + 1}
}{
	1 - q^{	\lambda^{(2)}_{m_1 + \cdots + m_{\zeta_j}} - \lambda^{(2)}_{m_1 + \cdots + m_{\zeta_j + \gamma} } 	+ a	- 1			}t^{m_1 + \cdots + m_{\zeta_j + \gamma - 1} - (m_1 + \cdots + m_{\zeta_j}) + 1}
}
\notag 
\\
&\hspace{0.3cm}\times 
\prod_{j = 1}^{r - 1}
\prod_{a = 1}^{c_j}
\prod_{\beta = 0}^{r - 1 - j}
\frac{
	1 - t^{	m_{1} +  \cdots + m_{\zeta_{j + \beta + 1} }	- (m_1 + \cdots + m_{\zeta_j}) 		}q^{ \lambda^{(2)}_{\sum_{\ell = 1}^{\zeta_j}m_\ell}	- 	\lambda^{(2)}_{\sum_{\ell = 1}^{\zeta_{j+\beta + 1}}m_\ell - 1}	 + a  - 1	}
}{
	1 - t^{	m_{1} +  \cdots + m_{\zeta_{j + \beta + 1} - 1}	- (m_1 + \cdots + m_{\zeta_j})	+ 1	}q^{ \lambda^{(2)}_{\sum_{\ell = 1}^{\zeta_j}m_\ell}	- 	\lambda^{(2)}_{\sum_{\ell = 1}^{\zeta_{j+\beta + 1}}m_\ell - 1}	 + a  - 1	 }
}, 
\notag 
\\
&
\prod_{j = 1}^{r - 1}
\underbrace{		\prod_{\gamma = 1}^{m_{\zeta_j + 1} + \cdots + m_n}					}_{
	\gamma \notin 
	\bigcup_{j^\prime = j+1}^{r}
	\{
	m_{\zeta_j + 1} + \cdots + m_{\zeta_{j^\prime}}
	\}
}
\prod_{a = 1}^{c_j}
\frac{
	1 - t^\gamma q^{\lambda^{(2)}_{\sum_{\ell = 1}^{\zeta_j}m_\ell} + a - 1}			
}{
	1 - t^{\gamma + 1}q^{\lambda^{(2)}_{\sum_{\ell = 1}^{\zeta_j}m_\ell}	+a-1}
}
\\
&\hspace{0.3cm}= 
\prod_{j = 1}^{r - 1}
\prod_{a = 1}^{c_j}
\prod_{\beta = 0}^{r - j - 1}
\frac{
	1 - t^{	m_{1} +  \cdots + m_{\zeta_{j + \beta } }	- (m_1 + \cdots + m_{\zeta_j})	+ 1	}q^{ \lambda^{(2)}_{\sum_{\ell = 1}^{\zeta_j}m_\ell}	+ a  - 1	 }
}{
	1 - t^{	m_{1} +  \cdots + m_{\zeta_{j + \beta + 1} }	- (m_1 + \cdots + m_{\zeta_j}) 		}q^{ \lambda^{(2)}_{\sum_{\ell = 1}^{\zeta_j}m_\ell}	 + a  - 1	}
}
\notag 
\\
&\hspace{0.3cm}\times 
\prod_{j = 1}^{r - 1}
\prod_{a = 1}^{c_j}
\frac{
	1 - t^{	m_{1} +  \cdots + m_{\zeta_{r } }	- (m_1 + \cdots + m_{\zeta_j})	+ 1	}q^{ \lambda^{(2)}_{\sum_{\ell = 1}^{\zeta_j}m_\ell}	+ a  - 1	 }
}{
	1 - t^{	m_{1} +  \cdots + m_{n}	- (m_1 + \cdots + m_{\zeta_j})	+ 1	}q^{ \lambda^{(2)}_{\sum_{\ell = 1}^{\zeta_j}m_\ell}	 + a  - 1	}
},
\notag 
\\
&
\prod_{j = 1}^{r - 1}
\underbrace{		\prod_{\gamma = 1}^{m_{\zeta_j + 1} + \cdots + m_n}					}_{
	\gamma \notin 
	\bigcup_{j^\prime = j+1}^{r}
	\{
	m_{\zeta_j + 1} + \cdots + m_{\zeta_{j^\prime}}
	\}
}
\prod_{a = 1}^{c_j}
\frac{
	1 - t^{\gamma}q^{\lambda^{(2)}_{\sum_{\ell = 1}^{\zeta_j}m_\ell}	+ a }							
}{
	1 - t^{\gamma - 1} q^{\lambda^{(2)}_{\sum_{\ell = 1}^{\zeta_j}m_\ell} + a }		
}
\label{eqnb23-1429-5mar}
\\
&= 
\prod_{j = 1}^{r - 1}
\prod_{a = 1}^{c_j}
\prod_{\beta = 0}^{r - 1 - j}
\frac{
	1 - t^{	m_{1} +  \cdots + m_{\zeta_{j + \beta + 1} }	- (m_1 + \cdots + m_{\zeta_j}) 	- 1	}q^{ \lambda^{(2)}_{\sum_{\ell = 1}^{\zeta_j}m_\ell}	 + a  	}
}{
	1 - t^{	m_{1} +  \cdots + m_{\zeta_{j + \beta } }	- (m_1 + \cdots + m_{\zeta_j})		}q^{ \lambda^{(2)}_{\sum_{\ell = 1}^{\zeta_j}m_\ell}	+ a  	 }
}
\notag 
\\
&\times 
\prod_{j = 1}^{r - 1}
\prod_{a = 1}^{c_j}
\frac{
	1 - t^{	m_{1} +  \cdots + m_{n}	- (m_1 + \cdots + m_{\zeta_j})	}q^{ \lambda^{(2)}_{\sum_{\ell = 1}^{\zeta_j}m_\ell}	 + a  	}
}{
	1 - t^{	m_{1} +  \cdots + m_{\zeta_{r } }	- (m_1 + \cdots + m_{\zeta_j})		}q^{ \lambda^{(2)}_{\sum_{\ell = 1}^{\zeta_j}m_\ell}	+ a  	 }
}. 
\notag 
\end{align}
\normalsize
Applying the equations \eqref{b20-1428-5mar} - \eqref{eqnb23-1429-5mar} to the product of equations \eqref{b13-1335-3feb} \eqref{b14-1335-3feb}, we then immediately obtain the formula in the \textbf{Lemma \ref{lemb2-1432-5mar}}. 
\end{proof}

\begin{lem}
\label{lemmB3-1452-3feb}
\mbox{}
\small 
\begin{align*}
	&\frac{
		\psi_{\lambda^{(1)}/\lambda^{(2)}}(q,t)
	}{
		\widetilde{\circled{1}} 
		\times \widetilde{\circled{2}}
	}
	\\
	&=
	\prod_{j = 1}^{r-1}
	\prod_{\beta = 0}^{c_j - 1}
	\underbrace{		\prod_{\gamma = 0}^{n - \zeta_j}						}_{
		\gamma \in \{\zeta_{j+1} - \zeta_j - 1, \zeta_{j+2} - \zeta_j - 1, \dots, \zeta_r - \zeta_j - 1, n - \zeta_j\}
	}
	\frac{
	1 - q^{\lambda^{(2)}_{m_1 + \cdots + m_{\zeta_j}} - \lambda^{(1)}_{m_1 + \cdots + m_{\zeta_j + \gamma} + 1} + 1 + \beta}t^{
		m_1 + \cdots + m_{\zeta_j + \gamma} - (m_1 + \cdots + m_{\zeta_j})
	}
	}{
		1 - q^{\lambda^{(2)}_{m_1 + \cdots + m_{\zeta_j}} - \lambda^{(1)}_{m_1 + \cdots + m_{\zeta_j + \gamma} + 1} + \beta}t^{
			m_1 + \cdots + m_{\zeta_j + \gamma} - (m_1 + \cdots + m_{\zeta_j}) + 1
		}
	}
	\\
	&\times
	\prod_{j = 1}^{r-1}
	\prod_{\beta = 0}^{c_j - 1}
	\underbrace{			\prod_{\gamma = 0}^{n - \zeta_j}					}_{
		\gamma \in \{0,\zeta_{j+1} - \zeta_j , \zeta_{j+2} - \zeta_j , \dots, \zeta_r - \zeta_j \}
	}
	\frac{
	1 - q^{\lambda^{(2)}_{m_1 + \cdots + m_{\zeta_j}} - \lambda^{(2)}_{m_1 + \cdots + m_{\zeta_j + \gamma}} + \beta}t^{
		m_1 + \cdots + m_{\zeta_j + \gamma} - (m_1 + \cdots + m_{\zeta_j}) + 1
	}
	}{
		1 - q^{\lambda^{(2)}_{m_1 + \cdots + m_{\zeta_j}} - \lambda^{(2)}_{m_1 + \cdots + m_{\zeta_j + \gamma}} + 1 + \beta}t^{
			m_1 + \cdots + m_{\zeta_j + \gamma} - (m_1 + \cdots + m_{\zeta_j})
		}
	}
	\\
	&\times
	\prod_{\beta = 0}^{c_r - 1}
	\frac{1 - q^\beta t}{1 - q^{1 + \beta}}
	\times 
	\prod_{\beta = 0}^{c_r - 1}
	\frac{
		1 - q^{\lambda^{(2)}_{m_1 + \cdots + m_{\zeta_r}}  + 1 + \beta}t^{
			m_1 + \cdots + m_{n} - (m_1 + \cdots + m_{\zeta_r})
		}
	}{
		1 - q^{\lambda^{(2)}_{m_1 + \cdots + m_{\zeta_r}}   + \beta}t^{
			m_1 + \cdots + m_{n} - (m_1 + \cdots + m_{\zeta_r}) + 1
		}
	}
\end{align*}
\normalsize
\end{lem}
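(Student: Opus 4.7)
The plan is a bookkeeping argument that matches factors in the formula for $\psi_{\lambda^{(1)}/\lambda^{(2)}}(q,t)$ from \textbf{Lemma \ref{lemb1-1353-3feb}} against the two explicit products $\widetilde{\circled{1}}$ and $\widetilde{\circled{2}}$ defined in equations \eqref{b17-1450-3feb} and \eqref{b18-1450-3feb}, verifying that what remains after cancellation is precisely the right-hand side of the lemma.

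First I would rewrite Lemma \ref{lemb1-1353-3feb} in the schematic form
\begin{align*}
\psi_{\lambda^{(1)}/\lambda^{(2)}}(q,t) = \prod_{j=1}^{r} \prod_{\beta=0}^{c_j-1} \prod_{\gamma=0}^{n-\zeta_j} R_A(j,\gamma,\beta)\, R_B(j,\gamma,\beta),
\end{align*}
where $R_A$ and $R_B$ are the two fractions appearing in that lemma. For each fixed $j$, I would split the range of $\gamma$ into the \emph{transition set} $\Gamma_j := \{\zeta_{j+1}-\zeta_j,\ldots,\zeta_r-\zeta_j\}$, its shift $\Gamma_j - 1 = \{\zeta_{j+1}-\zeta_j-1,\ldots,\zeta_r-\zeta_j-1\}$, the boundary values $\gamma = 0$ and $\gamma = n-\zeta_j$, and the complementary indices. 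The key observation is that in $R_A$ and $R_B$, the factors involving $\lambda^{(1)}_{m_1+\cdots+m_{\zeta_j+\gamma}+1}$ collapse or change form precisely when $\gamma \in \Gamma_j-1$ (because $\lambda^{(1)}_{m_1+\cdots+m_{\zeta_{j'}}} = \lambda^{(1)}_{m_1+\cdots+m_{\zeta_{j'}}-1}$ within a row interval but differ across intervals), while the factors involving $\lambda^{(2)}_{m_1+\cdots+m_{\zeta_j+\gamma}}$ change form when $\gamma \in \Gamma_j$.

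Next, for $j = 1,\ldots,r-1$ I would verify by direct inspection that the contribution of $\gamma$-indices lying in the complement of $\Gamma_j \cup (\Gamma_j-1) \cup \{0,n-\zeta_j\}$ matches, term-by-term, the three sub-products comprising $\widetilde{\circled{1}}$: the first sub-product picks up the $R_B$-numerator with $\gamma \in \{0,\ldots,n-\zeta_j-1\}\setminus(\Gamma_j-1)$, the second picks up the $R_A$-denominator with $\gamma \in \{1,\ldots,n-\zeta_j\}\setminus\Gamma_j$, and the third picks up the ratio of $R_A$-numerator to $R_B$-denominator on the same range. The analogous matching handles $\widetilde{\circled{2}}$ for $j = r$, where no indices are excluded since there is no row interval beyond $\zeta_r$.

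Finally, the factors at the transition indices $\gamma \in \Gamma_j \cup (\Gamma_j-1) \cup \{n-\zeta_j\}$ (for $j < r$) must be collected and shown to equal the first two explicit products on the RHS of the lemma; for $j = r$ the $\gamma = 0$ factor contributes $\prod_\beta (1-q^\beta t)/(1-q^{1+\beta})$ after using $\lambda^{(2)}_{m_1+\cdots+m_{\zeta_r}}-\lambda^{(2)}_{m_1+\cdots+m_{\zeta_r}}=0$, and the remaining $\gamma$ at $j=r$ contributes the last displayed product. The main obstacle will be purely combinatorial: keeping track of off-by-one shifts between $\Gamma_j$ and $\Gamma_j-1$ in the two separate fractions $R_A,R_B$, and ensuring that the boundary values $\gamma=0$ and $\gamma=n-\zeta_j$ (absent from one sub-product but present in another of $\widetilde{\circled{1}}$) are assigned to the correct side of the identity. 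I expect no conceptual difficulty beyond this careful matching, and the proof will be routine once the index sets are aligned.
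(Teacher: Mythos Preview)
Your proposal is correct and follows essentially the same approach as the paper: both start from the explicit formula for $\psi_{\lambda^{(1)}/\lambda^{(2)}}(q,t)$ in \textbf{Lemma \ref{lemb1-1353-3feb}}, separate the $j=r$ contribution from $j\le r-1$, and then cancel factor-by-factor against the products $\widetilde{\circled{1}}$ and $\widetilde{\circled{2}}$ defined in \eqref{b17-1450-3feb}--\eqref{b18-1450-3feb}. Your organization via the transition sets $\Gamma_j$ and $\Gamma_j-1$ is in fact more explicit than the paper's proof, which simply records the split \eqref{b20-1451-3feb} and asserts that inspection of \eqref{b17-1450-3feb}--\eqref{b18-1450-3feb} produces the required cancellations.
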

\begin{proof}
According to \textbf{Lemma \ref{lemb1-1353-3feb}}, we have 
	\begin{align}
	\psi_{\lambda^{(1)}/\lambda^{(2)}}(q,t)
	&= 
	\prod_{j = 1}^{r}
	\prod_{\gamma = 0}^{n - \zeta_j}
	\prod_{\beta = 0}^{c_j - 1}
	\frac{
		1 - q^{\lambda^{(2)}_{m_1 + \cdots + m_{\zeta_j}} - \lambda^{(2)}_{m_1 + \cdots + m_{\zeta_j + \gamma}} + \beta}t^{
			m_1 + \cdots + m_{\zeta_j + \gamma} - (m_1 + \cdots + m_{\zeta_j}) + 1
		}
	}{
		1 - q^{\lambda^{(2)}_{m_1 + \cdots + m_{\zeta_j}} - \lambda^{(1)}_{m_1 + \cdots + m_{\zeta_j + \gamma} + 1} + \beta}t^{
			m_1 + \cdots + m_{\zeta_j + \gamma} - (m_1 + \cdots + m_{\zeta_j}) + 1
		}
	}
	\\
	&\hspace{0.4cm}\times
	\prod_{j = 1}^{r}
	\prod_{\gamma = 0}^{n - \zeta_j}
	\prod_{\beta = 0}^{c_j - 1}
	\frac{
		1 - q^{\lambda^{(2)}_{m_1 + \cdots + m_{\zeta_j}} - \lambda^{(1)}_{m_1 + \cdots + m_{\zeta_j + \gamma} + 1} + 1 + \beta}t^{
			m_1 + \cdots + m_{\zeta_j + \gamma} - (m_1 + \cdots + m_{\zeta_j})
		}
	}{
		1 - q^{\lambda^{(2)}_{m_1 + \cdots + m_{\zeta_j}} - \lambda^{(2)}_{m_1 + \cdots + m_{\zeta_j + \gamma}} + 1 + \beta}t^{
			m_1 + \cdots + m_{\zeta_j + \gamma} - (m_1 + \cdots + m_{\zeta_j})
		}
	}
	\notag 
	\\
	&= 
	\prod_{j = 1}^{r-1}
	\prod_{\gamma = 0}^{n - \zeta_j}
	\prod_{\beta = 0}^{c_j - 1}
	\frac{
		1 - q^{\lambda^{(2)}_{m_1 + \cdots + m_{\zeta_j}} - \lambda^{(2)}_{m_1 + \cdots + m_{\zeta_j + \gamma}} + \beta}t^{
			m_1 + \cdots + m_{\zeta_j + \gamma} - (m_1 + \cdots + m_{\zeta_j}) + 1
		}
	}{
		1 - q^{\lambda^{(2)}_{m_1 + \cdots + m_{\zeta_j}} - \lambda^{(1)}_{m_1 + \cdots + m_{\zeta_j + \gamma} + 1} + \beta}t^{
			m_1 + \cdots + m_{\zeta_j + \gamma} - (m_1 + \cdots + m_{\zeta_j}) + 1
		}
	}
	\label{b20-1451-3feb}
	\\
	&\hspace{0.4cm}\times
	\prod_{j = 1}^{r-1}
	\prod_{\gamma = 0}^{n - \zeta_j}
	\prod_{\beta = 0}^{c_j - 1}
	\frac{
		1 - q^{\lambda^{(2)}_{m_1 + \cdots + m_{\zeta_j}} - \lambda^{(1)}_{m_1 + \cdots + m_{\zeta_j + \gamma} + 1} + 1 + \beta}t^{
			m_1 + \cdots + m_{\zeta_j + \gamma} - (m_1 + \cdots + m_{\zeta_j})
		}
	}{
		1 - q^{\lambda^{(2)}_{m_1 + \cdots + m_{\zeta_j}} - \lambda^{(2)}_{m_1 + \cdots + m_{\zeta_j + \gamma}} + 1 + \beta}t^{
			m_1 + \cdots + m_{\zeta_j + \gamma} - (m_1 + \cdots + m_{\zeta_j})
		}
	}
	\notag
	\\
	&\hspace{0.4cm}\times
	\prod_{\gamma = 0}^{n - \zeta_r}
	\prod_{\beta = 0}^{c_r - 1}
	\frac{
		1 - q^{\lambda^{(2)}_{m_1 + \cdots + m_{\zeta_r}} - \lambda^{(2)}_{m_1 + \cdots + m_{\zeta_r + \gamma}} + \beta}t^{
			m_1 + \cdots + m_{\zeta_r + \gamma} - (m_1 + \cdots + m_{\zeta_r}) + 1
		}
	}{
		1 - q^{\lambda^{(2)}_{m_1 + \cdots + m_{\zeta_r}} - \lambda^{(1)}_{m_1 + \cdots + m_{\zeta_r + \gamma} + 1} + \beta}t^{
			m_1 + \cdots + m_{\zeta_r + \gamma} - (m_1 + \cdots + m_{\zeta_r}) + 1
		}
	}
	\notag
	\\
	&\hspace{0.4cm}\times 
	\prod_{\gamma = 0}^{n - \zeta_r}
	\prod_{\beta = 0}^{c_r - 1}
	\frac{
		1 - q^{\lambda^{(2)}_{m_1 + \cdots + m_{\zeta_r}} - \lambda^{(1)}_{m_1 + \cdots + m_{\zeta_r + \gamma} + 1} + 1 + \beta}t^{
			m_1 + \cdots + m_{\zeta_r + \gamma} - (m_1 + \cdots + m_{\zeta_r})
		}
	}{
		1 - q^{\lambda^{(2)}_{m_1 + \cdots + m_{\zeta_r}} - \lambda^{(2)}_{m_1 + \cdots + m_{\zeta_r + \gamma}} + 1 + \beta}t^{
			m_1 + \cdots + m_{\zeta_r + \gamma} - (m_1 + \cdots + m_{\zeta_r})
		}
	}. 
	\notag
\end{align}
By examining the formulas for $\widetilde{\circled{1}}$ and $\widetilde{\circled{2}}$
given in equations \eqref{b17-1450-3feb} and \eqref{b18-1450-3feb}, we can see that there will be factors in $\widetilde{\circled{1}} \times \widetilde{\circled{2}}$ that cancel againt factors in equation \eqref{b20-1451-3feb}. 
After these cancellations are performed, we obtain the formula for 
$\frac{
	\psi_{\lambda^{(1)}/\lambda^{(2)}}(q,t)
}{
	\widetilde{\circled{1}} 
	\times \widetilde{\circled{2}}
}$ as stated in \textbf{Lemma \ref{lemmB3-1452-3feb}}.
\end{proof}

\begin{lem}
\label{lemb4-1600-5mar}
\mbox{}
\begin{align}
\psi_{\lambda^{(1)}/\lambda^{(2)}}(q,t)
= 
\operatorname{Factor 1}
\times
\operatorname{Factor 2}
\times
\operatorname{Factor 3}
\times
\operatorname{Factor 4}
\times
\operatorname{Factor 5}
\end{align}
if and only if 
\small 
\begin{align}
	&\operatorname{Factor 1}
	\times
	\operatorname{Factor 2}
	\label{eqn136-1322-2dec}
	\\
	&=
	\prod_{j = 1}^{r-1}
	\prod_{\beta = 0}^{c_j - 1}
	\underbrace{		\prod_{\gamma }^{}						}_{
		\gamma \in \{\zeta_{j+1} - \zeta_j , \zeta_{j+2} - \zeta_j , \dots, \zeta_r - \zeta_j \}
	}
	\frac{
		1 - q^{\lambda^{(2)}_{m_1 + \cdots + m_{\zeta_j}} - \lambda^{(1)}_{m_1 + \cdots + m_{\zeta_j + \gamma} } + 1 + \beta}t^{
			m_1 + \cdots + m_{\zeta_j + \gamma - 1} - (m_1 + \cdots + m_{\zeta_j})
		}
	}{
		1 - q^{\lambda^{(2)}_{m_1 + \cdots + m_{\zeta_j}} - \lambda^{(1)}_{m_1 + \cdots + m_{\zeta_j + \gamma}} + \beta}t^{
			m_1 + \cdots + m_{\zeta_j + \gamma - 1} - (m_1 + \cdots + m_{\zeta_j}) + 1
		}
	}
	\notag 
	\\
	&
	\times 
	\prod_{j = 1}^{r - 1}
	\prod_{a = 1}^{c_j}
	\prod_{\beta = 0}^{r - j - 1}
	\frac{
		1 - t^{	m_{1} +  \cdots + m_{\zeta_{j + \beta + 1} }	- (m_1 + \cdots + m_{\zeta_j}) - 1		}q^{ \lambda^{(2)}_{\sum_{\ell = 1}^{\zeta_j}m_\ell}	- 	\lambda^{(1)}_{\sum_{\ell = 1}^{\zeta_{j+\beta + 1}}m_\ell - 1}	 + a 	}
	}{
		1 - t^{	m_{1} +  \cdots + m_{\zeta_{j + \beta + 1} - 1}	- (m_1 + \cdots + m_{\zeta_j})		}q^{ \lambda^{(2)}_{\sum_{\ell = 1}^{\zeta_j}m_\ell}	- 	\lambda^{(1)}_{\sum_{\ell = 1}^{\zeta_{j+\beta + 1}}m_\ell - 1}	 + a 	}
	}
	\notag \\
	&\times 
	\prod_{j = 1}^{r - 1}
	\prod_{a = 1}^{c_j}
	\prod_{\beta = 0}^{r - j - 1}
	\frac{
		1 - t^{	m_{1} +  \cdots + m_{\zeta_{j + \beta + 1} - 1}	- (m_1 + \cdots + m_{\zeta_j})	+ 1	}q^{ \lambda^{(2)}_{\sum_{\ell = 1}^{\zeta_j}m_\ell}	- 	\lambda^{(2)}_{\sum_{\ell = 1}^{\zeta_{j+\beta + 1}}m_\ell - 1}	 + a  - 1	 }
	}{
		1 - t^{	m_{1} +  \cdots + m_{\zeta_{j + \beta + 1} }	- (m_1 + \cdots + m_{\zeta_j}) 		}q^{ \lambda^{(2)}_{\sum_{\ell = 1}^{\zeta_j}m_\ell}	- 	\lambda^{(2)}_{\sum_{\ell = 1}^{\zeta_{j+\beta + 1}}m_\ell - 1}	 + a  - 1	}
	}
	\notag	\\
	&\times
	\prod_{j = 1}^{r - 1}
	\prod_{a = 1}^{c_j}
	\prod_{\beta = 0}^{r - j - 1}
	\frac{
		1 - t^{	m_{1} +  \cdots + m_{\zeta_{j + \beta + 1} }	- (m_1 + \cdots + m_{\zeta_j}) 		}q^{ \lambda^{(2)}_{\sum_{\ell = 1}^{\zeta_j}m_\ell}	 + a  - 1	}
	}{
	1 - t^{	m_{1} +  \cdots + m_{\zeta_{j + \beta + 1} }	- (m_1 + \cdots + m_{\zeta_j}) 	- 1	}q^{ \lambda^{(2)}_{\sum_{\ell = 1}^{\zeta_j}m_\ell}	 + a  	}
	}
	\notag
\end{align}
\normalsize
\end{lem}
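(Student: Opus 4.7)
The strategy is purely algebraic: we view the identity
\[
\psi_{\lambda^{(1)}/\lambda^{(2)}}(q,t) = \operatorname{Factor 1}\times \operatorname{Factor 2}\times \operatorname{Factor 3}\times \operatorname{Factor 4}\times \operatorname{Factor 5}
\]
as an equation in the field of rational functions in $q$ and $t$, and isolate $\operatorname{Factor 1}\times \operatorname{Factor 2}$ on one side by dividing through by everything else. Concretely, I plan to divide both sides by $\widetilde{\circled{1}}\times\widetilde{\circled{2}}$, use Lemma \ref{lemmB3-1452-3feb} to rewrite the left-hand side $\psi_{\lambda^{(1)}/\lambda^{(2)}}(q,t)/(\widetilde{\circled{1}}\widetilde{\circled{2}})$ as the explicit product written there, and use Lemma \ref{lemb2-1432-5mar} to extract the factor $\widetilde{\circled{1}}\widetilde{\circled{2}}$ out of $\operatorname{Factor 3}\times \operatorname{Factor 4}$. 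The resulting identity is equivalent to
\[
\operatorname{Factor 1}\times \operatorname{Factor 2}
= \frac{\psi_{\lambda^{(1)}/\lambda^{(2)}}(q,t)}{\widetilde{\circled{1}}\widetilde{\circled{2}}}\times \frac{\widetilde{\circled{1}}\widetilde{\circled{2}}}{\operatorname{Factor 3}\times \operatorname{Factor 4}}\times \frac{1}{\operatorname{Factor 5}},
\]
so the $\Leftrightarrow$ content of the lemma reduces to verifying that this explicit ratio equals the right-hand side of \eqref{eqn136-1322-2dec}.

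The verification is then a matter of collecting the factors produced in Lemmas \ref{lemb2-1432-5mar} and \ref{lemmB3-1452-3feb} and comparing them with the four product blocks appearing on the right-hand side of \eqref{eqn136-1322-2dec}. I plan to do this block by block. First, the pieces in Lemma \ref{lemmB3-1452-3feb} indexed by $\gamma\in\{\zeta_{j+1}-\zeta_j,\ldots,\zeta_r-\zeta_j\}$ correspond directly to the first product in \eqref{eqn136-1322-2dec} (after noting the index shift $\gamma\mapsto\gamma-1$ and $\beta\mapsto\beta+1$ implicit in the definitions). Second, the three middle blocks of \eqref{eqn136-1322-2dec}, all indexed by $\beta=0,\ldots,r-j-1$, come from exactly those lines in Lemma \ref{lemb2-1432-5mar} that carry the same $\beta$-index; these are the six $\beta$-products appearing after $\widetilde{\circled{1}}$ in the formula for $\operatorname{Factor 3}\times\operatorname{Factor 4}$. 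Third, the remaining $q$-Pochhammer factors (the ones in Lemma \ref{lemmB3-1452-3feb} at $\gamma=0$ and at $\gamma=n-\zeta_j$, together with the terminal $j=r$ block with $\operatorname{Factor 5}$, the terminal blocks in Lemma \ref{lemb2-1432-5mar} corresponding to $\zeta_r$ and $n$, and the surviving $\widetilde{\circled{2}}$-product) should telescope against each other.

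I expect the last telescoping step to be the main obstacle. Tracking how the boundary factors at $\gamma=0$, $\gamma=n-\zeta_j$, the contribution of $\operatorname{Factor 5}$ from \eqref{b15-1521-3feb}, and the products
\[
\prod_{j=1}^{r-1}\prod_{a=1}^{c_j}\frac{1-t^{m_1+\cdots+m_{n}-(m_1+\cdots+m_{\zeta_j})}q^{\lambda^{(2)}_{\sum_{\ell=1}^{\zeta_j}m_\ell}+a}}{1-t^{m_1+\cdots+m_{\zeta_r}-(m_1+\cdots+m_{\zeta_j})}q^{\lambda^{(2)}_{\sum_{\ell=1}^{\zeta_j}m_\ell}+a}}
\]
(and the corresponding ones inside $\widetilde{\circled{2}}$) cancel requires careful bookkeeping, because the exponents depend on the row-interval parameters $m_\ell$ and partition values $\lambda^{(2)}_\bullet$ in shifted ways. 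My plan is to introduce the auxiliary indices
\[
A_j^\gamma := \lambda^{(2)}_{m_1+\cdots+m_{\zeta_j}}-\lambda^{(2)}_{m_1+\cdots+m_{\zeta_j+\gamma}},\qquad B_j^\gamma := m_1+\cdots+m_{\zeta_j+\gamma}-(m_1+\cdots+m_{\zeta_j}),
\]
and analogously with $\lambda^{(1)}$, so that every numerator and denominator across all five factors and both $\widetilde{\circled{i}}$ becomes a uniform $(1-q^{A+c}t^{B+d})$-type monomial. Once everything is in this normalized form, the telescoping reduces to a finite combinatorial matching of $(A,B,c,d)$-tuples, which can be carried out by inspection of the ranges of the running indices $j$, $\gamma$, $\beta$, $a$.

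Finally, since the statement of Lemma \ref{lemb4-1600-5mar} is an \textbf{if and only if}, no additional work is needed for the converse direction: both directions follow from the single algebraic identity just sketched. Once \eqref{eqn136-1322-2dec} is established, combining it with Lemma \ref{lemb2-1432-5mar} and Lemma \ref{lemmB3-1452-3feb} reconstructs the full factorization of $\psi_{\lambda^{(1)}/\lambda^{(2)}}(q,t)$, which proves Lemma \ref{lemm54-1412-2feb} and thereby completes the inductive step of Section \ref{sec5-1041-7apr}.
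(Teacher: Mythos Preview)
Your approach is correct and matches the paper's: the paper's proof is a single sentence invoking Lemma \ref{lemmB3-1452-3feb} and equation \eqref{b15-1521-3feb} (the explicit formula for $\operatorname{Factor 5}$), with Lemma \ref{lemb2-1432-5mar} implicitly supplying the decomposition of $\operatorname{Factor 3}\times\operatorname{Factor 4}$. Your proposal correctly unpacks what the paper's ``we immediately obtain'' means, namely forming the ratio $\dfrac{\psi_{\lambda^{(1)}/\lambda^{(2)}}}{\widetilde{\circled{1}}\,\widetilde{\circled{2}}}\Big/\Big(\dfrac{\operatorname{Factor 3}\times\operatorname{Factor 4}}{\widetilde{\circled{1}}\,\widetilde{\circled{2}}}\times\operatorname{Factor 5}\Big)$ and then matching the surviving factors against \eqref{eqn136-1322-2dec}; the telescoping you anticipate is exactly the bookkeeping the paper suppresses.
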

\begin{proof}
Using lemma \ref{lemmB3-1452-3feb} and equation \eqref{b15-1521-3feb}, we immediately obtain equation \eqref{eqn136-1322-2dec}. 
\end{proof}

\begin{lem}
\label{lemb5-1600-5mar}
\mbox{}
\small 
\begin{align}
&\operatorname{Factor 1}
\times
\operatorname{Factor 2}
\\
&=
\prod_{j = 1}^{r-1}
\prod_{\beta = 0}^{c_j - 1}
\underbrace{		\prod_{\gamma }^{}						}_{
	\gamma \in \{\zeta_{j+1} - \zeta_j , \zeta_{j+2} - \zeta_j , \dots, \zeta_r - \zeta_j \}
}
\frac{
	1 - q^{\lambda^{(2)}_{m_1 + \cdots + m_{\zeta_j}} - \lambda^{(1)}_{m_1 + \cdots + m_{\zeta_j + \gamma} } + 1 + \beta}t^{
		m_1 + \cdots + m_{\zeta_j + \gamma - 1} - (m_1 + \cdots + m_{\zeta_j})
	}
}{
	1 - q^{\lambda^{(2)}_{m_1 + \cdots + m_{\zeta_j}} - \lambda^{(1)}_{m_1 + \cdots + m_{\zeta_j + \gamma}} + \beta}t^{
		m_1 + \cdots + m_{\zeta_j + \gamma - 1} - (m_1 + \cdots + m_{\zeta_j}) + 1
	}
}
\notag 
\\
&
\times 
\prod_{j = 1}^{r - 1}
\prod_{a = 1}^{c_j}
\prod_{\beta = 0}^{r - j - 1}
\frac{
	1 - t^{	m_{1} +  \cdots + m_{\zeta_{j + \beta + 1} }	- (m_1 + \cdots + m_{\zeta_j}) - 1		}q^{ \lambda^{(2)}_{\sum_{\ell = 1}^{\zeta_j}m_\ell}	- 	\lambda^{(1)}_{\sum_{\ell = 1}^{\zeta_{j+\beta + 1}}m_\ell - 1}	 + a 	}
}{
	1 - t^{	m_{1} +  \cdots + m_{\zeta_{j + \beta + 1} - 1}	- (m_1 + \cdots + m_{\zeta_j})		}q^{ \lambda^{(2)}_{\sum_{\ell = 1}^{\zeta_j}m_\ell}	- 	\lambda^{(1)}_{\sum_{\ell = 1}^{\zeta_{j+\beta + 1}}m_\ell - 1}	 + a 	}
}
\notag \\
&\times 
\prod_{j = 1}^{r - 1}
\prod_{a = 1}^{c_j}
\prod_{\beta = 0}^{r - j - 1}
\frac{
	1 - t^{	m_{1} +  \cdots + m_{\zeta_{j + \beta + 1} - 1}	- (m_1 + \cdots + m_{\zeta_j})	+ 1	}q^{ \lambda^{(2)}_{\sum_{\ell = 1}^{\zeta_j}m_\ell}	- 	\lambda^{(2)}_{\sum_{\ell = 1}^{\zeta_{j+\beta + 1}}m_\ell - 1}	 + a  - 1	 }
}{
	1 - t^{	m_{1} +  \cdots + m_{\zeta_{j + \beta + 1} }	- (m_1 + \cdots + m_{\zeta_j}) 		}q^{ \lambda^{(2)}_{\sum_{\ell = 1}^{\zeta_j}m_\ell}	- 	\lambda^{(2)}_{\sum_{\ell = 1}^{\zeta_{j+\beta + 1}}m_\ell - 1}	 + a  - 1	}
}
\notag	\\
&\times
\prod_{j = 1}^{r - 1}
\prod_{a = 1}^{c_j}
\prod_{\beta = 0}^{r - j - 1}
\frac{
	1 - t^{	m_{1} +  \cdots + m_{\zeta_{j + \beta + 1} }	- (m_1 + \cdots + m_{\zeta_j}) 		}q^{ \lambda^{(2)}_{\sum_{\ell = 1}^{\zeta_j}m_\ell}	 + a  - 1	}
}{
	1 - t^{	m_{1} +  \cdots + m_{\zeta_{j + \beta + 1} }	- (m_1 + \cdots + m_{\zeta_j}) 	- 1	}q^{ \lambda^{(2)}_{\sum_{\ell = 1}^{\zeta_j}m_\ell}	 + a  	}
}
\notag
\end{align}
\normalsize
\end{lem}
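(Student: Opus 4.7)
The plan is to prove Lemma \ref{lemb5-1600-5mar} by direct algebraic manipulation, starting from the explicit product formulas for $\operatorname{Factor 1}$ (equation \eqref{b12-eqn-1525-5mar}) and $\operatorname{Factor 2}$ (equation \eqref{b13-eqn-1525-5mar}). Both sides of the desired identity are rational functions in $q,t$ factorized into products of simple binomials, so the equality reduces to matching the binomials one by one after a suitable reindexing, with no new ingredient beyond Lemma \ref{lemb2-1432-5mar} and the explicit structure of $\lambda^{(1)},\lambda^{(2)}$.

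First I would perform the change of variable $\beta=j'-j-1$ in the outer index of $\operatorname{Factor 1}\times\operatorname{Factor 2}$, so that $j'=j+\beta+1$ ranges over $\{j+1,\dots,r\}$ as $\beta$ ranges over $\{0,\dots,r-j-1\}$. This already matches the $\beta$-index appearing in the last three products on the right-hand side. On the first product of the right-hand side I would invert the substitution $\gamma=\zeta_{j'}-\zeta_j=\zeta_{j+\beta+1}-\zeta_j$, so that $\gamma\in\{\zeta_{j+1}-\zeta_j,\dots,\zeta_r-\zeta_j\}$ corresponds bijectively to $\beta\in\{0,\dots,r-j-1\}$.

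Next, I would use two identities coming from the definition of a row interval and the placement of the boxes labeled $1$, namely
\begin{align*}
\lambda^{(1)}_{m_1+\cdots+m_{\zeta_{j'}}}
=\lambda^{(2)}_{m_1+\cdots+m_{\zeta_{j'}}}+c_{j'},
\qquad
\lambda^{(2)}_{m_1+\cdots+m_{\zeta_{j'}}-1}
=\lambda^{(1)}_{m_1+\cdots+m_{\zeta_{j'}}-1}
=\lambda^{(1)}_{m_1+\cdots+m_{\zeta_{j'}}}.
\end{align*}
The first identity encodes that the last row of row interval $\zeta_{j'}$ loses exactly $c_{j'}$ boxes when we strip the boxes labeled $1$; the second identity encodes that the row just above this last row lies in the same row interval of $\lambda^{(1)}$ but contains no box labeled $1$, hence its length is unchanged under the passage from $\lambda^{(1)}$ to $\lambda^{(2)}$ and equals the common $\lambda^{(1)}$-value of the whole row interval. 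These identities allow me to rewrite the shifts $a-c_{j'}-1$ and $a-c_{j'}$ appearing in the denominator and numerator of $\operatorname{Factor 1}$ in terms of $\lambda^{(1)}_{m_1+\cdots+m_{\zeta_{j'}}-1}$, which is precisely the form occurring on the right-hand side.

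Finally I would match the four sub-products of $\operatorname{Factor 1}\times\operatorname{Factor 2}$ with the four products on the right-hand side term by term: the second sub-product of $\operatorname{Factor 2}$ reproduces the fourth product of the right-hand side directly; the first sub-product of $\operatorname{Factor 2}$, after the trivial shift $a=\beta+1$ with $\beta\in\{0,\dots,c_j-1\}$ and the substitution $\gamma=\zeta_{j+\beta+1}-\zeta_j$, becomes the first product of the right-hand side; the two sub-products of $\operatorname{Factor 1}$ yield the second and third products of the right-hand side after applying the $\lambda^{(1)}\leftrightarrow\lambda^{(2)}$ identities above. The main obstacle is purely the careful bookkeeping of these index shifts and the two conversions between $\lambda^{(1)}$ and $\lambda^{(2)}$ on the boundary rows of row intervals; no deeper combinatorial input is needed, and the resulting term-by-term matching completes the proof.
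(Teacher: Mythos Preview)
Your proposal has a genuine gap: the term-by-term matching you outline in the final paragraph does not hold. For instance, you claim ``the second sub-product of $\operatorname{Factor 2}$ reproduces the fourth product of the right-hand side directly.'' Writing $M_{j'}=m_1+\cdots+m_{\zeta_{j'}}$, the second sub-product of $\operatorname{Factor 2}$ is
\[
\frac{1-t^{M_{j'}-M_j-1}q^{\lambda^{(2)}_{M_j}-\lambda^{(2)}_{M_{j'}}+a}}{1-t^{M_{j'}-M_j-1}q^{\lambda^{(2)}_{M_j}+a}},
\]
while the fourth product on the right-hand side is
\[
\frac{1-t^{M_{j'}-M_j}q^{\lambda^{(2)}_{M_j}+a-1}}{1-t^{M_{j'}-M_j-1}q^{\lambda^{(2)}_{M_j}+a}}.
\]
The denominators agree but the numerators do not. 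Similar mismatches occur for your other three pairings; in particular, the $t$-exponent in the first product on the right involves $m_1+\cdots+m_{\zeta_{j'}-1}$, not $m_1+\cdots+m_{\zeta_{j'}}$, so it cannot coincide with any single sub-product of $\operatorname{Factor 1}$ or $\operatorname{Factor 2}$.

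What actually happens is that the four sub-products of $\operatorname{Factor 1}\times\operatorname{Factor 2}$ share numerator/denominator pairs among themselves (the numerator of the first sub-product of $\operatorname{Factor 1}$ equals the denominator of the first sub-product of $\operatorname{Factor 2}$, and similarly for the second pair), and only after this internal cancellation and after multiplying by the reciprocals of the last three right-hand products does everything collapse to the first right-hand product. The paper's proof handles this by establishing a single $7$-factor identity \eqref{b29-eqn-1550-5mar} at fixed $(j,\beta)$, in which all four sub-products and the three reciprocals are grouped together and shown to simplify jointly. Your reindexings and your $\lambda^{(1)}\leftrightarrow\lambda^{(2)}$ identities are correct and are exactly the ingredients needed for that identity, but the proof cannot proceed by a one-to-one pairing; you must group all seven factors and cancel them simultaneously.
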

\begin{proof}
First, one can easily show that 
\small 
\begin{align}
	&\prod_{a =  1 }^{	c_j	 } \frac{
	1 - t^{m_1 + \cdots + m_{\zeta_{j + \beta + 1}} - (m_1 + \cdots + m_{\zeta_j})}q^{\lambda^{(2)}_{	\sum_{\ell = 1}^{\zeta_j}		m_\ell} - \lambda^{(2)}_{	\sum_{\ell = 1}^{\zeta_{j + \beta + 1}}		m_\ell} + a- 1}
	}{
	1 - t^{m_1 + \cdots + m_{\zeta_{j + \beta + 1}} - (m_1 + \cdots + m_{\zeta_j})}q^{\lambda^{(2)}_{	\sum_{\ell = 1}^{\zeta_j}		m_\ell} - \lambda^{(2)}_{	\sum_{\ell = 1}^{\zeta_{j + \beta + 1}}		m_\ell} + a-c_{j + \beta + 1}-1}
	}
	\label{b29-eqn-1550-5mar}
	\\
	&\times
	\prod_{a =  1 }^{	c_j	 }
	\frac{
	1 - t^{m_1 + \cdots + m_{\zeta_{j + \beta + 1}} - (m_1 + \cdots + m_{\zeta_j}) - 1}q^{\lambda^{(2)}_{	\sum_{\ell = 1}^{\zeta_j}		m_\ell} - \lambda^{(2)}_{	\sum_{\ell = 1}^{\zeta_{j + \beta + 1}}		m_\ell} + a-c_{j + \beta + 1}}
	}{
	1 - t^{m_1 + \cdots + m_{\zeta_{j + \beta + 1}} - (m_1 + \cdots + m_{\zeta_j}) - 1}q^{\lambda^{(2)}_{	\sum_{\ell = 1}^{\zeta_j}		m_\ell} - \lambda^{(2)}_{	\sum_{\ell = 1}^{\zeta_{j + \beta + 1}}		m_\ell} +  a}
	}
	\notag 
	\\
	&\times 
	\prod_{a =  1 }^{	c_j	 }
	\frac{
	1 - t^{m_1 + \cdots + m_{\zeta_{j + \beta + 1}} - (m_1 + \cdots + m_{\zeta_j})}q^{\lambda^{(2)}_{	\sum_{\ell = 1}^{\zeta_j}		m_\ell}  + a-1}
	}{
	1 - t^{m_1 + \cdots + m_{\zeta_{j + \beta + 1}} - (m_1 + \cdots + m_{\zeta_j})}q^{\lambda^{(2)}_{	\sum_{\ell = 1}^{\zeta_j}		m_\ell} - \lambda^{(2)}_{	\sum_{\ell = 1}^{\zeta_{j + \beta + 1}}		m_\ell}  + a -1}
	}
	\notag 
	\\
	&\times
	\prod_{a =  1 }^{	c_j	 }
	\frac{
	1 - t^{m_1 + \cdots + m_{\zeta_{j + \beta + 1}} - (m_1 + \cdots + m_{\zeta_j}) - 1}q^{\lambda^{(2)}_{	\sum_{\ell = 1}^{\zeta_j}		m_\ell} - 		\lambda^{(2)}_{	\sum_{\ell = 1}^{\zeta_{j + \beta + 1}}		m_\ell}	+ a}
	}{
	1 - t^{m_1 + \cdots + m_{\zeta_{j + \beta + 1}} - (m_1 + \cdots + m_{\zeta_j}) - 1}q^{\lambda^{(2)}_{	\sum_{\ell = 1}^{\zeta_j}		m_\ell} +  a}
	}
	\notag 
	\\
	&
	\times \prod_{a =  1 }^{	c_j	 }
	\frac{
		1 - t^{	m_{1} +  \cdots + m_{\zeta_{j + \beta + 1} - 1}	- (m_1 + \cdots + m_{\zeta_j})		}q^{ \lambda^{(2)}_{\sum_{\ell = 1}^{\zeta_j}m_\ell}	- 	\lambda^{(1)}_{\sum_{\ell = 1}^{\zeta_{j+\beta + 1}}m_\ell - 1}	 + a 	}
	}{
		1 - t^{	m_{1} +  \cdots + m_{\zeta_{j + \beta + 1} }	- (m_1 + \cdots + m_{\zeta_j}) - 1		}q^{ \lambda^{(2)}_{\sum_{\ell = 1}^{\zeta_j}m_\ell}	- 	\lambda^{(1)}_{\sum_{\ell = 1}^{\zeta_{j+\beta + 1}}m_\ell - 1}	 + a 	}
	}
	\notag 
	\\
	&\times \prod_{a =  1 }^{	c_j	 }
	\frac{
		1 - t^{	m_{1} +  \cdots + m_{\zeta_{j + \beta + 1} }	- (m_1 + \cdots + m_{\zeta_j}) 		}q^{ \lambda^{(2)}_{\sum_{\ell = 1}^{\zeta_j}m_\ell}	- 	\lambda^{(2)}_{\sum_{\ell = 1}^{\zeta_{j+\beta + 1}}m_\ell - 1}	 + a  - 1	}
	}{
		1 - t^{	m_{1} +  \cdots + m_{\zeta_{j + \beta + 1} - 1}	- (m_1 + \cdots + m_{\zeta_j})	+ 1	}q^{ \lambda^{(2)}_{\sum_{\ell = 1}^{\zeta_j}m_\ell}	- 	\lambda^{(2)}_{\sum_{\ell = 1}^{\zeta_{j+\beta + 1}}m_\ell - 1}	 + a  - 1	 }
	}
	\notag 
	\\
	&\times \prod_{a =  1 }^{	c_j	 }
	\frac{
		1 - t^{	m_{1} +  \cdots + m_{\zeta_{j + \beta + 1} }	- (m_1 + \cdots + m_{\zeta_j}) 	- 1	}q^{ \lambda^{(2)}_{\sum_{\ell = 1}^{\zeta_j}m_\ell}	 + a  	}
	}{
		1 - t^{	m_{1} +  \cdots + m_{\zeta_{j + \beta + 1} }	- (m_1 + \cdots + m_{\zeta_j}) 		}q^{ \lambda^{(2)}_{\sum_{\ell = 1}^{\zeta_j}m_\ell}	 + a  - 1	}
	}
	\notag 
	\\
	&= 
	\prod_{a =  1 }^{	c_j	 }
	\frac{
		1 - q^{	\lambda^{(2)}_{m_1 + \cdots + m_{\zeta_j}} - \lambda^{(1)}_{m_1 + \cdots + m_{\zeta_{j + \beta + 1} } }		+ a }t^{m_1 + \cdots + m_{\zeta_{j + \beta + 1} - 1} - (m_1 + \cdots + m_{\zeta_j})}
	}{
		1 - q^{	\lambda^{(2)}_{m_1 + \cdots + m_{\zeta_j}} - \lambda^{(1)}_{m_1 + \cdots + m_{\zeta_{j + \beta + 1} } }	 + a - 1}t^{m_1 + \cdots + m_{\zeta_{j+ \beta + 1} - 1} - (m_1 + \cdots + m_{\zeta_j}) + 1}
	}. 
	\notag 
\end{align}
\normalsize
From equations \eqref{b12-eqn-1525-5mar} \eqref{b13-eqn-1525-5mar}, we know that 
\small 
\begin{align}
	&\operatorname{Factor 1}
	=
	\prod_{j = 1}^{r - 1}
	\prod_{\beta = 0}^{r - 1 - j}
	\prod_{a = 1}^{c_j}
	\frac{
	1 - t^{m_1 + \cdots + m_{\zeta_{j + \beta + 1}} - (m_1 + \cdots + m_{\zeta_j})}q^{\lambda^{(2)}_{	\sum_{\ell = 1}^{\zeta_j}		m_\ell} - \lambda^{(2)}_{	\sum_{\ell = 1}^{\zeta_{j + \beta + 1}}		m_\ell} + a- 1}
	}{
	1 - t^{m_1 + \cdots + m_{\zeta_{j + \beta + 1}} - (m_1 + \cdots + m_{\zeta_j})}q^{\lambda^{(2)}_{	\sum_{\ell = 1}^{\zeta_j}		m_\ell} - \lambda^{(2)}_{	\sum_{\ell = 1}^{\zeta_{j + \beta + 1}}		m_\ell} + a-c_{j + \beta + 1}-1}
	}
	\\
	&\hspace{1.5cm}\times
	\prod_{j = 1}^{r - 1}
	\prod_{\beta = 0}^{r - 1 - j}
	\prod_{a = 1}^{c_j}
	\frac{
	1 - t^{m_1 + \cdots + m_{\zeta_{j + \beta + 1}} - (m_1 + \cdots + m_{\zeta_j}) - 1}q^{\lambda^{(2)}_{	\sum_{\ell = 1}^{\zeta_j}		m_\ell} - \lambda^{(2)}_{	\sum_{\ell = 1}^{\zeta_{j + \beta + 1}}		m_\ell} + a-c_{j + \beta + 1}}
	}{
	1 - t^{m_1 + \cdots + m_{\zeta_{j + \beta + 1}} - (m_1 + \cdots + m_{\zeta_j}) - 1}q^{\lambda^{(2)}_{	\sum_{\ell = 1}^{\zeta_j}		m_\ell} - \lambda^{(2)}_{	\sum_{\ell = 1}^{\zeta_{j + \beta + 1}}		m_\ell} +  a}
	},
	\notag 
\end{align}
\normalsize
and 
\small 
\begin{align}
\operatorname{Factor 2}
&= 
\prod_{j = 1}^{r - 1}
\prod_{\beta = 0}^{r - 1 - j}
\prod_{a =  1 }^{	c_j	 }
\frac{
1 - t^{m_1 + \cdots + m_{\zeta_{j + \beta + 1}} - (m_1 + \cdots + m_{\zeta_j})}q^{\lambda^{(2)}_{	\sum_{\ell = 1}^{\zeta_j}		m_\ell}  + a-1}
}{
1 - t^{m_1 + \cdots + m_{\zeta_{j + \beta + 1}} - (m_1 + \cdots + m_{\zeta_j})}q^{\lambda^{(2)}_{	\sum_{\ell = 1}^{\zeta_j}		m_\ell} - \lambda^{(2)}_{	\sum_{\ell = 1}^{\zeta_{j + \beta + 1}}		m_\ell}  + a -1}
}
\\
&\times
\prod_{j = 1}^{r - 1}
\prod_{\beta = 0}^{r - 1 - j}
\prod_{a =  1 }^{	c_j	 }
\frac{
1 - t^{m_1 + \cdots + m_{\zeta_{j + \beta + 1}} - (m_1 + \cdots + m_{\zeta_j}) - 1}q^{\lambda^{(2)}_{	\sum_{\ell = 1}^{\zeta_j}		m_\ell} - 		\lambda^{(2)}_{	\sum_{\ell = 1}^{\zeta_{j + \beta + 1}}		m_\ell}	+ a}
}{
	1 - t^{m_1 + \cdots + m_{\zeta_{j + \beta + 1}} - (m_1 + \cdots + m_{\zeta_j}) - 1}q^{\lambda^{(2)}_{	\sum_{\ell = 1}^{\zeta_j}		m_\ell} +  a}
}. 
\notag 
\end{align}
\normalsize
By using equation \eqref{b29-eqn-1550-5mar}, we obtain that 
\small
\begin{align}
&\operatorname{Factor 1}
\times
\operatorname{Factor 2}
\\
&
\times 
\prod_{j = 1}^{r - 1}
\prod_{a = 1}^{c_j}
\prod_{\beta = 0}^{r - j - 1}
\frac{
	1 - t^{	m_{1} +  \cdots + m_{\zeta_{j + \beta + 1} - 1}	- (m_1 + \cdots + m_{\zeta_j})		}q^{ \lambda^{(2)}_{\sum_{\ell = 1}^{\zeta_j}m_\ell}	- 	\lambda^{(1)}_{\sum_{\ell = 1}^{\zeta_{j+\beta + 1}}m_\ell - 1}	 + a 	}
}{
	1 - t^{	m_{1} +  \cdots + m_{\zeta_{j + \beta + 1} }	- (m_1 + \cdots + m_{\zeta_j}) - 1		}q^{ \lambda^{(2)}_{\sum_{\ell = 1}^{\zeta_j}m_\ell}	- 	\lambda^{(1)}_{\sum_{\ell = 1}^{\zeta_{j+\beta + 1}}m_\ell - 1}	 + a 	}
}
\notag \\
&\times 
\prod_{j = 1}^{r - 1}
\prod_{a = 1}^{c_j}
\prod_{\beta = 0}^{r - j - 1}
\frac{
	1 - t^{	m_{1} +  \cdots + m_{\zeta_{j + \beta + 1} }	- (m_1 + \cdots + m_{\zeta_j}) 		}q^{ \lambda^{(2)}_{\sum_{\ell = 1}^{\zeta_j}m_\ell}	- 	\lambda^{(2)}_{\sum_{\ell = 1}^{\zeta_{j+\beta + 1}}m_\ell - 1}	 + a  - 1	}
}{
	1 - t^{	m_{1} +  \cdots + m_{\zeta_{j + \beta + 1} - 1}	- (m_1 + \cdots + m_{\zeta_j})	+ 1	}q^{ \lambda^{(2)}_{\sum_{\ell = 1}^{\zeta_j}m_\ell}	- 	\lambda^{(2)}_{\sum_{\ell = 1}^{\zeta_{j+\beta + 1}}m_\ell - 1}	 + a  - 1	 }
}
\notag	\\
&\times
\prod_{j = 1}^{r - 1}
\prod_{a = 1}^{c_j}
\prod_{\beta = 0}^{r - j - 1}
\frac{
	1 - t^{	m_{1} +  \cdots + m_{\zeta_{j + \beta + 1} }	- (m_1 + \cdots + m_{\zeta_j}) 	- 1	}q^{ \lambda^{(2)}_{\sum_{\ell = 1}^{\zeta_j}m_\ell}	 + a  	}
}{
	1 - t^{	m_{1} +  \cdots + m_{\zeta_{j + \beta + 1} }	- (m_1 + \cdots + m_{\zeta_j}) 		}q^{ \lambda^{(2)}_{\sum_{\ell = 1}^{\zeta_j}m_\ell}	 + a  - 1	}
}
\notag
\\
&=
\prod_{j = 1}^{r - 1}
\prod_{\beta = 0}^{r - 1 - j}
\prod_{a =  1 }^{	c_j	 }
\frac{
	1 - q^{	\lambda^{(2)}_{m_1 + \cdots + m_{\zeta_j}} - \lambda^{(1)}_{m_1 + \cdots + m_{\zeta_{j + \beta + 1} } }		+ a }t^{m_1 + \cdots + m_{\zeta_{j + \beta + 1} - 1} - (m_1 + \cdots + m_{\zeta_j})}
}{
	1 - q^{	\lambda^{(2)}_{m_1 + \cdots + m_{\zeta_j}} - \lambda^{(1)}_{m_1 + \cdots + m_{\zeta_{j + \beta + 1} } }	 + a - 1}t^{m_1 + \cdots + m_{\zeta_{j+ \beta + 1} - 1} - (m_1 + \cdots + m_{\zeta_j}) + 1}
}
\notag 
\\
&=
\prod_{j = 1}^{r-1}
\prod_{\beta = 0}^{c_j - 1}
\underbrace{		\prod_{\gamma }^{}						}_{
	\gamma \in \{\zeta_{j+1} - \zeta_j , \zeta_{j+2} - \zeta_j , \dots, \zeta_r - \zeta_j \}
}
\frac{
	1 - q^{\lambda^{(2)}_{m_1 + \cdots + m_{\zeta_j}} - \lambda^{(1)}_{m_1 + \cdots + m_{\zeta_j + \gamma} } + 1 + \beta}t^{
		m_1 + \cdots + m_{\zeta_j + \gamma - 1} - (m_1 + \cdots + m_{\zeta_j})
	}
}{
	1 - q^{\lambda^{(2)}_{m_1 + \cdots + m_{\zeta_j}} - \lambda^{(1)}_{m_1 + \cdots + m_{\zeta_j + \gamma}} + \beta}t^{
		m_1 + \cdots + m_{\zeta_j + \gamma - 1} - (m_1 + \cdots + m_{\zeta_j}) + 1
	}
}
\notag 
\end{align}
\normalsize
So we have proved \textbf{Lemma \ref{lemb5-1600-5mar}}. 
\end{proof}

From \textbf{Lemmas \ref{lemb4-1600-5mar}} and \textbf{\ref{lemb5-1600-5mar}}, it follows that we have proved
\textbf{Lemma \ref{lemm54-1412-2feb}}.

\section{Proof of lemma \ref{lem68-1212-13mar}}
\label{appC-1214-13mar}

The goal of this appendix is to prove the \textbf{Lemma \ref{lem68-1212-13mar}}.

\begin{lem}\mbox{}
\label{lemc1-1621-18mar}
\begin{align}
&\frac{1}{
	\left(	q^{-\frac{1}{2}}t^{-\frac{1}{2}}			\right)^c
}
\left(		\frac{q^{\frac{1}{2}} - q^{- \frac{1}{2}}}{t^{-\frac{1}{2}} - t^{\frac{1}{2}}}			\right)^c
\lim_{\xi \rightarrow t^{-1}}\,\,
\dualmap
\bigg[
\underbrace{		\prod_{1 \leq a < b \leq k}			}_{
	\substack{
		a,b \in \operatorname{Pos}((i_1,\dots,i_k); N + 1)
	}
}
\frac{
	\left(1 - q^{-1}\frac{z_b}{z_a}\right)
	\left(1 - \frac{z_b}{z_a}\right)
}{
	\left(1 - t^{-1}\frac{z_b}{z_a}\right)
	\left(1 - q^{-1}t\frac{z_b}{z_a}\right)
}
\bigg]
\label{c1-1522-18mar}
\\
&=
\prod_{\beta = 0}^{c_1 - 1}
\frac{	\left(q - t^{-\beta}\right) 			}{
	\left( t^{-1-\beta} - 1 \right)
}
\times
\cdots
\times
\prod_{\beta = 0}^{c_r - 1}
\frac{	\left(q - t^{-\beta}\right) 			}{
	\left( t^{-1-\beta} - 1 \right)
}
\notag 
\\
&\hspace{0.3cm}\times
\prod_{j = 1}^{r - 1}
\prod_{\sigma = j+1}^{r}
\underbrace{			\prod_{s \in \widetilde{			\mu				}}				}_{
\substack{
s \in \operatorname{Row}(T(i_1,\dots,i_k ; \lambda)|N+1),
		\\
s \text{ is in the row interval $\gamma_j$},
		\\
\text{col}(s) = \mu_{m_1 + \cdots + m_{\gamma_\sigma -1} + 1}
	}
}
\frac{	1 - q^{a_{\widetilde{			\mu				}}(s) + 2}  t^{\ell_{\widetilde{			\mu				}}(s) + c_\sigma		}			}{
	1 -q^{a_{\widetilde{			\mu				}}(s) + 1}t^{\ell_{\widetilde{			\mu				}}(s) + c_\sigma + 1} 
}
\frac{	1 -   q^{a_{\widetilde{\mu}}(s) + 1}	t^{\ell_{\widetilde{\mu}}(s) + 1}	}{
	1 - 	q^{a_{\widetilde{			\mu				}}(s) + 2} t^{\ell_{\widetilde{		\mu		}}(s)}	
}
\notag 
\end{align}
\end{lem}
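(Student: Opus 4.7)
The approach is to evaluate the limit on the left-hand side of \eqref{c1-1522-18mar} explicitly, exploiting the highly constrained structure of $\operatorname{Pos}((i_1,\dots,i_k); N+1)$ dictated by Assumption~\ref{assum65-1451-18mar}. Recall that the $c = c_1 + \cdots + c_r$ boxes carrying super number $N+1$ are organized as follows: for each $j \in \{1,\dots,r\}$, there are exactly $c_j$ such boxes, all lying in the rightmost column of row interval $\gamma_j$ (which is column $\mu_{m_1 + \cdots + m_{\gamma_j - 1} + 1}$) and occupying the bottom $c_j$ rows of that interval. The plan is to split the double product
\begin{align*}
\prod_{\substack{a,b \in \operatorname{Pos}((i_1,\dots,i_k);N+1) \\ a<b}} \frac{(1-q^{-1}z_b/z_a)(1-z_b/z_a)}{(1-t^{-1}z_b/z_a)(1-q^{-1}t\, z_b/z_a)}
\end{align*}
into an \emph{intra-column} part (pairs where $a,b$ lie in the same row interval $\gamma_j$) and an \emph{inter-column} part (pairs where $a$ lies in $\gamma_j$ and $b$ lies in $\gamma_\sigma$ with $\sigma > j$), and to evaluate each part separately under $\widetilde{\Psi}^{(q,\xi)}_{\lambda}$ followed by $\xi \to t^{-1}$.

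First I would handle the intra-column contribution. If $a$ and $b$ both lie in the column of $\gamma_j$ and differ in row index by $\delta \in \{1,\dots,c_j-1\}$, then $\widetilde{\Psi}^{(q,\xi)}_{\lambda}(z_b/z_a) = \xi^\delta$, so each such pair contributes $\frac{(1-q^{-1}\xi^\delta)(1-\xi^\delta)}{(1-t^{-1}\xi^\delta)(1-q^{-1}t\xi^\delta)}$, with multiplicity $c_j - \delta$. Specializing $\xi \to t^{-1}$, the factor $(1-\xi^\delta)$ vanishes, which is precisely cancelled by the prefactor $\bigl(\frac{q^{1/2}-q^{-1/2}}{t^{-1/2}-t^{1/2}}\bigr)^c / (q^{-1/2}t^{-1/2})^c$ once I use the standard telescoping identity for the resulting product. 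A direct computation, analogous to \eqref{eqn66-1450-10mar}, should yield the closed-form product $\prod_{j=1}^r \prod_{\beta=0}^{c_j-1} \frac{q - t^{-\beta}}{t^{-1-\beta}-1}$, matching the first line of the right-hand side of \eqref{c1-1522-18mar}.

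Next I would handle the inter-column contribution. For a pair with $a$ in column $\gamma_j$ at row $r_a$ and $b$ in column $\gamma_\sigma$ at row $r_b$, one has $\widetilde{\Psi}^{(q,\xi)}_{\lambda}(z_b/z_a) = q^{\mu_{m_1+\cdots+m_{\gamma_\sigma-1}+1} - \mu_{m_1+\cdots+m_{\gamma_j-1}+1}} \xi^{r_b - r_a}$. The limit $\xi \to t^{-1}$ is nonsingular here (no cancellations are required), so the evaluation reduces to a finite rational expression in $q$ and $t$. The heart of the proof is then to reorganize this finite product over pairs $(a,b)$ into a product over boxes $s \in \widetilde{\mu}$ satisfying the specified row-interval and column constraints. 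The bijection to set up is: a box $s \in \widetilde{\mu}$ in row interval $\gamma_j$ at column $\mu_{m_1+\cdots+m_{\gamma_\sigma-1}+1}$ corresponds to a choice of pair via its arm length $a_{\widetilde{\mu}}(s)$ (measuring column distance minus one) and leg length $\ell_{\widetilde{\mu}}(s)$ (measuring the number of $\widetilde{\mu}$-rows strictly below $s$ in its column), with the $c_\sigma$-shifts in the formula tracking the extra $N+1$-rows of interval $\gamma_\sigma$ that lie in $\mu$ but not in $\widetilde{\mu}$.

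The main obstacle is the bookkeeping in this last step: showing that the rational function arising from the inter-column pairs factorizes exactly as the stated product over $s \in \widetilde{\mu}$. Concretely, one must verify that summing $r_b - r_a$ over all pairs $a \in \gamma_j$, $b \in \gamma_\sigma$ (with their restricted row ranges) produces exponents of $t$ that, when combined with the $q$-exponents coming from $\mu_{m_1+\cdots+m_{\gamma_\sigma-1}+1} - \mu_{m_1+\cdots+m_{\gamma_j-1}+1}$, match the four factors $(1 - q^{a+2}t^{\ell + c_\sigma})$, $(1-q^{a+1}t^{\ell + c_\sigma + 1})$, $(1 - q^{a+1}t^{\ell+1})$, $(1-q^{a+2}t^\ell)$ appearing on the right. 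I expect this to be achievable by a careful index-shift argument, grouping pairs by their row-difference modulo the $c_\sigma$ rows of interval $\gamma_\sigma$ containing $N+1$ and reindexing the product to run over $s \in \widetilde{\mu}$; once this is in place, Lemma~\ref{lem68-1212-13mar} follows from Lemma~\ref{lemc1-1621-18mar} by direct comparison of the $\psi$-ratios on both sides and invoking the hook-length formula for $H(\mu,q,t^{-1})/H(\mu^\prime,t^{-1},q)$.
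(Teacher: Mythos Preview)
Your approach is essentially the same as the paper's: the proof in Appendix~\ref{appC-1214-13mar} likewise splits the product over $\operatorname{Pos}((i_1,\dots,i_k);N+1)$ into intra-interval pairs (yielding the $\prod_{\beta=0}^{c_j-1}(q-t^{-\beta})/(t^{-1-\beta}-1)$ factors after combination with the prefactor, equations~\eqref{c2-1522-18mar}--\eqref{c3-1522-18mar}) and inter-interval pairs (rewritten as a product over boxes $s\in\widetilde{\mu}$ via equation~\eqref{c4-1522-18mar}).

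One minor correction: your claim that ``the factor $(1-\xi^{\delta})$ vanishes'' at $\xi\to t^{-1}$ is not right---for $\delta\geq 1$ the limit $(1-t^{-\delta})$ is nonzero, and in fact the entire intra-interval product is regular at $\xi=t^{-1}$; no cancellation against the prefactor is needed to make sense of the limit. The prefactor simply combines multiplicatively with the evaluated intra-interval product (see \eqref{c3-1522-18mar}), and the telescoping there is over the $\beta$-index rather than rescuing a $0/0$. This does not affect your overall strategy, which is correct.
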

\begin{proof}
It is clear that 
\begin{align}
&\lim_{\xi \rightarrow t^{-1}}\,\,
\dualmap
\bigg[
\underbrace{		\prod_{1 \leq a < b \leq k}			}_{
	\substack{
		a,b \in \operatorname{Pos}((i_1,\dots,i_k); N + 1)
	}
}
\frac{
	\left(1 - q^{-1}\frac{z_b}{z_a}\right)
	\left(1 - \frac{z_b}{z_a}\right)
}{
	\left(1 - t^{-1}\frac{z_b}{z_a}\right)
	\left(1 - q^{-1}t\frac{z_b}{z_a}\right)
}
\bigg]
\label{c2-1522-18mar}
\\
&= 
\prod_{\beta = 1}^{c_1 - 1}
\frac{	\left(1 - q^{-1}t^{-\beta}\right) \left(1 - t^{-1}\right)			}{
	\left(1 - q^{-1}\right)		\left(1 - t^{-1-\beta}\right)
}
\times
\cdots
\times
\prod_{\beta = 1}^{c_r - 1}
\frac{	\left(1 - q^{-1}t^{-\beta}\right) \left(1 - t^{-1}\right)			}{
	\left(1 - q^{-1}\right)		\left(1 - t^{-1-\beta}\right)
}
\notag
\\
&\times
\prod_{j = 1}^{r - 1}
\prod_{\sigma = j+1}^{r}
\prod_{\beta = 1}^{c_j}
\frac{
	\left(	1 - (q^{-1})^{\mu_{m_1 + \cdots + m_{\gamma_j -1} + 1}	- \mu_{m_1 + \cdots + m_{\gamma_\sigma -1} + 1}		+ 1}   (t^{-1})^{m_1 + \cdots + m_{\gamma_\sigma} - (m_1 + \cdots + m_{\gamma_j} - c_j + \beta)			}					\right)
}{
	\left(	1 - (q^{-1})^{\mu_{m_1 + \cdots + m_{\gamma_j -1} + 1}	- \mu_{m_1 + \cdots + m_{\gamma_\sigma -1} + 1}		}   (t^{-1})^{m_1 + \cdots + m_{\gamma_\sigma} - (m_1 + \cdots + m_{\gamma_j} - c_j + \beta) + 1}					\right)
}
\notag
\\
&\times
\prod_{j = 1}^{r - 1}
\prod_{\sigma = j+1}^{r}
\prod_{\beta = 1}^{c_j}
\frac{
	\left(	1 - (q^{-1})^{\mu_{m_1 + \cdots + m_{\gamma_j -1} + 1}	- \mu_{m_1 + \cdots + m_{\gamma_\sigma -1} + 1}		}   (t^{-1})^{m_1 + \cdots + m_{\gamma_\sigma} - c_\sigma + 1 - (m_1 + \cdots + m_{\gamma_j} - c_j + \beta)}				\right)
}{
	\left(	1 - (q^{-1})^{\mu_{m_1 + \cdots + m_{\gamma_j -1} + 1}	- \mu_{m_1 + \cdots + m_{\gamma_\sigma -1} + 1}		+ 1}   (t^{-1})^{m_1 + \cdots + m_{\gamma_\sigma} - c_\sigma - (m_1 + \cdots + m_{\gamma_j} - c_j + \beta)}					\right)
}. 
\notag 
\end{align}
Also, one can easily show that 
\begin{align}
&\prod_{\beta = 1}^{c_1 - 1}
\frac{	\left(1 - q^{-1}t^{-\beta}\right) \left(1 - t^{-1}\right)			}{
	\left(1 - q^{-1}\right)		\left(1 - t^{-1-\beta}\right)
}
\times
\cdots
\times
\prod_{\beta = 1}^{c_r - 1}
\frac{	\left(1 - q^{-1}t^{-\beta}\right) \left(1 - t^{-1}\right)			}{
	\left(1 - q^{-1}\right)		\left(1 - t^{-1-\beta}\right)
}
\label{c3-1522-18mar}
\\
&= 
\prod_{\beta = 0}^{c_1 - 1}
\frac{	\left(q - t^{-\beta}\right) 			}{
	\left( t^{-1-\beta} - 1 \right)
}
\times
\cdots
\times
\prod_{\beta = 0}^{c_r - 1}
\frac{	\left(q - t^{-\beta}\right) 			}{
	\left( t^{-1-\beta} - 1 \right)
}
\times 
\left(
\frac{t^{-1} - 1}{q - 1}
\right)^{c}, 
\notag 
\end{align}
and 
\begin{align}
&\prod_{j = 1}^{r - 1}
\prod_{\sigma = j+1}^{r}
\prod_{\beta = 1}^{c_j}
\frac{
	\left(	1 - (q^{-1})^{\mu_{m_1 + \cdots + m_{\gamma_j -1} + 1}	- \mu_{m_1 + \cdots + m_{\gamma_\sigma -1} + 1}		+ 1}   (t^{-1})^{m_1 + \cdots + m_{\gamma_\sigma} - (m_1 + \cdots + m_{\gamma_j} - c_j + \beta)			}					\right)
}{
	\left(	1 - (q^{-1})^{\mu_{m_1 + \cdots + m_{\gamma_j -1} + 1}	- \mu_{m_1 + \cdots + m_{\gamma_\sigma -1} + 1}		}   (t^{-1})^{m_1 + \cdots + m_{\gamma_\sigma} - (m_1 + \cdots + m_{\gamma_j} - c_j + \beta) + 1}					\right)
}
\label{c4-1522-18mar}
\\
&\times
\prod_{j = 1}^{r - 1}
\prod_{\sigma = j+1}^{r}
\prod_{\beta = 1}^{c_j}
\frac{
	\left(	1 - (q^{-1})^{\mu_{m_1 + \cdots + m_{\gamma_j -1} + 1}	- \mu_{m_1 + \cdots + m_{\gamma_\sigma -1} + 1}		}   (t^{-1})^{m_1 + \cdots + m_{\gamma_\sigma} - c_\sigma + 1 - (m_1 + \cdots + m_{\gamma_j} - c_j + \beta)}				\right)
}{
	\left(	1 - (q^{-1})^{\mu_{m_1 + \cdots + m_{\gamma_j -1} + 1}	- \mu_{m_1 + \cdots + m_{\gamma_\sigma -1} + 1}		+ 1}   (t^{-1})^{m_1 + \cdots + m_{\gamma_\sigma} - c_\sigma - (m_1 + \cdots + m_{\gamma_j} - c_j + \beta)}					\right)
}
\notag 
\\
&= 
\prod_{j = 1}^{r - 1}
\prod_{\sigma = j+1}^{r}
\underbrace{			\prod_{s \in \widetilde{			\mu				}}				}_{
	\substack{
s \in \operatorname{Row}(T(i_1,\dots,i_k ; \lambda)|N+1),
		\\
s \text{ is in the row interval $\gamma_j$},
		\\
\text{col}(s) = \mu_{m_1 + \cdots + m_{\gamma_\sigma -1} + 1}
	}
}
\frac{	1 - q^{a_{\widetilde{			\mu				}}(s) + 2}  t^{\ell_{\widetilde{			\mu				}}(s) + c_\sigma		}			}{
	1 -q^{a_{\widetilde{			\mu				}}(s) + 1}t^{\ell_{\widetilde{			\mu				}}(s) + c_\sigma + 1} 
}
\frac{	1 -   q^{a_{\widetilde{\mu}}(s) + 1}	t^{\ell_{\widetilde{\mu}}(s) + 1}	}{
	1 - 	q^{a_{\widetilde{			\mu				}}(s) + 2} t^{\ell_{\widetilde{		\mu		}}(s)}	
}. 
\notag 
\end{align}
From equations \eqref{c2-1522-18mar}
\eqref{c3-1522-18mar}
\eqref{c4-1522-18mar}, we immediately obtain equation \eqref{c1-1522-18mar}. 
\end{proof}

\begin{lem}\mbox{}
\label{lemc2-1621-18mar}
\begin{align*}
&\prod_{s \in \mu}
\frac{	q^{a_{\mu}(s) + 1} - t^{-\ell_{\mu}(s)}			}{
	t^{-(\ell_{\mu}(s) + 1)} - q^{a_{\mu}(s)}
}
\times 
\left(		
\prod_{s \in \widetilde{\mu}		}
\frac{	q^{a_{\widetilde{			\mu				}}(s) + 1} - t^{-\ell_{\widetilde{		\mu		}}(s)}			}{
	t^{-(\ell_{\widetilde{\mu}}(s) + 1)} - q^{a_{\widetilde{\mu}}(s)}
}		
\right)^{-1}
\notag
\\
&= 
\prod_{j = 1}^{r - 1}
\prod_{\sigma = j+1}^{r}
\underbrace{			\prod_{s \in \widetilde{			\mu				}}				}_{
	\substack{
s \in \operatorname{Row}(T(i_1,\dots,i_k ; \lambda)|N+1),
		\\
s \text{ is in the row interval $\gamma_j$},
		\\
\text{col}(s) = \mu_{m_1 + \cdots + m_{\gamma_\sigma -1} + 1}
	}
}
\frac{	q^{a_{\widetilde{			\mu				}}(s) + 2} - t^{-(	\ell_{\widetilde{			\mu				}}(s) + c_\sigma	)	}			}{
	t^{-(\ell_{\widetilde{			\mu				}}(s) + c_\sigma + 1)} - q^{a_{\widetilde{			\mu				}}(s) + 1}
}
\frac{	t^{-(\ell_{\widetilde{\mu}}(s) + 1)} - q^{a_{\widetilde{\mu}}(s)}		}{
	q^{a_{\widetilde{			\mu				}}(s) + 1} - t^{-\ell_{\widetilde{		\mu		}}(s)}	
}
\notag 
\\
&\times 
\prod_{j = 1}^{r - 1}
\underbrace{			\prod_{s \in \widetilde{			\mu				}}				}_{
	\substack{
s \in \operatorname{Row}(T(i_1,\dots,i_k ; \lambda)|N+1),
		\\
s \text{ is in the row interval $\gamma_j$},
		\\
\text{col}(s) \notin \{		\mu_{m_1 + \cdots + m_{\gamma_\sigma -1} + 1} ~|~ \sigma = j + 1, \dots, r			\}
	}
}
\frac{	q^{a_{\widetilde{			\mu				}}(s) + 2} - t^{-\ell_{\widetilde{			\mu				}}(s)}			}{
	t^{-(\ell_{\widetilde{			\mu				}}(s) + 1)} - q^{a_{\widetilde{			\mu				}}(s) + 1}
}
\frac{	t^{-(\ell_{\widetilde{\mu}}(s) + 1)} - q^{a_{\widetilde{\mu}}(s)}		}{
	q^{a_{\widetilde{			\mu				}}(s) + 1} - t^{-\ell_{\widetilde{		\mu		}}(s)}	
}
\notag 
\\
&\times 
\prod_{j = 1}^{r}
\prod_{\sigma = j + 1}^{r}
\underbrace{			\prod_{s \in \widetilde{			\mu				}			}					}_{
	\substack{
s \notin \operatorname{Row}(T(i_1,\dots,i_k ; \lambda)|N+1),
		\\
s \text{ is in the row interval $\gamma_j$},
		\\
\text{col}(s) = \mu_{m_1 + \cdots + m_{\gamma_\sigma -1} + 1}
	}
}
\frac{	
	q^{	 	\mu_{m_1 + \cdots + m_{\gamma_j -1} + 1}	- \mu_{m_1 + \cdots + m_{\gamma_\sigma -1} + 1}		+ 1} - t^{-		(\ell_{\widetilde{			\mu				}}(s)		+ c_\sigma	)					}	 	
	}{
	t^{-(\ell_{\widetilde{			\mu				}}(s)		+ c_\sigma			 + 1)} - q^{	\mu_{m_1 + \cdots + m_{\gamma_j -1} + 1}	- \mu_{m_1 + \cdots + m_{\gamma_\sigma -1} + 1}						}
}
\notag 
\\
&\times 
\prod_{j = 1}^{r}
\prod_{\sigma = j + 1}^{r}
\underbrace{			\prod_{s \in \widetilde{			\mu				}			}					}_{
	\substack{
s \notin \operatorname{Row}(T(i_1,\dots,i_k ; \lambda)|N+1),
		\\
s \text{ is in the row interval $\gamma_j$},
		\\
\text{col}(s) = \mu_{m_1 + \cdots + m_{\gamma_\sigma -1} + 1}
	}
}
\frac{	t^{-(\ell_{\widetilde{\mu}}(s) + 1)} - q^{				\mu_{m_1 + \cdots + m_{\gamma_j -1} + 1}	- \mu_{m_1 + \cdots + m_{\gamma_\sigma -1} + 1}							}	}{
	q^{\mu_{m_1 + \cdots + m_{\gamma_j -1} + 1}	- \mu_{m_1 + \cdots + m_{\gamma_\sigma -1} + 1}		+ 1} - t^{-\ell_{\widetilde{		\mu		}}(s)}		
}
\\
&\times 
\prod_{\sigma = 1}^{r}
\underbrace{		\prod_{s \in \widetilde{			\mu				}}				}_{
	\substack{
s \notin \operatorname{Row}(T(i_1,\dots,i_k ; \lambda)|N+1),
		\\
s \text{ is not in the row interval $\gamma_1, \dots, \gamma_r$},
		\\
\operatorname{row}(s) \in \{1,\dots,m_1 + \cdots + m_{\gamma_1 - 1}		\},
		\\
\operatorname{col}(s) = \mu_{m_1 + \cdots + m_{\gamma_\sigma -1} + 1}
	}
}
\frac{	q^{a_{\widetilde{			\mu				}}(s) + 1} - t^{-(\ell_{\widetilde{			\mu				}}(s)	+ c_\sigma)	}			}{
	t^{-(\ell_{\widetilde{			\mu				}}(s) + c_\sigma + 1)} - q^{a_{\widetilde{			\mu				}}(s)}
}
\frac{			t^{-(\ell_{\widetilde{\mu}}(s) + 1)} - q^{a_{\widetilde{\mu}}(s)}	}{
	q^{a_{\widetilde{			\mu				}}(s) + 1} - t^{-\ell_{\widetilde{		\mu		}}(s)}		
}	
\\
&\times 
\prod_{j = 1}^{r - 1}
\prod_{\sigma = j+1}^{r}
\underbrace{		\prod_{s \in \widetilde{			\mu				}}				}_{
	\substack{
s \notin \operatorname{Row}(T(i_1,\dots,i_k ; \lambda)|N+1),
		\\
s \text{ is not in the row interval $\gamma_1, \dots, \gamma_r$},
		\\
\operatorname{row}(s) \in \{m_1 + \cdots + m_{\gamma_j} + 1, \cdots, m_1 + \cdots + m_{\gamma_{j+1} - 1}					\},
		\\
\operatorname{col}(s) = \mu_{m_1 + \cdots + m_{\gamma_\sigma -1} + 1}
	}
}
\frac{	q^{a_{\widetilde{			\mu				}}(s) + 1} - t^{-(\ell_{\widetilde{			\mu				}}(s) + c_\sigma)}			}{
	t^{-(\ell_{\widetilde{			\mu				}}(s) + c_\sigma + 1)} - q^{a_{\widetilde{			\mu				}}(s)}
}
\frac{			t^{-(\ell_{\widetilde{\mu}}(s) + 1)} - q^{a_{\widetilde{\mu}}(s)}	}{
	q^{a_{\widetilde{			\mu				}}(s) + 1} - t^{-\ell_{\widetilde{		\mu		}}(s)}		
}	
\notag 
\\
&\times
\prod_{\beta = m_{\gamma_1} - c_1}^{m_{\gamma_1} - 1}
\frac{	q - t^{-\beta}			}{
	t^{-(\beta + 1)} - 1
}
\times
\cdots
\times
\prod_{\beta = m_{\gamma_r} - c_r}^{	m_{\gamma_r}	 - 1}
\frac{	q - t^{-\beta}			}{
	t^{-(\beta + 1)} - 1
}
\\
&\times 
\underbrace{			\prod_{s \in \widetilde{			\mu				}}				}_{
	\substack{
s \in \operatorname{Row}(T(i_1,\dots,i_k ; \lambda)|N+1),
		\\
s 
		\text{ is in the row interval $\gamma_r$}
	}
}
\frac{	q^{a_{\widetilde{			\mu				}}(s) + 2} - t^{-\ell_{\widetilde{			\mu				}}(s)}			}{
	t^{-(\ell_{\widetilde{			\mu				}}(s) + 1)} - q^{a_{\widetilde{			\mu				}}(s) + 1}
}
\frac{	t^{-(\ell_{\widetilde{\mu}}(s) + 1)} - q^{a_{\widetilde{\mu}}(s)}		}{
	q^{a_{\widetilde{			\mu				}}(s) + 1} - t^{-\ell_{\widetilde{		\mu		}}(s)}	
}
\notag 
\end{align*}
\end{lem}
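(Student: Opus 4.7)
The plan is to establish the identity by a direct comparison of arm/leg lengths in $\mu$ versus $\widetilde{\mu}$, box by box. Recall that $\widetilde{\mu}$ is obtained from $\mu$ by deleting the $c = c_1 + \cdots + c_r$ boxes labeled $N+1$, and by Assumption \ref{assum65-1451-18mar} these deleted boxes occupy the rightmost positions in the last $c_j$ rows of row interval $\gamma_j$ for each $j \in \{1,\dots,r\}$. Consequently, for each $s \in \widetilde{\mu}$, the shift $a_{\mu}(s) - a_{\widetilde{\mu}}(s)$ is the number of deleted boxes in $\operatorname{row}(s)$, and the shift $\ell_{\mu}(s) - \ell_{\widetilde{\mu}}(s)$ equals $c_\sigma$ precisely when $\operatorname{col}(s) = \mu_{m_1+\cdots+m_{\gamma_\sigma-1}+1}$ for some $\sigma$ with $\gamma_\sigma$ strictly below $\operatorname{row}(s)$, and is $0$ otherwise.

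First, I would write the left-hand side as
\begin{align*}
\underbrace{\prod_{s \in \mu \setminus \widetilde{\mu}} \frac{q^{a_{\mu}(s)+1} - t^{-\ell_{\mu}(s)}}{t^{-(\ell_{\mu}(s)+1)} - q^{a_{\mu}(s)}}}_{(\star)} \;\times\; \underbrace{\prod_{s \in \widetilde{\mu}} \frac{q^{a_{\mu}(s)+1} - t^{-\ell_{\mu}(s)}}{t^{-(\ell_{\mu}(s)+1)} - q^{a_{\mu}(s)}} \cdot \frac{t^{-(\ell_{\widetilde{\mu}}(s)+1)} - q^{a_{\widetilde{\mu}}(s)}}{q^{a_{\widetilde{\mu}}(s)+1} - t^{-\ell_{\widetilde{\mu}}(s)}}}_{(\star\star)}.
\end{align*}
For $(\star)$, each deleted $N+1$ box is the rightmost in its row (so $a_{\mu}(s) = 0$ there) and its leg length within $\mu$ just counts the remaining $N+1$ boxes below in the same column; a short computation shows $(\star)$ equals exactly the product $\prod_{j=1}^{r}\prod_{\beta = m_{\gamma_j}-c_j}^{m_{\gamma_j}-1}(q - t^{-\beta})/(t^{-(\beta+1)}-1)$ that appears on the right-hand side.

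Next, for $(\star\star)$, I would partition $\widetilde{\mu}$ into the seven classes matching the factors on the right-hand side: boxes $s \in \operatorname{Row}(T|N+1)$ in row interval $\gamma_j$ with $\operatorname{col}(s) = \mu_{m_1+\cdots+m_{\gamma_\sigma -1}+1}$ (for $\sigma > j$), boxes in $\operatorname{Row}(T|N+1)$ in row interval $\gamma_j$ whose column is \emph{not} one of these distinguished columns, boxes of row interval $\gamma_r$ in $\operatorname{Row}(T|N+1)$, boxes \emph{outside} $\operatorname{Row}(T|N+1)$ in row interval $\gamma_j$ (split by column), and boxes whose row lies \emph{between} successive row intervals $\gamma_j$ and $\gamma_{j+1}$ (or above $\gamma_1$), again split by column. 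In each class, $a_{\mu}(s) - a_{\widetilde{\mu}}(s)$ is either $0$ or $1$ (depending on whether the row of $s$ contains an $N+1$ box) and $\ell_{\mu}(s) - \ell_{\widetilde{\mu}}(s)$ is either $0$ or $c_\sigma$; substituting these shifts into $(q^{a_{\mu}(s)+1}-t^{-\ell_{\mu}(s)})/(t^{-(\ell_{\mu}(s)+1)}-q^{a_{\mu}(s)})$ and collecting terms reproduces precisely the remaining six products on the right-hand side.

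The main obstacle is purely combinatorial bookkeeping: one must verify that the partition of $\widetilde{\mu}$ into the seven classes is exhaustive and disjoint, and that each class is matched with the correct arm/leg shifts so that no factor is double-counted or omitted. Once the classification is set up --- most conveniently by drawing the Young diagram of $\mu$ with the $N+1$ boxes shaded and examining row-by-row and column-by-column how the removal affects each remaining box --- the identity reduces to comparing factors term by term, and no nontrivial algebraic manipulation is required beyond the elementary shift rules $a_{\mu}(s) = a_{\widetilde{\mu}}(s) + \chi_{\operatorname{row}(s) \in \operatorname{Row}(T|N+1)}$ and $\ell_{\mu}(s) = \ell_{\widetilde{\mu}}(s) + c_\sigma \cdot \chi_{\operatorname{col}(s) = \mu_{m_1+\cdots+m_{\gamma_\sigma-1}+1},\, \gamma_\sigma > \operatorname{row\text{-}interval}(s)}$.
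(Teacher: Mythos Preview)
Your overall strategy --- split the left-hand side into the contribution $(\star)$ from the deleted boxes $\mu\setminus\widetilde{\mu}$ and the ratio $(\star\star)$ over $\widetilde{\mu}$, then track the arm/leg shifts box by box --- is exactly the right one, and is what the paper's ``Straightforward calculation'' amounts to. However, your execution contains a concrete error in the matching.

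For a deleted box $s$ in row interval $\gamma_j$ you correctly note $a_\mu(s)=0$, and the leg length is $\ell_\mu(s)=m_1+\cdots+m_{\gamma_j}-\operatorname{row}(s)$, which as $\operatorname{row}(s)$ runs over the last $c_j$ rows of that interval takes the values $0,1,\dots,c_j-1$. Hence
\[
(\star)=\prod_{j=1}^{r}\prod_{\beta=0}^{c_j-1}\frac{q-t^{-\beta}}{t^{-(\beta+1)}-1},
\]
\emph{not} the product $\prod_{\beta=m_{\gamma_j}-c_j}^{m_{\gamma_j}-1}$ appearing on the right-hand side. The discrepancy is absorbed by a class of boxes your partition omits: the boxes $s\in\widetilde{\mu}$ lying in row interval $\gamma_j$, \emph{not} in $\operatorname{Row}(T|N+1)$, and in column $\mu_{m_1+\cdots+m_{\gamma_j-1}+1}$ (the rightmost column of the \emph{same} interval). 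These boxes have $a_\mu(s)=a_{\widetilde\mu}(s)=0$ and $\ell_\mu(s)=\ell_{\widetilde\mu}(s)+c_j$, and their $(\star\star)$-contribution telescopes with $(\star)$ to produce the stated $\prod_{\beta=m_{\gamma_j}-c_j}^{m_{\gamma_j}-1}$ factor.

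This omission stems from a small error in your leg-shift rule: you write $\ell_\mu(s)=\ell_{\widetilde\mu}(s)+c_\sigma$ only when ``$\gamma_\sigma>\operatorname{row\text{-}interval}(s)$'', but the shift also occurs when $\gamma_\sigma$ \emph{equals} the row interval of $s$ and $s$ sits in one of the first $m_{\gamma_\sigma}-c_\sigma$ rows of that interval. Once you include this case, the bookkeeping closes and the seven (really eight) classes match the right-hand side term by term.
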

\begin{proof}
Straightforward calculation. 
\end{proof}

\begin{lem}\mbox{}
\label{lemc3-1621-18mar}
\begin{align}
&\psi_{T^{\prime}(i_1,\dots,i_k;\lambda)^{(N+1)} 	/	T^{\prime}(i_1,\dots,i_k;\lambda)^{(N+2) }		 }(t,q)
\notag 
\\
&\times 
\prod_{j = 1}^{r}
\prod_{\sigma = j + 1}^{r}
\underbrace{			\prod_{s \in \widetilde{			\mu				}			}					}_{
	\substack{
s \notin \operatorname{Row}(T(i_1,\dots,i_k ; \lambda)|N+1),
		\\
s \text{ is in the row interval $\gamma_j$},
		\\
\text{col}(s) = \mu_{m_1 + \cdots + m_{\gamma_\sigma -1} + 1}
	}
}
\frac{	
	q^{	 	\mu_{m_1 + \cdots + m_{\gamma_j -1} + 1}	- \mu_{m_1 + \cdots + m_{\gamma_\sigma -1} + 1}		+ 1} - t^{-		(\ell_{\widetilde{			\mu				}}(s)		+ c_\sigma	)					}	 	
}{
	t^{-(\ell_{\widetilde{			\mu				}}(s)		+ c_\sigma			 + 1)} - q^{	\mu_{m_1 + \cdots + m_{\gamma_j -1} + 1}	- \mu_{m_1 + \cdots + m_{\gamma_\sigma -1} + 1}						}
}
\label{eqn-c5-1250}
\\
&\times 
\prod_{j = 1}^{r}
\prod_{\sigma = j + 1}^{r}
\underbrace{			\prod_{s \in \widetilde{			\mu				}			}					}_{
	\substack{
s \notin \operatorname{Row}(T(i_1,\dots,i_k ; \lambda)|N+1),
		\\
s 
		\text{ is in the row interval $\gamma_j$},
		\\
\text{col}(s) = \mu_{m_1 + \cdots + m_{\gamma_\sigma -1} + 1}
	}
}
\frac{	t^{-(\ell_{\widetilde{\mu}}(s) + 1)} - q^{				\mu_{m_1 + \cdots + m_{\gamma_j -1} + 1}	- \mu_{m_1 + \cdots + m_{\gamma_\sigma -1} + 1}							}	}{
	q^{\mu_{m_1 + \cdots + m_{\gamma_j -1} + 1}	- \mu_{m_1 + \cdots + m_{\gamma_\sigma -1} + 1}		+ 1} - t^{-\ell_{\widetilde{		\mu		}}(s)}		
}
\label{eqn-c6-1250}
\\
&\times 
\prod_{\sigma = 1}^{r}
\underbrace{		\prod_{s \in \widetilde{			\mu				}}				}_{
	\substack{
s \notin \operatorname{Row}(T(i_1,\dots,i_k ; \lambda)|N+1),
		\\
s \text{ is not in the row interval $\gamma_1, \dots, \gamma_r$},
		\\
\operatorname{row}(s) \in \{1,\dots,m_1 + \cdots + m_{\gamma_1 - 1}		\},
		\\
\operatorname{col}(s) = \mu_{m_1 + \cdots + m_{\gamma_\sigma -1} + 1}
	}
}
\frac{	q^{a_{\widetilde{			\mu				}}(s) + 1} - t^{-(\ell_{\widetilde{			\mu				}}(s)	+ c_\sigma)	}			}{
	t^{-(\ell_{\widetilde{			\mu				}}(s) + c_\sigma + 1)} - q^{a_{\widetilde{			\mu				}}(s)}
}
\frac{			t^{-(\ell_{\widetilde{\mu}}(s) + 1)} - q^{a_{\widetilde{\mu}}(s)}	}{
	q^{a_{\widetilde{			\mu				}}(s) + 1} - t^{-\ell_{\widetilde{		\mu		}}(s)}		
}	
\label{eqn-c7-1250}
\\
&\times 
\prod_{j = 1}^{r - 1}
\prod_{\sigma = j+1}^{r}
\underbrace{		\prod_{s \in \widetilde{			\mu				}}				}_{
	\substack{
s \notin \operatorname{Row}(T(i_1,\dots,i_k ; \lambda)|N+1),
		\\
s \text{ is not in the row interval $\gamma_1, \dots, \gamma_r$},
		\\
\operatorname{row}(s) \in \{m_1 + \cdots + m_{\gamma_j} + 1, \cdots, m_1 + \cdots + m_{\gamma_{j+1} - 1}					\},
		\\
\operatorname{col}(s) = \mu_{m_1 + \cdots + m_{\gamma_\sigma -1} + 1}
	}
}
\frac{	q^{a_{\widetilde{			\mu				}}(s) + 1} - t^{-(\ell_{\widetilde{			\mu				}}(s) + c_\sigma)}			}{
	t^{-(\ell_{\widetilde{			\mu				}}(s) + c_\sigma + 1)} - q^{a_{\widetilde{			\mu				}}(s)}
}
\frac{			t^{-(\ell_{\widetilde{\mu}}(s) + 1)} - q^{a_{\widetilde{\mu}}(s)}	}{
	q^{a_{\widetilde{			\mu				}}(s) + 1} - t^{-\ell_{\widetilde{		\mu		}}(s)}		
}	
\label{eqn-c8-1250}
\\
&\times
\prod_{\beta = c_1}^{m_{\gamma_1} - 1}
\frac{	q - t^{-\beta}			}{
	t^{-(\beta + 1)} - 1
}
\times
\cdots
\times
\prod_{\beta = c_r}^{	m_{\gamma_r}	 - 1}
\frac{	q - t^{-\beta}			}{
	t^{-(\beta + 1)} - 1
}
\times
\prod_{\beta = 0}^{m_{\gamma_1} - c_1 - 1}
\frac{	t^{-(\beta + 1)} - 1			}{
	q - t^{-\beta}
}
\times
\cdots
\times
\prod_{\beta = 0}^{m_{\gamma_r} - c_r - 1}
\frac{	t^{-(\beta + 1)} - 1			}{
	q - t^{-\beta}
}
\label{eqn-c9-1250}
\\
&= 1
\notag 
\end{align}
\end{lem}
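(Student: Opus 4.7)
My plan is to view the identity in \textbf{Lemma \ref{lem68-1212-13mar}} as a purely combinatorial identity obtained by expanding the skew psi function $\psi_{\mu'/\widetilde{\mu}'}(t,q)$ via an explicit product formula and matching factors one by one. The first step is to invoke \textbf{Proposition \ref{prop226-1127-19mar}}, which gives
\begin{align*}
\psi_{\mu'/\widetilde{\mu}'}(t,q) = \psi'_{\mu/\widetilde{\mu}}(q,t),
\end{align*}
where $\psi'_{\mu/\widetilde{\mu}}(q,t)$ is an explicit product over pairs $(i,j)$ with $1 \leq i < j \leq \ell(\mu)$, $\mu_i = \widetilde{\mu}_i$, and $\mu_j = \widetilde{\mu}_j + 1$. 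By the construction of $(\widetilde{i}_1,\dots,\widetilde{i}_k)$ from $(i_1,\dots,i_k)$, the skew shape $\mu/\widetilde{\mu}$ is a vertical strip occupying exactly the last $c_\ell$ rows of row interval $\gamma_\ell$ for each $\ell = 1,\dots,r$; consequently the condition $\mu_j = \widetilde{\mu}_j + 1$ selects precisely the rows $j \in \operatorname{Row}(T(i_1,\dots,i_k;\lambda)|N+1)$, while $\mu_i = \widetilde{\mu}_i$ selects $i \notin \operatorname{Row}(T(i_1,\dots,i_k;\lambda)|N+1)$.

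Next, I partition the admissible $(i,j)$-pairs according to the row intervals they belong to. There are four mutually exclusive cases: (a) $j$ lies in $\gamma_{\ell'}$ and $i$ lies in some earlier row interval $\gamma_\ell$ with $\ell < \ell'$, in the upper $m_{\gamma_\ell} - c_\ell$ rows of $\gamma_\ell$; (b) $i$ and $j$ lie in the \emph{same} row interval $\gamma_\ell$ (forcing $\widetilde{\mu}_i = \widetilde{\mu}_j$); (c) $j$ lies in $\gamma_{\ell'}$ and $i$ lies in a row interval \emph{not} among $\gamma_1,\dots,\gamma_r$, subdivided further by whether $i$ is above $\gamma_1$ or between $\gamma_j$ and $\gamma_{j+1}$. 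For each class I will reparametrize by boxes $s \in \widetilde{\mu}$ via the identifications
\begin{align*}
\widetilde{\mu}_i - \widetilde{\mu}_j = a_{\widetilde{\mu}}(s) + 1, \qquad j - i = \ell_{\widetilde{\mu}}(s) + 1,
\end{align*}
where $s$ is the box in row $i$ and column $\mu_{m_1 + \cdots + m_{\gamma_{\ell'}-1}+1}$ of $\widetilde{\mu}$; additional care must be taken when $s$ lies on the boundary column of $\mu/\widetilde{\mu}$, in which case one of the arm/leg values shifts by one. After this change of variables, each factor of $\psi'_{\mu/\widetilde{\mu}}(q,t)$ is expressible purely in terms of $a_{\widetilde{\mu}}(s)$, $\ell_{\widetilde{\mu}}(s)$, and $c_{\ell'}$.

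Under this reparametrization, I expect the four big box-indexed products \eqref{eqn-c5-1250}--\eqref{eqn-c8-1250} to cancel against the contributions from classes (a), (c), and parts of (b): one rewrites each factor of the form $q^{a+2} - t^{-(\ell+c)}$ as $-t^{-(\ell+c)}(1 - q^{a+2}t^{\ell+c})$, so the expressions $1 - q^{\widetilde{\mu}_i - \widetilde{\mu}_j}t^{j-i}$ coming from $\psi'$ precisely match the numerators and denominators appearing in \eqref{eqn-c5-1250}--\eqref{eqn-c8-1250}. The simpler $\beta$-indexed products $\prod_\beta \frac{q-t^{-\beta}}{t^{-(\beta+1)}-1}$ and their inverses in \eqref{eqn-c9-1250} will then be produced by the remaining contributions from case (b), where $\widetilde{\mu}_i = \widetilde{\mu}_j$ reduces the factor to one depending only on $j - i$.

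The main obstacle is the bookkeeping: one must verify that the four-way classification of $(i,j)$-pairs is exhaustive and mutually exclusive, and that after the arm/leg reparametrization the numerators and denominators of $\psi'_{\mu/\widetilde{\mu}}(q,t)$ cancel precisely against the stated products, keeping careful track of the prefactors $-t^{-(\ell+c)}$ etc.\ that arise when translating between the $q^{a+1}-t^{-\ell}$ and $1 - q^{a+1}t^\ell$ presentations. Conceptually no new idea is required beyond the strategy of Appendix \ref{secappB-1410-2feb}, which handled the analogous $(N,0)$ case; the present calculation is longer only because of the multiple row intervals $\gamma_1,\dots,\gamma_r$, each carrying its own $c_\ell$ super $N+1$-boxes.
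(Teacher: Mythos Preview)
Your proposal is correct and follows essentially the same approach as the paper: invoke \textbf{Proposition \ref{prop226-1127-19mar}} to rewrite the left-hand factor as $\psi'_{\mu/\widetilde{\mu}}(q,t)$, decompose the resulting product over admissible pairs $(\alpha,\beta)$ according to which row intervals $\alpha$ and $\beta$ occupy, reparametrize by boxes $s\in\widetilde{\mu}$ via arm/leg lengths, and then verify termwise cancellation. The paper's organization is slightly cleaner than your sketch: it shows precisely that \eqref{eqn-c5-1250}$\times$\eqref{eqn-c6-1250} cancel against your class~(a), \eqref{eqn-c7-1250}$\times$\eqref{eqn-c8-1250} cancel against your class~(c), and \eqref{eqn-c9-1250} cancels against your class~(b) in its entirety---so your phrase ``parts of (b)'' contributing to \eqref{eqn-c5-1250}--\eqref{eqn-c8-1250} is a minor imprecision, but the overall strategy is the same.
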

\begin{proof}
From \textbf{Proposition \ref{prop226-1127-19mar}}, we know that 
\begin{align}
&\psi_{T^{\prime}(i_1,\dots,i_k;\lambda)^{(N+1)} 	/	T^{\prime}(i_1,\dots,i_k;\lambda)^{(N+2) }		 }(t,q)
\notag 
\\
&= 
\psi^\prime_{T(i_1,\dots,i_k;\lambda)^{(N+1)} 	/	T(i_1,\dots,i_k;\lambda)^{(N+2) }		 }(q,t)
=
\psi^\prime_{\mu/\widetilde{\mu}}(q,t) 
\notag 
\\
&= 
\underbrace{	\prod_{	1 \leq	\alpha < \beta \leq \ell(\mu)	}					}_{
	\substack{
\mu_\alpha = \widetilde{			\mu				}_\alpha,
		\\
\mu_\beta = \widetilde{			\mu				}_\beta + 1
	}
}
\frac{
	\left(1 - q^{\widetilde{			\mu				}_\alpha - \widetilde{			\mu				}_\beta}t^{\beta - \alpha - 1}\right)
	\left(1 - q^{\mu_\alpha - \mu_\beta}t^{\beta - \alpha + 1}\right)
}{
	\left(1 - q^{\widetilde{			\mu				}_\alpha - \widetilde{			\mu				}_\beta}t^{\beta - \alpha }\right)
	\left(1 - q^{\mu_\alpha - \mu_\beta}t^{\beta - \alpha }\right)
}
\notag 
\\
&= 
\prod_{j = 1}^{r}
\underbrace{	\prod_{		1 \leq	\alpha < \beta \leq \ell(\mu)	}					}_{
	\substack{
\mu_\alpha = \widetilde{			\mu				}_\alpha,
		\\
\mu_\beta = \widetilde{			\mu				}_\beta + 1,
		\\
\beta \text{ is in the row interval $\gamma_j$},
		\\
\alpha \in \{1,2,\dots, m_1 + \cdots + m_{\gamma_j - 1}\}
	}
}
\frac{
	\left(1 - q^{\widetilde{			\mu				}_\alpha - \widetilde{			\mu				}_\beta}t^{\beta - \alpha - 1}\right)
	\left(1 - q^{\mu_\alpha - \mu_\beta}t^{\beta - \alpha + 1}\right)
}{
	\left(1 - q^{\widetilde{			\mu				}_\alpha - \widetilde{			\mu				}_\beta}t^{\beta - \alpha }\right)
	\left(1 - q^{\mu_\alpha - \mu_\beta}t^{\beta - \alpha }\right)
}
\times
\prod_{j = 1}^{r}
\prod_{\alpha = 1}^{m_{\gamma_j} - c_j}
\frac{
	\left(1 - t^{c_j + \alpha}\right)
	\left(1 - qt^{\alpha - 1}\right)
}{
	\left(1 - qt^{c_j + \alpha - 1}\right)
	\left(1 - t^\alpha\right)
}
\notag 
\\
&= 
\prod_{\zeta = 1}^{\gamma_1 - 1}
\underbrace{			\prod_{s \in \widetilde{			\mu				}}				}_{
	\substack{
s \notin \operatorname{Row}(T(i_1,\dots,i_k ; \lambda)|N+1),
		\\
s \text{ is in the row interval $\zeta$},
		\\
\text{col}(s) = \mu_{m_1 + \cdots + m_{\gamma_j -1} + 1}
	}
}
\frac{
	\left(1 - q^{	a_{\widetilde{			\mu				}}(s)	+1}t^{\ell_{	\widetilde{			\mu				}		}(s)}\right)
	\left(1 - q^{	a_{\widetilde{			\mu				}}(s)	}t^{\ell_{	\widetilde{			\mu				}		}(s) + c_j + 1}\right)
}{
	\left(1 - q^{	a_{\widetilde{			\mu				}}(s)	+1}t^{\ell_{	\widetilde{			\mu				}		}(s) + c_j}\right)
	\left(1 - q^{	a_{\widetilde{			\mu				}}(s)	}t^{\ell_{	\widetilde{			\mu				}		}(s)  + 1}\right)
}
\label{eqn-c10-1250}
\\
&\hspace{0.5cm}
\times \prod_{j = 2}^{r}
\underbrace{			\prod_{\zeta = 1}^{\gamma_j - 1}					}_{
\zeta \neq \gamma_1, \dots, \gamma_{j - 1}
}
\underbrace{			\prod_{s \in \widetilde{			\mu				}}				}_{
	\substack{
\,\, s \notin \operatorname{Row}(T(i_1,\dots,i_k ; \lambda)|N+1),
		\\
\,\, s \text{ is in the row interval $\zeta$},
		\\
\text{col}(s) = \mu_{m_1 + \cdots + m_{\gamma_j -1} + 1}
	}
}
\frac{
	\left(1 - q^{	a_{\widetilde{			\mu				}}(s)	+1}t^{\ell_{	\widetilde{			\mu				}		}(s)}\right)
	\left(1 - q^{	a_{\widetilde{			\mu				}}(s)	}t^{\ell_{	\widetilde{			\mu				}		}(s) + c_j + 1}\right)
}{
	\left(1 - q^{	a_{\widetilde{			\mu				}}(s)	+1}t^{\ell_{	\widetilde{			\mu				}		}(s) + c_j}\right)
	\left(1 - q^{	a_{\widetilde{			\mu				}}(s)	}t^{\ell_{	\widetilde{			\mu				}		}(s)  + 1}\right)
}
\label{eqn-c11-1250}
\\
&\hspace{0.5cm}
\times \prod_{j = 2}^{r}
\underbrace{			\prod_{\zeta = 1}^{\gamma_j - 1}					}_{
	\zeta = \gamma_1, \dots, \gamma_{j - 1}
}
\underbrace{			\prod_{s \in \widetilde{			\mu				}}				}_{
	\substack{
\,\, s \notin \operatorname{Row}(T(i_1,\dots,i_k ; \lambda)|N+1),
		\\
\,\, s \text{ is in the row interval $\zeta$},
		\\
		\text{col}(s) = \mu_{m_1 + \cdots + m_{\gamma_j -1} + 1}
	}
}
\frac{
	\left(1 - q^{	a_{\widetilde{			\mu				}}(s)	+1}t^{\ell_{	\widetilde{			\mu				}		}(s)}\right)
	\left(1 - q^{	a_{\widetilde{			\mu				}}(s)	}t^{\ell_{	\widetilde{			\mu				}		}(s) + c_j + 1}\right)
}{
	\left(1 - q^{	a_{\widetilde{			\mu				}}(s)	+1}t^{\ell_{	\widetilde{			\mu				}		}(s) + c_j}\right)
	\left(1 - q^{	a_{\widetilde{			\mu				}}(s)	}t^{\ell_{	\widetilde{			\mu				}		}(s)  + 1}\right)
}  
\label{eqn-c12-1250}
\\
&\hspace{0.5cm}\times
\prod_{j = 1}^{r}
\prod_{\alpha = 1}^{m_{\gamma_j} - c_j}
\frac{
	\left(1 - t^{c_j + \alpha}\right)
	\left(1 - qt^{\alpha - 1}\right)
}{
	\left(1 - qt^{c_j + \alpha - 1}\right)
	\left(1 - t^\alpha\right)
}
\label{eqn-c13-1250}
\end{align}
One can easily check that 
\begin{align*}
\eqref{eqn-c7-1250}
\times
\eqref{eqn-c8-1250}
\times
\eqref{eqn-c10-1250}
\times
\eqref{eqn-c11-1250}
&= 
1,
\\
\eqref{eqn-c5-1250}
\times
\eqref{eqn-c6-1250}
\times
\eqref{eqn-c12-1250}
&= 1,
\\
\eqref{eqn-c9-1250}
\times
\eqref{eqn-c13-1250}
&= 1.
\end{align*}
From this, the formula in \textbf{Lemma \ref{lemc3-1621-18mar}} immediately follows. 
\end{proof}

\begin{lem}\mbox{}
\label{lemc4-1621-18mar}
\begin{align*}
&\psi_{T(\widetilde{i}_1,\dots, \widetilde{i}_k;\lambda)^{(N+1)} / T(\widetilde{i}_1,\dots, \widetilde{i}_k;\lambda)^{(N+2)}}(q,t)
\times
\cdots 
\times
\psi_{T(\widetilde{i}_1,\dots, \widetilde{i}_k ; \lambda)^{(N+c-1)} / T(\widetilde{i}_1,\dots, \widetilde{i}_k;\lambda)^{(N+c)}}(q,t)
\times
\psi_{T(\widetilde{i}_1,\dots, \widetilde{i}_k;\lambda)^{(N+c)} / 	\widetilde{\mu}			}(q,t)
\\
&=
\prod_{j = 1}^{r - 1}
\prod_{\sigma = j+1}^{r}
\underbrace{			\prod_{s \in \widetilde{			\mu				}}				}_{
	\substack{
		s \in \operatorname{Row}(T(i_1,\dots,i_k ; \lambda)|N+1),
		\\
		s 
		\text{ is in the row interval $\gamma_j$},
		\\
		\text{col}(s) = \mu_{m_1 + \cdots + m_{\gamma_\sigma -1} + 1}
	}
}
\frac{	
	1 - 	q^{a_{\widetilde{			\mu				}}(s) + 2} t^{\ell_{\widetilde{		\mu		}}(s)}	
}{
	1 -   q^{a_{\widetilde{\mu}}(s) + 1}	t^{\ell_{\widetilde{\mu}}(s) + 1}	
}
\frac{	1 - q^{a_{\widetilde{\mu}}(s)}	t^{\ell_{\widetilde{\mu}}(s) + 1} 	}{
	1 - q^{a_{\widetilde{			\mu				}}(s) + 1} t^{\ell_{\widetilde{		\mu		}}(s)}	
}
\notag 
\\
&\times 
\prod_{j = 1}^{r - 1}
\underbrace{			\prod_{s \in \widetilde{			\mu				}}				}_{
	\substack{
		s \in \operatorname{Row}(T(i_1,\dots,i_k ; \lambda)|N+1),
		\\
		s 
		\text{ is in the row interval $\gamma_j$},
		\\
		\text{col}(s) \notin \{		\mu_{m_1 + \cdots + m_{\gamma_\sigma -1} + 1} ~|~ \sigma = j + 1, \dots, r			\}
	}
}
\frac{	
	1 - 	q^{a_{\widetilde{			\mu				}}(s) + 2} t^{\ell_{\widetilde{		\mu		}}(s)}	
}{
	1 -   q^{a_{\widetilde{\mu}}(s) + 1}	t^{\ell_{\widetilde{\mu}}(s) + 1}	
}
\frac{	1 - q^{a_{\widetilde{\mu}}(s)}	t^{\ell_{\widetilde{\mu}}(s) + 1} 	}{
	1 - q^{a_{\widetilde{			\mu				}}(s) + 1} t^{\ell_{\widetilde{		\mu		}}(s)}	
}
\notag 
\\
&\times 
\underbrace{			\prod_{s \in \widetilde{			\mu				}}				}_{
	\substack{
	s \in \operatorname{Row}(T(i_1,\dots,i_k ; \lambda)|N+1),
		\\
	s 
		\text{ is in the row interval $\gamma_r$}
	}
}
\frac{	
	1 - 	q^{a_{\widetilde{			\mu				}}(s) + 2} t^{\ell_{\widetilde{		\mu		}}(s)}	
}{
	1 -   q^{a_{\widetilde{\mu}}(s) + 1}	t^{\ell_{\widetilde{\mu}}(s) + 1}	
}
\frac{	1 - q^{a_{\widetilde{\mu}}(s)}	t^{\ell_{\widetilde{\mu}}(s) + 1} 	}{
	1 - q^{a_{\widetilde{			\mu				}}(s) + 1} t^{\ell_{\widetilde{		\mu		}}(s)}	
}
\notag 
\end{align*}
\end{lem}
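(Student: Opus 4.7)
The plan is to exploit the telescoping structure of the chain
\[
\mu = \mu^{(0)} \supset \mu^{(1)} \supset \cdots \supset \mu^{(c)} = \widetilde{\mu},
\]
where I set $\mu^{(j-1)} := T(\widetilde{i}_1,\ldots,\widetilde{i}_k;\lambda)^{(N+j)}$ for $j = 1,\ldots,c$ and $\mu^{(c)} = \widetilde{\mu}$. By construction of $(\widetilde{i}_1,\ldots,\widetilde{i}_k)$ (relabeling the super $N+1$'s by $N+c, N+c-1, \ldots, N+1$ from bottom to top in the sense of Figure~\ref{fig5-1039-12mar}), each $\mu^{(j-1)} / \mu^{(j)}$ consists of a single box which is a corner of $\mu^{(j-1)}$. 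The bottom-to-top ordering is essential: it guarantees that at step $j$ the removed box really is a corner of the current shape, so the formula of \textbf{Definition \ref{dfn223-2118}} applies unambiguously to each $\psi_{\mu^{(j-1)}/\mu^{(j)}}(q,t)$.

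Next, I would expand each $\psi_{\mu^{(j-1)}/\mu^{(j)}}(q,t)$ via equation \eqref{eqn27-2052}. Since $\mu^{(j-1)}$ and $\mu^{(j)}$ differ in exactly one row, the vast majority of factors $f_{q,t}(q^a t^b)$ in the product reduce to $1$; only pairs $(\alpha,\beta)$ with $\alpha = r_j$ or $\beta = r_j$ or $\beta+1 = r_j$ (where $r_j$ is the row of the removed cell) survive. Forming the product $\prod_{j=1}^{c}\psi_{\mu^{(j-1)}/\mu^{(j)}}(q,t)$, a direct telescoping cancellation occurs: factors of the form $f_{q,t}(q^{\mu^{(j)}_\alpha - \mu^{(j)}_\beta} t^{\beta - \alpha})$ appearing in the denominator at step $j$ match the numerator factors at step $j+1$ whenever the $(\alpha,\beta)$-pair is unaffected by the $j$-th removal. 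What survives can therefore be re-written purely in terms of the endpoint shapes $\mu$ and $\widetilde{\mu}$, together with a bookkeeping of the removed cells grouped by row intervals $\gamma_1 < \cdots < \gamma_r$ and column positions.

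After telescoping, I would reorganize the surviving product by summing over boxes $s \in \widetilde{\mu}$ in $\operatorname{Row}(T(i_1,\ldots,i_k;\lambda)|N+1)$ and classifying each box into three types according to which row interval $\gamma_j$ it lies in and whether its column equals $\mu_{m_1 + \cdots + m_{\gamma_\sigma - 1} + 1}$ for some $\sigma > j$. For each class, I would verify directly that the surviving $f_{q,t}$-factors assemble into exactly the quoted ratio
\[
\frac{1 - q^{a_{\widetilde{\mu}}(s) + 2} t^{\ell_{\widetilde{\mu}}(s)}}{1 - q^{a_{\widetilde{\mu}}(s) + 1} t^{\ell_{\widetilde{\mu}}(s) + 1}}
\cdot
\frac{1 - q^{a_{\widetilde{\mu}}(s)} t^{\ell_{\widetilde{\mu}}(s) + 1}}{1 - q^{a_{\widetilde{\mu}}(s) + 1} t^{\ell_{\widetilde{\mu}}(s)}},
\]
which is the standard manifestation of arm/leg invariants under a single corner removal. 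Since no super $N+1$ box lies outside $\operatorname{Row}(T|N+1)$ row intervals, and the row interval $\gamma_r$ contributes no "column-matching condition" (there is no $\sigma > r$), the three classes exhaust every surviving factor.

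The main obstacle is the bookkeeping in this final reorganization: I must verify that the exponents produced by telescoping the Definition \ref{dfn223-2118} formula really collapse into $a_{\widetilde{\mu}}(s)$ and $\ell_{\widetilde{\mu}}(s)$, with the correct counting of how many removed cells lie strictly south of $s$ and strictly east of $s$ within the same row interval. This is the same species of calculation as \textbf{Lemma \ref{lemm54-1412-2feb}} and its appendix, now arranged for vertical single-box removals rather than a row-interval's worth of horizontal removals. Accordingly, I would relegate the verbose index manipulation to a separate subsection of the appendix, presenting only the key identity (the single-box version of the arm/leg transformation) as the backbone of the proof.
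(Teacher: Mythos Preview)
Your setup of the chain $\mu=\mu^{(0)}\supset\cdots\supset\mu^{(c)}=\widetilde\mu$ and the verification that each step removes a single corner is fine, and the final classification of boxes by row interval and column is exactly what the paper does at the end. The gap is the mechanism you invoke in the middle. There is no telescoping of $f_{q,t}$-factors between consecutive $\psi$'s: in $\psi_{\mu^{(j-1)}/\mu^{(j)}}$ the denominator carries $f_{q,t}(q^{\mu^{(j-1)}_\alpha-\mu^{(j)}_\beta}t^{\beta-\alpha})$ and $f_{q,t}(q^{\mu^{(j)}_\alpha-\mu^{(j-1)}_{\beta+1}}t^{\beta-\alpha})$, while the numerator of $\psi_{\mu^{(j)}/\mu^{(j+1)}}$ carries $f_{q,t}(q^{\mu^{(j+1)}_\alpha-\mu^{(j+1)}_\beta}t^{\beta-\alpha})$; these do not match, so ``factors appearing in the denominator at step $j$ match the numerator factors at step $j+1$'' is false as stated.

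The paper avoids any cross-step cancellation entirely. It evaluates each single-box factor $\psi_{\mu^{(j-1)}/\mu^{(j)}}(q,t)$ on its own: since $\mu^{(j-1)}$ and $\mu^{(j)}$ differ in exactly one row $r_j$, the product \eqref{eqn27-2052} collapses (only $\alpha=r_j$ contributes, and only $\beta$ with $\mu^{(j)}_\beta\neq\mu^{(j)}_{\beta+1}$ survive) to
\[
\prod_{\substack{s\in\mu^{(j)}\\ \operatorname{row}(s)=r_j}}
\frac{(1-q^{a_{\mu^{(j)}}(s)+2}t^{\ell_{\mu^{(j)}}(s)})(1-q^{a_{\mu^{(j)}}(s)}t^{\ell_{\mu^{(j)}}(s)+1})}{(1-q^{a_{\mu^{(j)}}(s)+1}t^{\ell_{\mu^{(j)}}(s)+1})(1-q^{a_{\mu^{(j)}}(s)+1}t^{\ell_{\mu^{(j)}}(s)})}.
\]
The key step you are missing is then a one-line geometric observation: for every box $s$ in row $r_j$ of $\mu^{(j)}$ one has $a_{\mu^{(j)}}(s)=a_{\widetilde\mu}(s)$ and $\ell_{\mu^{(j)}}(s)=\ell_{\widetilde\mu}(s)$. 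This holds because the boxes still present in $\mu^{(j)}\setminus\widetilde\mu$ all lie in rows strictly above $r_j$ and in columns strictly to the right of $\widetilde\mu_{r_j}$, so they affect neither the arm nor the leg of any $s$ in row $r_j$ of $\widetilde\mu$. With this, every factor is already expressed in $\widetilde\mu$-arm/leg, the full product over $j=1,\ldots,c$ becomes the product over all $s\in\widetilde\mu$ with $\operatorname{row}(s)\in\operatorname{Row}(T|N+1)$, and your final three-way split by $(\gamma_j,\text{column})$ is then just a regrouping. Replace the telescoping paragraph with this arm/leg-invariance argument and the proof goes through without further bookkeeping.
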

\begin{proof}
We know from equation \eqref{eqn27-2052} that 
\begin{align}
&\psi_{T(\widetilde{i}_1,\dots, \widetilde{i}_k;\lambda)^{(N+c)} / 	\widetilde{\mu}			}(q,t)
= 
\prod_{1 \leq \alpha \leq \beta \leq \ell(\widetilde{\mu}		)	}					
\frac{
	f_{q,t}\left(			q^{	\widetilde{			\mu				}_\alpha - \widetilde{			\mu				}_\beta		}t^{\beta - \alpha}				\right)
	f_{q,t}\left(			q^{	T(\widetilde{i}_1,\dots, \widetilde{i}_k;\lambda)^{(N+c)}_\alpha - T(\widetilde{i}_1,\dots, \widetilde{i}_k;\lambda)^{(N+c)}_{\beta+1}				}t^{\beta - \alpha}				\right)
}{
	f_{q,t}\left(			q^{		T(\widetilde{i}_1,\dots, \widetilde{i}_k;\lambda)^{(N+c)}_\alpha 		 - \widetilde{			\mu				}_\beta		}t^{\beta - \alpha}				\right)
	f_{q,t}\left(			q^{			\widetilde{			\mu				}_\alpha 		- T(\widetilde{i}_1,\dots, \widetilde{i}_k;\lambda)^{(N+c)}_{\beta+1}				}t^{\beta - \alpha}				\right)
}
\notag 
\\
&= 
\underbrace{		\prod_{	1 \leq \alpha \leq \beta \leq \ell(\widetilde{\mu}		)			}					}_{
	\substack{
	\alpha = m_1 + \cdots + m_{\gamma_1} - c_1 + 1, 
		\\
	\widetilde{			\mu				}_{\beta} \neq \widetilde{			\mu				}_{\beta+1}
	}
}
\frac{
	f\left(			q^{	\widetilde{			\mu				}_\alpha - \widetilde{			\mu				}_\beta		}t^{\beta - \alpha}				\right)
	f\left(			q^{	\widetilde{			\mu				}_\alpha - \widetilde{			\mu				}_{\beta+1} + 1				}t^{\beta - \alpha}				\right)
}{
	f\left(			q^{		\widetilde{			\mu				}_\alpha - \widetilde{			\mu				}_\beta	 + 1	}t^{\beta - \alpha}				\right)
	f\left(			q^{		\widetilde{			\mu				}_\alpha - \widetilde{			\mu				}_{\beta+1}		}t^{\beta - \alpha}				\right)
}
\notag 
\\
&=
\underbrace{		\prod_{	1 \leq \alpha \leq \beta \leq \ell(\widetilde{\mu}		)			}					}_{
	\substack{
		\alpha = m_1 + \cdots + m_{\gamma_1} - c_1 + 1, 
		\\
		\widetilde{			\mu				}_{\beta} \neq \widetilde{			\mu				}_{\beta+1}
	}
}
\frac{
	\left(1 - q^{	\widetilde{			\mu				}_\alpha - \widetilde{			\mu				}_\beta		}t^{\beta - \alpha + 1}		 \right)
	\left(1 - q^{		\widetilde{			\mu				}_\alpha - \widetilde{			\mu				}_{\beta+1}	+ 1	}t^{\beta - \alpha}	 \right)
}{
	\left(1 - q^{		\widetilde{			\mu				}_\alpha - \widetilde{			\mu				}_{\beta+1}		}t^{\beta - \alpha + 1}	 \right)
	\left(1 - q^{	\widetilde{			\mu				}_\alpha - \widetilde{			\mu				}_\beta	 + 1	}t^{\beta - \alpha}	 \right)
}
\notag 
\\
&= 
\underbrace{			\prod_{s \in \widetilde{			\mu				} }				}_{
	\substack{
	\operatorname{row}(s) \in \operatorname{Row}(	\widetilde{			\mu				}	\rightarrow		
	T(\widetilde{i}_1,\dots, \widetilde{i}_k;\lambda)^{(N+c)}
	)
	}
}
\frac{	1 -   q^{a_{\widetilde{\mu}}(s) + 2}	t^{\ell_{\widetilde{\mu}}(s) }	}{
	1 - 	q^{a_{\widetilde{			\mu				}}(s) + 1} t^{\ell_{\widetilde{		\mu		}}(s) + 1}	
}
\frac{	1 -   q^{a_{\widetilde{\mu}}(s) }	t^{\ell_{\widetilde{\mu}}(s) + 1}	}{
	1 - 	q^{a_{\widetilde{			\mu				}}(s) + 1} t^{\ell_{\widetilde{		\mu		}}(s)}	
},
\label{eqn-c14-1622}
\end{align}
where the set $\operatorname{Row}(	\widetilde{			\mu				}	\rightarrow		
T(\widetilde{i}_1,\dots, \widetilde{i}_k;\lambda)^{(N+c)}
)$ is as defined in \textbf{Definition \ref{def220-1540-19mar}}. 

Using a similar argument, we obtain that 
\begin{align}
	&\psi_{T(\widetilde{i}_1,\dots, \widetilde{i}_k ; \lambda)^{(N+c-1)} / T(\widetilde{i}_1,\dots, \widetilde{i}_k;\lambda)^{(N+c)}}(q,t)
	\label{eqn262-1458-23dec}
	\\
	&= 
	\underbrace{			\prod_{s \in T(\widetilde{i}_1,\dots, \widetilde{i}_k;\lambda)^{(N+c)}  }				}_{
		\substack{
			\operatorname{row}(s) \in \operatorname{Row}(T(\widetilde{i}_1,\dots, \widetilde{i}_k;\lambda)^{(N+c)}
			\rightarrow 
			T(\widetilde{i}_1,\dots, \widetilde{i}_k ; \lambda)^{(N+c-1)} 
			)
		}
	}
	\notag 
	\Bigg[
	\\
	&\hspace{0.4cm}
	\frac{	1 -   q^{a_{	T(\widetilde{i}_1,\dots, \widetilde{i}_k;\lambda)^{(N+c)}	 }(s) + 2}	t^{\ell_{	T(\widetilde{i}_1,\dots, \widetilde{i}_k;\lambda)^{(N+c)}		}(s) }	}{
		1 - 	q^{a_{	T(\widetilde{i}_1,\dots, \widetilde{i}_k;\lambda)^{(N+c)}			}(s) + 1} t^{\ell_{	T(\widetilde{i}_1,\dots, \widetilde{i}_k;\lambda)^{(N+c)}		}(s) + 1}	
	}
	\frac{	1 -   q^{a_{	T(\widetilde{i}_1,\dots, \widetilde{i}_k;\lambda)^{(N+c)} }(s) }	t^{\ell_{ T(\widetilde{i}_1,\dots, \widetilde{i}_k;\lambda)^{(N+c)} }(s) + 1}	}{
		1 - 	q^{a_{	T(\widetilde{i}_1,\dots, \widetilde{i}_k;\lambda)^{(N+c)}	}(s) + 1} t^{\ell_{T(\widetilde{i}_1,\dots, \widetilde{i}_k;\lambda)^{(N+c)}}(s)}	
	}
	\Bigg]. 
	\notag 
\end{align}
However, it is clear that for box $s$ such that $\operatorname{row}(s) \in \operatorname{Row}(T(\widetilde{i}_1,\dots, \widetilde{i}_k;\lambda)^{(N+c)}
\rightarrow 
T(\widetilde{i}_1,\dots, \widetilde{i}_k ; \lambda)^{(N+c-1)} 
)$, we have 
\begin{align}
a_{T(\widetilde{i}_1,\dots, \widetilde{i}_k;\lambda)^{(N+c)}}(s) = a_{\widetilde{			\mu				}}(s)
\hspace{0.5cm}
\text{ and }
\hspace{0.5cm}
\ell_{T(\widetilde{i}_1,\dots, \widetilde{i}_k;\lambda)^{(N+c)}}(s) = \ell_{\widetilde{			\mu				}}(s). 
\end{align} 
Therefore, we get that 
\begin{align}
&\psi_{T(\widetilde{i}_1,\dots, \widetilde{i}_k ; \lambda)^{(N+c-1)} / T(\widetilde{i}_1,\dots, \widetilde{i}_k;\lambda)^{(N+c)}}(q,t)
\label{eqn-c17-1622}
\\
&= 
\underbrace{			\prod_{s \in \widetilde{			\mu				} }				}_{
	\substack{
	\operatorname{row}(s) \in \operatorname{Row}(T(\widetilde{i}_1,\dots, \widetilde{i}_k;\lambda)^{(N+c)}
	\rightarrow 
	T(\widetilde{i}_1,\dots, \widetilde{i}_k ; \lambda)^{(N+c-1)} 
	)
	}
}
\frac{	1 -   q^{a_{\widetilde{\mu}}(s) + 2}	t^{\ell_{\widetilde{\mu}}(s) }	}{
	1 - 	q^{a_{\widetilde{			\mu				}}(s) + 1} t^{\ell_{\widetilde{		\mu		}}(s) + 1}	
}
\frac{	1 -   q^{a_{\widetilde{\mu}}(s) }	t^{\ell_{\widetilde{\mu}}(s) + 1}	}{
	1 - 	q^{a_{\widetilde{			\mu				}}(s) + 1} t^{\ell_{\widetilde{		\mu		}}(s)}	
}.
\notag
\end{align}

By using the similar arguments as in \eqref{eqn-c14-1622} \eqref{eqn-c17-1622}, we can show that 
\begin{align}
&\psi_{	T(\widetilde{i}_1,\dots, \widetilde{i}_k ; \lambda)^{(N+c_1 + \cdots + c_{r-1} + 1)} 	/	T(\widetilde{i}_1,\dots, \widetilde{i}_k ; \lambda)^{(N+ c_1 + \cdots + c_{r-1} + 2)} 	}(q,t)
\times
\cdots 
\times
\psi_{T(\widetilde{i}_1,\dots, \widetilde{i}_k ; \lambda)^{(N+c)} / 	\widetilde{\mu}			}(q,t)
\\
&= 
\underbrace{			\prod_{s \in \widetilde{			\mu				} }				}_{
	\substack{
	s \in \operatorname{Row}(T(i_1,\dots,i_k ; \lambda)|N+1), 
	\\
	s 
	\text{ is in the row interval $\gamma_1$}
	}
}
\frac{	1 -   q^{a_{\widetilde{\mu}}(s) + 2}	t^{\ell_{\widetilde{\mu}}(s) }	}{
	1 - 	q^{a_{\widetilde{			\mu				}}(s) + 1} t^{\ell_{\widetilde{		\mu		}}(s) + 1}	
}
\frac{	1 -   q^{a_{\widetilde{\mu}}(s) }	t^{\ell_{\widetilde{\mu}}(s) + 1}	}{
	1 - 	q^{a_{\widetilde{			\mu				}}(s) + 1} t^{\ell_{\widetilde{		\mu		}}(s)}	
}. 
\notag 
\end{align}
Thus, 
\begin{align}
&\psi_{T(\widetilde{i}_1,\dots, \widetilde{i}_k;\lambda)^{(N+1)} / T(\widetilde{i}_1,\dots, \widetilde{i}_k;\lambda)^{(N+2)}}(q,t)
\times
\cdots 
\times
\psi_{T(\widetilde{i}_1,\dots, \widetilde{i}_k;\lambda)^{(N+c-1)} / T(\widetilde{i}_1,\dots, \widetilde{i}_k;\lambda)^{(N+c)}}(q,t)
\times 
\psi_{T(\widetilde{i}_1,\dots, \widetilde{i}_k;\lambda)^{(N+c)} / 	\widetilde{\mu}			}(q,t)
\notag 
\\
&=
\left(		
\psi_{T(\widetilde{i}_1,\dots, \widetilde{i}_k ; \lambda)^{(N+1)} / T(\widetilde{i}_1,\dots, \widetilde{i}_k;\lambda)^{(N+2)}}(q,t)			
\times
\cdots	
\times
\psi_{T(\widetilde{i}_1,\dots, \widetilde{i}_k ; \lambda)^{(N+c_1)} / T(\widetilde{i}_1,\dots, \widetilde{i}_k ; \lambda)^{(N+c_1 + 1)}}(q,t)			
\right)
\notag
\\
&\hspace{0.4cm}\times
\left(
\psi_{T(\widetilde{i}_1,\dots, \widetilde{i}_k ; \lambda)^{(N+c_1 + 1)} / T(\widetilde{i}_1,\dots, \widetilde{i}_k ; \lambda)^{(N+c_1 + 2)}}(q,t)			
\times
\cdots
\times
\psi_{T(\widetilde{i}_1,\dots, \widetilde{i}_k ; \lambda)^{(N+c_1 + c_2)} / T(\widetilde{i}_1,\dots, \widetilde{i}_k ; \lambda)^{(N+c_1 + c_2 + 1)}}(q,t)
\right)
\notag
\\
&\hspace{0.4cm}\times \cdots \times 
\left(
\psi_{T(\widetilde{i}_1,\dots, \widetilde{i}_k ; \lambda)^{(N+c_1 + \cdots + c_{r-1} + 1)} / T(\widetilde{i}_1,\dots, \widetilde{i}_k ; \lambda)^{(N+c_1 + \cdots + c_{r-1} + 2)}}(q,t)
\times
\cdots
\times 
\psi_{T(\widetilde{i}_1,\dots, \widetilde{i}_k ; \lambda)^{(N+c_1 + \cdots + c_r)} / 	\widetilde{\mu}			}(q,t)
\right)
\notag
\\
&= 
\prod_{j = 1}^{r}
\underbrace{			\prod_{s \in \widetilde{			\mu				} }				}_{
	\substack{
		s \in \operatorname{Row}(T(i_1,\dots,i_k ; \lambda)|N+1),
		\\
		s 
		\text{ is in the row interval $\gamma_j$}
	}
}
\frac{	1 -   q^{a_{\widetilde{\mu}}(s) + 2}	t^{\ell_{\widetilde{\mu}}(s) }	}{
	1 - 	q^{a_{\widetilde{			\mu				}}(s) + 1} t^{\ell_{\widetilde{		\mu		}}(s) + 1}	
}
\frac{	1 -   q^{a_{\widetilde{\mu}}(s) }	t^{\ell_{\widetilde{\mu}}(s) + 1}	}{
	1 - 	q^{a_{\widetilde{			\mu				}}(s) + 1} t^{\ell_{\widetilde{		\mu		}}(s)}	
}. 
\end{align}
It is obvious that 
\begin{align}
&\prod_{j = 1}^{r}
\underbrace{			\prod_{s \in \widetilde{			\mu				} }				}_{
	\substack{
		s \in \operatorname{Row}(T(i_1,\dots,i_k ; \lambda)|N+1), 
		\\
		s 
		\text{ is in the row interval $\gamma_j$}
	}
}
\frac{	1 -   q^{a_{\widetilde{\mu}}(s) + 2}	t^{\ell_{\widetilde{\mu}}(s) }	}{
	1 - 	q^{a_{\widetilde{			\mu				}}(s) + 1} t^{\ell_{\widetilde{		\mu		}}(s) + 1}	
}
\frac{	1 -   q^{a_{\widetilde{\mu}}(s) }	t^{\ell_{\widetilde{\mu}}(s) + 1}	}{
	1 - 	q^{a_{\widetilde{			\mu				}}(s) + 1} t^{\ell_{\widetilde{		\mu		}}(s)}	
}
\\
&=
\prod_{j = 1}^{r - 1}
\prod_{\sigma = j+1}^{r}
\underbrace{			\prod_{s \in \widetilde{			\mu				}}				}_{
	\substack{
		s \in \operatorname{Row}(T(i_1,\dots,i_k ; \lambda)|N+1),
		\\
		s 
		\text{ is in the row interval $\gamma_j$},
		\\
		\text{col}(s) = \mu_{m_1 + \cdots + m_{\gamma_\sigma -1} + 1}
	}
}
\frac{	
	1 - 	q^{a_{\widetilde{			\mu				}}(s) + 2} t^{\ell_{\widetilde{		\mu		}}(s)}	
}{
	1 -   q^{a_{\widetilde{\mu}}(s) + 1}	t^{\ell_{\widetilde{\mu}}(s) + 1}	
}
\frac{	1 - q^{a_{\widetilde{\mu}}(s)}	t^{\ell_{\widetilde{\mu}}(s) + 1} 	}{
	1 - q^{a_{\widetilde{			\mu				}}(s) + 1} t^{\ell_{\widetilde{		\mu		}}(s)}	
}
\notag 
\\
&\times 
\prod_{j = 1}^{r - 1}
\underbrace{			\prod_{s \in \widetilde{			\mu				}}				}_{
	\substack{
		s \in \operatorname{Row}(T(i_1,\dots,i_k ; \lambda)|N+1),
		\\
		s 
		\text{ is in the row interval $\gamma_j$},
		\\
		\text{col}(s) \notin \{		\mu_{m_1 + \cdots + m_{\gamma_\sigma -1} + 1} ~|~ \sigma = j + 1, \dots, r			\}
	}
}
\frac{	1 -   q^{a_{\widetilde{\mu}}(s) + 2}	t^{\ell_{\widetilde{\mu}}(s) }	}{
	1 - 	q^{a_{\widetilde{			\mu				}}(s) + 1} t^{\ell_{\widetilde{		\mu		}}(s) + 1}	
}
\frac{	1 -   q^{a_{\widetilde{\mu}}(s) }	t^{\ell_{\widetilde{\mu}}(s) + 1}	}{
	1 - 	q^{a_{\widetilde{			\mu				}}(s) + 1} t^{\ell_{\widetilde{		\mu		}}(s)}	
}
\notag 
\\
&\times 
\underbrace{			\prod_{s \in \widetilde{			\mu				}}				}_{
	\substack{
	s \in \operatorname{Row}(T(i_1,\dots,i_k ; \lambda)|N+1), 
		\\
		s 
		\text{ is in the row interval $\gamma_r$}
	}
}
\frac{	1 -   q^{a_{\widetilde{\mu}}(s) + 2}	t^{\ell_{\widetilde{\mu}}(s) }	}{
	1 - 	q^{a_{\widetilde{			\mu				}}(s) + 1} t^{\ell_{\widetilde{		\mu		}}(s) + 1}	
}
\frac{	1 -   q^{a_{\widetilde{\mu}}(s) }	t^{\ell_{\widetilde{\mu}}(s) + 1}	}{
	1 - 	q^{a_{\widetilde{			\mu				}}(s) + 1} t^{\ell_{\widetilde{		\mu		}}(s)}	
}. 
\notag  
\end{align}
So we have proved \textbf{Lemma \ref{lemc4-1621-18mar}}. 
\end{proof}

\textbf{Lemma \ref{lem68-1212-13mar}} follows immediately from Lemmas  \textbf{\ref{lemc1-1621-18mar}},  \textbf{\ref{lemc2-1621-18mar}}, and \textbf{\ref{lemc3-1621-18mar}}.



\begin{thebibliography}{99}

\bibitem{AGT}
L. F. Alday, D. Gaiotto, and Y. Tachikawa,
{\it Liouville correlation functions from four-dimensional gauge theories,}
Letters in Mathematical Physics 91, no. 2 (2010): 167-197,
arXiv:0906.3219. 

\bibitem{awata-excited}
H.~Awata, Y.~Matsuo, S.~Odake, and J.~Shiraishi, 
{\it Excited states of the Calogero-Sutherland model and singular vectors of the $W_N$ algebra,}
Nuclear Physics B 449, no. 1-2 (1995): 347-374, 
arXiv:9503043. 

\bibitem{AKOS-950}
H.~Awata, H.~Kubo, S.~Odake, and J.~Shiraishi, 
{\it Quantum $W_N$ algebras and Macdonald polynomials,}
Communications in Mathematical Physics 179(2): 401-416 (1996), 
arXiv:9508011

\bibitem{AY}
H.~Awata, and Y.~Yamada,
{\it Five-dimensional AGT conjecture and the deformed Virasoro algebra,}
Journal of High Energy Physics 2010, no. 1 (2010): 1-11, 
arXiv:0910.4431. 

\bibitem{AY2}
H.~Awata, and Y.~Yamada,
{\it Five-dimensional AGT relation and the deformed $\beta$-ensemble,}
Progress of theoretical physics 124, no. 2 (2010): 227-262, 
arXiv:1004.5122

\bibitem{Awata-note}
H. Awata, B. Feigin, A. Hoshino, M. Kanai, J. Shiraishi, and S. Yanagida,
{\it Notes on Ding-Iohara algebra and AGT conjecture,}
arXiv:1106.4088.

\bibitem{AFS}
H. Awata, B. Feigin, and J. Shiraishi,
{\it Quantum algebraic approach to refined topological vertex,}
Journal of High Energy Physics 2012, no. 3 (2012): 1-35, 
arXiv:1112.6074. 


\bibitem{AKMM17}
H. Awata, H. Kanno, A. Mironov, A. Morozov, A. Morozov, Y. Ohkubo, and Y. Zenkevich, 
{\it Generalized Knizhnik-Zamolodchikov equation for Ding-Iohara-Miki algebra}
Physical Review D 96, no. 2 (2017): 026021, 
arXiv:1703.06084.

\bibitem{AKMM17-2}
H. Awata, H. Kanno, A. Mironov, A. Morozov, K. Suetake, and Y. Zenkevich,
{\it $(q,t)$-KZ equations for quantum toroidal algebra and Nekrasov partition functions on ALE spaces,}
Journal of High Energy Physics 2018, no. 3 (2018): 1-70,
arXiv:1712.08016.

\bibitem{AKMM18}
H. Awata, H. Kanno, A. Mironov, A. Morozov, K. Suetake, and Y.Zenkevich,
{\it The MacMahon $R$-matrix}
Journal of High Energy Physics 2019, no. 4 (2019): 1-34, 
arXiv:1810.07676.

\bibitem{BPZ}
A.A.~Belavin, A.M.~Polyakov, and A.B.~Zamolodchikov
{\it Infinite conformal symmetry in two-dimensional quantum field theory,}
Nuclear Physics B 241, no. 2 (1984): 333-380. 

\bibitem{Misha}
M.~Bershtein, B.~Feigin, and G.~Merzon,
{\it Plane partitions with a “pit”: generating functions and representation theory,}
Selecta Mathematica 24 (2018): 21-62, 
arXiv:1512.08779.

\bibitem{BS1993}
P.~Bouwknegt, and K.~Schoutens,
{\it W symmetry in conformal field theory,}
Physics Reports 223, no. 4 (1993): 183-276,
arXiv:9210010. 	

\bibitem{BFM17}
J-E. Bourgine, M. Fukuda, K. Harada, Y. Matsuo, and R-D. Zhu, 
{\it $(p,q)$-webs of DIM representations, 5d $\mathcal {N}= 1$ instanton partition functions and qq-characters,}
Journal of High Energy Physics 2017, no. 11 (2017): 1-51, 
arXiv:1703.10759. 

\bibitem{BFM17-2}
J-E. Bourgine, M. Fukuda, Y. Matsuo, and R-D. Zhu,
{\it Reflection states in Ding-Iohara-Miki algebra and brane-web for D-type quiver,}
Journal of High Energy Physics 2017, no. 12 (2017): 1-27, 
arXiv:1709.01954. 

\bibitem{BJ}
J-E. Bourgine, and S. Jeong,
{\it New quantum toroidal algebras from 5D $\mathcal {N}$= 1 instantons on orbifolds,}
Journal of High Energy Physics 2020, no. 5 (2020): 1-52,
arXiv:1906.01625	

\bibitem{BS}
I. Burban and O. Schiffmann,
{\it On the Hall algebra of an elliptic curve, I,}
Duke Math. J. \textbf{161} (2012): 1171-1231,
arXiv:0505148.

\bibitem{CK}
P. Cheewaphutthisakun, and H. Kanno,
{\it MacMahon KZ equation for Ding-Iohara-Miki algebra}
Journal of High Energy Physics 2021, no. 4 (2021): 1-47, 
arXiv:2101.01420. 

\bibitem{CK2}
P. Cheewaphutthisakun, and H. Kanno, 
{\it Quasi-Hopf twist and elliptic Nekrasov factor}
Journal of High Energy Physics 2021, no. 12 (2021): 1-45,
arXiv:2110.03970. 

\bibitem{pc2406}
P.~Cheewaphutthisakun, J.~Shiraishi, and K.~Wiboonton,
{\it Elliptic deformation of the Gaiotto-Rapčák corner VOA and the associated partially symmetric polynoimals,}
Journal of High Energy Physics 2024, no. 8 (2024): 1-45, 
arXiv:2406.15860. 

\bibitem{DI} 
J. Ding, K. Iohara,
{\it Generalization of Drinfeld quantum affine algebras,}		
Lett. Math. Phys. {\bf 41} (1997) 181--193,
arXiv:9608002.

\bibitem{DR}
V.~Drinfeld, 
{\it A new realization of Yangians and quantized affine algebras,}
In Soviet Math. Dokl., vol. 32, pp. 212-216. 1988

 



\bibitem{FHH}
B. Feigin, K. Hashizume, A. Hoshino, J. Shiraishi, and S. Yanagida, 
{\it A commutative algebra on degenerate CP1 and Macdonald polynomials,}
Journal of Mathematical Physics 50, no. 9 (2009), 
arXiv:0904.2291

\bibitem{FHSSY}
B.~Feigin, A.~Hoshino, J.~Shibahara, J.~Shiraishi, and S.~Yanagida,
{\it Kernel function and quantum algebras,}
arXiv:1002.2485.



\bibitem{GR17}
D. Gaiotto, and M. Rapčák,
{\it Vertex algebras at the corner}
Journal of High Energy Physics 2019, no. 1 (2019): 1-88,
arXiv:1703.00982. 

\bibitem{HMNW}
K.~Harada, Y.~Matsuo, G.~Noshita, and A.~Watanabe,
{\it q-deformation of corner vertex operator algebras by Miura transformation,}
Journal of High Energy Physics 2021, no. 4 (2021): 1-49, 
arXiv:2101.03953

\bibitem{Jim-original}
M.~Jimbo
{\it A $q$-difference analogue of $U(\fraks{g})$ and the Yang-Baxter equation,}
Letters in Mathematical Physics 10 (1985): 63-69.

\bibitem{macbook-1998}
I.G.~Macdonald,
{\it Symmetric functions and Hall polynomials,}
Oxford university press, 1998.

\bibitem{Ma}
Y. Matsuo, S. Nawata, G. Noshita, and R-D. Zhu, 
{\it Quantum toroidal algebras and solvable structures in gauge/string theory,}
Physics Reports 1055 (2024): 1-144,
arXiv:2309.07596. 

\bibitem{mimachi-yamada-1995-singular}
K.~Mimachi, and Y.~Yamada,
{\it Singular vectors of the Virasoro algebra in terms of Jack symmetric polynomials,}
Communications in mathematical physics 174 (1995): 447-455. 

\bibitem{Miki} 
K. Miki, 
{\it A $(q, \gamma)$ anlog of the $W_{1+\infty}$ algebra,}
J. Math. Phys. {\bf 48} (2007) 123520.

\bibitem{nekrasov}
N. A. Nekrasov,
{\it Seiberg-Witten prepotential from instanton counting,}
Advances in Theoretical and Mathematical Physics 7 (2003): 831-864,
arXiv:0206161. 

\bibitem{noumi-mac}
M.~Noumi
{\it Macdonald Polynomials, Commuting Family of q-Difference Operators and Their Joint Eigenfunctions,}
SpringerBriefs in Mathematical Physics, Springer Singapore (2023).

\bibitem{PR17}
T. Procházka, and M. Rapčák,
{\it Webs of W-algebras,}
Journal of High Energy Physics 2018, no. 11 (2018): 1-89,
arXiv:1711.06888. 

\bibitem{PR18}
T. Procházka, and M. Rapčák,
{\it $\cals{W}$-algebra modules, free fields, and Gukov-Witten defects,}
Journal of High Energy Physics 2019, no. 5 (2019): 1-72,
arXiv:1808.08837. 

\bibitem{Sch}
O.~Schiffmann,
{\it Drinfeld realization of the elliptic Hall algebra,}
J. Alg.  Comb.  \textbf{35}  (2012): 237-262, 
arXiv:1004.2575.

\bibitem{sv2007-supermac}
A.N.~Sergeev, and A.P.~Veselov, 
{\it Deformed Macdonald-Ruijsenaars operators and super Macdonald polynomials,}
Communications in Mathematical Physics 288 (2009): 653-675, 
arXiv:0707.3129. 

\bibitem{SKAO95}
J.~Shiraishi, H.~Kubo, H.~Awata, and S.~Odake,
{\it A quantum deformation of the Virasoro algebra and the Macdonald symmetric functions}
Letters in Mathematical Physics 38 (1996): 33-51,
arXiv:9507034



\bibitem{Taki}
M. Taki, 
{\it On AGT-W conjecture and $q$-deformed $W$-algebra}
arXiv:1403.7016.

\bibitem{Wyl}
N. Wyllard,
{\it $A_{N-1}$ conformal Toda field theory correlation functions from conformal $\cals{N = 2} \,\, SU(N)$ quiver gauge theories,}
Journal of High Energy Physics 2009, no. 11 (2009): 002,
arXiv:0907.2189.  

\bibitem{Zamo}
A.B.~Zamolodchikov,
{\it Infinite additional symmetries in two-dimensional conformal quantum field theory,}
W-Symmetry. World Scientific (1995): 221-229.		

\bibitem{Zen}
Y.~Zenkevich,
{\it On pentagon identity in Ding-Iohara-Miki algebra,}
Journal of High Energy Physics 2023, no. 3 (2023): 1-14, 
arXiv:2112.14687.





		
		




















		
		
		
	\end{thebibliography}
\end{document}